\newcommand{\abs}[1]{\ensuremath{|#1|}}
\newcommand{\ket}[1]{| #1 \rangle}
\newcommand{\bra}[1]{\langle #1 |}
\newcommand{\braket}[2]{\langle #1 | #2 \rangle}
\newcommand{\proj}[2]{| #1 \rangle\!\langle #2 |}
\newcommand{\id}{\ensuremath{\mathds{1}}}
\def\beq{\begin{equation}}
\def\eeq{\end{equation}}
\def\bq{\begin{quote}}
\def\eq{\end{quote}}
\def\ben{\begin{enumerate}}
\def\een{\end{enumerate}}
\def\bit{\begin{itemize}}
\def\eit{\end{itemize}}
\def\ra{\rightarrow}
\def\lb{\left(}
\def\rb{\right)}
\def\lset{\lbrace}
\def\rset{\rbrace}
\def\r|{\right|}
\def\lbr{\left[}
\def\rbr{\right]}
\def\ident{\textnormal{id}}
\def\one{\id}
\newcommand\C{\mathbbm{C}}
\newcommand\R{\mathbbm{R}}
\newcommand\N{\mathbbm{N}}
\newcommand\M{\mathcal{M}}
\newcommand{\Pmap}{\Pi}
\newcommand{\dm}{D}
\newcommand{\CP}{\text{CP}}
\newcommand{\coCP}{\text{coCP}}
\newcommand{\Dec}{\text{Dec}}
\theoremstyle{plain}
\newtheorem{thm}{Theorem}[section]
\newtheorem{lem}[thm]{Lemma}
\newtheorem{cor}[thm]{Corollary}
\newtheorem{defn}[thm]{Definition}
\theoremstyle{definition}
\begin{document}
\title{{Decomposable Pauli diagonal maps and Tensor Squares of Qubit Maps}}
\author{Alexander M\"uller-Hermes\thanks{Email address: muellerh@posteo.net}~}
\affil{\small{Institut Camille Jordan, Universit\'{e} Claude Bernard Lyon 1,\\ 43 boulevard du 11 novembre 1918, 69622 Villeurbanne cedex, France}}

\maketitle
\date{\today}
\begin{abstract}
It is a well-known result due to E.~St\o rmer that every positive qubit map is decomposable into a sum of a completely positive map and a completely copositive map. Here, we generalize this result to tensor squares of qubit maps. Specifically, we show that any positive tensor product of a qubit map with itself is decomposable. This solves a recent conjecture by S.~Fillipov and K.~Magadov. We contrast this result with examples of non-decomposable positive maps arising as the tensor product of two distinct qubit maps or as the tensor square of a decomposable map from a qubit to a ququart. To show our main result, we reduce the problem to Pauli diagonal maps. We then characterize the cone of decomposable ququart Pauli diagonal maps by determining all $252$ extremal rays of ququart Pauli diagonal maps that are both completely positive and completely copositive. These extremal rays split into three disjoint orbits under a natural symmetry group, and two of these orbits contain only entanglement breaking maps. Finally, we develop a general combinatorial method to determine the extremal rays of Pauli diagonal maps that are both completely positive and completely copositive between multi-qubit systems using the ordered spectra of their Choi matrices. Classifying these extremal rays beyond ququarts is left as an open problem.

\end{abstract}

\newpage 
\tableofcontents

\section{Introduction}

For $d\in\N$ we denote by $\M_d$ the set of complex $d\times d$ matrices and by $\M^+_d$ the cone of positive semidefinite matrices, simply called ``positive matrices'' in the following. A linear map $P:\M_{d_1}\ra\M_{d_2}$ is called \emph{positive} if $P(\M^+_{d_1})\subset\M^+_{d_2}$, and it is called \emph{completely positive} if $\ident_k\otimes P:\M_k\otimes \M_{d_1}\ra \M_k\otimes \M_{d_2}$ is positive for every $k\in\N$. It is called \emph{decomposable} if $P=S+\vartheta_{d_2}\circ T$, for completely positive maps $S,T:\M_{d_1}\ra\M_{d_2}$ and a matrix transpose $\vartheta_{d_2}:\M_{d_2}\ra\M_{d_2}$. Linear maps of the form $\vartheta_{d_2}\circ T$ for completely positive $T:\M_{d_1}\ra\M_{d_2}$ are also called \emph{completely copositive}. In general, the set of decomposable maps is a strict subset of the set of positive maps, but in certain low dimensions these two sets coincide. In 1963, E.~St{\o}rmer~\cite{stormer1963positive} showed that every positive qubit\footnote{A \emph{qubit} is a quantum bit, i.e.~a quantum system with a state space of poitive matrices in $\M^+_2$ with trace equal to one. Motivated by this terminology, we will sometimes use the terms \emph{qubit}, \emph{qutrit} and \emph{ququart} to refer to the matrix algebras $\M_2$, $\M_3$ and $\M_4$ respectively.} map $P:\M_{2}\ra\M_{2}$ is decomposable. In 1976, S.L.~Woronowicz~\cite{woronowicz1976positive} showed that for any $(d_1,d_2)\in\lset (2,3),(3,2)\rset$ every positive map $P:\M_{d_1}\ra\M_{d_2}$ is decomposable, and he proved nonconstructively that a non-decomposable positive map $P:\M_2\ra\M_4$ exists. The first explicit example of such a map was constructed by W.-S.~Tang in~\cite{tang1986positive}.  

Despite being only valid in small dimensions, these results became important in quantum information theory implying that a bipartite quantum state $\rho\in(\M_{d_1}\otimes \M_{d_2})^+$ with $(d_1,d_2)\in\lset (2,2),(2,3),(3,2)\rset$ is separable if and only if its partial transpose $(\ident_{d_1}\otimes \vartheta_{d_2})(\rho)$ is positive~\cite{HORODECKI19961}. Moreover, by a duality first observed in~\cite{stormer1982decomposable} examples of non-decomposable positive maps give rise to entangled quantum states with positive partial transpose~\cite{horodecki1997separability}. Such quantum states are the only known examples of bound entanglement~\cite{horodecki1998mixed}, and they have been studied in different quantum communication scenarios~\cite{horodecki2005secure,smith2008quantum,bauml2015limitations}. 

\subsection{Motivation and summary of main results}

Our main motivation is to extend St{\o}rmer's theorem on the decomposability of positive qubit maps to positive tensor squares of such maps. Our strategy is inspired by an elegant proof of St{\o}rmer's result given by G.~Aubrun and S.~Szarek in \cite{aubrun2015two} exploiting the structure of positive maps that are diagonal in the Pauli basis of $\M_2$, i.e.~the set
\begin{equation}
\sigma_1 = \begin{pmatrix} 1 & 0 \\ 0 & 1\end{pmatrix},\quad \sigma_2 = \begin{pmatrix} 0 & 1 \\ 1 & 0\end{pmatrix},\quad \sigma_3 = \begin{pmatrix} 1 & 0 \\ 0 & -1\end{pmatrix},\quad \sigma_4 = \begin{pmatrix} 0 & -i \\ i & 0\end{pmatrix}.
\label{equ:Pauli}
\end{equation}
Using a Sinkhorn-type scaling operation (cf.~\cite{gurvits2004classical} or \cite[Proposition 2.32]{aubrun2017alice}) every positive map $P:\M_{2}\ra\M_{2}$ can be transformed into a positive map that is unital, trace-preserving, and diagonal in the Pauli basis. Such a map is easily seen to be decomposable, and reversing the scaling operation shows that the original map is decomposable as well.  

\newpage

Recently, S.~Fillipov and K.~Magadov~\cite{filippov2017positive} analyzed when tensor squares of qubit maps are positive. Specifically, they showed that for the Pauli diagonal map
\[
\Pmap_{\mu}(X) = \frac{1}{2}\text{Tr}\lbr \sigma_{1}X\rbr \sigma_{1} + \frac{x}{2}\text{Tr}\lbr \sigma_{2}X\rbr \sigma_{2} + \frac{y}{2}\text{Tr}\lbr \sigma_{3}X\rbr \sigma_{3} + \frac{z}{2}\text{Tr}\lbr \sigma_{4}X\rbr \sigma_{4}
\]
with $\ket{\mu}=(1,x,y,z)^T\in\R^4$ its tensor square $\Pmap_\mu\otimes \Pmap_\mu$ is positive if and only if the following inequalities hold: 
\begin{equation}
\begin{aligned}
1+x^2 &\geq y^2 + z^2 \\
1+y^2 &\geq x^2 + z^2 \\
1+z^2 &\geq x^2 + y^2
\end{aligned} 
\label{equ:ParamCond}
\end{equation}
Figure \ref{fig:starrySurf} shows the beautiful set of parameters $(x,y,z)^T\in\R^3$ satisfying these inequalities. Note that the set contains the two tetrahedra corresponding to completely positive and completely copositive Pauli diagonal maps respectively (cf.~\cite{ruskai2002analysis}). Using the Sinkhorn-type scaling technique as in \cite{aubrun2015two}, the inequalities \eqref{equ:ParamCond} give a characterization of the linear maps $P:\M_2\ra\M_2$ for which the tensor square $P\otimes P$ is positive. 

\begin{figure*}[t!]
        \center
        \includegraphics[scale=0.7]{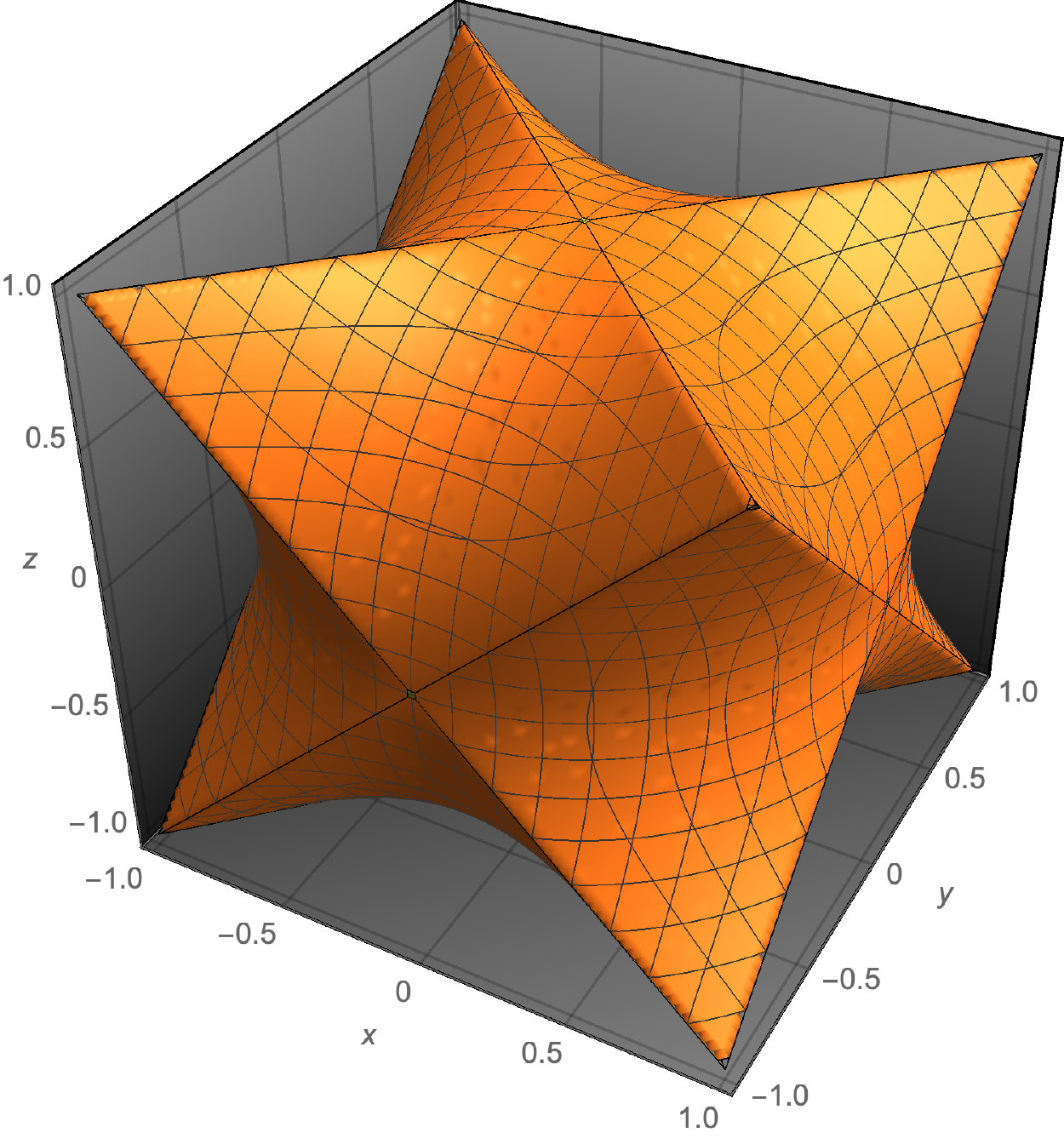}
        \caption{The parameters $(x,y,z)^T\in\R^3$ for which the tensor square $\Pi_\mu\otimes \Pi_\mu$ of the Pauli diagonal map $\Pi_\mu:\M_2\ra\M_2$ with $\ket{\mu}=(1,x,y,z)^T$ is positive. }
        \label{fig:starrySurf}
\end{figure*}

In their article~\cite{filippov2017positive}, S.~Fillipov and K.~Magadov continued to study the decomposability of tensor squares $\Pmap_\mu\otimes \Pmap_\mu$ of Pauli diagonal maps. They conjectured that in fact every positive tensor square $P\otimes P$ of a positive map $P:\M_2\ra\M_2$ is decomposable. We solve this conjecture in the affirmative. Specifically, we show the following theorem.

\begin{thm}
For a linear map $P:\M_2\ra\M_2$ the following are equivalent:
\begin{enumerate}
\item $P\otimes P$ is positive.
\item $P\otimes P$ is decomposable.
\end{enumerate}
\label{thm:Main1}
\end{thm}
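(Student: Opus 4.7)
The implication (2) $\Rightarrow$ (1) is immediate since every decomposable map is positive. For the non-trivial direction (1) $\Rightarrow$ (2), my strategy adapts the Aubrun-Szarek approach to St{\o}rmer's theorem from \cite{aubrun2015two}: first reduce to the case where $P$ is diagonal in the Pauli basis, then establish decomposability of Pauli diagonal tensor squares directly. For the reduction, I would run a Sinkhorn-type scaling to find positive invertible $A, B \in \M_2$ such that $Q := A \cdot P(B \cdot B^*) \cdot A^*$ is positive, unital, and trace preserving. Since $X \mapsto A X A^*$ and $X \mapsto B X B^*$ are completely positive, decomposability of $P \otimes P$ is equivalent to decomposability of $Q \otimes Q = (A \otimes A) \circ (P \otimes P) \circ ((B \cdot B^*) \otimes (B \cdot B^*))$. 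The standard polar decomposition of the Bloch-ball action of a positive unital trace-preserving qubit map then writes $Q = U \circ \Pi_\mu \circ V$ for some unitary conjugations $U, V$ and a Pauli diagonal $\Pi_\mu$, so without loss of generality $P = \Pi_\mu$.

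By the Fillipov-Magadov characterization \eqref{equ:ParamCond}, positivity of $\Pi_\mu \otimes \Pi_\mu$ is now equivalent to $(x, y, z)$ lying in the starry region of Figure~\ref{fig:starrySurf}. Crucially, $\Pi_\mu \otimes \Pi_\mu$ is itself diagonal in the Pauli basis of the ququart $\M_4 = \M_2 \otimes \M_2$, with coefficients of tensor-square form $\mu_i \mu_j$. My plan is therefore to describe the subcone of \emph{decomposable} Pauli diagonal ququart maps by explicit linear inequalities, and then verify that these inequalities follow from \eqref{equ:ParamCond} once one restricts the coefficients to tensor-square form. By duality of convex cones, these facet inequalities are determined by the extremal rays of the cone of Pauli diagonal ququart maps that are simultaneously CP and co-CP.

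The central obstacle is therefore the combinatorial classification of the extremal rays of this intersection cone, which I expect to consist of $252$ rays arranged in three orbits under the natural symmetry group generated by Pauli conjugations, the qubit swap, and the ququart transpose. To carry out this enumeration, I would use that a Pauli diagonal ququart map is CP iff the $16$ eigenvalues of its Choi matrix (fixed signed sums of the $16$ coefficients, following the ququart analogue of the classical qubit spectrum formula) are nonnegative, and co-CP iff the $16$ eigenvalues of the partial transpose of the Choi matrix are nonnegative; extremality inside this polyhedral intersection then reduces to a rank-maximality condition on the set of saturated inequalities, with symmetries cutting the search down to a handful of orbit representatives. Once the extremal rays are in hand, the facet inequalities of the decomposable cone are read off, and the final verification that every tensor square with $(x,y,z)$ in the starry region lies in the decomposable cone should reduce to a small number of polynomial inequalities implied by \eqref{equ:ParamCond}.
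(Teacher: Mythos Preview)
Your proposal is correct and follows essentially the same route as the paper: Sinkhorn-type reduction to Pauli diagonal maps, classification of the $252$ extremal rays of $\CP_2\cap\coCP_2$ into three orbits via duality, and a final verification that the resulting facet inequalities (in particular the ``cross'' inequalities of Corollary~\ref{cor:DecompTensorProdPauli}) are implied by \eqref{equ:ParamCond} on tensor-square coefficients. One small technical point you gloss over: Sinkhorn scaling requires $P$ to lie in the \emph{interior} of the positive cone, so the paper first perturbs $P$ by $\epsilon\,\text{Tr}(\cdot)\one_2$ (using closedness of the non-decomposable complement) before scaling; you should make this explicit.
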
   

This result should be contrasted with several counterexamples to natural stronger conjectures: We give examples of decomposable maps $P:\M_2\ra\M_4$ such that the tensor square $P\otimes P$ is positive but not decomposable, and examples of positive maps $P,Q:\M_2\ra\M_2$ such that the tensor product $P\otimes Q$ is positive but not decomposable. However, we leave open the question whether there are positive maps $P:\M_2\ra\M_2$ for which a tensor power $P^{\otimes n}$ with $n\geq 3$ is positive but not decomposable.  

To prove Theorem \ref{thm:Main1} via the Sinkhorn-type scaling technique described before, but also for future research in this direction, we study the decomposability of generalized Pauli diagonal maps $\Pmap^{(2)}_\mu:\M^{\otimes 2}_2\ra\M^{\otimes 2}_2$ given by 
\[
\Pmap^{(N)}_\mu(X) = \sum_{i_1,\ldots i_N \in \lset 1,2,3,4\rset^N} \frac{\mu_{i_1\ldots i_N}}{2^N}\text{Tr}\lbr (\sigma_{i_1}\otimes \ldots \otimes\sigma_{i_N})X\rbr \sigma_{i_1}\otimes \ldots \otimes\sigma_{i_N}
\]
for parameters $\ket{\mu}\in(\R^4)^{\otimes N}$. By recasting the well-known duality from~\cite{stormer1982decomposable}, characterizing the polyhedral cone $\Dec_N$ of decomposable Pauli diagonal maps is equivalent to characterizing the polyhedral cone $\CP_N\cap \coCP_N$ of Pauli diagonal maps that are both completely positive and completely copositive. By exploiting the one-to-one correspondence between such maps and the spectra of their Choi matrices (closely related to the Fujiwara-Algoet criterion~\cite{fujiwara1999one} for qubit maps), we give a combinatorial characterization of extremal rays in terms of their zero patterns (i.e.~$(0,1)$-tensors indicating the positions of zero entries) that are maximal in a certain partial order. We could fully describe the extremal rays of the polyhedral cone $\CP_N\cap \coCP_N$ for $N=1$ (which was known before) and $N=2$ (which is new to our knowledge). This leads to a characterization of decomposable Pauli diagonal maps in terms of the spectra of their Choi matrices.

\subsection{Outline} 

Our article is structured as follows: 

\begin{itemize}
\item In Section \ref{sec:ClassPauliMult} we introduce some classes of Pauli diagonal maps.
\begin{itemize}
\item In Section \ref{subsec:CPandcoCP} we introduce the cones of completely positive and completely copositive Pauli diagonal maps and characterize them in terms of the spectra of their Choi matrices. 
\item In Section \ref{sec:DecPPTPauliMult} we introduce the cone of decomposable Pauli diagonal maps $\Dec_N$ and show that its dual is the cone $\CP_N\cap \coCP_N$ of Pauli diagonal maps that are both completely positive and completely copositive. 
\item In Section \ref{subsec:Realignment} we review the realignment criterion. Later we will use this criterion to show that certain Pauli diagonal map are not entanglement breaking.
\end{itemize}
\item In Section \ref{sec:SpectraPPTPauli} we study the cone $\CP_N\cap \coCP_N$ of Pauli diagonal maps that are both completely positive and completely copositive. The main goal is to determine the extremal rays of this cone. 
\begin{itemize}
\item 
In Section \ref{subsec:ConePPTSpec} we introduce the cone $\mathcal{S}\lb\CP_N\cap \coCP_N\rb$ of Pauli PPT spectra, i.e.~the cone of spectra of the Choi matrices of Pauli diagonal maps that are both completely positive and completely copositive. We show that the two polyhedral cones $\mathcal{S}\lb\CP_N\cap \coCP_N\rb$ and $\CP_N\cap \coCP_N$ are isomorphic.   
\item In Section \ref{subsec:ZeroPatt} we introduce a partial ordering $\leq_Z$ on zero patterns that is useful to characterize the extremal rays of the polyhedral cone $\mathcal{S}\lb\CP_N\cap \coCP_N\rb$.
\item In Section \ref{subsec:ExtremalRays} we characterize the extremal rays of $\mathcal{S}\lb\CP_N\cap \coCP_N\rb$ through zero patterns that are maximal in the partial order $\leq_Z$. Moreover, we provide bounds on the rank of the Choi matrices of Pauli diagonal maps that generate extremal rays of $\CP_N\cap \coCP_N$. Finally, we show that tensor products of generators of extremal rays for values $N_1$ and $N_2$ generate extremal rays for values $N_1N_2$. 
\item In Section \ref{subsec:OrbitsSymm} we study a natural symmetry of the cone $\mathcal{S}\lb\CP_N\cap \coCP_N\rb$, which leads to a decomposition of the set of extremal rays into a union of disjoint orbits. 
\end{itemize}
\item In Section \ref{sec:extremalRaysConcrete1} we apply the theory developed in the preceeding sections to classify the extremal rays of $\mathcal{S}\lb\CP_1\cap \coCP_1\rb$ (and thereby of $\CP_1\cap \coCP_1$). This result is well-known and we recover the $6$ extremal rays generating the cone $\CP_1\cap \coCP_1$ with octahedral base. 

\item In Section \ref{sec:extremalRaysConcrete2} we characterize all extremal rays of the cone $\mathcal{S}\lb\CP_2\cap \coCP_2\rb$ (and thereby of $\CP_2\cap \coCP_2$). This cone has $252$ extremal rays distributed into three orbits. The largest orbit corresponds to Pauli diagonal maps $\Pmap^{(2)}_\mu:\M_4\ra\M_4$ for which the -Jamiolkowski matrices $C_{\Pmap^{(2)}_\mu}$ are multiples of entangled PPT projectors on $6$-dimensional subspaces of $\C^4\otimes \C^4$. We verify the classification of the extremal rays of $\mathcal{S}\lb\CP_2\cap \coCP_2\rb$ using standard software for analyzing convex polytopes, but we also present a human-readable proof. This proof is quite lengthy and presented in Appendix \ref{app:Therest}. Finally, we fully characterize the decomposable ququart Pauli diagonal maps in $\Dec_2$ through the spectra of their Choi matrices. 
\item In Section \ref{sec:DecTensSquares} we show that every positive tensor square of a qubit map is decomposable. The proof is split over three sections:
\begin{itemize}
\item In Section \ref{subsec:RedPauliMult} we show how to reduce membership questions of tensor products of qubit maps in mapping cones to the corresponding membership question of tensor products of Pauli diagonal maps. 
\item In Section \ref{subsec:SymmPauliMult} we review some useful symmetries of Pauli diagonal maps. 
\item In Section \ref{subsec:DecTensSquares} we prove that every positive tensor square of a qubit Pauli diagonal map is decomposable. Applying the reduction technique from Section \ref{subsec:RedPauliMult} we then show that every positive tensor square of a qubit map is decomposable.
\end{itemize} 
\item In Section \ref{Sec:NDPosFromTens} we provide several constructions of non-decomposable positive maps arising as tensor products of decomposable maps. 
\begin{itemize}
\item In Section \ref{subsec:NDTensProd} we show that tensor products of qubit maps can be non-decomposable and positive. 
\item In Section \ref{subsec:NDTensSquares} we give an example of a decomposable map $P:\M_2\ra\M_4$ for which the tensor square $P\otimes P$ is positive but not decomposable. 
\end{itemize}
\item In Section \ref{sec:Conclusion} we conclude with some open problems. 
\end{itemize}

\section{Notation}

We will write $\M_d$ to denote the set of complex $d\times d$ matrices, and more generally we will write $\M_d(\mathcal{S})$ to denote the set of $d\times d$ matrices with entries in a set $\mathcal{S}$. We denote by $\M^+_d\subset\M_d$ the cone of positive semidefinite matrices, which we simply call ``positive'' in the following. Furthermore, we say that a matrix $M\in \M_d$ is entrywise positive if $M_{ij}\geq 0$ for any $i,j\in\lset 1,2,\ldots ,d\rset$, and we will write ``ew-positive'' as an abbreviation.       

We will denote by $\one_d\in\M_d$ the $d\times d$ identity matrix, by $\omega_d\in (M_d\otimes\M_d)^+$ the (unnormalized) maximally entangled state, i.e.~$\omega_d=\proj{\Omega_d}{\Omega_d}$ for $\Omega_d=\sum^d_{i=1}\ket{i}\otimes \ket{i}$, and by $\mathbbm{F}_d\in\M_d\otimes \M_d$ the flip operator defined as $\mathbbm{F}_d\lb \ket{i}\otimes \ket{j}\rb = \ket{j}\otimes \ket{i}$. Here, we denote by $\lset\ket{i}\rset^d_{i=1}\subset \C^d$ the computational basis, i.e.~the vector $\ket{i}$ has a single $1$ in the $i$th position and zeros in the remaining entries. We denote by $\sigma_1, \sigma_2 ,\sigma_3 ,\sigma_4\in\M_2$ the Pauli matrices in the (slightly unusual) order stated in \eqref{equ:Pauli}.

We denote by $\ident_d:\M_d\ra\M_d$ the identity map $\ident_d(X)=X$ and by $\vartheta_d:\M_d\ra\M_d$ the matrix transpose $\vartheta_d(X)=X^T$ in the computational basis (our results will not depend on this choice of basis). Given a linear map $L:\M_{d_1}\ra\M_{d_2}$ we denote its Choi matrix~\cite{choi1975completely} by 
\[
C_L = (\ident_{d_1}\otimes L)(\omega_{d_1})\in \M_{d_1}\otimes \M_{d_2}.
\]   
It is well-known~\cite{choi1975completely} that $L$ is completely positive if and only if $C_L$ is positive, and that $L$ is completely copositive if and only if the partial transpose $C^{\Gamma}_{L}:=C_{\vartheta_{d_2}\circ L}$ is positive.

A positive matrix $C\in \lb \M_{d_1}\otimes \M_{d_2}\rb^+$ is called \emph{separable} if it can be written as $C=\sum^k_{i=1} A_i\otimes B_i$ for some $k\in\N$ and positive matrices $\lset A_i\rset^{k}_{i=1}\subset \M^+_{d_1}$ and $\lset B_i\rset^{k}_{i=1}\subset \M^+_{d_2}$. A linear map $L:\M_{d_1}\ra\M_{d_2}$ is called \emph{entanglement breaking}~\cite{horodecki2003entanglement} if its Choi matrix $C_L$ is separable. Recall that a linear map $L:\M_{d_1}\ra\M_{d_2}$ is entanglement breaking if and only if $\text{Tr}\lb C_L C_P\rb\geq 0$ for every positive map $P:\M_{d_1}\ra\M_{d_2}$ (see~\cite{HORODECKI19961}). Similarly, it follows from~\cite{stormer1982decomposable} that a linear map $P:\M_{d_1}\ra\M_{d_2}$ is decomposable if and only if $\text{Tr}\lb C_T C_P\rb\geq 0$ for every linear map $T:\M_{d_1}\ra\M_{d_2}$ that is both completely positive and completely copositive.

\section{Classes of Pauli diagonal maps}
\label{sec:ClassPauliMult}

We will begin by introducing the classes of Pauli diagonal maps needed in our article. 

\subsection{Completely positive and completely copositive Pauli diagonal maps}
\label{subsec:CPandcoCP}
Let us recall the following definition from the introduction. 

\begin{defn}[Pauli diagonal maps]
For every $\ket{\mu}\in(\R^{4})^{\otimes N}$ we define the $N$th order Pauli diagonal map $\Pmap^{(N)}_\mu:\M^{\otimes N}_2\ra\M^{\otimes N}_2$ by 
\[
\Pmap^{(N)}_\mu(X) = \sum_{i_1,\ldots i_N \in \lset 1,2,3,4\rset^N} \frac{\mu_{i_1\ldots i_N}}{2^N}\text{Tr}\lbr (\sigma_{i_1}\otimes \ldots \otimes\sigma_{i_N})X\rbr \sigma_{i_1}\otimes \ldots \otimes\sigma_{i_N}.
\]
\label{defn:PauliMult}
\end{defn}

Let $\Pmap^{(N)}_{\mu}:\M^{\otimes N}_2\ra \M^{\otimes N}_2$ denote a Pauli diagonal map with parameters $\ket{\mu}\in(\R^{4})^{\otimes N}$. After reshuffling the tensor factors, its Choi matrix is given by 
\[
C_{\Pmap^{(N)}_\mu} = \sum_{i_1\cdots i_N}\frac{\mu_{i_1\cdots i_N}}{2^N} ~\sigma_{i_1}\otimes \sigma^T_{i_1}\otimes \cdots \otimes \sigma_{i_N}\otimes \sigma^T_{i_N}.
\]
By the elementary commutation relations
\[
\lbr \sigma_i\otimes \sigma_i, \sigma_j\otimes \sigma_j\rbr = 0
\]
for $i,j\in\lset 1,2,3,4\rset$ the terms in the previous sum (and in its partial transpose) commute and they can be simultaneously diagonalized. Consequently, complete positivity or complete copositivity of $\Pmap^{(N)}_\mu$, which is equivalent (see \cite{choi1975completely}) to positivity of $C_{\Pmap^{(N)}_\mu}$ or $C_{\vartheta^{\otimes N}_2\circ \Pmap^{(N)}_\mu}$ respectively, can be easily checked by transforming the coefficient vector $\ket{\mu}$ to the corresponding spectral vector. Specifically, we introduce matrices $\dm,\tilde{\dm}\in \M_4$ as  
\begin{equation}
\dm = \frac{1}{2}\begin{pmatrix} 1 & 1 & 1 & 1 \\ 1 & 1 & -1 & -1 \\ 1 & -1 & 1 & -1 \\ 1 & -1 & -1 & 1\end{pmatrix}\hspace{0.3cm}\text{ and } \hspace{0.3cm}\tilde{\dm} = \frac{1}{2}\begin{pmatrix} 1 & -1 & -1 & -1 \\ 1 & -1 & 1 & 1 \\ 1 & 1 & -1 & 1 \\ 1 & 1 & 1 & -1\end{pmatrix}, 
\label{equ:DM}
\end{equation}
such that for $i\in\lset 1,2,3,4\rset$ we have
\begin{align*}
\dm\ket{i} &= \frac{1}{2}~\text{spec}\lb \sigma_i\otimes \sigma^T_i\rb,\\
\tilde{\dm}\ket{i} &= \frac{1}{2}~\widetilde{\text{spec}}\lb \sigma_i\otimes \sigma_i\rb .
\end{align*}
Here, $\text{spec}\lb\cdot\rb$ and $\widetilde{\text{spec}}\lb\cdot\rb$ denote the orderings of the spectra as in the diagonal matrices $U^{\dagger}_1(\sigma_i\otimes \sigma^T_i) U_1$ and $U^{\dagger}_2(\sigma_i\otimes \sigma_i) U_2$ for $i\in \lset 1,2,3,4\rset$ and with unitaries 
\[
U_1=\frac{1}{\sqrt{2}}\begin{pmatrix} 1 & 0 & 1 & 0 \\ 0 & 1 & 0 & 1 \\ 0 & 1 & 0 & -1 \\ 1 & 0 & -1 & 0\end{pmatrix} \quad \text{ and }\quad U_2=\frac{1}{\sqrt{2}}\begin{pmatrix} 0 & 1 & 0 & 1 \\ 1 & 0 & 1 & 0 \\ -1 & 0 & 1 & 0 \\ 0 & -1 & 0 & 1\end{pmatrix},
\]
containing the Bell states in their columns. The orderings in $\text{spec}\lb\cdot\rb$ and $\widetilde{\text{spec}}\lb\cdot\rb$ are fixed throughout our article. The following theorem is a generalization of the well-known Fujiwara-Algoet criterion for complete positivity of unital qubit maps~\cite{fujiwara1999one}:

\begin{thm}
For $\ket{\mu}\in(\R^{4})^{\otimes N}$ we have
\[ \text{spec}\lb C_{\Pmap^{(N)}_\mu}\rb = \dm^{\otimes N}\ket{\mu} \quad\text{ and }\quad \widetilde{\text{spec}}\lb C_{\vartheta^{\otimes N}_2\circ\Pmap^{(N)}_\mu}\rb = \tilde{\dm}^{\otimes N}\ket{\mu},\]
with the ordering of the spectrum described above. In particular we have:
\begin{enumerate}
\item The map $\Pmap^{(N)}_\mu:\M^{\otimes N}_2\ra\M^{\otimes N}_2$ is completely positive if and only if the vector $\dm^{\otimes N}\ket{\mu}$ is entrywise positive. 
\item The map $\Pmap^{(N)}_\mu:\M^{\otimes N}_2\ra\M^{\otimes N}_2$ is completely copositive if and only if the vector $\tilde{\dm}^{\otimes N}\ket{\mu}$ is entrywise positive.
\end{enumerate}
\label{thm:CPCond}
\end{thm}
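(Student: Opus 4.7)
The plan is to leverage the simultaneous diagonalization already announced in the paragraph preceding the theorem. Since we are told that the terms $\sigma_{i_1}\otimes\sigma_{i_1}^T\otimes\cdots\otimes\sigma_{i_N}\otimes\sigma_{i_N}^T$ all commute (following from $[\sigma_i\otimes\sigma_i,\sigma_j\otimes\sigma_j]=0$ and taking transposes on appropriate legs), the Choi matrix $C_{\Pmap^{(N)}_\mu}$ is automatically diagonal in a common basis, and its spectrum is a linear function of $\ket\mu$. The whole theorem then boils down to identifying that linear function with $\dm^{\otimes N}$ in the right ordering.

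\textbf{Step 1: The $N=1$ case.} First I would verify by direct computation that the four matrices $\sigma_i\otimes\sigma_i^T$ are simultaneously diagonalized by the Bell unitary $U_1$, i.e.\ that $U_1^\dagger(\sigma_i\otimes\sigma_i^T)U_1$ is diagonal for each $i\in\{1,2,3,4\}$. Reading off the diagonals and dividing by $2$ gives exactly the four columns of $\dm$: column $1$ comes from $\sigma_1\otimes\sigma_1^T=\id\otimes\id$ (all $+1$), and columns $2,3,4$ come from inspecting the eigenvalues $\pm 1$ of $\sigma_i\otimes\sigma_i^T$ on the four Bell vectors. This is precisely the content of the definition $\dm\ket i=\tfrac12\,\text{spec}(\sigma_i\otimes\sigma_i^T)$ with the ordering fixed by $U_1$. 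The corresponding statement for $\tilde\dm$ and $U_2$ is verified identically, the only change being that we diagonalize $\sigma_i\otimes\sigma_i$ (without transpose), which reflects the partial transpose in the definition of complete copositivity.

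\textbf{Step 2: Tensorizing.} Given the $N=1$ case, I would observe that under the reshuffling used to write the Choi matrix,
\[
C_{\Pmap^{(N)}_\mu}=\sum_{i_1,\ldots,i_N}\frac{\mu_{i_1\cdots i_N}}{2^N}\,(\sigma_{i_1}\otimes\sigma_{i_1}^T)\otimes\cdots\otimes(\sigma_{i_N}\otimes\sigma_{i_N}^T),
\]
so the unitary $U_1^{\otimes N}$ simultaneously diagonalizes every summand. The diagonal of $U_1^{\otimes N\,\dagger}\,C_{\Pmap^{(N)}_\mu}\,U_1^{\otimes N}$ is then the linear combination of the tensor-product diagonals. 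Since the diagonal of $(\sigma_{i_1}\otimes\sigma_{i_1}^T)\otimes\cdots\otimes(\sigma_{i_N}\otimes\sigma_{i_N}^T)$ factorizes as $\text{spec}(\sigma_{i_1}\otimes\sigma_{i_1}^T)\otimes\cdots\otimes\text{spec}(\sigma_{i_N}\otimes\sigma_{i_N}^T)$ in the ordering induced by $U_1^{\otimes N}$, we get
\[
\text{spec}(C_{\Pmap^{(N)}_\mu})=\sum_{i_1,\ldots,i_N}\mu_{i_1\cdots i_N}\,(\dm\ket{i_1})\otimes\cdots\otimes(\dm\ket{i_N})=\dm^{\otimes N}\ket\mu,
\]
which is the first claimed identity. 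The second identity, for $\widetilde{\text{spec}}(C_{\vartheta_2^{\otimes N}\circ\Pmap^{(N)}_\mu})$, is obtained identically after noticing that partial transposition on each output leg converts $\sigma_i\otimes\sigma_i^T$ into $\sigma_i\otimes\sigma_i$, so we use $U_2^{\otimes N}$ and $\tilde\dm^{\otimes N}$ in place of $U_1^{\otimes N}$ and $\dm^{\otimes N}$.

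\textbf{Step 3: Deduction of (1) and (2).} By Choi's theorem, $\Pmap^{(N)}_\mu$ is completely positive iff $C_{\Pmap^{(N)}_\mu}\geq 0$, which, since this matrix is diagonalizable with spectrum $\dm^{\otimes N}\ket\mu$, is equivalent to that vector being entrywise nonnegative. Likewise $\Pmap^{(N)}_\mu$ is completely copositive iff $C_{\vartheta_2^{\otimes N}\circ\Pmap^{(N)}_\mu}\geq 0$, equivalent to $\tilde\dm^{\otimes N}\ket\mu$ being entrywise nonnegative.

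The only genuinely delicate point is bookkeeping: I must make sure that the ordering convention for $\text{spec}(\cdot)$ on $N$ tensor factors is precisely the one induced by $U_1^{\otimes N}$ acting on the reshuffled Choi matrix (and analogously for $\widetilde{\text{spec}}$ with $U_2^{\otimes N}$). Once this ordering is fixed, the tensor-product step is routine, and the only actual computation is the $4\times 4$ verification in Step 1 that produces the matrices $\dm$ and $\tilde\dm$ from \eqref{equ:DM}.
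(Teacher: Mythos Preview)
Your proposal is correct and follows exactly the approach the paper outlines in the paragraph immediately preceding the theorem: simultaneous diagonalization via the commutation relations, identification of $\dm$ and $\tilde\dm$ from the $N=1$ case, and then tensorizing. The paper does not spell out a separate formal proof beyond that discussion, so your write-up simply makes explicit what the paper leaves implicit.
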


In the following, we denote the set of parameters corresponding to completely positive Pauli diagonal maps by 
\begin{align*}
\CP_N &:= \Big\lset \ket{\mu}\in(\R^{4})^{\otimes N}~\Big{|}~\sum_{i_1\cdots i_N}\frac{\mu_{i_1\cdots i_N}}{2^N} ~\sigma_{i_1}\otimes \sigma^T_{i_1}\otimes \cdots \otimes \sigma_{i_N}\otimes \sigma^T_{i_N}\geq 0\Big\rset \\
&= \text{cone}\Big\lset (\dm^T)^{\otimes N}\ket{k_1\cdots k_N} ~\Big{|}~k_1,\ldots ,k_N\in \lset1,2,3,4\rset\Big\rset,
\end{align*}
where the second equality follows from Theorem \ref{thm:CPCond}. Similarly, we denote the set of parameters corresponding to completely copositive Pauli diagonal maps by 
\begin{align*}
\coCP_N &:= \Big\lset \ket{\mu}\in(\R^{4})^{\otimes N}~\Big{|}~\sum_{i_1\cdots i_N}\frac{\mu_{i_1\cdots i_N}}{2^N} ~\sigma_{i_1}\otimes \sigma_{i_1}\otimes \cdots \otimes \sigma_{i_N}\otimes \sigma_{i_N}\geq 0\Big\rset \\
&= \text{cone}\Big\lset (\tilde{\dm}^T)^{\otimes N}\ket{k_1\cdots k_N} ~\Big{|}~k_1,\ldots ,k_N\in \lset1,2,3,4\rset\Big\rset.
\end{align*}

\subsection{Decomposable and PPT Pauli diagonal maps}
\label{sec:DecPPTPauliMult}

To describe the set of decomposable Pauli diagonal maps we will first consider the intersection of the sets $\CP_N$ and $\coCP_N$, i.e.~the Pauli diagonal maps that are both completely positive and completely copositive. 

\begin{defn}[PPT Pauli diagonal maps]
We will call a vector $\ket{\mu}\in(\R^{4})^{\otimes N}$ PPT (abbreviating Positive Partial Transpose), if the corresponding Pauli diagonal map $\Pmap^{(N)}_\mu$ is both completely positive and completely copositive, or equivalently if 
\[
\ket{\mu}\in \CP_N\cap\coCP_N.
\]
   
\end{defn}

Next, we define the notion of decomposability: 

\begin{defn}[Decomposable Pauli diagonal maps]
We will call a vector $\ket{\mu}\in(\R^{4})^{\otimes N}$ decomposable, if the corresponding Pauli diagonal map $\Pmap^{(N)}_\mu$ is decomposable, i.e.~if it can be written as a sum $\Pmap^{(N)}_\mu = T + \vartheta^{\otimes N}_2\circ S$ with $T,S:\M^{\otimes N}_2\ra\M^{\otimes N}_2$ completely positive. Furthermore, we set
\[
\Dec_N := \lset\ket{\mu}\in(\R^{4})^{\otimes N} ~:~ \ket{\mu}\text{ is decomposable} \rset .
\]

\end{defn}

To simplify the previous definition, we will need the following lemma extending the well-known duality relation between decomposable maps and maps that are both completely positive and completely copositive~\cite{stormer1982decomposable} to the case of Pauli diagonal maps:

\begin{lem}[Duality]
We have 
\[
(\CP_N\cap\coCP_N)^* := \lset \ket{\mu}\in(\R^{4})^{\otimes N}~\Big{|}~\braket{\mu}{\kappa}\geq 0\text{ for any }\ket{\kappa}\in\CP_N\cap\coCP_N\rset = \CP_N+\coCP_N,
\]
where $\braket{\cdot}{\cdot}$ denotes the Euclidean inner product on $(\R^4)^{\otimes N}$.
\label{lem:Duality}
\end{lem}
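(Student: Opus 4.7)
The plan is to derive this duality from two general facts of polyhedral cone theory. First, for any pair of polyhedral cones $K_1,K_2$ in a finite-dimensional Euclidean space the sum $K_1^* + K_2^*$ is again polyhedral (hence closed), and one has the identity $(K_1\cap K_2)^* = K_1^* + K_2^*$. Since Theorem \ref{thm:CPCond} exhibits $\CP_N$ and $\coCP_N$ as the preimages of the nonnegative orthant $\R^{4^N}_{\geq 0}$ under the linear maps $\dm^{\otimes N}$ and $\tilde{\dm}^{\otimes N}$ respectively, both are cut out by $4^N$ linear inequalities and are therefore polyhedral. Applying the identity yields $(\CP_N\cap\coCP_N)^* = \CP_N^* + \coCP_N^*$, so the task reduces to showing that each of $\CP_N$ and $\coCP_N$ is self-dual under the Euclidean inner product on $(\R^4)^{\otimes N}$.

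The key observation for self-duality is that $\dm$ and $\tilde{\dm}$ are real orthogonal matrices: a direct check of the pairwise inner products of the rows of the matrices in \eqref{equ:DM} gives $\dm^T\dm = \id = \tilde{\dm}^T\tilde{\dm}$. Consequently $\dm^{\otimes N}$ and $\tilde{\dm}^{\otimes N}$ are themselves orthogonal. Now, for any invertible $A$ and any cone $K$ one has $(A^{-1}K)^* = A^TK^*$, as follows from $\braket{\nu}{A^{-1}\mu} = \braket{A^{-T}\nu}{\mu}$. Applied to $A = \dm^{\otimes N}$ with $A^{-1} = A^T$, self-duality of the orthant $\R^{4^N}_{\geq 0}$ gives
\[
\CP_N^* = \big((\dm^T)^{\otimes N}(\R^{4^N}_{\geq 0})\big)^* = \dm^{\otimes N}(\R^{4^N}_{\geq 0}) = \CP_N,
\]
where the last equality uses that $\mu\in\CP_N$ iff $\dm^{\otimes N}\ket{\mu}$ is entrywise positive. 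The same computation, with $\dm$ replaced by $\tilde{\dm}$, yields $\coCP_N^* = \coCP_N$. Combining with the polyhedral duality recalled above gives $(\CP_N\cap\coCP_N)^* = \CP_N + \coCP_N$.

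There is no substantive obstacle in this argument; it is essentially a bookkeeping exercise once one notices orthogonality of $\dm$ and $\tilde{\dm}$. It is worth observing that although $\dm$ happens to be symmetric, this symmetry plays no role in the proof: self-duality of the image of the orthant under $A^T$ uses only $A^TA = \id$, which is fortunate since $\tilde{\dm}$ is not symmetric. The polyhedrality of $\CP_N$ and $\coCP_N$ is also essential, as the identity $(K_1\cap K_2)^* = K_1^* + K_2^*$ may fail for general closed convex cones (where only the closure version holds); here it is guaranteed by the finite generating sets exhibited explicitly in Section \ref{subsec:CPandcoCP}.
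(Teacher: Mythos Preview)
Your proof is correct and follows essentially the same route as the paper: both establish self-duality of $\CP_N$ and $\coCP_N$ via orthogonality of $\dm$ and $\tilde{\dm}$, then invoke the general identity $(K_1\cap K_2)^* = \overline{K_1^* + K_2^*}$. The only difference is in how closedness of the sum is argued---you use polyhedrality of the two cones, while the paper observes that both lie inside the pointed cone of positive Pauli diagonal maps and cites \cite[Corollary~9.1.2]{rockafellar1970convex}; both arguments are standard and either suffices.
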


\begin{proof}
We will first show that 
\[
(\CP_N)^* = \lset \ket{\mu}\in(\R^{4})^{\otimes N}~\Big{|}~\braket{\mu}{\kappa}\geq 0\text{ for any }\ket{\kappa}\in\CP_N\rset = \CP_N.
\]
and
\[
(\coCP_N)^* = \lset \ket{\mu}\in(\R^{4})^{\otimes N}~\Big{|}~\braket{\mu}{\kappa}\geq 0\text{ for any }\ket{\kappa}\in\coCP_N\rset = \coCP_N.
\]
Consider vectors $\ket{\mu},\ket{\kappa}\in \CP_N$. By Theorem \ref{thm:CPCond} we know that $D^{\otimes N}\ket{\mu}, D^{\otimes N}\ket{\kappa}$ are ew-positive, where $D\in\M_4$ is the orthogonal matrix defined in \eqref{equ:DM}. Therefore, we have that 
\[
\braket{\mu}{\kappa} = \bra{\mu}(D^TD)^{\otimes N}\ket{\kappa}\geq 0.
\]
Being true for any $\ket{\kappa}\in \CP_N$, this shows that $(\CP_N)^* \supseteq \CP_N$. To show equality, consider a vector $\ket{\mu}\in (\CP_N)^*$. Since $\ket{\kappa_{k_1,\ldots ,k_N}}:=(D^T)^{\otimes N}\ket{k_1 k_2,\ldots k_N}\in \CP_N$ we find that 
\[
0\leq \braket{\mu}{\kappa_{k_1,\ldots ,k_N}} = \bra{\mu}(D^T)^{\otimes N}\ket{k_1 k_2,\ldots k_N}
\]
for any $k_1,\ldots ,k_N\in\lset 1,2,3,4\rset$. Therefore, $D^{\otimes N}\ket{\mu}$ is ew-positive showing that $\ket{\mu}\in \CP_N$. The proof of $(\coCP_N)^* = \coCP_N$ follows in the same way. Now, by elementary convex analysis we find
\[
(\CP_N\cap \coCP_N)^* = \overline{\CP_N+ \coCP_N}.
\]
Finally, note that the sum $\CP_N+ \coCP_N$ is closed (see e.g.~\cite[Corollary 9.1.2]{rockafellar1970convex}), since both cones $\CP_N$ and $\coCP_N$ are contained in the pointed cone of parameters corresponding to positive Pauli diagonal maps.  

\end{proof}

Now we can state the final characterization of decomposable Pauli diagonal maps: 

\begin{thm}[Decomposable Pauli diagonal maps]
We have
\[
\Dec_N = \CP_N + \coCP_N = (\CP_N\cap\coCP_N)^*.
\]
\end{thm}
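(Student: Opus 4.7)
Since the second equality $\CP_N + \coCP_N = (\CP_N\cap\coCP_N)^*$ is already supplied by Lemma \ref{lem:Duality}, my task is to prove $\Dec_N = \CP_N + \coCP_N$. The inclusion $\CP_N + \coCP_N \subseteq \Dec_N$ is immediate from the linearity of the parameterization $\ket{\mu}\mapsto \Pmap^{(N)}_\mu$: if $\ket{\mu} = \ket{\kappa_1}+\ket{\kappa_2}$ with $\ket{\kappa_1}\in\CP_N$ and $\ket{\kappa_2}\in\coCP_N$, then $\Pmap^{(N)}_\mu = \Pmap^{(N)}_{\kappa_1}+\Pmap^{(N)}_{\kappa_2}$ is visibly CP plus coCP.

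For the reverse inclusion, my plan is to use Pauli twirling to convert an arbitrary CP plus coCP decomposition of $\Pmap^{(N)}_\mu$ into a Pauli diagonal one. I will introduce the projection $\Phi$ on the space of linear maps $L:\M^{\otimes N}_2\ra\M^{\otimes N}_2$ defined by
\[
\Phi(L)(X) = \frac{1}{4^N}\sum_{\vec{i}\in\lset 1,2,3,4\rset^N} W_{\vec{i}}\, L\lb W_{\vec{i}} X W_{\vec{i}}\rb W_{\vec{i}}, \qquad W_{\vec{i}} := \sigma_{i_1}\otimes\cdots\otimes\sigma_{i_N},
\]
and verify three properties. First, expanding any $L$ in the Pauli basis and using the (anti)commutation relations of Pauli matrices, a short character-orthogonality calculation shows that $\Phi$ is idempotent and projects exactly onto Pauli diagonal maps; in particular $\Phi(\Pmap^{(N)}_\mu) = \Pmap^{(N)}_\mu$. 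Second, being a positive combination of conjugations by unitaries, $\Phi$ preserves complete positivity. Third, since each Pauli matrix satisfies $\sigma_i^T = \eta_i\sigma_i$ with $\eta_i\in\lset\pm 1\rset$, the two copies of each sign in $\vartheta^{\otimes N}_2(W_{\vec{i}} X W_{\vec{i}}) = W_{\vec{i}}^T X^T W_{\vec{i}}^T$ cancel, giving the commutation relation $\Phi(\vartheta^{\otimes N}_2\circ S) = \vartheta^{\otimes N}_2\circ\Phi(S)$; hence $\Phi$ also preserves complete copositivity.

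With these tools in place, the conclusion will be immediate. Given $\ket{\mu}\in\Dec_N$, I write $\Pmap^{(N)}_\mu = T + \vartheta^{\otimes N}_2\circ S$ with $T,S$ CP and apply $\Phi$ to both sides to obtain
\[
\Pmap^{(N)}_\mu = \Phi(T) + \vartheta^{\otimes N}_2\circ\Phi(S).
\]
Here $\Phi(T)$ is CP and Pauli diagonal, so it equals $\Pmap^{(N)}_{\kappa_1}$ for some $\ket{\kappa_1}\in\CP_N$; and $\vartheta^{\otimes N}_2\circ\Phi(S)$ is coCP and Pauli diagonal (because $\vartheta^{\otimes N}_2$ merely multiplies each $\sigma_{\vec{i}}$ by a sign), so it equals $\Pmap^{(N)}_{\kappa_2}$ for some $\ket{\kappa_2}\in\coCP_N$. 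Injectivity of the parameterization then yields $\ket{\mu} = \ket{\kappa_1}+\ket{\kappa_2}\in\CP_N + \coCP_N$, completing the proof. The only delicate step is the commutation of $\Phi$ with left composition by $\vartheta^{\otimes N}_2$, but as indicated this reduces to the transpose symmetry $\sigma_i^T = \pm\sigma_i$ of the Pauli matrices.
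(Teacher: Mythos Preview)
Your proof is correct but takes a genuinely different route from the paper.  The paper proves $\Dec_N \subseteq \CP_N + \coCP_N$ indirectly via duality: for $\ket{\mu}\in\Dec_N$ and $\ket{\kappa}\in\CP_N\cap\coCP_N$ one has $\braket{\mu}{\kappa} = \text{Tr}\lb C_{\Pi^{(N)}_{\mu}}C_{\Pi^{(N)}_{\kappa}}\rb \geq 0$ by the general St\o rmer duality between decomposable maps and PPT maps, whence $\Dec_N\subseteq(\CP_N\cap\coCP_N)^*$, and Lemma~\ref{lem:Duality} finishes.  You instead prove the inclusion constructively by introducing the Pauli twirl $\Phi$, which projects onto Pauli diagonal maps while preserving both complete positivity and (via the sign cancellation $W_{\vec{i}}^T = \pm W_{\vec{i}}$) complete copositivity; applying $\Phi$ to an arbitrary decomposition $T+\vartheta^{\otimes N}_2\circ S$ then manufactures a Pauli-diagonal one.

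The paper's argument is shorter once the external St\o rmer duality is taken for granted, and it makes the identification with the dual cone look like the natural formulation.  Your argument has the virtue of being entirely self-contained (no appeal to~\cite{stormer1982decomposable} is needed) and of producing the Pauli-diagonal CP and coCP summands explicitly from any given decomposition of $\Pmap^{(N)}_\mu$.
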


\begin{proof}
For any $\ket{\mu}\in \Dec_N$ and $\ket{\kappa}\in \CP_N\cap \coCP_N$ we have
\[
\braket{\mu}{\kappa} = \text{Tr}\lb C_{\Pi^{(N)}_{\mu}}C_{\Pi^{(N)}_{\kappa}}\rb \geq 0,
\]  
since $\Pi^{(N)}_{\mu}$ is a decomposable map and $\Pi^{(N)}_{\kappa}$ is both completely positive and completely copositive (see \cite{stormer1982decomposable}). This shows that $\Dec_N\subseteq (\CP_N\cap \coCP_N)^*=\CP_N + \coCP_N$ (see Lemma \ref{lem:Duality}). The other inclusion is clear, since every $\ket{\mu}\in \CP_N + \coCP_N$ is decomposable.

\end{proof}

The previous theorem shows that a Pauli diagonal map is decomposable if and only if it can be written as a sum of a completely positive Pauli diagonal map and a completely copositive Pauli diagonal map.

\subsection{Realignment of Pauli diagonal maps}
\label{subsec:Realignment}

To determine Pauli diagonal maps $\Pmap^{(N)}_\mu:\M^{\otimes N}_2\ra\M^{\otimes N}_2$ that are not entanglement breaking, it will be useful to consider the realignement criterion~\cite{rudolph2000separability,rudolph2005further,chen2002matrix}. To make our presentation self-contained, we will first introduce this entanglement criterion in the special case of multi-qubit quantum states: 

\begin{thm}[Realignment criterion~\cite{rudolph2000separability,rudolph2005further,chen2002matrix}]\hfill\\
The qubit realignment map $R:\M_2\otimes \M_2\ra \M_2\otimes \M_2$ is given by
\[
R(\proj{i}{j}\otimes \proj{k}{l}) = \proj{i}{k}\otimes \proj{j}{l}
\]
for any $i,j,k,l\in\lset 1,2\rset$. If a quantum state 
\[
\rho_{A_1B_1A_2B_2\cdots A_NB_N}\in (\M_{2}\otimes \M_2)^{\otimes N}
\]
is separable with respect to the bipartition $(A_1A_2\cdots A_N):(B_1B_2\cdots B_N)$, then we have
\[
\| R^{\otimes N}\lb \rho_{A_1B_1A_2B_2\cdots A_NB_N}\rb\|_1\leq 1.
\]
\label{thm:RealignmentBasic}
\end{thm}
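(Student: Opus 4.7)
The plan is to reduce the claim to pure product states across the bipartition via convexity and the triangle inequality for $\|\cdot\|_1$, and then to exhibit the realignment of such a state as a rank-one matrix whose trace norm equals exactly $1$.

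First, I would invoke the definition of separability to write $\rho = \sum_j A_j\otimes B_j$ with $A_j,B_j$ positive, and then spectrally decompose each $A_j$ and $B_j$ into rank-one positive matrices, producing
\[
\rho = \sum_{i} p_i\, \proj{a_i}{a_i}_{A_1\cdots A_N}\otimes \proj{b_i}{b_i}_{B_1\cdots B_N},
\]
with $p_i\geq 0$, $\sum_i p_i=1$ (since $\text{Tr}(\rho)=1$), and unit vectors $\ket{a_i},\ket{b_i}\in\lb\C^2\rb^{\otimes N}$. Linearity of $R^{\otimes N}$ together with the triangle inequality for $\|\cdot\|_1$ then yield
\[
\|R^{\otimes N}\lb\rho\rb\|_1 \leq \sum_i p_i\, \|R^{\otimes N}\lb\proj{a_i}{a_i}\otimes\proj{b_i}{b_i}\rb\|_1,
\]
reducing the problem to showing $\|R^{\otimes N}\lb\proj{a}{a}\otimes\proj{b}{b}\rb\|_1 = 1$ for arbitrary unit vectors $\ket{a},\ket{b}\in\lb\C^2\rb^{\otimes N}$.

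For that, I would expand $\ket{a} = \sum_{I\in\lset 1,2\rset^N} a_I \ket{I}$ and $\ket{b} = \sum_{K\in\lset 1,2\rset^N} b_K \ket{K}$ in the computational basis. After permuting the $2N$ tensor factors from the bipartite order $A_1\cdots A_N B_1\cdots B_N$ into the pairwise order $A_1 B_1\cdots A_N B_N$ on which $R^{\otimes N}$ naturally acts, the density matrix becomes
\[
\proj{a}{a}\otimes\proj{b}{b} = \sum_{I,J,K,L} a_I\bar a_J\, b_K\bar b_L \bigotimes_{m=1}^N \proj{i_m}{j_m}_{A_m}\otimes\proj{k_m}{l_m}_{B_m}.
\]
Applying $R$ on each pair $(A_m,B_m)$ via the defining rule $R\lb\proj{i}{j}\otimes\proj{k}{l}\rb=\proj{i}{k}\otimes\proj{j}{l}$ produces a matrix whose entry in row $(I,J)$ and column $(K,L)$ equals $a_I\bar a_J\cdot b_K\bar b_L$. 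This factors as the outer product $v_a v_b^T$ of the vectors $(v_a)_{(I,J)}=a_I\bar a_J$ and $(v_b)_{(K,L)}=b_K\bar b_L$, hence is rank one, with unique singular value $\|v_a\|_2\,\|v_b\|_2 = \|a\|_2^2\,\|b\|_2^2 = 1$. Therefore $\|R^{\otimes N}\lb\proj{a}{a}\otimes\proj{b}{b}\rb\|_1 = 1$, and combining with the previous bound delivers $\|R^{\otimes N}\lb\rho\rb\|_1\leq 1$.

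The only genuine difficulty is notational: tracking the reordering of the $2N$ factors between the bipartite order (natural for separability) and the pairwise order (natural for $R^{\otimes N}$). Once the multi-index bookkeeping is in place, the rank-one factorization and the standard identity $\|vw^T\|_1=\|v\|_2\|w\|_2$ render the trace-norm computation immediate; in fact, the argument yields equality, not just the inequality, on every pure product state across the bipartition.
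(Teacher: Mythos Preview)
The paper does not actually supply a proof of this theorem: it is quoted as a known result from the literature \cite{rudolph2000separability,rudolph2005further,chen2002matrix} and is only used as input to the subsequent Theorem~\ref{thm:RealignementPauliMultipliers}. So there is no ``paper's own proof'' to compare against.

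That said, your argument is correct and is precisely the standard proof of the realignment criterion. The two steps --- reduction to pure product states by convexity and the triangle inequality, followed by the rank-one identification $R^{\otimes N}\lb\proj{a}{a}\otimes\proj{b}{b}\rb = v_a v_b^{T}$ with $\|v_a\|_2=\|a\|_2^2$ and $\|v_b\|_2=\|b\|_2^2$ --- are exactly what one finds in the original references. Your remark that the only real work is the bookkeeping of the tensor-factor reordering between the bipartite order $A_1\cdots A_N B_1\cdots B_N$ and the pairwise order $A_1B_1\cdots A_NB_N$ is accurate; since this reordering is implemented by a permutation unitary, it leaves $\|\cdot\|_1$ invariant and the computation goes through as written.
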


The realignment criterion takes a very simple form, when it is applied to the Choi matrix of a Pauli diagonal map. This is a special case of a more general result~\cite{lupo2008bipartite} expressing the realignment criterion in terms of the operator Schmidt coefficients of a bipartite quantum state. To make our presentation self-contained, we will present a proof.    

\begin{thm}[Realignment of Pauli diagonal maps]
Let $\Pmap^{(N)}_\mu:\M^{\otimes N}_2\ra\M^{\otimes N}_2$ denote a unital and trace-preserving Pauli diagonal map with parameters $\ket{\mu}\in (\R^4)^{\otimes N}$. If $\Pmap^{(N)}_\mu$ is entanglement breaking, then we have 
\[
\sum_{i_1\cdots i_N} |\mu_{i_1\cdots i_N}|\leq 2^N.
\]
\label{thm:RealignementPauliMultipliers}
\end{thm}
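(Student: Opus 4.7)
The plan is to apply the realignment criterion from Theorem~\ref{thm:RealignmentBasic} to the Choi matrix of $\Pmap^{(N)}_\mu$, which factors nicely over the $N$ pairs $(A_k,B_k)$ after the reshuffling used in Section~\ref{subsec:CPandcoCP}. Since $\Pmap^{(N)}_\mu$ is trace-preserving, $C_{\Pmap^{(N)}_\mu}/2^N$ is a quantum state, and the entanglement breaking hypothesis means this state is separable across the bipartition of all input qubits against all output qubits. Theorem~\ref{thm:RealignmentBasic} therefore yields
\[
\normbig{R^{\otimes N}\lb C_{\Pmap^{(N)}_\mu}\rb}{1}\leq 2^N,
\]
and the task is to compute the left-hand side explicitly.

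For any $A\in\M_2$ write $\ket{A}:=\sum_{a,b}A_{ab}\ket{a}\otimes \ket{b}\in\C^4$. Unpacking the definition $R(\proj{a}{b}\otimes\proj{c}{d})=\proj{a}{c}\otimes\proj{b}{d}$ gives, for all $A,B\in\M_2$,
\[
R(A\otimes B)=\proj{A}{B^*},
\]
where $B^*$ denotes entrywise complex conjugation. Applied with $A=\sigma_i$ and $B=\sigma_i^T$, and using that the Paulis are Hermitian (so $(\sigma_i^T)^*=\sigma_i$), this yields the rank-one positive operator
\[
R(\sigma_i\otimes\sigma_i^T)=\proj{\sigma_i}{\sigma_i}.
\]
The four vectors $\ket{\sigma_1},\ket{\sigma_2},\ket{\sigma_3},\ket{\sigma_4}\in\C^4$ are pairwise orthogonal with squared norm $\Trace{\sigma_i^\dagger\sigma_i}=2$, by the Pauli orthogonality relations $\Trace{\sigma_i\sigma_j}=2\delta_{ij}$.

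Taking the tensor product over the $N$ qubit pairs, we obtain
\[
R^{\otimes N}\lb C_{\Pmap^{(N)}_\mu}\rb=\sum_{i_1,\ldots,i_N}\mu_{i_1\cdots i_N}\,\proj{\Phi_{i_1\cdots i_N}}{\Phi_{i_1\cdots i_N}},
\]
where $\ket{\Phi_{i_1\cdots i_N}}:=2^{-N/2}\ket{\sigma_{i_1}}\otimes\cdots\otimes\ket{\sigma_{i_N}}$ runs over an orthonormal basis of $\C^{4^N}$ as $(i_1,\ldots,i_N)$ ranges over $\lset 1,2,3,4\rset^N$. This is a spectral decomposition of a Hermitian operator with eigenvalues $\mu_{i_1\cdots i_N}$, so
\[
\norm{R^{\otimes N}(C_{\Pmap^{(N)}_\mu})}{1}=\sum_{i_1,\ldots,i_N}\abs{\mu_{i_1\cdots i_N}},
\]
and combining with the realignment bound above gives the claim. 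The only nontrivial calculation is the identity $R(\sigma_i\otimes\sigma_i^T)=\proj{\sigma_i}{\sigma_i}$; everything else is bookkeeping via the orthogonality of the Pauli basis.
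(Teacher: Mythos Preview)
Your proof is correct and follows essentially the same route as the paper's. The only cosmetic difference is that you package the key computation as the general identity $R(A\otimes B)=\proj{A}{B^*}$ with vectorized matrices, whereas the paper computes $R(\sigma_i\otimes\sigma_i^T)$ directly and identifies the results as Bell states; since the normalized vectors $2^{-1/2}\ket{\sigma_i}$ are precisely the Bell states up to phases, these are the same calculation.
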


\begin{proof}
It is easy to compute the action of the realignment map on the tensor products $\sigma_i\otimes \sigma^T_i$ for $i\in\lset 1,2,3,4\rset$. We have
\[
\frac{1}{2} R\lb \sigma_i\otimes \sigma^T_i\rb = \proj{\phi_i}{\phi_i} \in (\M_2\otimes \M_2)^+
\]
where we introduced the Bell states $\phi_i = \proj{\phi_i}{\phi_i}$ given by
\begin{align*}
\ket{\phi_1} &= \ket{\Omega_2}, \\
\ket{\phi_2} &= (\one_2\otimes \sigma_2)\ket{\Omega_2},\\
\ket{\phi_3} &= (\one_2\otimes \sigma_3)\ket{\Omega_2}, \\
\ket{\phi_4} &= (\one_2\otimes i\sigma_4)\ket{\Omega_2} .
\end{align*}
Therefore, we find that 
\[
R^{\otimes N}\lb C_{\Pmap^{(N)}_\mu}\rb = \sum_{i_1\cdots i_N} \mu_{i_1\cdots i_N} \proj{\phi_{i_1}}{\phi_{i_1}}\otimes \proj{\phi_{i_2}}{\phi_{i_2}}\otimes \cdots \proj{\phi_{i_N}}{\phi_{i_N}},
\]
for any $\ket{\mu}\in (\R^4)^{\otimes N}$. Since $\ket{\phi_i}\perp \ket{\phi_j}$ for any $i\neq j$, we have that 
\[
\| R^{\otimes N}\lb C_{\Pmap^{(N)}_\mu}\rb\|_1 = \sum_{i_1\cdots i_N} |\mu_{i_1\cdots i_N}|.
\]
Now, the statement of the theorem follows from Theorem \ref{thm:RealignmentBasic} and by noting that $\text{Tr}\lb C_{\Pmap^{(N)}_\mu}\rb=2^N$ since $\Pmap^{(N)}_\mu$ is unital and trace-preserving.

\end{proof}

\section{Spectral characterization of PPT Pauli diagonal maps}
\label{sec:SpectraPPTPauli}

\subsection{The cone of Pauli PPT spectra}
\label{subsec:ConePPTSpec}

Given $\ket{\mu}\in \CP_N\cap\coCP_N$ we have that both the Choi matrices $C_{\Pmap^{(N)}_\mu}$ and $C_{\vartheta^{\otimes N}_2\circ\Pmap^{(N)}_\mu}$ are positive, and we denote their respective spectral vectors (ordered as above) by 
\[
\ket{p}=\text{spec}\lb C_{\Pmap^{(N)}_\mu}\rb \quad\text{ and }\quad \ket{q}=\widetilde{\text{spec}}\lb C_{\vartheta^{\otimes N}_2\circ\Pmap^{(N)}_\mu}\rb.
\]
Both $\ket{p}$ and $\ket{q}$ are ew-positive, and by Theorem \ref{thm:CPCond} they satisfy
\begin{equation}
\ket{\mu} = (\dm^T)^{\otimes N}\ket{p} = (\tilde{\dm}^T)^{\otimes N}\ket{q}.
\label{equ:pqrelation}
\end{equation}
We can now consider the unitary matrix
\begin{equation}
K := \tilde{\dm}\dm^T = \frac{1}{2}\begin{pmatrix}-1 & 1 & 1 & 1 \\ 1 & -1 & 1 & 1 \\ 1 & 1 & -1 & 1 \\ 1 & 1 & 1 & -1\end{pmatrix}.
\label{equ:MatK}
\end{equation}
By \eqref{equ:pqrelation} and using that $\tilde{\dm}$ is an orthogonal matrix we find that 
\[
\ket{q} = K^{\otimes N} \ket{p}.
\]
Conversely, for any pair of ew-positive vectors $\ket{p},\ket{q}\in(\R^4)^{\otimes N}$ satisfying the previous equation we have that $\ket{\mu}=(\dm^T)^{\otimes N}\ket{p}\in \CP_N\cap\coCP_N$. Moreover, in this case $\ket{p}$ and $\ket{q}$ contain the spectra of $C_{\Pmap^{(N)}_\mu}$ and $C_{\vartheta^{\otimes N}_2\circ\Pmap^{(N)}_\mu}$ respectively. This motivates the following definition: 

\begin{defn}[Pauli PPT spectra]
For $N\in\N$ we define the set 
\begin{align*}
\mathcal{S}&\lb \CP_N\cap\coCP_N\rb \\
&:=\Big\lset \lb\text{spec}\lb C_{\Pmap^{(N)}_\mu}\rb,\widetilde{\text{spec}}\lb C_{\vartheta^{\otimes N}_2\circ\Pmap^{(N)}_\mu}\rb\rb\in(\R^{4})^{\otimes N}\oplus (\R^{4})^{\otimes N}~\Big{|}~\ket{\mu}\in \CP_N\cap \coCP_N\Big\rset \\
&=\Big\lset \lb\ket{p},\ket{q}\rb\in(\R^{4})^{\otimes N}\oplus (\R^{4})^{\otimes N}~\Big{|}~\ket{p},\ket{q}\text{ ew-positive and } K^{\otimes N}\ket{p}=\ket{q}\Big\rset
\end{align*}
called the set of Pauli PPT spectra. 
\label{defn:SPPPoly}
\end{defn}

By the previous discussion we have 
\[
\CP_N\cap\coCP_N = \Big\lset (\dm^T)^{\otimes N}\ket{p}~\Big{|}~\lb\ket{p}, K^{\otimes N}\ket{p}\rb\in \mathcal{S}\lb \CP_N\cap\coCP_N\rb\Big\rset,
\]
and since $K$ is a unitary the extremal rays of the polyhedral cone $\mathcal{S}\lb \CP_N\cap\coCP_N\rb$ correspond to the extremal rays of $\CP_N\cap\coCP_N$. We conclude this section with a lemma expressing the set of Pauli PPT spectra as an intersection of a subspace and the positive orthant. Its proof is immediate from the previous definition.  

\begin{lem}[Pauli PPT spectra as subspace intersecting positive orthant]
We have 
\begin{align*}
\mathcal{S}&\lb \CP_N\cap\coCP_N\rb \\
&= E_1(M_N)^T\cap\text{cone}\Big\lset \lb \ket{k_1\cdots k_N},\ket{k_{N+1}\cdots k_{2N}}\rb ~\Big{|}~k_1,\ldots ,k_{2N}\in \lset1,2,3,4\rset\Big\rset,
\end{align*}
where $E_1(M_N)$ denotes the eigenspace corresponding to the eigenvalue $1$ of the matrix
\[
M_N:=\begin{pmatrix} 0 & K^{\otimes N} \\ K^{\otimes N} & 0\end{pmatrix}.
\]
\label{lem:SubspaceCharact}
\end{lem}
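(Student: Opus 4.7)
The plan is to unpack Definition \ref{defn:SPPPoly} and match it to the two constraints on the right-hand side, assembling three elementary observations.

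First, I would identify the cone. A non-negative combination $\sum_{k_1,\ldots,k_{2N}}c_{k_1\cdots k_{2N}}(\ket{k_1\cdots k_N},\ket{k_{N+1}\cdots k_{2N}})$ produces a pair $(\ket{p},\ket{q})$ whose components are ew-positive and share the common total mass $\sum c_{k_1\cdots k_{2N}}$. Conversely, any ew-positive pair with common total mass $S>0$ is realized by $c_{\alpha\beta}=p_\alpha q_\beta/S$, while $(0,0)$ lies in the cone trivially. Hence the cone equals the set of ew-positive pairs satisfying $\sum_\alpha p_\alpha=\sum_\beta q_\beta$.

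Second, I would identify the eigenspace. Applying $M_N$ to $(\ket{p},\ket{q})\in(\R^4)^{\otimes N}\oplus(\R^4)^{\otimes N}$ yields $(K^{\otimes N}\ket{q},K^{\otimes N}\ket{p})$, so membership in $E_1(M_N)$ is equivalent to the conjunction of $K^{\otimes N}\ket{p}=\ket{q}$ and $K^{\otimes N}\ket{q}=\ket{p}$. From the explicit form \eqref{equ:MatK} one reads off that $K$ is real, symmetric, and orthogonal, so $K^2=I$ and $(K^{\otimes N})^2=I$; the two equations are therefore equivalent and membership in $E_1(M_N)$ collapses to the single condition $K^{\otimes N}\ket{p}=\ket{q}$.

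Finally, the equal-mass constraint built into the cone is automatic on $E_1(M_N)$: reading off $K\vec{1}=\vec{1}$ from \eqref{equ:MatK} and tensorizing gives $K^{\otimes N}\vec{1}^{\otimes N}=\vec{1}^{\otimes N}$, so any $(\ket{p},\ket{q})$ with $K^{\otimes N}\ket{p}=\ket{q}$ automatically satisfies $\sum_\alpha p_\alpha=\sum_\beta q_\beta$. Intersecting the cone with $E_1(M_N)$ therefore yields exactly the ew-positive pairs obeying $K^{\otimes N}\ket{p}=\ket{q}$, which matches $\mathcal{S}(\CP_N\cap\coCP_N)$ by Definition \ref{defn:SPPPoly}. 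I do not anticipate any real obstacle; the only computations needed -- $K^2=I$ and $K\vec{1}=\vec{1}$ -- are direct from \eqref{equ:MatK}, and the minor subtlety worth flagging is the embedded total-mass constraint in the cone, which must be checked to be redundant on the eigenspace rather than silently assumed.
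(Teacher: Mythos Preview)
Your proof is correct and more careful than the paper's own treatment, which omits a proof entirely (the paper simply asserts the lemma is ``immediate from the previous definition''). In particular, you have correctly noticed a subtlety the paper glosses over: the cone generated by the pairs $(\ket{k_1\cdots k_N},\ket{k_{N+1}\cdots k_{2N}})$ is \emph{not} the full positive orthant of $(\R^4)^{\otimes N}\oplus(\R^4)^{\otimes N}$ but rather the ew-positive pairs with equal total mass, and you verify that this extra constraint is redundant on $E_1(M_N)$ via $K\vec{1}=\vec{1}$. This is exactly the right way to make the ``immediate'' claim rigorous.
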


By Lemma \ref{lem:Duality} the cone of decomposable Pauli diagonal maps $\Dec_N = \CP_N \vee \coCP_N$ is dual to the cone of PPT Pauli diagonal maps $\CP_N \cap \coCP_N$. Therefore, we can state the following spectral characterization of decomposability of Pauli diagonal maps, which follows immediately from the previous discussion. 

\begin{lem}[Spectral conditions of decomposability of Pauli diagonal maps]
The Pauli diagonal map $\Pmap^{(N)}_{\mu}$ with parameter $\ket{\mu}\in\lb\R^{4}\rb^{\otimes N}$ is decomposable if and only if the spectral vector 
\[
\text{spec}\lb C_{\Pmap^{(N)}_{\mu}}\rb = \dm^{\otimes N}\ket{\mu} =: \ket{s} \in(\R^4)^{}
\]
satisfies $\braket{s}{p}\geq 0$ for any $\ket{p}\in(\R^4)^{\otimes N}$ such that 
\[
\lb \ket{p},K^{\otimes N}\ket{p}\rb\in \mathcal{S}\lb \CP_N\cap\coCP_N\rb.
\]
In particular, one can restrict to the $\ket{p}$ arising from extremal rays of $\mathcal{S}\lb \CP_N\cap\coCP_N\rb$.
\label{lem:SpectrDecCondGen}
\end{lem}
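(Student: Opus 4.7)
The plan is to chain together the duality theorem for decomposable Pauli diagonal maps with the spectral parametrization of $\CP_N\cap\coCP_N$ established in the preceding subsection, and then invoke convexity to restrict attention to extremal rays. The statement is essentially a translation of the duality $\Dec_N=(\CP_N\cap\coCP_N)^{*}$ from the coefficient variables $\ket{\mu}$ into the spectral variables $\ket{p}$, so no new ingredient beyond what has already been assembled is required.

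First, I would recall from the theorem preceding this lemma that $\ket{\mu}\in\Dec_N$ if and only if $\braket{\mu}{\kappa}\geq 0$ for every $\ket{\kappa}\in\CP_N\cap\coCP_N$. Next, I would use the discussion immediately before Definition \ref{defn:SPPPoly} to parametrize $\CP_N\cap\coCP_N$ by spectra: every $\ket{\kappa}\in\CP_N\cap\coCP_N$ can be written as $\ket{\kappa}=(\dm^T)^{\otimes N}\ket{p}$ for some ew-positive $\ket{p}\in(\R^4)^{\otimes N}$ with $(\ket{p},K^{\otimes N}\ket{p})\in\mathcal{S}(\CP_N\cap\coCP_N)$, and conversely every such $\ket{p}$ produces a valid $\ket{\kappa}$.

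The third step is a one-line inner-product manipulation. Since $\dm$ is an orthogonal matrix (it is built from rescaled Bell-state overlaps, as noted in the proof of Lemma \ref{lem:Duality}), for any such $\ket{p}$ we have
\[
\braket{\mu}{\kappa}=\bra{\mu}(\dm^T)^{\otimes N}\ket{p}=\bra{\dm^{\otimes N}\mu}p\rangle=\braket{s}{p}.
\]
Substituting this into the duality condition gives exactly the claimed equivalence: $\ket{\mu}\in\Dec_N$ if and only if $\braket{s}{p}\geq 0$ for every $\ket{p}$ such that $(\ket{p},K^{\otimes N}\ket{p})\in\mathcal{S}(\CP_N\cap\coCP_N)$.

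Finally, for the last assertion, I would note that the set of admissible $\ket{p}$ is exactly the projection of the polyhedral cone $\mathcal{S}(\CP_N\cap\coCP_N)$ onto its first tensor factor (the second is determined by $K^{\otimes N}$), hence itself polyhedral. Since $\ket{p}\mapsto\braket{s}{p}$ is linear, nonnegativity on the whole cone is equivalent to nonnegativity on a set of generators, i.e.\ on the $\ket{p}$ arising from extremal rays of $\mathcal{S}(\CP_N\cap\coCP_N)$. There is no real obstacle in this proof; the only thing to be careful about is the bookkeeping between the two orderings $\text{spec}$ and $\widetilde{\text{spec}}$ and the correct use of orthogonality of $\dm$, but both have already been set up in Theorem \ref{thm:CPCond} and equation \eqref{equ:MatK}.
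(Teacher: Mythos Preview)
Your proposal is correct and follows exactly the route the paper intends: the paper does not spell out a proof but states that the lemma ``follows immediately from the previous discussion,'' meaning precisely the chain you describe---duality $\Dec_N=(\CP_N\cap\coCP_N)^*$, the spectral parametrization $\ket{\kappa}=(\dm^T)^{\otimes N}\ket{p}$, orthogonality of $\dm$ to rewrite $\braket{\mu}{\kappa}=\braket{s}{p}$, and polyhedrality to reduce to extremal rays. There is nothing to add or correct.
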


In the following sections we will study the extremal rays of the cone $\mathcal{S}\lb \CP_N\cap\coCP_N\rb$ and develop some tools to characterize them. In Section \ref{sec:extremalRaysConcrete1} and Section \ref{sec:extremalRaysConcrete2} we will apply these tools to find all extremal rays of $\mathcal{S}\lb \CP_N\cap\coCP_N\rb$ in the cases $N=1$ and $N=2$, which by Lemma \ref{lem:SpectrDecCondGen} give a complete characterization of decomposable Pauli diagonal maps in these cases.

\subsection{Zero patterns and their ordering}
\label{subsec:ZeroPatt}

To characterize the extremal rays of the set of Pauli PPT spectra $\mathcal{S}\lb \CP_N\cap\coCP_N\rb$, see Definition \ref{defn:SPPPoly}, we will need the following definition. 

\begin{defn}[Zero pattern]
The \emph{zero pattern} of a vector $\ket{p}\in(\R^{4})^{\otimes N}$ is defined as
\[
\mathcal{Z}\lb \ket{p}\rb := \lset \lb k_1,\ldots ,k_N\rb\in \lset 1,2,3,4\rset^{N} ~|~ \braket{k_1\cdots k_N}{p}=0 \rset
\] 
i.e.~the set of all indices where the vector $\ket{p}$ has zeros in the computational basis. We extend this definition to pairs $\lb\ket{p},\ket{q}\rb\in(\R^{4})^{\otimes N}\oplus(\R^{4})^{\otimes N}$ by 
\[
\mathcal{Z}\lb \ket{p},\ket{q}\rb := \Big(\mathcal{Z}\lb \ket{p}\rb,\mathcal{Z}\lb \ket{q}\rb \Big).
\]
\label{defn:ZPattern}

\end{defn}

Sets are partially ordered by inclusion and we can extend this partial ordering to pairs of vectors via their zero pattern:

\begin{defn}[Zero partial ordering]
Given $\lb \ket{p_1},\ket{q_1}\rb,\lb \ket{p_2},\ket{q_2}\rb\in(\R^{4})^{\otimes N}\oplus (\R^{4})^{\otimes N}$ we write $\lb \ket{p_1},\ket{q_1}\rb\leq_Z \lb \ket{p_2},\ket{q_2}\rb$ if $\mathcal{Z}\lb \ket{p_1},\ket{q_1}\rb\subseteq \mathcal{Z}\lb \ket{p_2},\ket{q_2}\rb$, and $\lb \ket{p_1},\ket{q_1}\rb<_Z \lb \ket{p_2},\ket{q_2}\rb$ if $\mathcal{Z}\lb \ket{p_1},\ket{q_1}\rb\subsetneq \mathcal{Z}\lb \ket{p_2},\ket{q_2}\rb$. Here we say $(A,B)\subset (C,D)$ if and only if $A\subset B$ and $C\subset D$. 
\end{defn}

It will be convenient to sometimes express the zero partial ordering in terms of subspaces. For this we need the following definition: 

\begin{defn}[Subspace associated to zero pattern] 
Given $\lb \ket{p},\ket{q}\rb\in(\R^{4})^{\otimes N}\oplus (\R^{4})^{\otimes N}$, we define the subspace
\[
V^N_{p,q} := \Big\lbrace \lb\ket{p'},\ket{q'}\rb\in(\R^{4})^{\otimes N}\oplus (\R^{4})^{\otimes N} ~:~ \lb\ket{p},\ket{q}\rb\leq_Z \lb \ket{p'},\ket{q'}\rb\Big\rbrace, 
\]
of all vectors with zero patterns containing $\mathcal{Z}\lb \ket{p},\ket{q}\rb$.
\label{defn:Subspacepq}
\end{defn}

The following lemma will be useful to simplify the search for extremal rays.

\begin{lem}[Orthogonality]
For $(\ket{p_1},\ket{q_1}),(\ket{p_2},\ket{q_2})\in \mathcal{S}\lb \CP_N\cap\coCP_N\rb$ the following are equivalent:
\begin{enumerate}
\item $\braket{p_1}{p_2} = 0$.
\item $\braket{q_1}{q_2} = 0$.
\item $\mathcal{Z}\lb \ket{p_2},\ket{q_2}\rb \supset \mathcal{Z}\lb \ket{p_1},\ket{q_1}\rb^c$ using the convention $(A,B)^c:=(A^c,B^c)$ for sets $A$ and $B$.
\end{enumerate} 
\label{lem:Orthogonality}
\end{lem}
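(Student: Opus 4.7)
The plan is to leverage two structural facts baked into the definition of $\mathcal{S}\lb\CP_N\cap\coCP_N\rb$: first, both $\ket{p_i}$ and $\ket{q_i}$ are entrywise non-negative (ew-positive), and second, by Definition \ref{defn:SPPPoly} we have $\ket{q_i}=K^{\otimes N}\ket{p_i}$ where the $4\times 4$ matrix $K$ from \eqref{equ:MatK} is orthogonal (it is the product of the orthogonal matrices $\tilde{\dm}$ and $\dm^T$).

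First I would establish the equivalence $(1)\Leftrightarrow (2)$ from the orthogonality of $K^{\otimes N}$. Since $(K^{\otimes N})^T K^{\otimes N} = \mathds{1}$, one immediately gets
\[
\braket{q_1}{q_2} \;=\; \bra{p_1}(K^T)^{\otimes N} K^{\otimes N}\ket{p_2} \;=\; \braket{p_1}{p_2},
\]
so the two inner products vanish together.

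Next I would prove the general fact that for any two ew-positive vectors $\ket{u},\ket{v}\in(\R^4)^{\otimes N}$ one has $\braket{u}{v}=0$ if and only if $\mathcal{Z}(\ket{v})\supset \mathcal{Z}(\ket{u})^c$. This is immediate: $\braket{u}{v}=\sum_{k_1,\ldots,k_N} u_{k_1\cdots k_N}\, v_{k_1\cdots k_N}$ is a sum of non-negative terms, so it vanishes iff $u_{k_1\cdots k_N}\, v_{k_1\cdots k_N}=0$ for every index tuple, i.e.\ every tuple where $\ket{u}$ is nonzero lies in the zero pattern of $\ket{v}$. Applying this criterion separately to the ew-positive pair $(\ket{p_1},\ket{p_2})$ and to the pair $(\ket{q_1},\ket{q_2})$ yields
\[
\braket{p_1}{p_2}=0 \;\Longleftrightarrow\; \mathcal{Z}(\ket{p_2})\supset \mathcal{Z}(\ket{p_1})^c,\qquad
\braket{q_1}{q_2}=0 \;\Longleftrightarrow\; \mathcal{Z}(\ket{q_2})\supset \mathcal{Z}(\ket{q_1})^c.
\]

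Finally, combining these two equivalences with the step $(1)\Leftrightarrow(2)$ shows that the four conditions $\braket{p_1}{p_2}=0$, $\braket{q_1}{q_2}=0$, $\mathcal{Z}(\ket{p_2})\supset \mathcal{Z}(\ket{p_1})^c$, and $\mathcal{Z}(\ket{q_2})\supset \mathcal{Z}(\ket{q_1})^c$ are all equivalent. In particular each one individually forces the other three to hold, so (3), which is just the conjunction of the last two, is equivalent to (1) and to (2). There is no real obstacle here: the lemma is a clean reformulation that uses only the ew-positivity built into $\mathcal{S}\lb\CP_N\cap\coCP_N\rb$ and the orthogonality of $K$, and its role in subsequent sections is purely combinatorial, letting us read off orthogonality relations between extremal rays directly from their zero patterns.
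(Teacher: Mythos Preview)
Your proposal is correct and follows essentially the same approach as the paper: both use that $K$ is orthogonal to get $\braket{p_1}{p_2}=\braket{q_1}{q_2}$ (the paper phrases it via $K$ being symmetric and unitary, hence $K^{\otimes N}K^{\otimes N}=\mathds{1}$, which is the same observation), and then both use ew-positivity to translate vanishing of the inner product into the zero-pattern containment.
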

\begin{proof}
Note that the matrix $K\in\M_4$ introduced in \eqref{equ:MatK} is symmetric and unitary. By the definition of $\mathcal{S}\lb \CP_N\cap\coCP_N\rb$ we have  
\[
\braket{p_1}{p_2} = \bra{p_1}K^{\otimes N}K^{\otimes N}\ket{p_2} = \braket{q_1}{q_2}.
\] 
Since $\ket{p_1},\ket{q_1},\ket{p_2}$ and $\ket{q_2}$ are ew-positive, we have $\braket{p_1}{p_2} = \braket{q_1}{q_2} = 0$ if and only if $\braket{i_1,\ldots ,i_N}{p_2}=\braket{j_1,\ldots ,j_N}{q_2}=0$ for any $i_1,\ldots i_N,j_1,\ldots j_N\in\lset 1,2,3,4\rset$ for which 
\[
\braket{i_1,\ldots ,i_N}{p_1}\neq 0\neq \braket{j_1,\ldots ,j_N}{q_1}.
\]
\end{proof}

\subsection{Characterizing extremal rays}
\label{subsec:ExtremalRays}

The following lemma gives a characterization of the extreme rays of $\mathcal{S}\lb \CP_N\cap\coCP_N\rb$. 

\begin{lem}[Extremal rays]
For $\lb\ket{p},\ket{q}\rb\in \mathcal{S}\lb \CP_N\cap\coCP_N\rb\setminus\lset (0,0)\rset$ the following are equivalent: 
\begin{enumerate}
\item The pair $\lb\ket{p},\ket{q}\rb$ generates an extremal ray in $\mathcal{S}\lb \CP_N\cap\coCP_N\rb$.
\item If $\lb\ket{p},\ket{q}\rb\leq_Z\lb\ket{p'},\ket{q'}\rb$ for any $\lb\ket{p'},\ket{q'}\rb\in \mathcal{S}\lb \CP_N\cap\coCP_N\rb$, then $\lb\ket{p'},\ket{q'}\rb=\alpha\lb\ket{p},\ket{q}\rb$ for some $\alpha\geq 0$. 
\item If $\lb\ket{p},\ket{q}\rb<_Z\lb\ket{p'},\ket{q'}\rb$ for any $\lb\ket{p'},\ket{q'}\rb\in \mathcal{S}\lb \CP_N\cap\coCP_N\rb$, then $\lb\ket{p'},\ket{q'}\rb=(0,0)$. 
\item We have $\text{dim}\lb V^N_{p,q}\cap E_1(M_N)\rb=1$, with $E_1(M_N)$ as in Lemma \ref{lem:SubspaceCharact}.
\end{enumerate}
\label{lem:extremesZP}
\end{lem}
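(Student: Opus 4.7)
The plan is to exploit the description of $\mathcal{S}\lb\CP_N\cap\coCP_N\rb$ from Lemma \ref{lem:SubspaceCharact} as the intersection of the linear subspace $E_1(M_N)$ with the nonnegative orthant of $(\R^{4})^{\otimes N}\oplus (\R^{4})^{\otimes N}$. For cones of this form there is a well-known combinatorial characterization of extremal rays in terms of inclusion-maximal zero patterns, and I will prove the cycle $(1)\Rightarrow(2)\Rightarrow(3)\Rightarrow(4)\Rightarrow(1)$. The key structural observation underlying everything is that $V^N_{p,q}$ is simply the coordinate subspace obtained by setting to zero every entry indexed by $\mathcal{Z}\lb\ket{p},\ket{q}\rb$; in particular it contains $(\ket{p},\ket{q})$ itself, and every vector in $V^N_{p,q}$ has support contained in the support of $(\ket{p},\ket{q})$.

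For $(1)\Rightarrow(2)$, I would take any $(\ket{p'},\ket{q'})\in \mathcal{S}\lb\CP_N\cap\coCP_N\rb$ with $(\ket{p},\ket{q})\leq_Z (\ket{p'},\ket{q'})$, so that the support of the latter is contained in that of the former. For sufficiently small $\epsilon>0$ the pair $(\ket{p},\ket{q})-\epsilon(\ket{p'},\ket{q'})$ remains ew-positive, and since $E_1(M_N)$ is a subspace it stays in the cone. Then $(\ket{p},\ket{q})=\epsilon(\ket{p'},\ket{q'})+\bigl[(\ket{p},\ket{q})-\epsilon(\ket{p'},\ket{q'})\bigr]$ decomposes $(\ket{p},\ket{q})$ as a conic sum of two cone elements, so extremality forces both summands to be nonnegative multiples of $(\ket{p},\ket{q})$. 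The implication $(2)\Rightarrow(3)$ is then immediate: strict zero-pattern inclusion rules out $\alpha>0$, so $(\ket{p'},\ket{q'})=\alpha(\ket{p},\ket{q})$ with $\alpha=0$. For $(4)\Rightarrow(1)$, given any decomposition $(\ket{p},\ket{q})=(\ket{p_1},\ket{q_1})+(\ket{p_2},\ket{q_2})$ in the cone, ew-positivity forces both summands to vanish on every coordinate in $\mathcal{Z}\lb\ket{p},\ket{q}\rb$; both therefore lie in the one-dimensional subspace $V^N_{p,q}\cap E_1(M_N)$, hence are proportional to $(\ket{p},\ket{q})$.

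The step where I expect to spend the most care is $(3)\Rightarrow(4)$, which I would argue by contraposition. Assume $\dim\lb V^N_{p,q}\cap E_1(M_N)\rb\geq 2$ and pick $(\ket{p'},\ket{q'})$ in this intersection that is linearly independent from $(\ket{p},\ket{q})$. Both vectors have support inside that of $(\ket{p},\ket{q})$, so the affine line $t\mapsto (\ket{p},\ket{q})+t(\ket{p'},\ket{q'})$ consists of vectors supported inside $\mathrm{supp}(\ket{p},\ket{q})$, and for one of the two signs of $t$ there is a finite largest admissible $|t|>0$ at which ew-positivity is preserved but at least one previously positive coordinate becomes zero. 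The resulting vector lies in $E_1(M_N)$, is ew-positive, is nonzero by linear independence of $(\ket{p'},\ket{q'})$ from $(\ket{p},\ket{q})$, and has zero pattern strictly containing $\mathcal{Z}\lb\ket{p},\ket{q}\rb$, contradicting $(3)$. The only subtle point is verifying that such a finite maximal $|t|$ exists for at least one sign, which follows because $(\ket{p'},\ket{q'})$ has some coordinate where $(\ket{p},\ket{q})$ is strictly positive and on which $(\ket{p'},\ket{q'})$ differs from any scalar multiple of $(\ket{p},\ket{q})$, guaranteeing that a coordinate is driven to zero in finite time in the appropriate direction.
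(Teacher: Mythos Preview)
Your proof is correct and follows essentially the same route as the paper's. The paper also proves the cycle $(1)\Rightarrow(2)\Rightarrow(3)\Rightarrow(4)\Rightarrow(1)$ using the same mechanism in each step: for $(1)\Rightarrow(2)$ it subtracts a small multiple of $(\ket{p'},\ket{q'})$ to produce a nontrivial conic decomposition; for $(3)\Rightarrow(4)$ it picks a second vector in $V^N_{p,q}\cap E_1(M_N)$, normalizes so that it has a positive entry, and slides until a support coordinate hits zero; and for $(4)\Rightarrow(1)$ it observes that any summand in a conic decomposition must lie in the one-dimensional space $V^N_{p,q}\cap E_1(M_N)$. The only cosmetic difference is that the paper handles the finiteness issue in $(3)\Rightarrow(4)$ by replacing $(\ket{p'},\ket{q'})$ with its negative if necessary so as to work only with subtraction, whereas you argue over both signs of $t$; both versions are fine.
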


\begin{proof}
In the following let $\lb\ket{p},\ket{q}\rb\in \mathcal{S}\lb \CP_N\cap\coCP_N\rb$ be non-zero.

Assume that there exists a pair $\lb\ket{p'},\ket{q'}\rb\in \mathcal{S}\lb \CP_N\cap\coCP_N\rb$ such that $\lb\ket{p},\ket{q}\rb\leq_Z\lb\ket{p'},\ket{q'}\rb$ and $\lb\ket{p'},\ket{q'}\rb\neq \alpha\lb\ket{p},\ket{q}\rb$ for any $\alpha\geq 0$. Since $\mathcal{Z}\lb \ket{p},\ket{q}\rb\subseteq \mathcal{Z}\lb \ket{p'},\ket{q'}\rb$, there exists a $\beta>0$ such that $\lb\ket{p},\ket{q}\rb-\beta\lb\ket{p'},\ket{q'}\rb$ is ew-positive and by Lemma \ref{lem:SubspaceCharact} we have that $\lb\ket{p},\ket{q}\rb-\beta\lb\ket{p'},\ket{q'}\rb\in \mathcal{S}\lb \CP_N\cap\coCP_N\rb$. This implies
\[
\lb\ket{p},\ket{q}\rb = \beta\lb\ket{p'},\ket{q'}\rb + (\lb\ket{p},\ket{q}\rb-\beta\lb\ket{p'},\ket{q'}\rb),
\]  
showing that $\lb\ket{p},\ket{q}\rb$ does not generate an extremal ray.

It is obvious that $(2)$ implies $(3)$. To show that $(3)$ implies $(4)$ we assume that $\text{dim}\lb V^N_{p,q}\cap E_1(M_N)\rb>1$. This implies the existence of $\lb \ket{\tilde{p}},\ket{\tilde{q}}\rb\in V^N_{p,q}\cap E_1(M_N)$ such that $\lb \ket{\tilde{p}},\ket{\tilde{q}}\rb\neq \alpha \lb \ket{p},\ket{q}\rb$ for any $\alpha\in\R$, and without loss of generality we can assume that at least one entry of $\lb \ket{\tilde{p}},\ket{\tilde{q}}\rb$ is positive. By Definition \ref{defn:Subspacepq} of the subspace $V^N_{p,q}$ we have $\lb \ket{p},\ket{q}\rb\leq_Z \lb \ket{\tilde{p}},\ket{\tilde{q}}\rb$. Then, we have that 
\[
\alpha_{\max}:=\max\lset\alpha\geq 0~|~\lb\ket{p},\ket{q}\rb - \alpha \lb\ket{\tilde{p}},\ket{\tilde{q}}\rb\in \mathcal{S}\lb \CP_N\cap\coCP_N\rb\rset \in (0,\infty),
\]
and we can define 
\[
\lb\ket{p'},\ket{q'}\rb := \lb\ket{p},\ket{q}\rb - \alpha_{\max} \lb\ket{\tilde{p}},\ket{\tilde{q}}\rb\in \mathcal{S}\lb \CP_N\cap\coCP_N\rb .
\]
Finally, note that $\lb\ket{p'},\ket{q'}\rb\neq (0,0)$ and by maximality of $\alpha_{\max}$ we have $\lb\ket{p},\ket{q}\rb <_Z\lb\ket{p'},\ket{q'}\rb$. 

To show that $(4)$ implies $(1)$ assume that $\text{dim}\lb V^N_{p,q}\cap E_1(M_N)\rb=1$ for $\lb\ket{p},\ket{q}\rb\in \mathcal{S}\lb \CP_N\cap\coCP_N\rb$. Now, consider a non-zero pair $\lb\ket{p'},\ket{q'}\rb\in \mathcal{S}\lb \CP_N\cap\coCP_N\rb$ and assume that
\[
\lb\ket{p},\ket{q}\rb - \lb\ket{p'},\ket{q'}\rb\in \mathcal{S}\lb \CP_N\cap\coCP_N\rb .
\]
Since $\lb\ket{p'},\ket{q'}\rb$ is ew-positive we find that $\mathcal{Z}\lb \ket{p},\ket{q}\rb\subseteq \mathcal{Z}\lb \ket{p'},\ket{q'}\rb$, which implies that $\lb\ket{p'},\ket{q'}\rb\in  V^N_{p,q}\cap E_1(M_N)$. Using the assumption we conclude that $\lb\ket{p'},\ket{q'}\rb = \alpha \lb\ket{p},\ket{q}\rb$ for some $\alpha>0$. Therefore, $\lb\ket{p},\ket{q}\rb$ generates an extremal ray.  

\end{proof}

The previous lemma shows that the extremal rays of the cone $\mathcal{S}\lb \CP_N\cap\coCP_N\rb$ correspond to maximal pairs in the partial ordering $\leq_Z$. An important consequence is the following bound:

\begin{cor}[Rank bound]
If $\lb\ket{p},\ket{q}\rb\in \mathcal{S}\lb \CP_N\cap\coCP_N\rb\setminus\lset (0,0)\rset$ generates an extremal ray, then we have
\[
|\mathcal{Z}\lb \ket{p},\ket{q}\rb|\geq 4^N-1 ,
\]
where we set $|(A,B)| = |A|+|B|$ for sets $A$ and $B$.
\label{cor:RankBound}
\end{cor}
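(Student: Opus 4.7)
The plan is to extract the bound directly from the characterization in Lemma \ref{lem:extremesZP}(4), namely $\dim\bigl(V^N_{p,q}\cap E_1(M_N)\bigr)=1$, by a simple dimension count in the ambient space $(\R^4)^{\otimes N}\oplus(\R^4)^{\otimes N}$ of total dimension $2\cdot 4^N$.

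First I would compute $\dim V^N_{p,q}$. By Definition \ref{defn:Subspacepq}, $V^N_{p,q}$ is the set of pairs $(\ket{p'},\ket{q'})$ whose entries vanish on the index set $\mathcal{Z}(\ket{p},\ket{q})$. Since each element of $\mathcal{Z}(\ket{p},\ket{q})$ corresponds to a coordinate in either $\ket{p'}$ or $\ket{q'}$ that is forced to zero, and these constraints are linearly independent, we get
\[
\dim V^N_{p,q} = 2\cdot 4^N - |\mathcal{Z}(\ket{p},\ket{q})|.
\]
Next I would compute $\dim E_1(M_N)$. Because $K$ is a real symmetric unitary (hence orthogonal involution), so is $K^{\otimes N}$, and $M_N$ swaps the two blocks conjugated by $K^{\otimes N}$. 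The $+1$-eigenspace is exactly the graph $\{(\ket{p'},K^{\otimes N}\ket{p'})\mid \ket{p'}\in(\R^4)^{\otimes N}\}$, which has dimension $4^N$.

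Now I would apply the standard subspace dimension formula
\[
\dim\bigl(V^N_{p,q}\cap E_1(M_N)\bigr) \geq \dim V^N_{p,q} + \dim E_1(M_N) - 2\cdot 4^N,
\]
using that the sum of the two subspaces is contained in the ambient space of dimension $2\cdot 4^N$. Substituting the two dimensions gives
\[
\dim\bigl(V^N_{p,q}\cap E_1(M_N)\bigr) \geq 4^N - |\mathcal{Z}(\ket{p},\ket{q})|.
\]
By Lemma \ref{lem:extremesZP}, extremality of the ray generated by $(\ket{p},\ket{q})$ forces the left-hand side to equal $1$, so $|\mathcal{Z}(\ket{p},\ket{q})|\geq 4^N-1$, as required.

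This is essentially a one-line dimension count, so there is no real obstacle; the only point that needs a brief verification is that the linear conditions cutting out $V^N_{p,q}$ from the ambient space are independent (immediate, since each zero is a distinct coordinate of the direct sum) and that $\dim E_1(M_N)=4^N$ (immediate from $M_N^2=\id$ together with the involution structure giving a graph parametrization).
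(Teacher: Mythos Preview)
Your proof is correct and follows essentially the same argument as the paper: invoke Lemma~\ref{lem:extremesZP}(4) to get $\dim(V^N_{p,q}\cap E_1(M_N))=1$, compute $\dim V^N_{p,q}=2\cdot 4^N-|\mathcal{Z}(\ket{p},\ket{q})|$ and $\dim E_1(M_N)=4^N$, and finish with the elementary dimension inequality for two subspaces in a $2\cdot 4^N$-dimensional ambient space. Your justification of $\dim E_1(M_N)=4^N$ via the graph description is in fact slightly more explicit than the paper's reference to Lemma~\ref{lem:SubspaceCharact}.
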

\begin{proof}
If $\lb\ket{p},\ket{q}\rb\in \mathcal{S}\lb \CP_N\cap\coCP_N\rb\setminus\lset (0,0)\rset$ generates an extremal ray, then we have $\text{dim}\lb V^N_{p,q}\cap E_1(M_N)\rb=1$ by Lemma \ref{lem:extremesZP}. Note that by Definition \ref{defn:Subspacepq} and Lemma \ref{lem:SubspaceCharact} respectively, we have $\text{dim}(E_1(M_N))=4^N$ and $\text{dim}(V_{p,q})=2\cdot 4^{N}-|\mathcal{Z}\lb \ket{p},\ket{q}\rb|$. The corollary then follows from the elementary inequality
\[
\text{dim}(V^N_{p,q}) + \text{dim}(E_1(M_N)) - \text{dim}(V^N_{p,q}\cap E_1(M_N)) \leq 2\cdot 4^N .
\] 
\end{proof}

The previous lemma and corollary characterize the extremal rays of $\mathcal{S}\lb \CP_N\cap\coCP_N\rb$. Now we will apply these results to show that tensor products of extremal rays stay extremal. 

\begin{lem}[Tensor products of extremal rays]
Consider $N_1,N_2\in\N$ and assume that $(\ket{p_1},\ket{q_1})\in \mathcal{S}\lb \CP_{N_1}\cap\coCP_{N_1}\rb$ and $(\ket{p_2},\ket{q_2})\in \mathcal{S}\lb \CP_{N_2}\cap\coCP_{N_2}\rb$ generate extremal rays. Then, the tensor product 
\[
(\ket{p_1}\otimes \ket{p_2},\ket{q_1}\otimes \ket{q_2})\in \mathcal{S}\lb \CP_{N_1+N_2}\cap\coCP_{N_1+N_2}\rb
\]
generates an extremal ray.
\label{lem:TensorProdExtremal}
\end{lem}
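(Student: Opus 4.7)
The plan is to apply the dimensional criterion for extremality from Lemma \ref{lem:extremesZP}, condition (4), combined with the fact that subspace tensor products interact nicely with intersections.

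First I would check membership: the vector $\ket{p_1}\otimes\ket{p_2}$ is ew-positive (as a tensor of ew-positive vectors), and
\[
K^{\otimes(N_1+N_2)}(\ket{p_1}\otimes\ket{p_2}) = (K^{\otimes N_1}\ket{p_1})\otimes(K^{\otimes N_2}\ket{p_2}) = \ket{q_1}\otimes\ket{q_2},
\]
so by Lemma \ref{lem:SubspaceCharact} the pair lies in $\mathcal{S}(\CP_{N_1+N_2}\cap\coCP_{N_1+N_2})$.

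Next I reformulate the intersection $V^N_{p,q}\cap E_1(M_N)$. Since an element of $E_1(M_N)$ is uniquely determined by its first component $\ket{p'}$ (via $\ket{q'}=K^{\otimes N}\ket{p'}$), the intersection is linearly isomorphic to $\tilde W_p \cap K^{\otimes N}\tilde W_q$, where $\tilde W_p$ and $\tilde W_q$ denote the coordinate subspaces of $(\R^4)^{\otimes N}$ spanned by the basis vectors supported on the complements of $\mathcal Z(\ket{p})$ and $\mathcal Z(\ket{q})$ respectively (here I use $K^2=\id$ so $(K^{\otimes N})^{-1}\tilde W_q=K^{\otimes N}\tilde W_q$). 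Applied to each factor, the extremality of $(\ket{p_i},\ket{q_i})$ together with Lemma \ref{lem:extremesZP}(4) yields
\[
\dim\bigl(\tilde W_{p_i}\cap K^{\otimes N_i}\tilde W_{q_i}\bigr)=1,\qquad\text{spanned by }\ket{p_i}.
\]

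For the tensor product pair, the zero-pattern supports factor: $\tilde W_{p_1\otimes p_2}=\tilde W_{p_1}\otimes\tilde W_{p_2}$ and $\tilde W_{q_1\otimes q_2}=\tilde W_{q_1}\otimes\tilde W_{q_2}$, since a product vector is zero exactly where one of its factors is. Using $K^{\otimes(N_1+N_2)}=K^{\otimes N_1}\otimes K^{\otimes N_2}$ and the elementary identity
\[
(A_1\otimes A_2)\cap(C_1\otimes C_2)=(A_1\cap C_1)\otimes(A_2\cap C_2)
\]
for subspaces $A_i,C_i$ of finite-dimensional vector spaces, I obtain
\[
\tilde W_{p_1\otimes p_2}\cap K^{\otimes(N_1+N_2)}\tilde W_{q_1\otimes q_2}=\bigl(\tilde W_{p_1}\cap K^{\otimes N_1}\tilde W_{q_1}\bigr)\otimes\bigl(\tilde W_{p_2}\cap K^{\otimes N_2}\tilde W_{q_2}\bigr),
\]
a one-dimensional space spanned by $\ket{p_1}\otimes\ket{p_2}$. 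Lemma \ref{lem:extremesZP}(4) then concludes the proof.

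The main nontrivial ingredient is the subspace-tensor intersection identity. I expect to justify it briefly via $A_1\otimes A_2=(A_1\otimes V_2)\cap(V_1\otimes A_2)$ and the analogous identity for $C_1,C_2$, then intersect all four and use that $(E\otimes V_2)\cap(V_1\otimes F)=E\otimes F$. Everything else is direct bookkeeping about supports, tensor factorization of $K^{\otimes N}$, and the identification of $E_1(M_N)$ with $(\R^4)^{\otimes N}$.
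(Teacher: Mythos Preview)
Your argument is correct and takes a genuinely different route from the paper. The paper proves the lemma via characterization (2) of Lemma \ref{lem:extremesZP}: given $(\ket{p'},\ket{q'})\geq_Z(\ket{p_1}\otimes\ket{p_2},\ket{q_1}\otimes\ket{q_2})$, it contracts the second tensor factor against a family of ew-positive vectors $\ket{v_{k}}$ with $K\ket{v_k}=\ket{\tilde v_k}$, obtains elements of $\mathcal S(\CP_{N_1}\cap\coCP_{N_1})$ dominating $(\ket{p_1},\ket{q_1})$, invokes extremality of the first factor, and then uses that the $\ket{v_{k_1}}\otimes\cdots\otimes\ket{v_{k_{N_2}}}$ form a basis to force $\ket{p'}=\ket{p_1}\otimes\ket{a}$; a second application of extremality pins down $\ket{a}$.

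Your approach via characterization (4) is cleaner and more structural: you identify $V^N_{p,q}\cap E_1(M_N)$ with $\tilde W_p\cap K^{\otimes N}\tilde W_q$ (using $K^2=\id$), observe that coordinate subspaces of tensor-product vectors factor, and then reduce everything to the linear-algebra identity $(A_1\otimes A_2)\cap(C_1\otimes C_2)=(A_1\cap C_1)\otimes(A_2\cap C_2)$. This avoids the explicit choice of the auxiliary vectors $\ket{v_k}$ and the basis argument; the price is that you must justify the subspace-intersection identity, but your sketch via $A_1\otimes A_2=(A_1\otimes V_2)\cap(V_1\otimes A_2)$ together with $(A_1\otimes V_2)\cap(C_1\otimes V_2)=(A_1\cap C_1)\otimes V_2$ handles this. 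Both proofs are short; yours is arguably more transparent about why extremality tensorises.
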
 

\begin{proof}
Note that $(K^{\otimes N_1}\otimes K^{\otimes N_2})(\ket{p_1}\otimes \ket{p_2}) = \ket{q_1}\otimes \ket{q_2}$ and that tensor products of ew-positive vectors are ew-positive. Therefore, we have 
\[
(\ket{p_1}\otimes \ket{p_2},\ket{q_1}\otimes \ket{q_2})\in \mathcal{S}\lb \CP_{N_1+N_2}\cap\coCP_{N_1+N_2}\rb .
\]
Now, assume that $(\ket{p'},\ket{q'})\in \mathcal{S}\lb \CP_{N_1+N_2}\cap\coCP_{N_1+N_2}\rb$ satisfies 
\begin{equation}
(\ket{p_1}\otimes \ket{p_2},\ket{q_1}\otimes \ket{q_2})\leq_Z(\ket{p'},\ket{q'}).
\label{equ:tpex1}
\end{equation}
Consider the vectors
\[
\ket{v_1} = \begin{pmatrix} 1 \\ 1 \\ 0 \\ 0\end{pmatrix}, \quad \ket{v_2} = \begin{pmatrix} 1 \\ 0 \\ 1 \\ 0\end{pmatrix},\quad \ket{v_3} = \begin{pmatrix} 1 \\ 0 \\ 0 \\ 1\end{pmatrix},\quad \ket{v_4} = \begin{pmatrix} 0 \\ 0 \\ 1 \\ 1\end{pmatrix},
\]
and
\[
\ket{\tilde{v}_1} = \begin{pmatrix} 0 \\ 0 \\ 1 \\ 1\end{pmatrix}, \quad \ket{\tilde{v}_2} = \begin{pmatrix} 0 \\ 1 \\ 0 \\ 1\end{pmatrix},\quad \ket{\tilde{v}_3} = \begin{pmatrix} 0 \\ 1 \\ 1 \\ 0\end{pmatrix},\quad \ket{\tilde{v}_4} = \begin{pmatrix} 1 \\ 1 \\ 0 \\ 0\end{pmatrix}.
\]
For any $k_1,\ldots ,k_{N_2}\in\lset 1,2,3,4\rset$ we define
\[
\ket{p'_{k_1,\ldots ,k_{N_2}}} := \lb\one^{N_1}_4\otimes \bra{v_{k_1}}\otimes\cdots \otimes \bra{v_{k_{N_2}}}\rb \ket{p'},
\]
and 
\[
\ket{q'_{k_1,\ldots ,k_{N_2}}} := \lb\one^{N_1}_4\otimes \bra{\tilde{v}_{k_1}}\otimes\cdots \otimes \bra{\tilde{v}_{k_{N_2}}}\rb \ket{q'}.
\]
Using that $K\ket{v_i}=\ket{\tilde{v}_i}$ for any $i\in\lset 1,2,3,4\rset$ we find that 
\[
\lb\ket{p'_{k_1,\ldots ,k_{N_2}}},\ket{q'_{k_1,\ldots ,k_{N_2}}}\rb \in \mathcal{S}\lb \CP_{N_1}\cap\coCP_{N_1}\rb
\] 
and by \eqref{equ:tpex1} we have 
\[
\lb\ket{p_1},\ket{q_1}\rb \leq_Z (\ket{p'_{k_1,\ldots ,k_{N_2}}},\ket{q'_{k_1,\ldots ,k_{N_2}}})
\]
for any $k_1,\ldots ,k_{N_2}\in\lset 1,2,3,4\rset$. Using extremality of $\lb\ket{p_1},\ket{q_1}\rb$ and Lemma \ref{lem:extremesZP} we find $\alpha_{k_1,\ldots ,k_{N_2}}\geq 0$ for any $k_1,\ldots ,k_{N_2}\in\lset 1,2,3,4\rset$ such that 
\[
\ket{p'_{k_1,\ldots ,k_{N_2}}} = \alpha_{k_1,\ldots ,k_{N_2}}\ket{p_1} \quad\text{and}\quad \ket{q'_{k_1,\ldots ,k_{N_2}}} = \alpha_{k_1,\ldots ,k_{N_2}}\ket{q_1}.
\] 
Note that 
\[
\lset\ket{v_{k_1}}\otimes\cdots \otimes \ket{v_{k_{N_2}}} ~:~k_1,\ldots ,k_{N_2}\in\lset 1,2,3,4\rset\rset 
\]
is a basis of $(\R^4)^{\otimes N_2}$. Thus, we can define a linear functional $\alpha:(\R^4)^{\otimes N_2}\ra\R$ (by extending $\alpha_{k_1,\ldots ,k_{N_2}}$ linearly) such that 
\[
\lb\one^{N_1}_4\otimes \bra{v}\rb \ket{p'} = \alpha(v)\ket{p_1},
\]
for any $\ket{v}\in (\R^4)^{\otimes N_2}$. Writing $\alpha(v)=\braket{v}{a}$ for some $\ket{a}\in (\R^4)^{\otimes N_2}$ shows that 
\[
\ket{p'} = \ket{p_1}\otimes \ket{a}.
\]
Since the vectors $\ket{p'}$ and $\ket{p_1}$ are ew-positive, $\ket{a}$ is ew-positive as well, and since the vector
\[
\ket{q'} = (K^{\otimes N_1}\otimes K^{\otimes N_2}) \ket{p'} = \ket{q_1}\otimes K^{\otimes N_2}\ket{a},
\]
and the vector $\ket{q_1}$ are ew-positive, it follows that $K^{\otimes N_2}\ket{a}$ is ew-positive. We conclude that 
\[
\lb\ket{a},K^{\otimes N_2}\ket{a}\rb\in \mathcal{S}\lb \CP_{N_2}\cap\coCP_{N_2}\rb.
\]
By \eqref{equ:tpex1} we have that
\[
\lb\ket{p_2},\ket{q_2}\rb \leq_Z \lb\ket{a},K^{\otimes N_2}\ket{A}\rb,
\]
and by extremality (and Lemma \ref{lem:extremesZP}), we find $\beta\geq 0$ such that 
\[
\lb\ket{a},K^{\otimes N_2}\ket{a}\rb = \beta\lb\ket{p_2},\ket{q_2}\rb.
\]
Finally, we conclude that 
\[
\lb \ket{p'},\ket{q'}\rb = \lb\ket{p_1}\otimes \ket{a},\ket{q_1}\otimes K^{\otimes N_2}\ket{a}\rb = \beta\lb\ket{p_1}\otimes\ket{p_2},\ket{q_1}\otimes\ket{q_2}\rb.
\]
This finishes the proof by Lemma \ref{lem:extremesZP}.

\end{proof}

\subsection{Orbits of extremal rays under symmetry}
\label{subsec:OrbitsSymm}

To simplify the characterization of extremal rays we consider the symmetry group of the cone $\mathcal{S}\lb \CP_N\cap\coCP_N\rb$ (see Definition \ref{defn:SPPPoly}). Let $S_4\ni\sigma\mapsto U_{\sigma}\in\M_4$ denote the usual representation of the symmetric group on $\C^4$ (i.e.~by permutation matrices). As $\lbr K,U_{\sigma}\rbr=0$ for any $\sigma\in S_4$ we find that $\mathcal{S}\lb \CP_N\cap\coCP_N\rb$ is invariant under multiplication with $\bigotimes^N_{i=1} U_{\sigma_i}$ for any $\sigma_1,\ldots ,\sigma_N\in S_4$. Moreover, let $S_N\ni \tau \mapsto V_{\tau}\in\M^{\otimes N}_4$ denote the representation of the symmetric group $S_N$ acting by permuting the tensor factors, i.e.~such that $V_{\tau}\ket{i_1,\ldots ,i_N} = \ket{i_{\tau^{-1}(1)},\ldots ,i_{\tau^{-1}(N)}}$ for any $i_1,\ldots i_N\in\lset 1,2,3,4\rset$. Clearly, $\mathcal{S}\lb \CP_N\cap\coCP_N\rb$ is invariant under multiplication by $V_{\tau}$ for any $\tau\in S_N$. Finally, note that for every $(\ket{p},\ket{q})\in \mathcal{S}\lb \CP_N\cap\coCP_N\rb$ also $(\ket{q},\ket{p}) = (K^{\otimes N}\ket{p}, K^{\otimes N}\ket{q})\in \mathcal{S}\lb \CP_N\cap\coCP_N\rb$. This discussion shows: 

\begin{lem}[Symmetry]
For any pair $(\ket{p},\ket{q})\in \mathcal{S}\lb \CP_N\cap\coCP_N\rb$, permutations $\sigma_1,\ldots ,\sigma_N\in S_4$, any permutation $\tau\in S_N$, and $x\in\lset 0,1\rset$ we have that
\[
\lb V_{\tau}\lb\bigotimes^N_{i=1} U_{\sigma_i} \rb\lb K^{\otimes N}\rb^x\ket{p},V_{\tau}\lb\bigotimes^N_{i=1} U_{\sigma_i}\rb\lb K^{\otimes N}\rb^x \ket{q}\rb\in \mathcal{S}\lb \CP_N\cap\coCP_N\rb.
\]
Moreover, if $(\ket{p},\ket{q})\in \mathcal{S}\lb \CP_N\cap\coCP_N\rb$ generates an extremal ray, then the above pair generates an extremal ray as well.
\label{lem:Permutations}
\end{lem}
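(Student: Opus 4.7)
The approach is to work directly from the characterization in Definition \ref{defn:SPPPoly}: a pair $(\ket{p},\ket{q})$ lies in $\mathcal{S}(\CP_N\cap\coCP_N)$ iff $\ket{p},\ket{q}$ are ew-positive and $K^{\otimes N}\ket{p}=\ket{q}$. Setting
\[
T := V_{\tau}\Bigl(\bigotimes_{i=1}^N U_{\sigma_i}\Bigr)(K^{\otimes N})^x,
\]
I will verify (i) that $T\ket{p}$ and $T\ket{q}$ are ew-positive, (ii) that $K^{\otimes N}(T\ket{p}) = T\ket{q}$, and (iii) that $T$ sends extremal rays to extremal rays.

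For (i), both $V_{\tau}$ and $\bigotimes_i U_{\sigma_i}$ are permutation matrices, so they preserve ew-positivity of arbitrary vectors. For $x=0$ this already suffices; for $x=1$ I additionally use that $K^{\otimes N}\ket{p} = \ket{q}$ is ew-positive by hypothesis, and that $K^{\otimes N}\ket{q} = \ket{p}$ is ew-positive too, which relies on $K^2 = \one_4$. I would confirm the latter by a short direct computation from the explicit form of $K$ in \eqref{equ:MatK} (diagonal entries of $K^2$ equal $\tfrac14(1{+}1{+}1{+}1) = 1$, off-diagonal entries such as the $(1,2)$ entry equal $\tfrac14(-1{-}1{+}1{+}1) = 0$). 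For (ii), I would use the two commutation relations $[K^{\otimes N},\bigotimes_i U_{\sigma_i}]=0$, which follows tensor-factor-wise from the quoted identity $[K,U_\sigma]=0$ for $\sigma\in S_4$, and $[K^{\otimes N},V_{\tau}]=0$, which is automatic since a tensor power of any operator commutes with permutation of tensor factors. Sliding $K^{\otimes N}$ past $T$ and invoking the defining relation $K^{\otimes N}\ket{p}=\ket{q}$ (together with $K^2=\one_4$ when $x=1$) yields the required equality.

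For (iii), the operator $T$ is invertible (permutation matrices are, and $K^{\otimes N}$ is an involution by the preceding paragraph), and its inverse has the same structural form, with $\sigma_i \mapsto \sigma_i^{-1}$, $\tau \mapsto \tau^{-1}$, and the same $x$. Thus both $T$ and $T^{-1}$ preserve the cone $\mathcal{S}(\CP_N\cap\coCP_N)$ by (i) and (ii), so $T$ induces a linear bijection of the cone onto itself, which necessarily maps extremal rays bijectively to extremal rays; alternatively one can apply the characterization in Lemma \ref{lem:extremesZP}, noting that $T$ permutes the coordinates of each component (up to the involution $K^{\otimes N}$) and hence preserves the partial order $\leq_Z$ up to relabelling. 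The only mildly non-routine ingredient is $K^2 = \one_4$; the rest is bookkeeping with commutation of permutation operators and tensor-power operators, and should present no real obstacle.
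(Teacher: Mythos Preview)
Your proposal is correct and follows essentially the same approach as the paper: the paper establishes the lemma via the brief discussion preceding it, noting that $[K,U_\sigma]=0$, that $V_\tau$ commutes with $K^{\otimes N}$, and that $(\ket{p},\ket{q})\mapsto(\ket{q},\ket{p})=(K^{\otimes N}\ket{p},K^{\otimes N}\ket{q})$ preserves the cone. Your write-up is in fact more detailed than the paper's, which leaves the extremal-ray assertion unproved; your argument via invertibility of $T$ (with $T^{-1}$ of the same form) giving a linear automorphism of the cone is the standard and correct way to fill that gap.
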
    

The extremal rays of $\mathcal{S}\lb \CP_N\cap\coCP_N\rb$ form orbits under the symmetry group described in the previous lemma. To classify the extremal rays it is therefore sufficient to find a representant of each orbit. In the following, we will denote these orbits by
\begin{align*}
{\scriptstyle\text{orb}_N\lbr (\ket{p},\ket{q}) \rbr := \bigg\lset\lb V_{\tau}\lb\bigotimes^N_{i=1} U_{\sigma_i}K^x \rb\ket{p},V_{\tau}\lb\bigotimes^N_{i=1} U_{\sigma_i}K^x \rb\ket{q}\rb ~:~ \sigma_1,\ldots ,\sigma_N\in S_4 ,~\tau\in S_N,~x\in\lset 0,1\rset\bigg\rset}
\end{align*}
for any $(\ket{p},\ket{q})\in(\R^4)^{\otimes N}\oplus (\R^4)^{\otimes N}$. The following theorem summarizes the previous discussion:

\begin{thm}[Orbits of extremal rays]
For every $N\in\N$, there exists a finite set 
\[
\lset(\ket{p_i},\ket{q_i})\rset^M_{i=1}\subset \mathcal{S}\lb \CP_N\cap\coCP_N\rb
\] 
such that the following holds:
\begin{enumerate}
\item The zero patterns $\mathcal{Z}\lb \ket{p_i},\ket{q_i}\rb$ are maximal in the partial ordering $\leq_Z$ for $i\in\lset 1,\ldots ,M\rset$.
\item The scalings $\R^+\text{orb}_N\lb (\ket{p_i},\ket{q_i}) \rb$ for $i\in\lset 1,\ldots ,M\rset$ are pairwise disjoint. 
\item We have 
\[
\mathcal{S}\lb \CP_N\cap\coCP_N\rb = \text{cone}\lb \bigcup^M_{i=1} \text{orb}_N\lbr (\ket{p_i},\ket{q_i}) \rbr\rb.
\]

\end{enumerate}
\label{thm:OrbitStruc}
\end{thm}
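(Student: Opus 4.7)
The plan is to combine three ingredients already in place: the polyhedrality of $\mathcal{S}(\CP_N\cap\coCP_N)$ via Lemma \ref{lem:SubspaceCharact}, the characterization of extremal rays through maximal zero patterns in Lemma \ref{lem:extremesZP}, and the finite symmetry group from Lemma \ref{lem:Permutations}. By Lemma \ref{lem:SubspaceCharact} the cone $\mathcal{S}(\CP_N\cap\coCP_N)$ is the intersection of a linear subspace with the positive orthant of $(\R^4)^{\otimes N}\oplus(\R^4)^{\otimes N}$, hence a polyhedral cone. By Minkowski--Weyl it therefore has only finitely many extremal rays, and every element of the cone is a nonnegative combination of vectors drawn from these rays.

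Next I would consider the finite group $G_N$ generated by operators of the form $V_\tau\bigotimes_{i=1}^N U_{\sigma_i}$ together with the swap on pairs induced by $K^{\otimes N}$. By Lemma \ref{lem:Permutations}, $G_N$ preserves $\mathcal{S}(\CP_N\cap\coCP_N)$, and since the action is linear and invertible it permutes the (finite) set of extremal rays. This yields a partition of the extremal rays into finitely many orbits, say $M$ in number, and I would pick a single representative $(\ket{p_i},\ket{q_i})$ of each orbit.

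Verification of the three properties is then largely bookkeeping. Property (1) is immediate from Lemma \ref{lem:extremesZP}, since each representative generates an extremal ray and hence has a maximal zero pattern. Property (3) follows from Minkowski--Weyl together with the fact that each extremal ray lies in exactly one orbit, so its generator lies in $\bigcup_{i=1}^M \text{orb}_N[(\ket{p_i},\ket{q_i})]$. For property (2), note that the generators $U_\sigma$, $V_\tau$ and $K$ are orthogonal matrices, so the $G_N$-action preserves rays, and $\R^+\,\text{orb}_N[(\ket{p_i},\ket{q_i})]$ is exactly the union of all rays in the $i$-th orbit; distinct orbits are disjoint by definition. I do not expect a genuine obstacle here --- the only subtlety worth spelling out is that scaling does not merge orbits, which is handled precisely by the orthogonality of the symmetry generators.
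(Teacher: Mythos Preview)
Your proposal is correct and matches the paper's approach exactly. In fact, the paper does not give an explicit proof of this theorem at all; it simply states that the theorem ``summarizes the previous discussion,'' namely the polyhedrality from Lemma~\ref{lem:SubspaceCharact}, the zero-pattern characterization of extremal rays in Lemma~\ref{lem:extremesZP}, and the symmetry action in Lemma~\ref{lem:Permutations} --- precisely the three ingredients you assemble. Your remark that orthogonality of the generators is what prevents scaling from merging orbits is slightly off-target (linearity and invertibility already suffice to map rays to rays, so distinct orbits of extremal rays remain disjoint after taking $\R^+$-scalings), but this does not affect the correctness of the argument.
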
 

The previous theorem shows that to classify the extremal rays of $\mathcal{S}\lb \CP_N\cap\coCP_N\rb$ it is enough to classify the orbits under the symmetry group described above. In the next section we will apply this method to find all extremal rays of $\mathcal{S}\lb \CP_N\cap\coCP_N\rb$ for $N=1$ and $N=2$.

\section{The extremal rays of $\mathcal{S}\lb \CP_1\cap\coCP_1\rb$}
\label{sec:extremalRaysConcrete1}

We will now apply the formalism developed in the previous sections to characterize the extremal rays of $\mathcal{S}\lb \CP_1\cap\coCP_1\rb$. It should be emphasized that this result is not new: It is well-known that the parameters $(x,y,z)^T\in\R^3$ for which the Pauli diagonal map $\Pmap_{\mu}:\M_2\ra\M_2$ with $\ket{\mu}=(1,x,y,z)^T$ is both completely positive and completely copositive form a regular octahedron arising as the intersection of two tetrahedra corresponding to the parameters of $\CP_1$ and $\coCP_1$ respectively (see Figure \ref{fig:stellatedOct}). However, we believe that the characterization of the spectra $\mathcal{S}\lb \CP_1\cap\coCP_1\rb$ has not appeared in the literature before, and that the characterization (implied by this result) of positive Pauli diagonal maps by their spectra is new as well. We will begin with a definition.

\begin{figure*}[t!]
        \center
        \includegraphics[scale=0.7]{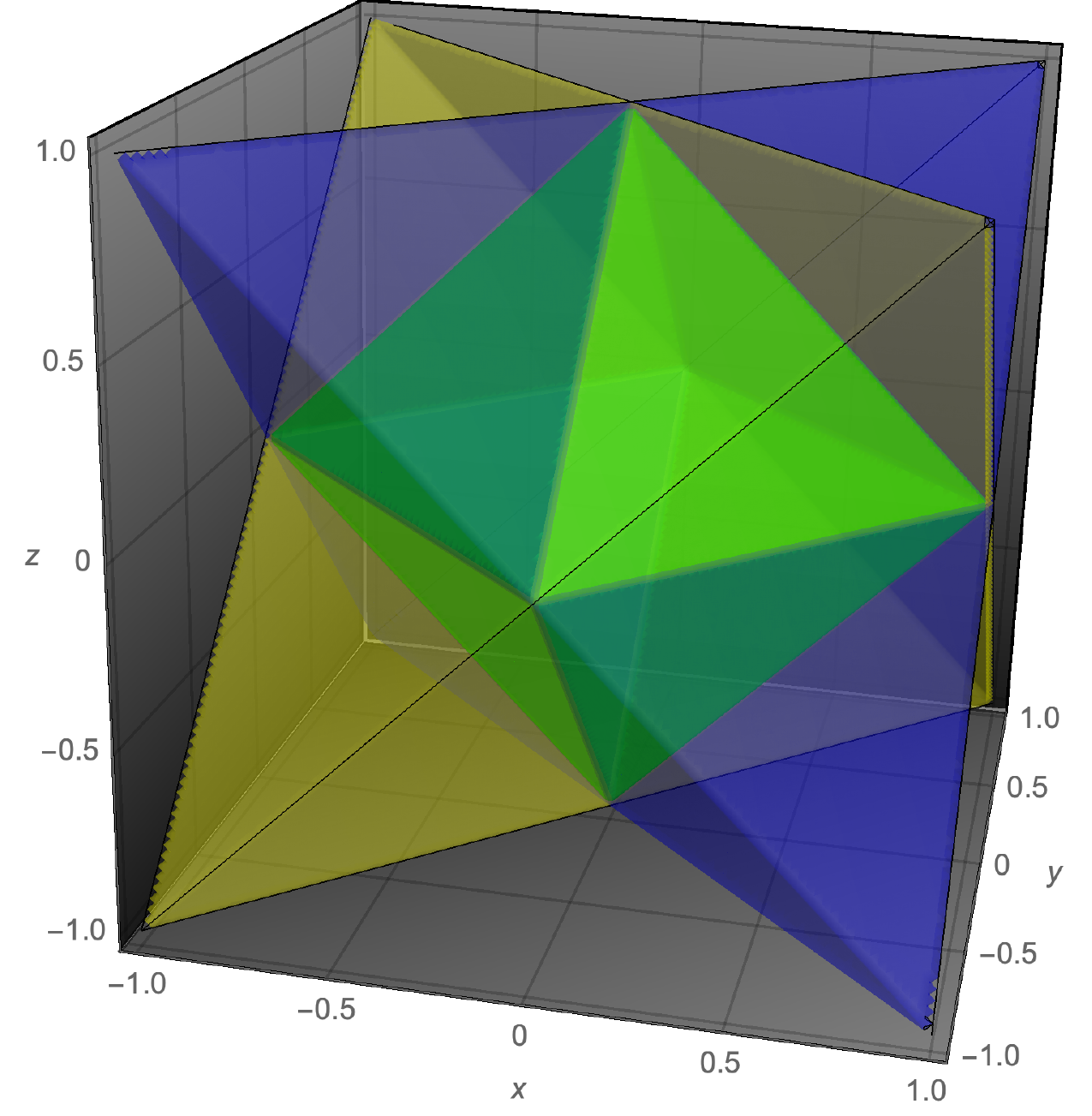}
        \caption{The parameters $(x,y,z)^T\in\R^3$ for which the Pauli diagonal map $\Pi_\mu:\M_2\ra\M_2$ with $\ket{\mu}=(1,x,y,z)^T$ is positive (and decomposable). The intersection of the two tetrahedra forms the octahedral base of the cone $\CP_1\cap \coCP_1$. }
        \label{fig:stellatedOct}
\end{figure*}

\begin{defn}
Given a $2$-element set of indices $\lset i,j\rset\subset \lset 1,2,3,4\rset$ with $i<j$ we define the \emph{complementary} set of indices as $\lset i,j\rset^c:=\lset k,l\rset$ such that $k<l$ and $i\neq k$ and $j\neq l$. Moreover, we introduce the vectors
\begin{equation}
\ket{\lset i,j\rset} = \ket{i}+\ket{j}\in \R^4.
\label{equ:boxVec}
\end{equation}
\end{defn}

Then, we have the following theorem. 

\begin{thm}[Extremal rays for $N=1$]
For every $2$-element set of indices $\lset i,j\rset\subset \lset 1,2,3,4\rset$ with $i<j$ the element $\lb \ket{\lset i,j\rset},\ket{\lset i,j\rset^c}\rb\in \mathcal{S}\lb \CP_1\cap\coCP_1\rb$ generates an extremal ray, and we have 
\[
\mathcal{S}\lb \CP_1\cap\coCP_1\rb = \text{cone}\Big( \text{orb}_1\lbr (\ket{\lset 1,2\rset},\ket{\lset 3,4\rset}) \rbr\Big) . 
\]
In particular, the polyhedral cone $\mathcal{S}\lb \CP_1\cap\coCP_1\rb$ has $6$ extremal rays.
\label{thm:ExtremalRaysN1}
\end{thm}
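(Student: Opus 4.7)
My strategy is to apply Lemma \ref{lem:extremesZP} and Lemma \ref{lem:Permutations} together with Theorem \ref{thm:OrbitStruc} to the single candidate orbit. First I would check that each pair $\lb\ket{\lset i,j\rset}, \ket{\lset i,j\rset^c}\rb$ lies in $\mathcal{S}\lb\CP_1\cap\coCP_1\rb$: both components are ew-positive, and writing $\ket{e}:=\sum_{k=1}^{4}\ket{k}$, the explicit form \eqref{equ:MatK} yields $K\ket{k} = \tfrac{1}{2}\lb\ket{e} - 2\ket{k}\rb$, so $K\ket{\lset i,j\rset} = \ket{e} - \ket{\lset i,j\rset} = \ket{\lset i,j\rset^c}$. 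Since $S_4$ acts transitively on $2$-element subsets of $\lset 1,2,3,4\rset$, all six pairs lie in the single orbit $\text{orb}_1\lbr\lb\ket{\lset 1,2\rset}, \ket{\lset 3,4\rset}\rb\rbr$.

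Next I would establish extremality on the representative $\lb\ket{\lset 1,2\rset}, \ket{\lset 3,4\rset}\rb$ via Lemma \ref{lem:extremesZP}(4) and then invoke symmetry. Here $V^1_{p,q}$ is the subspace of pairs $\lb\ket{p'}, \ket{q'}\rb$ with $\ket{p'}$ supported on $\lset 1,2\rset$ and $\ket{q'}$ supported on $\lset 3,4\rset$. An element of $V^1_{p,q}\cap E_1(M_1)$ takes the form $\ket{p'}=\alpha\ket{1}+\beta\ket{2}$ with $K\ket{p'}$ supported on $\lset 3,4\rset$, and a direct calculation gives
\[
K\ket{p'} = \tfrac{1}{2}\lb(\beta-\alpha)\ket{1} + (\alpha-\beta)\ket{2} + (\alpha+\beta)\ket{3} + (\alpha+\beta)\ket{4}\rb,
\]
so the support condition forces $\alpha=\beta$ and $\dim\lb V^1_{p,q}\cap E_1(M_1)\rb = 1$. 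Extremality then propagates to the full orbit via Lemma \ref{lem:Permutations}.

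The main obstacle is showing that no extremal ray lies outside this orbit. I would proceed by case analysis on $(|A|,|B|) := (|\mathcal{Z}(\ket{p})|, |\mathcal{Z}(\ket{q})|)$, using Corollary \ref{cor:RankBound} to restrict to $|A|+|B|\geq 3$. If $|A|\geq 3$, then $\ket{p}$ is proportional to a single basis vector $\ket{i}$, and $K\ket{i} = \tfrac{1}{2}(\ket{e}-2\ket{i})$ has a negative entry, contradicting ew-positivity of $\ket{q}$; symmetrically $|B|\geq 3$ is ruled out. This reduces us to $(|A|,|B|) \in \lset(1,2),(2,1),(2,2)\rset$. In case $(1,2)$, writing (up to permutation) $\ket{p}=(a,b,c,0)^T$ with $a,b,c>0$ and computing $K\ket{p}=\tfrac{1}{2}(b+c-a,\, a+c-b,\, a+b-c,\, a+b+c)^T$, the fourth entry is strictly positive while any two vanishing entries among the first three force one of $a,b,c$ to vanish, a contradiction; case $(2,1)$ is symmetric. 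In case $(2,2)$, writing $\ket{p}=a\ket{i_1}+b\ket{i_2}$ with $a,b>0$, ew-positivity of $K\ket{p}$ at positions $i_1$ and $i_2$ (which equal $\tfrac{b-a}{2}$ and $\tfrac{a-b}{2}$) forces $a=b$, producing exactly a pair $\lb\ket{\lset i_1,i_2\rset},\ket{\lset i_1,i_2\rset^c}\rb$ of our orbit. Combined with Theorem \ref{thm:OrbitStruc}, this yields $\mathcal{S}\lb\CP_1\cap\coCP_1\rb = \text{cone}\lb\text{orb}_1\lbr\lb\ket{\lset 1,2\rset},\ket{\lset 3,4\rset}\rb\rbr\rb$ with exactly $6$ extremal rays.
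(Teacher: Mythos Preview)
Your proof is correct and follows the same overall architecture as the paper: verify membership via $K\ket{\lset i,j\rset}=\ket{\lset i,j\rset^c}$, establish extremality through Lemma~\ref{lem:extremesZP}, and obtain completeness from the rank bound of Corollary~\ref{cor:RankBound}. The tactical differences are worth noting. For extremality you invoke criterion~(4) of Lemma~\ref{lem:extremesZP} (computing $\dim(V^1_{p,q}\cap E_1(M_1))=1$ directly), whereas the paper uses criterion~(3) (showing that any strictly larger zero pattern forces the zero pair). For completeness the paper argues more structurally: after reducing to $|\mathcal{Z}(\ket{p})|\geq 2$ it applies the orthogonality Lemma~\ref{lem:Orthogonality} to the known extremal pair $(\ket{\lset 1,2\rset},\ket{\lset 3,4\rset})$ to force $\ket{q}$ onto the complementary support, and then uses extremality once more to reach a contradiction. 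You instead avoid Lemma~\ref{lem:Orthogonality} altogether and carry out an explicit case split on $(|\mathcal{Z}(\ket{p})|,|\mathcal{Z}(\ket{q})|)$, ruling out $(1,2)$ and $(2,1)$ by direct computation of $K\ket{p}$ and showing $(2,2)$ lands in the orbit. Your route is slightly more elementary and self-contained; the paper's is shorter and illustrates how the orthogonality lemma does the bookkeeping that you perform by hand.
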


\begin{proof}
Note that 
\[
K \ket{\lset i,j\rset} = \ket{\lset i,j\rset^{c}}
\]
for any $2$-element set of indices $\lset i,j\rset\subset \lset 1,2,3,4\rset$ with $i<j$. This shows that $\lb \ket{\lset i,j\rset},\ket{\lset i,j\rset^c}\rb\in \mathcal{S}\lb \CP_1\cap\coCP_1\rb$. Next, consider $\lb \ket{p},\ket{q}\rb\in \mathcal{S}\lb \CP_1\cap\coCP_1\rb$ such that $\lb \ket{p},\ket{q}\rb >_Z \lb \ket{\lset i,j\rset},\ket{\lset i,j\rset^c}\rb$ for some $i<j$. This implies that  
\[
\mathcal{Z}\lb \ket{p}\rb\supseteq \mathcal{Z}\lb \ket{\lset i,j\rset}\rb = \lset i,j\rset^c \quad \text{and}\quad\mathcal{Z}\lb \ket{q}\rb\supseteq \mathcal{Z}\lb \ket{\lset i,j\rset^c}\rb = \lset i,j\rset,
\]
but such that at least one of the previous inclusions is strict. Without loss of generality, we can assume that the first inclusion is strict, which implies that $\ket{p}\in\R^4$ has at most one non-zero element. But since $K\ket{p}=\ket{q}$ is ew-positive, we conclude that $\ket{p}=\ket{q}=0$, and by Lemma \ref{lem:extremesZP} the element $\lb \ket{\lset i,j\rset},\ket{\lset i,j\rset^c}\rb\in \mathcal{S}\lb \CP_1\cap\coCP_1\rb$ generates an extremal ray.

It is easy to check that 
\[
\lset \lb \ket{\lset i,j\rset},\ket{\lset i,j\rset^c}\rb~:~i,j\in \lset 1,2,3,4\rset, i<j\rset = \text{orb}_1\lbr (\ket{\lset 1,2\rset},\ket{\lset 3,4\rset})\rbr.
\]
Finally, assume that there exists an element $\lb \ket{p},\ket{q}\rb\in \mathcal{S}\lb \CP_1\cap\coCP_1\rb$ generating an extremal ray and such that
\[
\lb \ket{p},\ket{q}\rb\notin \text{cone}\lb\text{orb}_1\lbr (\ket{\lset 1,2\rset},\ket{\lset 3,4\rset})\rbr\rb .
\]
By Corollary \ref{cor:RankBound} we have $|\mathcal{Z}\lb \ket{p},\ket{q}\rb|\geq 3$, and without loss of generality we can assume that $|\mathcal{Z}\lb \ket{p}\rb|\geq 2$. By Lemma \ref{lem:Permutations} we may apply a suitable permutation, and we can assume that $\ket{p}=\lb a,b,0,0\rb^T$ for some $a,b\geq 0$. Since $\braket{\lset 3,4\rset}{p}=0$, we can apply Lemma \ref{lem:Orthogonality} to conclude that $\ket{q}=\lb 0,0,c,d\rb^T$ for some $c,d\geq 0$. Because $\lb \ket{p},\ket{q}\rb\neq \alpha \lb\ket{\lset 3,4\rset},\ket{\lset 1,2\rset}\rb$ for any $\alpha\geq 0$ and because it generates an extremal ray in $\mathcal{S}\lb \CP_1\cap\coCP_1\rb$ we conclude that $0\in\lset a,b,c,d\rset$. But since both $\ket{p}$ and $\ket{q}=K\ket{p}$ are ew-positive, this implies that $\ket{p}=\ket{q}=0$. This finishes the proof. 
\end{proof}

The previous theorem classifies the extremal rays of $\mathcal{S}\lb \CP_1\cap\coCP_1\rb$. As expected there are $6$ extremal rays corresponding to the vertices of the octahedron in Figure \ref{fig:stellatedOct}. As an application we can give a characterization of positive Pauli diagonal maps in terms of their spectrum. Our result can be compared to a characterization of the possible spectra of the Choi matrices of general positive qubit maps obtained in~\cite{johnston2018inverse}.

\begin{thm}[Positivity of qubit Pauli diagonal maps]
Let $\Pi_\mu:\M_2\ra\M_2$ be a Pauli diagonal map with parameters $\ket{\mu}\in \R^4$. The following are equivalent: 
\begin{enumerate}
\item $\Pi_\mu$ is decomposable. 
\item $\Pi_\mu$ is positive.
\item The spectrum $(s_1,s_2,s_3,s_4)\in \R^4$ of the Choi matrix $C_{\Pi_\mu}$ ordered such that $s_1\geq s_2\geq s_3\geq s_4$ satisfies: 
\begin{enumerate}
	\item $s_1\geq s_2\geq s_3\geq 0$.
	\item $s_4+s_3\geq 0$.
\end{enumerate}
\end{enumerate}
\label{thm:DecompQubitPauli}
\end{thm}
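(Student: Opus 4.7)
The plan is to handle the classical equivalence $(1)\Leftrightarrow(2)$ by direct appeal to St{\o}rmer's theorem, and then derive the new spectral characterization $(3)$ from the machinery developed in Section \ref{sec:SpectraPPTPauli}. The implication $(1)\Rightarrow(2)$ is immediate, since any sum of a completely positive and a completely copositive map is positive, while $(2)\Rightarrow(1)$ is precisely St{\o}rmer's theorem recalled in the introduction. So the substantive content is the equivalence $(1)\Leftrightarrow(3)$.

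For $(1)\Leftrightarrow(3)$, I would combine Lemma \ref{lem:SpectrDecCondGen} with the explicit classification in Theorem \ref{thm:ExtremalRaysN1}. The former says that $\Pi_\mu$ is decomposable iff the spectral vector $\ket{s}=\dm\ket{\mu}$ of $C_{\Pi_\mu}$ has non-negative inner product with every $\ket{p}$ coming from an extremal ray of $\mathcal{S}\lb\CP_1\cap\coCP_1\rb$, and the latter says that these extremal rays are generated by the six vectors $\ket{\lset i,j\rset}$ for $1\leq i<j\leq 4$. Evaluating $\braket{s}{\lset i,j\rset}=s_i+s_j$, decomposability is therefore equivalent to the six pair inequalities
\[
s_i+s_j\geq 0 \qquad \text{for all } 1\leq i<j\leq 4.
\]

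It then remains to show that this family of six inequalities reformulates, after sorting the spectrum so that $s_1\geq s_2\geq s_3\geq s_4$, into the pair of conditions (3)(a)–(b). Since the pair-sum conditions are permutation-invariant in the indices, this reordering is harmless. From (3)(a)–(b), pairs drawn from $\lset 1,2,3\rset$ are sums of non-negative entries, and pairs containing $4$ satisfy $s_i+s_4\geq s_3+s_4\geq 0$. Conversely, $s_3+s_4\geq 0$ is one of the pair inequalities, while $s_3\geq 0$ (and hence also $s_1,s_2\geq 0$) follows by a short case split on the sign of $s_4$ using $s_3+s_4\geq 0$. I do not foresee any real obstacle: the structural work has already been carried out in Theorem \ref{thm:ExtremalRaysN1}, and the remainder is essentially a translation of that classification into the language of ordered spectra; the only bookkeeping point is that the spectral vector $\dm\ket{\mu}$ carries the fixed Bell-basis ordering whereas condition (3) uses the descending ordering, but permutation-invariance of the pair-sum conditions makes this irrelevant.
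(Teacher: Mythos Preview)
Your proof is correct and the $(1)\Leftrightarrow(3)$ part matches the paper's argument exactly: both combine Lemma~\ref{lem:SpectrDecCondGen} with Theorem~\ref{thm:ExtremalRaysN1} to reduce decomposability to the six pair-sum inequalities $s_i+s_j\geq 0$, and then rewrite these in terms of the ordered spectrum.

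The one genuine difference is how you connect condition $(2)$. You invoke St{\o}rmer's theorem as a black box to get $(2)\Rightarrow(1)$, whereas the paper proves $(2)\Rightarrow(3)$ directly and independently of St{\o}rmer. It observes that the Pauli diagonal map associated to each extremal ray $\ket{\lset i,j\rset}$ has Choi matrix of rank~$2$, hence is separable in $\M_2\otimes\M_2$ by \cite[Theorem~1]{horodecki2000operational}; thus each such map is entanglement breaking, and positivity of $\Pi_\mu$ alone forces $\text{Tr}\bigl(C_{\Pi_\mu}C_{\Pi_{\dm^T\ket{\lset i,j\rset}}}\bigr)=s_i+s_j\geq 0$. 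What this buys the paper is a self-contained cycle $(1)\Rightarrow(2)\Rightarrow(3)\Rightarrow(1)$ that does not assume St{\o}rmer's theorem, and in fact reproves it for Pauli diagonal maps (and, via the Sinkhorn-type scaling, for all qubit maps). Your route is shorter but loses this independence.
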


\begin{proof}
It is clear that $1.$ implies $2.$. Consider a Pauli diagonal map $\Pmap_\mu:\M_2\ra\M_2$ with parameter vector $\ket{\mu}\in\R^4$ and such that 
\[
\ket{s}:=\text{spec}\lb C_{\Pmap_\mu}\rb = \dm\ket{\mu}.
\]
For any $\lset i,j\rset\subset \lset 1,2,3,4\rset$ with $i< j$ we have
\begin{equation}
\text{Tr}\lb C_{\Pi_\mu}C_{\Pi_{\dm^T\ket{\lset i,j\rset}}}\rb=\braket{s}{\lset i,j\rset} = s_i+s_j .
\label{equ:sisjPos}
\end{equation}
The Pauli diagonal map $\Pi_{\dm^T\ket{\lset i,j\rset}}:\M_2\ra\M_2$ corresponding to the extremal ray $\ket{\lset i,j\rset}$ of $\mathcal{S}\lb \CP_1\cap\coCP_1\rb$ satisfies 
\[
\text{rk}\lb C_{\Pi_{\dm^T\ket{\lset i,j\rset}}}\rb=2,
\]
since $\text{spec}\lb C_{\Pi_{\dm^T\ket{\lset i,j\rset}}}\rb = \ket{\lset i,j\rset}$. By \cite[Theorem 1]{horodecki2000operational} it follows that $C_{\Pi_{\dm^T\ket{\lset i,j\rset}}}\in \M_2\otimes \M_2$ is separable. Hence, the expression in \eqref{equ:sisjPos} is positive whenever $\Pi_\mu:\M_2\ra\M_2$ is a positive map. Simplifying these inequalities for any $\lset i,j\rset\subset \lset 1,2,3,4\rset$ with $i<j$ shows that $2.$ implies $3.$. 

The conditions in $3.$ imply that the expressions in \eqref{equ:sisjPos} are positive for $\lset i,j\rset\subset \lset 1,2,3,4\rset$ with $i<j$. By Theorem \ref{thm:ExtremalRaysN1} this shows that $\braket{s}{p}\geq 0$ for any $\ket{p}\in\R^4$ such that $\lb\ket{p},K\ket{p}\rb\in \mathcal{S}\lb \CP_2\cap\coCP_2\rb$. By Lemma \ref{lem:SpectrDecCondGen} we find that $\Pmap_\mu:\M_2\ra\M_2$ is decomposable. This shows that $3.$ implies $2.$ and finishes the proof. 

\end{proof}

Using the Sinkhorn-type scaling argument from~\cite{aubrun2015two}, the previous theorem also implies St\o rmer's theorem~\cite{stormer1963positive}, that every positive qubit map is decomposable. However, it is of course much easier to directly decompose a positive qubit Pauli diagonal map as a sum of a completely positive and a completely copositive map (see~\cite{aubrun2015two} or look at Figure \ref{fig:stellatedOct}).

\section{The structure of $\mathcal{S}\lb \CP_2\cap\coCP_2\rb$ and characterizing $\Dec_2$} 
\label{sec:extremalRaysConcrete2}

To study $\mathcal{S}\lb \CP_2\cap\coCP_2\rb$ it is convenient to identify $(\R^4)^{\otimes 2}\simeq M_4(\R)$, i.e.~the space of $4\times 4$ matrices with real entries. With this identification we have 
\[
\mathcal{S}\lb \CP_2\cap\coCP_2\rb \simeq \Big\lset \lb P,Q\rb\in M_4(\R)\oplus M_4(\R)~:~P,Q\text{ ew-positive and } KPK=Q\Big\rset .
\]

\subsection{Classification of extremal rays}

In the next theorem, we identify the orbits of extremal rays of the cone $\mathcal{S}\lb \CP_2\cap\coCP_2\rb$ under the symmetry group introduced in Theorem \ref{thm:OrbitStruc}. One such orbit arises from the tensor products (see Lemma \ref{lem:TensorProdExtremal}) of the extremal rays of $\mathcal{S}\lb \CP_1\cap\coCP_1\rb$ identified in Theorem \ref{thm:ExtremalRaysN1}. Surprisingly, there are only two other orbits of extremal rays. To abbreviate any further discussion about these extremal rays, we will call them \emph{boxes}, \emph{diagonals}, and \emph{crosses} motivated by the shape of their zero patterns:

\begin{thm}[Extremal rays for $N=2$] 
The polyhedral cone $\mathcal{S}\lb \CP_2\cap\coCP_2\rb$ has $252$ extremal rays divided into three orbits such that
\[
\mathcal{S}\lb \CP_2\cap\coCP_2\rb = \text{cone}\lb \bigcup^3_{i=1} \text{orb}_2\lbr (P_i,Q_i) \rbr\rb.
\]
The orbits are generated by the following pairs:

\begin{enumerate}
\item 
\[
(P_1,Q_1) = \lb\begin{pmatrix} 1 & 1 & 0 & 0 \\ 1 & 1 & 0 & 0 \\ 0 & 0 & 0 & 0 \\ 0 & 0 & 0 & 0\end{pmatrix},\begin{pmatrix} 0 & 0 & 0 & 0 \\ 0 & 0 & 0 & 0 \\ 0 & 0 & 1 & 1 \\ 0 & 0 & 1 & 1\end{pmatrix}\rb.
\]
The elements of $\alpha\text{orb}_2\lbr \lb P_1,Q_1\rb\rbr$ for any $\alpha >0$ are called \emph{boxes} and they generate $36$ extremal rays.
\item 
\[
(P_2,Q_2) = \lb\begin{pmatrix} 1 & 0 & 0 & 0 \\ 0 & 1 & 0 & 0 \\ 0 & 0 & 1 & 0 \\ 0 & 0 & 0 & 1\end{pmatrix},\begin{pmatrix} 1 & 0 & 0 & 0 \\ 0 & 1 & 0 & 0 \\ 0 & 0 & 1 & 0 \\ 0 & 0 & 0 & 1\end{pmatrix}\rb.
\]
The elements of $\alpha\text{orb}_2\lbr \lb P_2,Q_2\rb\rbr$ for any $\alpha >0$ are called \emph{diagonals} and they generate $24$ extremal rays.
\item 
\[
(P_3,Q_3) = \lb\begin{pmatrix} 1 & 0 & 0 & 0 \\ 0 & 1 & 0 & 0 \\ 0 & 0 & 1 & 0 \\ 1 & 1 & 1 & 0\end{pmatrix},\begin{pmatrix} 1 & 0 & 0 & 1 \\ 0 & 1 & 0 & 1 \\ 0 & 0 & 1 & 1 \\ 0 & 0 & 0 & 0\end{pmatrix}\rb.
\]
The elements of $\alpha\text{orb}_2\lbr \lb P_3,Q_3\rb\rbr$ for any $\alpha >0$ are called \emph{crosses} and they generate $192$ extremal rays.
\end{enumerate}

\label{thm:ExtremeRaysN2}
\end{thm}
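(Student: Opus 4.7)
My plan is to prove the theorem in four stages: (i) verify that each of the three pairs $(P_i, Q_i)$ lies in $\mathcal{S}\lb\CP_2\cap\coCP_2\rb$ and generates an extremal ray, (ii) compute the orbit sizes to obtain $36 + 24 + 192 = 252$, (iii) check that the three orbits are pairwise disjoint, and (iv) show that every extremal ray lies in one of them.

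For stage (i), membership reduces to ew-positivity together with the identity $Q_i = KP_iK$ in matrix form (since $K^{\otimes 2}$ acts on $(\R^4)^{\otimes 2} \simeq M_4(\R)$ as $P \mapsto KPK$), both immediate from $K = \tfrac12(J - 2I)$. For extremality of the box $(P_1, Q_1)$, I would observe that, as vectors, $(P_1, Q_1) = (\ket{\lset 1,2\rset} \otimes \ket{\lset 1,2\rset}, \ket{\lset 3,4\rset} \otimes \ket{\lset 3,4\rset})$, i.e.\ the tensor square of the $N=1$ extremal pair $(\ket{\lset 1,2\rset}, \ket{\lset 3,4\rset})$ from Theorem~\ref{thm:ExtremalRaysN1}, so Lemma~\ref{lem:TensorProdExtremal} immediately gives extremality. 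For $(P_2, Q_2)$ and $(P_3, Q_3)$ I would instead apply criterion~(4) of Lemma~\ref{lem:extremesZP}: compute $\dim V^2_{P_i, Q_i} = 32 - |\mathcal{Z}(P_i, Q_i)|$ (equal to $8$ and $12$ respectively) and then solve the linear system $KPK = Q$ restricted to the prescribed zero pattern to verify that $V^2_{P_i, Q_i} \cap E_1(M_2)$ is one-dimensional. Using $K_{ab} = \tfrac12(1 - 2\delta_{ab})$, for diagonal $P = \mathrm{diag}(p_1,\ldots,p_4)$ the off-diagonal entries of $KPK$ evaluate to $\tfrac14(\mathrm{tr}(P) - 2p_i - 2p_j)$, whose simultaneous vanishing forces $p_1 = p_2 = p_3 = p_4$; the cross collapses analogously under a similar (slightly longer) $7$-parameter computation.

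For stages (ii) and (iii) I would parameterize each orbit directly. Boxes are indexed by a row $2$-subset and a column $2$-subset of $\lset 1,2,3,4\rset$, giving $\binom{4}{2}^2 = 36$ (and $K\ket{\lset i,j\rset} = \ket{\lset i,j\rset^c}$ ensures the second component is determined). Every permutation matrix $P_\pi$ commutes with $K$ (since $K = \tfrac12(J - 2I)$ and $J$ is permutation-invariant), so $(P_\pi, P_\pi) \in \mathcal{S}\lb\CP_2\cap\coCP_2\rb$, and since $\tau$ and $K^{\otimes 2}$ fix this pair, the diagonal orbit has size $4! = 24$. Crosses split into a ``row-type'' and a ``column-type'' (swapped by $\tau$ or $K^{\otimes 2}$), each parameterized by a heavy row or column ($4$ ways), the missing index along it ($4$ ways), and a bijection between the three remaining rows and columns ($3! = 6$ ways), yielding $96 + 96 = 192$. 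Disjointness of the three orbits follows from the row-sum multiset of $P$: box has $\lset 0,0,2,2\rset$, diagonal has $\lset 1,1,1,1\rset$, and cross has $\lset 3,1,1,1\rset$ or $\lset 2,2,2,0\rset$, all distinct.

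The main obstacle is stage (iv): ruling out further orbits. My approach combines Corollary~\ref{cor:RankBound} (an extremal pair has $|\mathcal{Z}(P,Q)| \geq 15$, i.e.\ at most $17$ combined nonzero entries) with the symmetry group of Lemma~\ref{lem:Permutations}. After using $(S_4)^2$ to normalize the row-and-column structure of $P$, $\tau$ to swap the two tensor factors, and $K^{\otimes 2}$ to interchange $P$ with $Q$, one organizes the case split by the row-sum multiset of $P$ and then by finer invariants of the support; for each canonical pattern one solves $KPK = Q$ with the prescribed zero constraints and tests $\leq_Z$-maximality via Lemma~\ref{lem:extremesZP}. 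Each individual case is finite linear algebra, but the branching is delicate and the total enumeration is long --- this is precisely what the authors defer to Appendix~\ref{app:Therest}. As an independent check, feeding the $16$ linear equalities $Q = KPK$ together with the $32$ sign constraints $P_{ij}, Q_{ij} \geq 0$ into standard polyhedral-computation software returns exactly $252$ extremal rays, corroborating the three-orbit decomposition.
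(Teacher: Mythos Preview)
Your proposal is correct and follows essentially the same approach as the paper: extremality of the three families via Lemma~\ref{lem:extremesZP} (the paper uses criterion~(3) rather than~(4), but this is cosmetic), orbit-size counts giving $36+24+192=252$, and completeness via the rank bound of Corollary~\ref{cor:RankBound} followed by a symmetry-reduced case analysis on zero patterns, with the software check as confirmation. The only notable addition in the paper's appendix is that the case analysis is organized by a full enumeration (via Brualdi's formula and Kostka numbers) of the $(0,1)$-matrices with exactly $8$ zeros up to row/column permutation, and is driven by three reusable ``rules'' (Lemmas~\ref{lem:boxRule}--\ref{lem:crossRule}) --- instances of the orthogonality Lemma~\ref{lem:Orthogonality} applied to boxes, diagonals, and crosses --- which propagate zeros between $P$ and $Q$ and make each branch short.
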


It should be noted that Theorem \ref{thm:ExtremeRaysN2} can be easily verified using standard software for analyzing convex polytopes (e.g.~the Multi-Parametric Toolbox~\cite{MPT3} in Matlab, or polymake~\cite{polymake1,polymake2}). We will also present a human-readable proof in Appendix \ref{app:Therest}, which is unfortunately quite tedious. It would be nice to have a shorter proof for this result.

\subsection{Properties of ququart Pauli PPT maps}

We need to make a few comments about the extremal rays of the cone $\mathcal{S}\lb \CP_2\cap\coCP_2\rb$ identified in Theorem \ref{thm:ExtremeRaysN2}. Recall that the set $\mathcal{S}\lb \CP_2\cap\coCP_2\rb$ consists of pairs of spectra of Choi matrices $C_{\Pmap^{(2)}_{\mu}}$ and $C_{\vartheta^{\otimes 2}_2\circ\Pmap^{(2)}_{\mu}}$ for PPT Pauli diagonal maps $\Pmap^{(2)}_{\mu}:\M_4\ra\M_4$ (cf.~Definition \ref{defn:SPPPoly}). By the particular form of the spectra identified in Theorem \ref{thm:ExtremeRaysN2} we have the following: 

\begin{cor}[Properties of extremal PPT Pauli diagonal maps]
Let $\Pmap^{(2)}_{\mu}:\M_4\ra\M_4$ be an extremal PPT Pauli diagonal map. Then we have the following: 
\begin{enumerate}
\item Both Choi matrices 
\[
C_{\Pmap^{(2)}_{\mu}},C_{\vartheta^{\otimes 2}_2\circ\Pmap^{(2)}_{\mu}}\in (\M_4\otimes \M_4)^+
\] 
are multiples of Hermitian projectors. 
\item The birank 
\[
\lb \text{rk}\lb C_{\Pmap^{(2)}_{\mu}}  \rb, \text{rk}\lb C^{\Gamma}_{\Pmap^{(2)}_{\mu}}\rb\rb
\]
of the Choi matrix $C_{\Pmap^{(2)}_{\mu}}\in (\M_4\otimes \M_4)^+$ is either $(4,4)$ or $(6,6)$. 
\item If $\Pmap^{(2)}_{\mu}$ is not entanglement breaking, then its spectral matrix $\dm\mu \dm^T$ is a cross. 
\end{enumerate}
\label{cor:Properties}
\end{cor}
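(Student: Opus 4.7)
The plan is to read off all three claims from the orbit representatives listed in Theorem~\ref{thm:ExtremeRaysN2}, exploiting the invariance of each property under the symmetry group of Lemma~\ref{lem:Permutations}. Every extremal PPT Pauli diagonal map $\Pmap^{(2)}_{\mu}$ corresponds via Definition~\ref{defn:SPPPoly} to a pair $(\ket p, \ket q) \in \mathcal{S}\lb\CP_2\cap\coCP_2\rb$ lying in the orbit of a box $(P_1, Q_1)$, a diagonal $(P_2, Q_2)$, or a cross $(P_3, Q_3)$. The symmetries act by permuting the entries of $\ket p$ and $\ket q$ and by the swap $(\ket p, \ket q) \leftrightarrow (\ket q, \ket p)$, so any property depending only on the multiset of entries is an orbit invariant.

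Claims 1 and 2 follow by direct inspection: each representative $P_i$ and $Q_i$ has entries in $\lset 0, 1\rset$, with exactly four $1$s for boxes and diagonals and exactly six $1$s for crosses. By Theorem~\ref{thm:CPCond} the entries of $\ket p$ and $\ket q$ are (up to reordering) the eigenvalues of $C_{\Pmap^{(2)}_{\mu}}$ and $C^{\Gamma}_{\Pmap^{(2)}_{\mu}}$ respectively. Both Choi matrices therefore have all nonzero eigenvalues equal to a common positive constant, forcing them to be positive scalar multiples of orthogonal projectors; and the biranks are $(4,4)$ for boxes and diagonals and $(6,6)$ for crosses.

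For Claim 3, note first that entanglement breaking is preserved by the full symmetry group: local Pauli conjugations and permutations of the qubit tensor factors preserve separability of the Choi matrix, while the swap $(\ket p, \ket q) \leftrightarrow (\ket q, \ket p)$ corresponds to postcomposing the map with the transpose and hence also preserves separability. It therefore suffices to verify EB for the two representatives $(P_1,Q_1)$ and $(P_2,Q_2)$. The box factorises as $P_1 = \ket{\lset 1,2\rset}^{\otimes 2}$ and $Q_1 = \ket{\lset 3,4\rset}^{\otimes 2}$, so by Lemma~\ref{lem:TensorProdExtremal} the associated map equals $\Pmap_{\nu}\otimes \Pmap_{\nu}$, where $\Pmap_{\nu}$ is the qubit PPT Pauli diagonal map coming from an extremal ray of $\mathcal{S}\lb\CP_1\cap\coCP_1\rb$ listed in Theorem~\ref{thm:ExtremalRaysN1}. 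By Horodecki's theorem that qubit PPT states are separable, $\Pmap_{\nu}$ is EB, and tensor products of EB maps are EB.

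For the diagonal representative, a direct computation expanding the four Bell projectors in the Pauli basis yields
\[
C_{\Pmap^{(2)}_{\mu_2}} \;=\; \sum_{k=1}^{4} \proji{\phi_k}{\phi_k}{A_1 A_2} \otimes \proji{\phi_k}{\phi_k}{B_1 B_2},
\]
which is manifestly separable under the bipartition $(A_1 A_2):(B_1 B_2)$ relevant for entanglement breaking; equivalently, $\Pmap^{(2)}_{\mu_2}$ is realised as the Bell measure-and-re-prepare channel $X\mapsto \sum_k \bracket{\phi_k}{X}{\phi_k}\,\proj{\phi_k}{\phi_k}$, which is EB by construction. I expect this identity to be the main obstacle: one must pass from the initial representation $C_{\Pmap^{(2)}_{\mu_2}} = \sum_k \proji{\phi_k}{\phi_k}{A_1B_1}\otimes \proji{\phi_k}{\phi_k}{A_2B_2}$ (which is separable only under the \emph{wrong} bipartition $(A_1B_1):(A_2B_2)$) to the separable form above, and the cleanest route is via the orthogonality relations $\sum_k s^k_a s^k_b = 4\delta_{ab}$ for the sign coefficients in the Pauli expansion $\proj{\phi_k}{\phi_k} = \tfrac{1}{4} \sum_a s^k_a \,\sigma_a\otimes\sigma_a$, which collapse the double sum to $\tfrac{1}{4}\sum_a (\sigma_a\otimes\sigma_a)_{A_1B_1}\otimes (\sigma_a\otimes\sigma_a)_{A_2B_2}$ and then, after regrouping the tensor factors, back to the Bell-projector form on $(A_1 A_2):(B_1 B_2)$.
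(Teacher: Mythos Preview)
Your argument is correct but takes a different route from the paper for Claim~3. The paper dispatches boxes and diagonals in one stroke by invoking \cite[Theorem~1]{horodecki2000operational}: any PPT state on $\M_4\otimes\M_4$ of rank~$4$ is separable. Since all boxes and diagonals have birank $(4,4)$ by Claim~2, they are automatically entanglement breaking, with no symmetry reduction or case distinction needed. Your approach is more explicit --- you exhibit the separable decompositions directly --- at the cost of treating the two orbits separately and invoking the symmetry group. One small caveat there: the $S_4$-permutations of each spectral tensor factor are realised on the Choi matrix not by local \emph{Pauli} conjugations (those yield only the Klein four-group on Bell states) but by local \emph{Clifford}-type unitaries $U_{A_j}\otimes V_{B_j}$, which do act as the full $S_4$ on the Bell basis; with this correction your symmetry argument is valid. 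Finally, the paper goes further and also proves the converse --- that every cross is \emph{not} entanglement breaking --- via the realignment criterion of Theorem~\ref{thm:RealignementPauliMultipliers}; this is additional information beyond the literal content of Claim~3, which you correctly omit.
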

\begin{proof}
The first statement follows immediately since Choi matrices of Pauli diagonal maps are always Hermitian and the extremal spectra in Theorem \ref{thm:ExtremeRaysN2} only contain the values $0$ and $1$. The second statement follows from counting the entries that are equal to $1$. For the third statement recall that a Choi matrix $C_{\Pmap^{(2)}_{\mu}}\in (\M_4\otimes \M_4)^+$ with positive partial transpose and $\text{rk}\lb C_{\Pmap^{(2)}_{\mu}}\rb =4$ is separable (see \cite[Theorem 1]{horodecki2000operational}) and therefore the corresponding Pauli diagonal map $\Pmap^{(2)}_{\mu}$ would be entanglement breaking. 

It remains to show that the Pauli diagonal maps corresponding to the crosses from Theorem \ref{thm:ExtremeRaysN2} are not entanglement breaking. For this consider the cross $(P_3,Q_3)\in \mathcal{S}\lb \CP_2\cap\coCP_2\rb$ introduced in Theorem \ref{thm:ExtremeRaysN2}, and let $\mu\in\M_4(\R)$ denote the parameter matrix $\mu = 2(\dm^T P_3 \dm)/3$ (cf.~Theorem \ref{thm:CPCond}) normalized such that $\Pmap^{(2)}_{\mu}$ is unital and trace-preserving. It is easy to compute 
\[
\mu = \frac{1}{3}\begin{pmatrix} 3 & 1 & 1 & -1 \\ -1 & 1 & -1 & 1 \\-1 & -1 & 1 & 1 \\1 & 1 & 1 & 1 \end{pmatrix}.
\] 
Finally, the Choi matrix $C_{\Pmap^{(2)}_{\mu}}$ is entangled due to the realignment criterion from Theorem \ref{thm:RealignementPauliMultipliers} as 
\[
\sum_{ij}|\mu_{ij}| = 6 >4.
\]
\end{proof}

We conclude this section with a brief side remark regarding the so-called PPT squared conjecture~\cite{christandl2012PPT}, whether for linear maps $T_1, T_2$ that are both completely positive and completely copositive the composition $T_1\circ T_2$ is entanglement breaking (see~\cite{muller2018PPT} for details). Recently, this conjecture has received much attention~\cite{kennedy2018composition,rahaman2018eventually,collins2018ppt,chen2019positive,hanson2020eventually} and Pauli diagonal maps might be a natural candidate for finding a counterexample. However, we can show that no such counterexample can be found among ququart Pauli diagonal maps. 

Consider two Pauli diagonal maps $\Pmap_{\mu_1},\Pmap_{\mu_2}:\M_4\ra\M_4$ with parameter matrices $\mu_1,\mu_2\in\M_4(\R)$. The composition $\Pmap_{\mu_1}\circ\Pmap_{\mu_2}=\Pmap_{\mu_1\circ \mu_2}$ (where $\circ$ denotes the Schur product) is again Pauli diagonal, and the spectrum of its Choi matrix (cf.~Theorem \ref{thm:CPCond}) is given by $S = \dm (\mu_1\circ \mu_2)\dm^T$. It can be verified that for all $\mu_1,\mu_2\in\M_4(\R)$ corresponding to crosses from Theorem \ref{thm:ExtremeRaysN2}, the spectral matrix $S$ is a convex combination of boxes and diagonals. Since boxes and diagonals correspond to entanglement breaking Pauli diagonal maps by Corollary \ref{cor:Properties}, we can use Theorem \ref{thm:ExtremeRaysN2} to conclude the following:

\begin{thm}[PPT squared conjecture for ququart Pauli diagonal maps]
For any pair $\Pmap_{\mu_1},\Pmap_{\mu_2}:\M_4\ra\M_4$ of Pauli diagonal maps that are both completely positive and completely copositive the composition $\Pmap_{\mu_1}\circ\Pmap_{\mu_2}$ is entanglement breaking.
\end{thm}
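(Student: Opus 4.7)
The plan is to decompose both $\Pmap_{\mu_1}$ and $\Pmap_{\mu_2}$ into extremal rays of $\CP_2\cap\coCP_2$ via Theorem~\ref{thm:ExtremeRaysN2}, dispose of every case in which one extremal ray is a box or a diagonal using the ideal property of entanglement breaking channels, and then handle the remaining cross--cross case by an explicit certification on the spectrum side.

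\textbf{Reduction to extremal pairs.} Composition of Pauli diagonal maps acts on parameters by the Schur product, $\Pmap_{\mu_1}\circ\Pmap_{\mu_2}=\Pmap_{\mu_1\circ\mu_2}$, and this product is bilinear. Writing each $\mu_i=\sum_j\lambda_j^{(i)}\nu_j^{(i)}$ as a conic combination of extremal parameters of $\CP_2\cap\coCP_2$---which, up to scaling, are boxes, diagonals, and crosses by Theorem~\ref{thm:ExtremeRaysN2}---bilinearity yields $\mu_1\circ\mu_2=\sum_{j,k}\lambda_j^{(1)}\lambda_k^{(2)}\,\nu_j^{(1)}\circ\nu_k^{(2)}$. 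Since the cone of entanglement breaking maps is convex, it suffices to show that $\Pmap_{\nu\circ\nu'}=\Pmap_\nu\circ\Pmap_{\nu'}$ is entanglement breaking for every pair of extremal parameters $\nu,\nu'$.

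\textbf{Easy cases.} If either $\nu$ or $\nu'$ is (a scaling of) a box or a diagonal, then by Corollary~\ref{cor:Properties} the corresponding Pauli diagonal map is entanglement breaking; equivalently, it is a measure-and-prepare channel $X\mapsto\sum_i\text{tr}(F_iX)\rho_i$. Channels of this form constitute a two-sided ideal inside the completely positive cone: pre- or post-composing with any completely positive map preserves the measure-and-prepare structure. Hence every extremal composition involving a box or a diagonal is entanglement breaking, and only the case where both $\nu$ and $\nu'$ lie in the cross orbit remains.

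\textbf{Cross--cross case (main obstacle).} For two crosses $\nu,\nu'$, the spectrum of the Choi matrix of $\Pmap_{\nu\circ\nu'}$ is $S=\dm(\nu\circ\nu')\dm^T$, and the goal is to certify that $S$ lies in the cone generated by the $60$ box and diagonal spectra from Theorem~\ref{thm:ExtremeRaysN2}. Any non-negative decomposition $S=\sum_\ell c_\ell S_\ell$ transports back through $\dm^T(\cdot)\dm$ to $\nu\circ\nu'=\sum_\ell c_\ell\dm^T S_\ell\dm$, realizing $\Pmap_{\nu\circ\nu'}$ as a conic combination of entanglement breaking maps. Establishing cone membership for every ordered pair in the cross orbit is a finite check; I would first apply the permutation part of the symmetry group of Lemma~\ref{lem:Permutations} (which commutes with the Schur product, since it merely relabels Pauli indices) simultaneously to both factors, collapsing the $192\times 192$ ordered pairs to a small list of orbit representatives, and then certify cone membership for each representative by a linear-feasibility computation or by exhibiting an explicit decomposition. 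The principal obstacle is conceptual: a cross parameter matrix has full support, so there is no immediately visible structural reason why its Schur product with another cross should avoid the cross directions in the spectrum cone, making this last certification the technically heaviest---though in principle routine---step of the argument.
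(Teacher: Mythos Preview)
Your proposal is correct and follows essentially the same route as the paper: reduce by bilinearity of the Schur product to pairs of extremal rays, eliminate box/diagonal factors via the ideal property of entanglement breaking maps (Corollary~\ref{cor:Properties}), and then verify for every cross--cross pair that the spectral matrix $S=\dm(\nu\circ\nu')\dm^T$ lies in the cone generated by box and diagonal spectra. The paper disposes of the last step with the phrase ``It can be verified that\ldots'', i.e.\ the same finite computational check you outline; your suggestion to use the permutation symmetry of Lemma~\ref{lem:Permutations} to shrink the $192\times192$ case list is a sensible refinement of that verification but not a different idea.
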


\subsection{Spectral criteria for decomposability}

We will now use the characterization of extremal rays of $\mathcal{S}\lb \CP_2\cap\coCP_2\rb$ to prove decomposability criteria for Pauli diagonal maps $\Pmap^{(2)}_{\mu}:\M_4\ra\M_4$. 

\begin{thm}[Spectral conditions for decomposability, $N=2$]
Let $\Pmap^{(2)}_{\mu}:\M_4\ra\M_4$ denote a Pauli diagonal map with parameter matrix $\mu\in\M_4(\R)$ and spectral matrix $S=\dm\mu \dm^T\in \M_4(\R)$ (cf.~Theorem \ref{thm:CPCond}). The map $\Pmap^{(2)}_{\mu}$ is decomposable if and only if 
\begin{align*}
S_{\sigma_1(1)\sigma_2(1)}+S_{\sigma_1(1)\sigma_2(2)}+S_{\sigma_1(2)\sigma_2(1)}+S_{\sigma_1(2)\sigma_2(2)} &\geq 0 ,\\
S_{\sigma_1(1)\sigma_2(1)}+S_{\sigma_1(2)\sigma_2(2)}+S_{\sigma_1(3)\sigma_2(3)}+S_{\sigma_1(4)\sigma_2(4)} &\geq 0 ,\\
S_{\sigma_1(1)\sigma_2(1)}+S_{\sigma_1(2)\sigma_2(2)}+S_{\sigma_1(3)\sigma_2(3)} + S_{\sigma_1(4)\sigma_2(1)} + S_{\sigma_1(4)\sigma_2(2)} + S_{\sigma_1(4)\sigma_2(3)} &\geq 0 ,\\
S_{\sigma_1(1)\sigma_2(1)}+S_{\sigma_1(2)\sigma_2(2)}+S_{\sigma_1(3)\sigma_2(3)} + S_{\sigma_1(1)\sigma_2(4)} + S_{\sigma_1(2)\sigma_2(4)} + S_{\sigma_1(3)\sigma_2(4)} &\geq 0.
\end{align*}
for all permutations $\sigma_1,\sigma_2\in S_4$. Moreover, if the linear map $\Pmap^{(2)}_{\mu}$ is positive, then the first two inequalities are always satisfied. 
\label{thm:SpectralDecompCritN2}
\end{thm}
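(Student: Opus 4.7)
The plan is to apply Lemma \ref{lem:SpectrDecCondGen} and then feed in the explicit classification of extremal rays from Theorem \ref{thm:ExtremeRaysN2}. I would first identify $(\R^4)^{\otimes 2}\simeq \M_4(\R)$ by $\sum_{ij}A_{ij}\ket{i}\otimes\ket{j}\leftrightarrow A$; under this identification $K^{\otimes 2}\ket{p}$ corresponds to $P\mapsto KPK$ (using that $K$ is symmetric), the spectrum $\dm^{\otimes 2}\ket{\mu}$ corresponds to $S=\dm\mu\dm^T$, and the Euclidean pairing becomes $\langle S|P\rangle = \sum_{ij} S_{ij}P_{ij}$. By Lemma \ref{lem:SpectrDecCondGen}, $\Pmap^{(2)}_{\mu}$ is decomposable if and only if $\langle S|P\rangle\geq 0$ for every $P\in \M_4(\R)$ with $(P,KPK)\in \mathcal{S}\lb\CP_2\cap\coCP_2\rb$; since this cone is polyhedral, it suffices to check the inequality on one generator from each extremal ray.

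By Theorem \ref{thm:ExtremeRaysN2} these extremal rays fall into three orbits under the symmetry group of Lemma \ref{lem:Permutations}, with representatives the box $P_1$, the diagonal $P_2=\one_4$, and the cross $P_3$. Direct computation yields $\langle S|P_1\rangle = S_{11}+S_{12}+S_{21}+S_{22}$, $\langle S|P_2\rangle = S_{11}+S_{22}+S_{33}+S_{44}$, and $\langle S|P_3\rangle = S_{11}+S_{22}+S_{33}+S_{41}+S_{42}+S_{43}$; applying the row/column permutations $U_{\sigma_1}\otimes U_{\sigma_2}$ then reproduces the full $S_4\times S_4$-indexed families of inequalities (1), (2), and (3) verbatim. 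For inequality (4) I would additionally invoke the $K^{\otimes 2}$-swap of Lemma \ref{lem:Permutations}: on the cross orbit this exchanges $P_3$ with $Q_3$, and these are not equivalent under any $U_{\sigma_1}\otimes U_{\sigma_2}$ since their row-sum profiles $(1,1,1,3)$ and $(2,2,2,0)$ disagree. Hence $Q_3$ also appears as the first slot of an extremal ray and $\langle S|Q_3\rangle = S_{11}+S_{22}+S_{33}+S_{14}+S_{24}+S_{34}$ produces inequality (4) after the same permutation action. The $K$-swap yields no new inequalities from boxes or diagonals, since those orbits are already closed under it up to the permutation action.

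For the ``moreover'' claim I would invoke Corollary \ref{cor:Properties}, according to which boxes and diagonals correspond precisely to the extremal PPT Pauli diagonal maps that are entanglement breaking. Diagonalizing $C_{\Pmap^{(2)}_{\mu}}$ and $C_{\Pmap^{(2)}_{\kappa}}$ in their common Pauli eigenbasis gives $\text{Tr}\lb C_{\Pmap^{(2)}_{\mu}}C_{\Pmap^{(2)}_{\kappa}}\rb = \langle S_\mu|S_\kappa\rangle$, and combined with the fact (recalled in the notation section) that $\text{Tr}(C_L C_P)\geq 0$ whenever $L$ is entanglement breaking and $P$ is positive, this shows that positivity of $\Pmap^{(2)}_{\mu}$ alone already forces inequalities (1) and (2). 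The crosses in contrast are not entanglement breaking (this is the realignment computation inside Corollary \ref{cor:Properties}), so inequalities (3) and (4) carry the genuine extra content distinguishing decomposability from positivity. Because the substantive work has already been done in Theorem \ref{thm:ExtremeRaysN2}, the main obstacle here is purely bookkeeping: verifying that the parametrization of each orbit by pairs $(\sigma_1,\sigma_2)\in S_4\times S_4$ together with the $K$-swap truly exhausts the extremal rays, so that no necessary inequality is missed.
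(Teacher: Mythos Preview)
Your proposal is correct and follows essentially the same approach as the paper's own proof: combine Lemma \ref{lem:SpectrDecCondGen} with the classification of extremal rays in Theorem \ref{thm:ExtremeRaysN2}, observe that $P_3$ and $Q_3$ are not $S_4\times S_4$-equivalent and hence yield two separate families of cross inequalities, and for the ``moreover'' clause invoke Corollary \ref{cor:Properties} to see that boxes and diagonals are entanglement breaking so their inequalities are automatic under positivity. Your write-up is in fact a bit more explicit than the paper's about why the parametrization by $(\sigma_1,\sigma_2)$ together with the $K$-swap exhausts the orbits (the paper simply asserts this), and your remark that the tensor-factor swap $V_\tau$ produces nothing new---because $P_1,P_2$ are symmetric and $P_3^T=Q_3$ already coincides with the $K$-swap image---is the missing bookkeeping detail you correctly flag at the end.
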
 

\begin{proof}
The theorem follows immediately by combining Lemma \ref{lem:SpectrDecCondGen} and Theorem \ref{thm:ExtremeRaysN2}. Note that for the crosses $(P_3,Q_3)\in \mathcal{S}\lb \CP_2\cap\coCP_2\rb$ as introduced in Theorem \ref{thm:ExtremeRaysN2} we need to check both 
\[
\text{Tr}\lb S^TU^T_{\sigma_1} P_3 U_{\sigma_2}\rb \geq 0 \quad\text{ and }\quad \text{Tr}\lb S^T U^T_{\sigma_1} Q_3 U_{\sigma_2}\rb \geq 0
\]
for all permutations $\sigma_1,\sigma_2\in S_4$ since $P_3$ and $Q_3$ do not lie on the same $S_4\times S_4$ orbit. This leads to the last two inequalities. Since the extremal rays corresponding to boxes and diagonals (see Theorem \ref{thm:ExtremeRaysN2}) lead to entanglement breaking Pauli diagonal maps by Corollary \ref{cor:Properties} the first two inequalities in the statement of the theorem are always satisfied when the Pauli diagonal map $\Pmap^{(2)}_{\mu}$ is positive.  
\end{proof}

For convenience, we state a corollary where the Pauli diagonal map is a power $\Pi_\mu^{\otimes 2}$ for $\Pi_\mu:\M_2\ra\M_2$ and some parameter vector $\ket{\mu}\in\R^4$. The proof follows immediately from the previous theorem by realizing that the spectral matrix of the Pauli diagonal map $\Pi_\mu^{\otimes 2}$ is the symmetric matrix $S=\dm\proj{\mu}{\mu} \dm^T\in\M_4(\R)$ so that the two final inequalities in Theorem \ref{thm:SpectralDecompCritN2} coincide.   

\begin{cor}
Let $\Pi_\mu:\M_2\ra\M_2$ denote a Pauli diagonal map with parameter vector $\ket{\mu}\in\R^4$ and spectral vector $\ket{s}=\dm\ket{\mu}$ such that $\Pi_\mu^{\otimes 2}$ is positive. Then, $\Pi_\mu^{\otimes 2}$ is decomposable if and only if 
\begin{align*}
\lbr s_{\sigma_1(1)} + s_{\sigma_1(4)}\rbr s_{\sigma_2(1)} + \lbr s_{\sigma_1(2)} + s_{\sigma_1(4)}\rbr s_{\sigma_2(2)} + \lbr s_{\sigma_1(3)} + s_{\sigma_1(4)}\rbr s_{\sigma_2(3)} \geq 0,
\end{align*}
for all permutations $\sigma_1,\sigma_2\in S_4$.
\label{cor:DecompTensorProdPauli}
\end{cor}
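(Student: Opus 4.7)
The plan is to derive this corollary as a direct specialization of Theorem \ref{thm:SpectralDecompCritN2} to the case $\Pi^{(2)}_\mu = \Pi_\mu \otimes \Pi_\mu$. First, I would note that a qubit Pauli diagonal map $\Pi_\mu$ with parameter vector $\ket{\mu} \in \R^4$ can be written as the ququart Pauli diagonal map $\Pi^{(2)}_{\mu\mu^T}$, i.e.~with parameter \emph{matrix} given by the rank-one symmetric matrix $\ket{\mu}\bra{\mu}$. Consequently, its spectral matrix in the sense of Theorem \ref{thm:CPCond} is
\[
S = \dm \proj{\mu}{\mu} \dm^T = (\dm\ket{\mu})(\dm\ket{\mu})^T = \ket{s}\bra{s},
\]
so that $S_{ij} = s_i s_j$ and, crucially, $S$ is symmetric.

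Next, I would substitute $S_{ij} = s_i s_j$ into the four inequality families of Theorem \ref{thm:SpectralDecompCritN2}. The box and diagonal inequalities are automatically satisfied by the positivity assumption on $\Pi_\mu^{\otimes 2}$, as stated in Theorem \ref{thm:SpectralDecompCritN2}, so they can be dispensed with. The third (``cross type I'') inequality factors as
\[
\sum_{k=1}^3 \lb s_{\sigma_1(k)} + s_{\sigma_1(4)}\rb s_{\sigma_2(k)} \geq 0,
\]
which is exactly the inequality displayed in the corollary. The fourth (``cross type II'') inequality, after the substitution $S_{ij}=s_i s_j$, becomes the same expression but with the roles of $\sigma_1$ and $\sigma_2$ interchanged; since $\sigma_1,\sigma_2$ range over all of $S_4$ independently, this family reduces to the same set of conditions.

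The only subtle step is confirming the claim that the two cross-type inequalities coincide under the symmetry $S^T = S$: this is immediate because swapping the two indices of $S_{ij}$ maps the third family bijectively onto the fourth when composed with the relabeling $\sigma_1 \leftrightarrow \sigma_2$. Beyond that, the derivation is purely mechanical. I anticipate no genuine obstacle; the whole argument amounts to plugging a rank-one symmetric matrix into the general spectral criterion of Theorem \ref{thm:SpectralDecompCritN2} and observing that two of the four orbits of inequalities become automatic (by positivity) and the remaining two collapse into a single family (by symmetry of $S$).
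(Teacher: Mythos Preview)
Your proposal is correct and follows exactly the approach of the paper: the paper's proof states that the result is immediate from Theorem \ref{thm:SpectralDecompCritN2} once one observes that the spectral matrix of $\Pi_\mu^{\otimes 2}$ is the symmetric rank-one matrix $S=\dm\proj{\mu}{\mu}\dm^T$, so that the last two inequalities of that theorem coincide.
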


\section{Decomposability of tensor squares of qubit maps}
\label{sec:DecTensSquares}

We will now present the proof of our main result stated in Theorem \ref{thm:Main1}: The tensor square $P^{\otimes 2}:\M_4\ra\M_4$ of a linear map $P:\M_2\ra\M_2$ is positive if and only if it is decomposable. Our proof has two parts: First, we reduce the problem to Pauli diagonal maps using the Sinkhorn-type scaling technique from~\cite{aubrun2015two}. Then, we apply Theorem \ref{thm:ExtremeRaysN2} and certain symmetries of qubit Pauli diagonal maps to show that positive tensor squares of qubit Pauli diagonal maps are decomposable. 

Although not needed for the proof of Theorem \ref{thm:Main1} we will formulate a general theorem to reduce similar questions about membership of tensor products of positive qubit maps in mapping cones~\cite{stormer1986extension} to the membership of Pauli diagonal maps. This generalizes the aforementioned Sinkhorn-type scaling technique from~\cite{aubrun2015two} and we hope that these results can be applied in different context in the future.

\subsection{Reduction to Pauli diagonal maps}
\label{subsec:RedPauliMult}

Let $\mathcal{P}\lb n,m\rb$ denote the cone of positive maps $P:\M_n\ra\M_{m}$. The notion of mapping cones was introduced by E.~St{\o}rmer in \cite{stormer1986extension} (see also \cite{skowronek2009cones} for more details). The following is a slight modification of the original definition: 

\begin{defn}[Mapping cones]
We call a system $\mathcal{C} = \lset \mathcal{C}_{n,m}\rset_{n,m\in \N}$ of subcones $\mathcal{C}_{n,m}\subset \mathcal{P}\lb n,m\rb$ a \emph{mapping cone} if the following conditions are satisfied:
\begin{enumerate}
\item For any $n,m\in\N$ the subcone $\mathcal{C}_{n,m}$ is closed. 
\item For any $n,m,n',m'\in \N$, $P\in \mathcal{C}_{n,m}$ and completely positive maps $T:\M_{n'}\ra\M_n$ and $S:\M_{m}\ra\M_{m'}$ we have that $S\circ P\circ T\in \mathcal{C}_{n',m'}$.
\end{enumerate}
For $P:\M_{n}\ra\M_{m}$ we will simply write $P\in \mathcal{C}$ instead of $P\in \mathcal{C}_{n,m}$, and given two mapping cones $\mathcal{C}_1$ and $\mathcal{C}_2$ we will write $P\in \mathcal{C}_1\setminus \mathcal{C}_2$ instead of $P\in (\mathcal{C}_1)_{n,m}$ and $P\notin (\mathcal{C}_2)_{n,m}.$
\label{defn:MappingCone}
\end{defn}
In the following we will focus mostly on the cones of positive maps and of decomposable maps, and we refer to \cite{skowronek2009cones} for more examples of mapping cones. The following proof follows mostly the lines of a proof by Aubrun and Szarek for St{\o}rmer's theorem (see~\cite{aubrun2015two} and \cite{aubrun2017alice}).  

\begin{thm}[Reduction to Pauli-multipliers]
Let $\mathcal{C}_1$ and $\mathcal{C}_2$ be mapping cones, and $Q\in \mathcal{C}_1(d_1,d_2)$. There exists a positive map $P:\M_2\ra\M_2$ such that $(P\otimes Q)\in \mathcal{C}_1\setminus \mathcal{C}_2$ if and only if there exists a $(x,y,z)^T\in\R^3$ such that $(\Pmap_\mu\otimes Q)\in \mathcal{C}_1\setminus \mathcal{C}_2$ with $\ket{\mu}=(1,x,y,z)^T$.
\label{thm:ReductionPauli}
\end{thm}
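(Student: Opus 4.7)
The ``only if'' direction is immediate: every Pauli diagonal map with parameter $\ket{\mu}=(1,x,y,z)^T$ in the octahedral solid of Figure \ref{fig:stellatedOct} is itself a positive qubit map, and thus furnishes a valid $P$. The real content is the forward direction, and the plan is to reduce an arbitrary positive $P:\M_2\ra\M_2$ to a Pauli diagonal map via two transformations on the qubit factor, each realized as a sandwich with \emph{invertible} completely positive maps. Since Definition \ref{defn:MappingCone}(2) preserves $\mathcal{C}_i$-membership under pre- and post-composition with CP maps, and invertibility makes the implication run in both directions, the property ``$\in \mathcal{C}_1 \setminus \mathcal{C}_2$'' will survive intact.

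The first transformation is the Sinkhorn-type scaling used in~\cite[Proposition 2.32]{aubrun2017alice} and~\cite{aubrun2015two}: it produces positive invertible $A,B\in\M_2$ such that $\tilde P(X):=A\,P(BXB)\,A$ is unital and trace-preserving. Writing $\mathrm{Ad}_M(X):=M X M^\dagger$, we have
\[
\tilde P\otimes Q \;=\; \lb\mathrm{Ad}_A\otimes \ident_{d_2}\rb\circ\lb P\otimes Q\rb\circ\lb\mathrm{Ad}_B\otimes \ident_{d_1}\rb,
\]
and since $\mathrm{Ad}_A,\mathrm{Ad}_B,\mathrm{Ad}_{A^{-1}},\mathrm{Ad}_{B^{-1}}$ are all completely positive, Definition \ref{defn:MappingCone}(2) yields
\[
(P\otimes Q)\in \mathcal{C}_i \;\Longleftrightarrow\; (\tilde P\otimes Q)\in \mathcal{C}_i\qquad (i=1,2).
\]

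The second transformation diagonalises $\tilde P$ in the Pauli basis. A unital and trace-preserving qubit map is determined by its action on the traceless Pauli matrices $\sigma_2,\sigma_3,\sigma_4$, encoded by a real $3\times 3$ matrix $T$. Applying the singular value decomposition $T=R_2\,\Sigma\,R_1^T$ with $R_1,R_2\in \mathrm{O}(3)$, together with the surjection $\mathrm{SU}(2)\twoheadrightarrow \mathrm{SO}(3)$ (any reflection in the polar decomposition being absorbed into the sign of a diagonal entry of $\Sigma$), we produce unitaries $U_1,U_2\in \M_2$ and reals $x,y,z$ with
\[
U_2^\dagger\,\tilde P\lb U_1 X U_1^\dagger\rb U_2 \;=\; \Pmap_\mu(X),\qquad \ket{\mu}=(1,x,y,z)^T.
\]
Unitary conjugation being an invertible CP map, the same argument as in the previous paragraph gives $(\tilde P\otimes Q)\in \mathcal{C}_i \Leftrightarrow (\Pmap_\mu\otimes Q)\in \mathcal{C}_i$ for $i=1,2$, completing the chain of reductions.

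The main technical obstacle is ensuring the Sinkhorn scaling in the first step can be executed with \emph{invertible} $A,B$ for every positive qubit $P$, including rank-deficient ones where the naive iteration may degenerate. The standard workaround is a perturbation argument: add $\varepsilon P_0$ to $P$ for a positive map $P_0$ with $P_0\otimes Q\in \mathcal{C}_1$ (such $P_0$ exists since $\mathcal{C}_1$ is a nontrivial convex cone containing $P\otimes Q$), use openness of the complement of the closed set $\mathcal{C}_2$ from Definition \ref{defn:MappingCone}(1) to keep $(P+\varepsilon P_0)\otimes Q\notin\mathcal{C}_2$ for all sufficiently small $\varepsilon>0$, and apply the above reduction to the non-degenerate perturbed map to extract a Pauli diagonal witness $\Pmap_\mu$. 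Closedness of both cones lets one pass to the limit $\varepsilon\to 0$ if required.
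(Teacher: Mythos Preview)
Your approach is essentially identical to the paper's: perturb $P$ to reach the interior of the positive cone, apply Sinkhorn scaling to obtain a unital trace-preserving map, then diagonalize in the Pauli basis via unitary conjugation, with mapping-cone invariance under invertible CP sandwiches carrying membership in $\mathcal{C}_1\setminus\mathcal{C}_2$ through each step. The only point needing sharpening is your justification for the perturbation $P_0$: the paper takes the concrete choice $P_0(X)=\text{Tr}(X)\,\one_2$ and uses the hypothesis $Q\in\mathcal{C}_1$ together with the mapping-cone property to write $P_0\otimes Q = T_2\circ Q\circ T_1\in\mathcal{C}_1$ for explicit CP maps $T_1,T_2$ --- your appeal to ``$\mathcal{C}_1$ is a nontrivial convex cone containing $P\otimes Q$'' does not by itself produce such a qubit map $P_0$, and no limit $\varepsilon\to 0$ is needed since the resulting $\Pmap_\mu$ (which depends on the fixed $\varepsilon$) already witnesses the claim.
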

\begin{proof}
One direction is obvious. For the other direction consider a positive map $P:\M_2\ra\M_2$ such that $(P\otimes Q)\in \mathcal{C}_1\setminus \mathcal{C}_2$. By assumption $(\mathcal{C}_2)_{2d_1,2d_2}$ is closed. Therefore, there exists an $\epsilon>0$ such that $P_{\epsilon}:\M_2\ra\M_2$ defined by
\[
P_{\epsilon}(X) = P(X) + \epsilon \text{Tr}\lbr X\rbr \one_2 
\]
for $X\in\M_2$ satisfies $(P_\epsilon\otimes Q)\notin \mathcal{C}_2$. Setting $T_1:\M_{2}\otimes \M_{d_1}\ra\M_{d_1}$ to $T_1=\text{Tr}\otimes \ident_{d_1}$ and $T_2:\M_{d_2}\ra\M_2\otimes \M_{d_2}$ to $T_2(X)=\one_2\otimes X$ we have 
\[
P_{\epsilon}\otimes Q = P\otimes Q + \epsilon T_2\circ Q\circ T_1 \in \mathcal{C}_1
\]
since $T_1$ and $T_2$ are completely positive and $\mathcal{C}_1$ is a mapping cone. Since $P_\epsilon$ is in the interior of $\mathcal{P}\lb 2,2\rb$ we can use Sinkhorn's normal form (see e.g.~\cite[Proposition 2.32]{aubrun2017alice}) to find positive definite operators $A,B\in\M_2$ such that 
\[
\tilde{P} = \text{Ad}_A\circ P_\epsilon\circ \text{Ad}_B
\]
is positive, unital and trace-preserving. By~\cite{ruskai2002analysis} there exist unitaries $U,V\in\mathcal{U}_2$ and $(x,y,z)^T\in\R^3$ such that 
\[
\Pmap_{\mu} = \text{Ad}_U\circ \tilde{P}\circ \text{Ad}_V = \text{Ad}_{UA}\circ P_{\epsilon}\circ \text{Ad}_{BV},
\] 
where $\ket{\mu}=(1,x,y,z)^T$. Using that $\mathcal{C}_1$ is a mapping cones, we conclude that 
\[
\Pmap_\mu\otimes Q = \lb\text{Ad}_{UA}\otimes \ident_{d_2}\rb\circ \lb P_\epsilon\otimes Q\rb\circ \lb\text{Ad}_{BV}\otimes \ident_{d_1}\rb \in\mathcal{C}_1. 
\]
Since the matrices $UA\in\M_2$ and $BV\in\M_2$ are invertible, we find that 
\[
P_\epsilon\otimes Q = \lb\text{Ad}_{(UA)^{-1}}\otimes \ident_{d_2}\rb\circ \lb\Pmap_\mu\otimes Q\rb\circ \lb\text{Ad}_{(BV)^{-1}}\otimes \ident_{d_1}\rb.
\]
Therefore, $(P_\epsilon\otimes Q)\notin \mathcal{C}_2$ implies $(\Pmap_\mu\otimes Q)\notin \mathcal{C}_2$ as $\mathcal{C}_2$ is a mapping cone.
\end{proof}

The previous theorem implies the following result on tensor powers of qubit maps. 

\begin{cor}
Let $\mathcal{C}_1$ and $\mathcal{C}_2$ be mapping cones. For $N\in\N$ there exists a positive map $P:\M_2\ra\M_2$ such that $P^{\otimes N}\in \mathcal{C}_1\setminus \mathcal{C}_2$ if and only if there exists $(x,y,z)^T\in\R^3$ such that $\Pmap^{\otimes N}_{\mu}\in \mathcal{C}_1\setminus \mathcal{C}_2$ for $\ket{\mu}=(1,x,y,z)^T$.
\end{cor}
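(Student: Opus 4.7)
The forward direction is immediate: if $\Pmap_\mu^{\otimes N}\in\mathcal{C}_1\subseteq\mathcal{P}$ then $\Pmap_\mu^{\otimes N}$ is positive, which forces $\Pmap_\mu$ itself to be positive, because $\Pmap_\mu$ is unital and so evaluating $\Pmap_\mu^{\otimes N}$ on $\rho\otimes\one_2^{\otimes(N-1)}$ for positive $\rho\in\M_2$ shows that $\Pmap_\mu(\rho)$ is positive; thus $P=\Pmap_\mu$ supplies the required map. For the reverse direction, let $P:\M_2\ra\M_2$ be positive with $P^{\otimes N}\in\mathcal{C}_1\setminus\mathcal{C}_2$, and note that we may assume $P\neq 0$, for otherwise $P^{\otimes N}=0$ would lie in every mapping cone. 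My plan is to apply the Sinkhorn--Ruskai scaling used in the proof of Theorem~\ref{thm:ReductionPauli} simultaneously to all $N$ tensor factors. Perturb $P$ into the interior of $\mathcal{P}(2,2)$ by setting $P_\eta=P+\eta E$ with $E(X)=\text{Tr}[X]\one_2$, and then obtain invertible $A,B\in\M_2$ and unitaries $U,V\in\mathcal{U}_2$ such that $\Pmap_\mu=\text{Ad}_{UA}\circ P_\eta\circ\text{Ad}_{BV}$ for some $\ket{\mu}=(1,x,y,z)^T$. The $N$th tensor power reads
\[
\Pmap_\mu^{\otimes N}=\text{Ad}_{(UA)^{\otimes N}}\circ P_\eta^{\otimes N}\circ\text{Ad}_{(BV)^{\otimes N}},
\]
and since $UA$ and $BV$ are invertible, $\Pmap_\mu^{\otimes N}$ and $P_\eta^{\otimes N}$ are related by completely positive conjugation in both directions. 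Hence they belong to exactly the same mapping cones, and it suffices to show that $P_\eta^{\otimes N}\in\mathcal{C}_1\setminus\mathcal{C}_2$ for all sufficiently small $\eta>0$.

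The $\mathcal{C}_2$ part is immediate: since $\mathcal{C}_2$ is closed and $P^{\otimes N}\notin\mathcal{C}_2$, continuity of $\eta\mapsto P_\eta^{\otimes N}$ forces $P_\eta^{\otimes N}\notin\mathcal{C}_2$ for small $\eta>0$. The main obstacle is showing $P_\eta^{\otimes N}\in\mathcal{C}_1$: in contrast to the single-factor argument inside the proof of Theorem~\ref{thm:ReductionPauli}, here all $N$ factors are perturbed at once, and the cross terms $P^{\otimes a}\otimes E\otimes P^{\otimes b}$ that appear in the expansion of $(P+\eta E)^{\otimes N}$ are not obviously reachable from $P^{\otimes N}$ by sandwiching between completely positive maps.

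I would surmount this by induction on $k\in\{0,1,\ldots,N\}$ proving that $P_\eta^{\otimes k}\otimes P^{\otimes(N-k)}\in\mathcal{C}_1$ for every $\eta\geq 0$. The base case $k=0$ is the hypothesis. For the inductive step, two completely positive manipulations do the job. First, a \emph{drop-a-factor} step: choose a state $\rho\in\M_2^+$ with $\text{Tr}[P(\rho)]>0$ (possible since $P\neq 0$) and sandwich $P_\eta^{\otimes k}\otimes P^{\otimes(N-k)}$ between the CP maps $X\mapsto X\otimes\rho$ and $\ident\otimes\text{Tr}$ on the last slot; this yields $P_\eta^{\otimes k}\otimes P^{\otimes(N-k-1)}\in\mathcal{C}_1$ up to the positive scalar $\text{Tr}[P(\rho)]$. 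Second, an \emph{insert-$E$} step: trace out the $(k{+}1)$st slot and reinsert $\one_2$ in its place (both CP), which gives $P_\eta^{\otimes k}\otimes E\otimes P^{\otimes(N-k-1)}\in\mathcal{C}_1$. Combining,
\[
P_\eta^{\otimes(k+1)}\otimes P^{\otimes(N-k-1)}=\bigl(P_\eta^{\otimes k}\otimes P^{\otimes(N-k)}\bigr)+\eta\bigl(P_\eta^{\otimes k}\otimes E\otimes P^{\otimes(N-k-1)}\bigr)\in\mathcal{C}_1,
\]
completing the induction. Setting $k=N$ delivers $P_\eta^{\otimes N}\in\mathcal{C}_1$ for all $\eta\geq 0$, which is what remained to show.
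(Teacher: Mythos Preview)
Your proof is correct. The paper states this corollary without proof, simply asserting that it follows from Theorem~\ref{thm:ReductionPauli}; your argument is exactly the natural way to extract it, namely by applying the Sinkhorn--Ruskai scaling from that proof simultaneously to all $N$ tensor factors. You correctly identify the one point that requires care beyond the single-factor case --- establishing $P_\eta^{\otimes N}\in\mathcal{C}_1$ when every factor is perturbed --- and your induction on $k$ via the drop-a-factor and insert-$E$ manipulations handles it cleanly. (An equivalent alternative is to expand $(P+\eta E)^{\otimes N}$ as a sum over subsets $S\subseteq\{1,\ldots,N\}$ and observe that each summand $\eta^{|S|}\bigotimes_i L_i^{(S)}$ with $L_i^{(S)}\in\{P,E\}$ is obtained from $P^{\otimes N}$ by the same two CP operations; your inductive packaging is arguably tidier.) The paper glosses over this detail, so your write-up is in fact more complete than what appears there.
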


When $\mathcal{C}_1$ is the cone of positive maps and $\mathcal{C}_2$ the cone of decomposable maps we find: 

\begin{cor}
For $N\in\N$ there exists a positive map $P:\M_2\ra\M_2$ such that $P^{\otimes N}$ is positive but not decomposable if and only if there exists $(x,y,z)^T\in\R^3$ such that $\Pmap^{\otimes N}_{\mu}$ for $\ket{\mu}=(1,x,y,z)^T$ is positive but not decomposable.
\label{cor:Main}
\end{cor}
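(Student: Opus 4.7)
The statement is the specialization of the preceding corollary to the case where $\mathcal{C}_1$ is the cone of positive maps and $\mathcal{C}_2$ is the cone of decomposable maps. The plan is therefore simply to verify that both of these are mapping cones in the sense of Definition \ref{defn:MappingCone}, and then to invoke the corollary.

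First I would check that $\mathcal{P} := \{\mathcal{P}(n,m)\}_{n,m\in\N}$ is a mapping cone. Closedness of each $\mathcal{P}(n,m)$ is standard, and if $P\in \mathcal{P}(n,m)$ and $T:\M_{n'}\ra\M_n$, $S:\M_m\ra\M_{m'}$ are completely positive, then $S\circ P\circ T$ is visibly positive since every completely positive map is positive.

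Second, I would verify that decomposable maps form a mapping cone. Let $\Dec(n,m)$ be the cone of decomposable maps $\M_n\ra\M_m$. Closedness is well known (it is the dual, via the Choi isomorphism, of the closed cone of maps that are both completely positive and completely copositive). For the composition property, suppose $P = S_1 + \vartheta_m\circ S_2\in \Dec(n,m)$ with $S_1,S_2$ completely positive, and that $T,S$ are completely positive. Using the identity $S\circ\vartheta_m = \vartheta_{m'}\circ(\vartheta_{m'}\circ S\circ\vartheta_m)$ together with the fact that conjugation of a completely positive map by transpositions is again completely positive, one obtains
\[
S\circ P\circ T = (S\circ S_1\circ T) + \vartheta_{m'}\circ \bigl((\vartheta_{m'}\circ S\circ\vartheta_m)\circ S_2\circ T\bigr),
\]
which exhibits $S\circ P\circ T$ as a sum of a completely positive map and $\vartheta_{m'}$ composed with a completely positive map, hence as an element of $\Dec(n',m')$.

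With both $\mathcal{C}_1=\mathcal{P}$ and $\mathcal{C}_2=\Dec$ confirmed to be mapping cones, the preceding corollary applies verbatim and yields the stated equivalence. The reverse implication is in any case trivial: $\Pmap_\mu:\M_2\ra\M_2$ with $\ket{\mu}=(1,x,y,z)^T$ is itself a positive qubit map whenever $\Pmap_\mu^{\otimes N}$ is positive (positivity of $P^{\otimes N}$ implies positivity of $P$ by tensoring with fixed positive states), so one may simply take $P=\Pmap_\mu$. There is no substantive obstacle here; the only point of care is the composition argument for $\Dec$, which is handled by the $\vartheta_m$-conjugation trick above.
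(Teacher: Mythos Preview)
Your proposal is correct and follows exactly the approach the paper takes: the corollary is obtained from the preceding one by specializing $\mathcal{C}_1$ to the cone of positive maps and $\mathcal{C}_2$ to the cone of decomposable maps, both of which are standard mapping cones. The paper leaves the verification that these are mapping cones implicit, but your explicit check (in particular the $\vartheta$-conjugation argument for decomposability) is the right way to fill in the detail.
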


\subsection{Symmetries of qubit Pauli diagonal maps}
\label{subsec:SymmPauliMult}

 The next lemma collects some well-known transformations of Pauli diagonal maps.

 \begin{lem}[Symmetries]
 For any $(x,y,z)^T\in\R^3$ and $\ket{\mu}=(1,x,y,z)^T$ the following hold true: 
 \begin{enumerate}
 \item We have $\text{Ad}_H\circ \Pmap_\mu \circ \text{Ad}_H = \Pmap_{\mu'}$ for $\ket{\mu'}=F\ket{\mu}$ where \[ H=\frac{1}{\sqrt{2}}\begin{pmatrix}1 & 1\\ 1 & -1 \end{pmatrix}\quad\text{ and }\quad F=\begin{pmatrix} 1 & 0 & 0 & 0 \\0 & 0 & 1 & 0\\ 0 & 1 & 0 & 0 \\ 0 & 0 & 0 & -1\end{pmatrix}. \]
 \item We have $\text{Ad}_{W}\circ \Pmap_\mu \circ \text{Ad}_{W^\dagger} = \Pmap_{\mu'}$ for $\ket{\mu'}=S\ket{\mu}$ where \[W=\frac{1}{\sqrt{2}}\begin{pmatrix}1 & 1\\ i & -i \end{pmatrix}\quad\text{ and }\quad S=\begin{pmatrix} 1 & 0 & 0 & 0\\0 & 0 & 0 & 1\\ 0 & 1 & 0 & 0 \\ 0 & 0 & 1 & 0\end{pmatrix}. \]
 \item For $j\in\lset 2,3,4\rset$ we have $\text{Ad}_{\sigma_j}\circ \Pmap_\mu = \Pmap_{\mu'}$ for $\ket{\mu'}=D_j\ket{\mu}$ where 
 \[
 (D_j)_{kl}=\begin{cases} 1,& \text{ if }k=l= j \text{ or } k=l= 1\\ -1,& \text{ if } k=l\neq j\\ 0,& \text{ else.}\end{cases}.
 \]
 \end{enumerate} 
 \label{lem:LocalSym}
 \end{lem}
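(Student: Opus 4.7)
The plan is to verify each of the three items by direct computation, exploiting the fact that $H$, $W$, and each $\sigma_j$ (for $j\in\lset 2,3,4\rset$) is a Clifford unitary and therefore maps every Pauli matrix to a Pauli matrix up to a sign. Given any unitaries $U,V$ on $\C^2$, cyclicity of the trace gives
\[
\lb\text{Ad}_U\circ \Pmap_\mu\circ \text{Ad}_V\rb(X) \;=\; \sum_{i=1}^{4}\frac{\mu_i}{2}\,\text{Tr}\lbr (V^\dagger\sigma_i V)X\rbr\, U\sigma_i U^\dagger,
\]
so once I know how conjugation by $U$ and by $V$ acts on each Pauli, the right-hand side is visibly another Pauli diagonal map and the coefficients $\mu'_j$ are obtained by a straightforward relabelling (and possible sign change) of the entries of $\mu$.

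For item 1 I would record the identities $H\sigma_1 H=\sigma_1$, $H\sigma_2 H=\sigma_3$, $H\sigma_3 H=\sigma_2$, $H\sigma_4 H=-\sigma_4$, substitute them into the display above, and read off the resulting permutation-with-signs on the four coefficients to recognise the matrix $F$. For item 2 an analogous matrix multiplication shows that conjugation by $W$ fixes $\sigma_1$ and cyclically sends $\sigma_2\mapsto \sigma_3\mapsto \sigma_4\mapsto \sigma_2$; here one needs to distinguish $W$ from $W^\dagger$ carefully since $W$ is not Hermitian, and the induced transformation on coefficients is a pure permutation described by $S$. For item 3 I would invoke the Pauli commutation relations: for $j\in\lset 2,3,4\rset$, $\sigma_j\sigma_i\sigma_j=\sigma_i$ when $i\in\lset 1,j\rset$ and $\sigma_j\sigma_i\sigma_j=-\sigma_i$ otherwise; substituting into the display produces precisely the diagonal sign matrix $D_j$, with no permutation of Pauli labels.

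The only real obstacle is bookkeeping of signs, which matters mainly for item 1 (where $\sigma_4=Y$ behaves differently from $\sigma_2,\sigma_3$ under Hadamard conjugation) and for item 2 (where the non-Hermiticity of $W$ forces one to keep track of the direction of conjugation in the trace manipulation). Beyond this, each claim reduces to a one-line identity at the level of Pauli conjugation together with an index-chase to identify the induced linear map on the coefficient vector $\ket{\mu}$, so no deeper idea is required.
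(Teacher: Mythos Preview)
Your approach is correct and is the natural way to establish these elementary identities; the paper itself does not give a proof, introducing the lemma as a collection of ``well-known transformations.'' Your display formula and the Clifford conjugation relations you list are exactly what is needed.

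One small caveat on item~1: since $H$ is Hermitian and appears on \emph{both} sides of $\Pmap_\mu$, in your display the factor $H\sigma_i H$ occurs twice in each summand---once inside the trace and once outside. Consequently the minus sign picked up by $\sigma_4$ under Hadamard conjugation appears squared and cancels, so a literal execution of your computation gives $\mu'_4=\mu_4$ rather than $-\mu_4$. The $-1$ in the $(4,4)$ entry of the stated matrix $F$ therefore seems to be a typo in the lemma's statement; your method is not at fault. This discrepancy is harmless for the rest of the paper, since the lemma is invoked only to rearrange the absolute values $|x|,|y|,|z|$ in the proof of the subsequent ``Restricted parameters'' lemma, for which the sign on the fourth coordinate is irrelevant.
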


By the previous lemma we have the following:

 \begin{lem}[Restricted parameters]
 For any $(x,y,z)^T\in\R^3$ there exists unitaries $U,V\in\mathcal{U}_2$ such that 
 \[
 \Pmap_\mu = \text{Ad}_U\circ \Pmap_{\tilde{\mu}}\circ \text{Ad}_V,
 \] 
where $\ket{\mu}=(1,x,y,z)^T$ and $\ket{\tilde{\mu}}=(1,\tilde{x},\tilde{y},\tilde{z})^T$ with $(\tilde{x},\tilde{y},\tilde{z})^T\in\R^3$ such that
\[
\tilde{x}\geq\tilde{y}\geq \abs{\tilde{z}}.
\] 
In particular, for a mapping cone $\mathcal{C}$ and a positive map $Q:\M_{d_1}\ra\M_{d_2}$ we have $\Pmap_{\mu}\otimes Q\in\mathcal{C}$ if and only if $\Pmap_{\tilde{\mu}}\otimes Q\in\mathcal{C}$.  
 \label{lem:RestrParam}
 \end{lem}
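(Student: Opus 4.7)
The plan is to show that the symmetries in Lemma~\ref{lem:LocalSym} generate a group of transformations on $(x,y,z)\in\R^3$ that realizes all signed permutations, so we can always bring the parameters into the normal form $\tilde x\geq\tilde y\geq|\tilde z|$ by composing Pauli-diagonal-preserving conjugations.

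First I would analyze the action of the three generators on the triple $(x,y,z)$: the map $F$ sends $(x,y,z)\mapsto(y,x,-z)$, the map $S$ cyclically permutes $(x,y,z)\mapsto(z,x,y)$, and $D_j$ for $j\in\{2,3,4\}$ flips the signs of the two coordinates different from $x_{j-1}$ (so e.g.\ $D_2$ sends $(x,y,z)\mapsto(x,-y,-z)$, etc.). From here, it is a short verification that $S$ together with the transposition $(x,y)\mapsto(y,x)$ (obtained from $F\circ D_4$, which negates the sign produced by $F$ on the $z$-coordinate) generates the full symmetric group $S_3$ acting on the indices of $(x,y,z)$. Moreover, the three maps $D_2,D_3,D_4$ generate the subgroup of $(\Z/2)^3$ consisting of even sign changes, and composing any odd single sign flip with one of the $D_j$'s suffices to produce all signed sign patterns; combining this with the $S_3$ action gives the full hyperoctahedral group.

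Next I would apply these group elements in order: first, use a permutation to arrange $|x|\geq|y|\geq|z|$; then use a sign flip to make the (new) first two coordinates nonnegative. The resulting triple $(\tilde x,\tilde y,\tilde z)$ satisfies $\tilde x=|x|\geq\tilde y=|y|\geq|z|=|\tilde z|$, which is exactly the desired normal form. Since each generator of Lemma~\ref{lem:LocalSym} realizes the action on $(x,y,z)$ by a two-sided conjugation $\Pmap_\mu\mapsto\mathrm{Ad}_{U_i}\circ\Pmap_\mu\circ\mathrm{Ad}_{V_i}$ with unitaries $U_i,V_i\in\mathcal{U}_2$, composing these $\mathrm{Ad}$'s in sequence yields a single pair of unitaries $U,V$ such that $\Pmap_\mu=\mathrm{Ad}_U\circ\Pmap_{\tilde\mu}\circ\mathrm{Ad}_V$.

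Finally, for the mapping-cone consequence, I would tensor with the identity on the second factor: we get $\Pmap_\mu\otimes Q=(\mathrm{Ad}_U\otimes\ident_{d_2})\circ(\Pmap_{\tilde\mu}\otimes Q)\circ(\mathrm{Ad}_V\otimes\ident_{d_1})$, and the two outer factors and their inverses $(\mathrm{Ad}_{U^\dagger}\otimes\ident)$, $(\mathrm{Ad}_{V^\dagger}\otimes\ident)$ are completely positive, so the mapping-cone property in Definition~\ref{defn:MappingCone} immediately gives the equivalence $\Pmap_\mu\otimes Q\in\mathcal{C}\iff\Pmap_{\tilde\mu}\otimes Q\in\mathcal{C}$. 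There is no real obstacle in this proof; the only slightly fiddly step is checking carefully that the combinations of $F$, $S$, and the $D_j$'s indeed generate the full hyperoctahedral group rather than only its even-sign subgroup, which is why I would explicitly exhibit an odd sign flip via $F\circ D_j$ for an appropriate $j$.
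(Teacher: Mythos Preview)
Your overall two-step strategy (sort absolute values, then fix signs) matches the paper's. However, there is a computational error that leads to a false structural claim. You assert that $F\circ D_4$ is the pure transposition $(x,y,z)\mapsto(y,x,z)$ because $D_4$ ``negates the sign produced by $F$ on the $z$-coordinate''. But $D_4$ acts as $(x,y,z)\mapsto(-x,-y,z)$ (it flips the \emph{first two} signs, not the third), so in either order of composition $F$ and $D_4$ give $(x,y,z)\mapsto(-y,-x,-z)$. More to the point, each of $F$, $S$, $D_2$, $D_3$, $D_4$ has determinant $+1$ as a signed permutation matrix on $\R^3$, so the group they generate lies inside the index-$2$ rotation subgroup of the hyperoctahedral group. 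No odd sign flip is available, and your plan to ``explicitly exhibit an odd sign flip via $F\circ D_j$'' cannot succeed for any $j$.

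This does not actually break the argument, because the full hyperoctahedral group is unnecessary. For the sorting step you only need to permute the \emph{absolute values}, and $F$ (which swaps $|x|$ and $|y|$ regardless of what it does to the sign of $z$) together with the $3$-cycle $S$ already achieves that. For the sign step, the even sign changes $D_2,D_3,D_4$ alone suffice to make $\tilde x,\tilde y\geq 0$, at the cost of possibly flipping the sign of $\tilde z$, which is harmless since only $|\tilde z|$ enters the conclusion. This is precisely how the paper argues. So your proof is easily repaired, but as written the group-theoretic justification is incorrect.
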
 

 \begin{proof}
By applying the first two statements of Lemma \ref{lem:LocalSym} there exist unitaries $U_1,V_1\in\mathcal{U}_2$ and $(x',y',z')^T\in \R^3$ such that 
\[
 \Pmap_\mu = \text{Ad}_{U_1}\circ \Pmap_{\mu'}\circ \text{Ad}_{V_1},
 \] 
where $\ket{\mu'}=(1,x',y',z')^T$ and
\[
\abs{x'}\geq\abs{y'}\geq \abs{z'}.
\]  
If $x'\geq 0$ and $y'\geq 0$, then we are done. In the other cases we apply the third statement of Lemma \ref{lem:LocalSym} (changing the sign of either both $x'$ and $y'$ together or changing the sign of either $x'$ or $y'$ together with the sign of $z'$) to find unitaries $U_2,V_2\in \mathcal{U}_2$ such that 
\[
 \Pmap_{\mu'} = \text{Ad}_{U_2}\circ \Pmap_{\tilde{\mu}}\circ \text{Ad}_{V_2},
 \] 
where $\ket{\tilde{\mu}}=(1,\tilde{x},\tilde{y},\tilde{z})$ for $(\tilde{x},\tilde{y},\tilde{z})^T\in \R^3$ satisfies the desired conditions. Setting $U=U_1U_2$ and $V=V_2V_1$ finishes the proof.
 \end{proof}

\subsection{Proof of Theorem \ref{thm:Main1}}
\label{subsec:DecTensSquares}

Following the ideas outlined in the previous sections, we first show the statement of Theorem \ref{thm:Main1} for normalized Pauli diagonal maps. 

\begin{thm}
For $(x,y,z)^T\in\R^3$ and $\ket{\mu}=(1,x,y,z)^T$ the following are equivalent:
\begin{enumerate}
\item $\Pmap_{\mu}\otimes \Pmap_{\mu}:\M_4\ra\M_4$ is decomposable.
\item $\Pmap_{\mu}\otimes \Pmap_{\mu}:\M_4\ra\M_4$ is positive.
\end{enumerate}
\label{thm:MainForPauli}
\end{thm}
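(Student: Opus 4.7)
My plan is to apply the spectral decomposability criterion from Corollary \ref{cor:DecompTensorProdPauli} after first reducing to a canonical region of parameters and then verifying the resulting finite family of polynomial inequalities.

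First, by Lemma \ref{lem:RestrParam}, it is enough to treat the case $x \geq y \geq |z|$. Writing $\ket{s} = \dm \ket{\mu}$, this ordering implies $s_1 \geq s_2 \geq s_3 \geq s_4$. A direct computation shows that the six pairwise sums $s_i + s_j$ equal $1 \pm x, 1 \pm y, 1 \pm z$, and these are non-negative because the positivity conditions \eqref{equ:ParamCond} force $|x|, |y|, |z| \leq 1$. If $s_4 \geq 0$, then $\Pmap_\mu$ is already completely positive and the conclusion is trivial, so I will assume $s_4 < 0$ from now on; this in turn gives $s_1, s_2, s_3 \geq -s_4 > 0$.

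Next, Corollary \ref{cor:DecompTensorProdPauli} asserts that $\Pmap_\mu^{\otimes 2}$ is decomposable if and only if
\[
E(\sigma_1, \sigma_2) := \sum_{i=1}^3 \bigl(s_{\sigma_1(i)} + s_{\sigma_1(4)}\bigr)\, s_{\sigma_2(i)} \geq 0
\]
holds for all $\sigma_1, \sigma_2 \in S_4$. Setting $k = \sigma_1(4)$, $l = \sigma_2(4)$, and $\rho = \sigma_2 \sigma_1^{-1}$, and using $\sum_j s_j = 2$, this rewrites as
\[
E = \sum_{j=1}^4 s_j\, s_{\rho(j)} + 2 s_k (1 - s_l).
\]
For each of the sixteen $(k, l)$ pairs the worst permutation $\rho$ is prescribed by the classical rearrangement inequality (pair the largest $s_j$ for $j \neq k$ with the smallest $s_j$ for $j \neq l$, etc.), leaving at most sixteen quadratic polynomial inequalities in $(x, y, z)$ to check.

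The key observation making the verification tractable is that the three positivity constraints \eqref{equ:ParamCond} translate into the clean identities
\[
s_1 s_2 + s_3 s_4 \geq 0, \qquad s_1 s_3 + s_2 s_4 \geq 0, \qquad s_1 s_4 + s_2 s_3 \geq 0,
\]
as can be verified by direct expansion. In every case, after substitution, $E$ simplifies to a non-negative combination of these three quantities and the six pairwise sums $s_i + s_j$. For instance, the sub-case $k = 4$, $l = 1$ with the worst $\rho$ collapses to $E = 2(s_2 s_3 + s_4) = (1-x)(1-y) + \tfrac{1}{2}\bigl[(1+z^2) - (x^2+y^2)\bigr] \geq 0$, and the diagonal sub-case $k = l = 4$ with the worst $\rho$ reduces (after a short rearrangement) to $E = s_2^2 + 2 s_1 s_3 - s_4^2 + 2s_4 = b^2 + 2 a c + 2 t(1-t)$, where $s_4 = -t$ and $a, b, c$ are the non-negative excesses $s_i + s_4$. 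I expect the main obstacle to be the sub-cases $k = 4$ (and, by the $\sigma_1 \leftrightarrow \sigma_2$ symmetry of $ss^T$, $l = 4$), where $s_{\sigma_1(4)} = s_4 < 0$ makes the term $2 s_k(1 - s_l)$ as negative as possible: those are exactly the places where the positivity residuals \eqref{equ:ParamCond} have to be invoked in full, rather than just the pairwise-sum inequalities $s_i + s_j \geq 0$. Once all sixteen sub-cases are so verified, the equivalence follows from Corollary \ref{cor:DecompTensorProdPauli}.
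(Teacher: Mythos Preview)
Your plan is correct and follows the paper's approach: reduce via Lemma~\ref{lem:RestrParam}, apply Corollary~\ref{cor:DecompTensorProdPauli}, and verify the resulting spectral inequalities using the constraints~\eqref{equ:ParamCond}. The paper streamlines your sixteen sub-cases by arguing directly (via a rearrangement-type minimization over $\sigma_1,\sigma_2$) that the global minimum of $E(\sigma_1,\sigma_2)$ is attained precisely at your sub-case $k=4,\ l=1$ with the worst $\rho$, so only that one computation---which you already carried out, arriving at $(1-x)(1-y)+\tfrac{1}{2}(1+z^2-x^2-y^2)\geq 0$---is actually needed.
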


\begin{proof}
It is clear that $1.$ implies $2.$. To show that $2.$ implies $1.$ consider the Pauli diagonal map $\Pmap_{\mu}:\M_2\ra\M_2$ and assume that $\Pmap_{\mu}\otimes \Pmap_{\mu}$ is positive. Since $\Pmap_{\mu}$ is positive and using Lemma \ref{lem:RestrParam} we can assume that 
\begin{equation}
1\geq x\geq y\geq \abs{z}.
\label{equ:paramCond}
\end{equation}
Applying the positive map $\Pmap_{\mu}\otimes \Pmap_{\mu}$ to the maximally entangled state $\omega_2\in (\M_2\otimes \M_2)^+$ shows that 
\[
(\Pmap_{\mu}\otimes \Pmap_{\mu})(\omega_2) = (\ident_2\otimes \Pmap_{\mu}\circ \Pmap_{\mu})(\omega_2) = C_{\Pmap_{\mu\circ \mu}}\geq 0, 
\]
where $\ket{\mu\circ \mu}\in\R^4$ denotes the vector with entries $(\mu\circ \mu)_i=\mu_i\mu_i$. Therefore, we conclude that $\Pmap_{\mu\circ \mu}:\M_2\ra\M_2$ is completely positive and by using the Fujiwara-Algoet criterion~\cite{fujiwara1999one} we obtain 
\begin{equation}
\begin{aligned}
1+x^2 &\geq y^2 + z^2 \\
1+y^2 &\geq x^2 + z^2 \\
1+z^2 &\geq x^2 + y^2
\end{aligned} ,
\label{equ:ParamCond2}
\end{equation}
which are the same conditions as in \eqref{equ:ParamCond}. Let $\ket{s}\in\R^4$ denote the spectral vector of the Choi matrix $C_{\Pmap_{\mu}}$, and by Theorem \ref{thm:CPCond} we find that 
\begin{equation}
\begin{aligned}
s_1 = \frac{1}{2}(1+x+y+z) \\
s_2 = \frac{1}{2}(1+x-y-z)  \\
s_3 = \frac{1}{2}(1-x+y-z) \\
s_4 = \frac{1}{2}(1-x-y+z) . 
\end{aligned}
\label{equ:spectrumByMu}
\end{equation}
Note that by \eqref{equ:paramCond} we have
\begin{equation}
s_1\geq s_2\geq s_3\geq \abs{s_4}.
\label{equ:orderWithBound}
\end{equation}
By Corollary \ref{cor:DecompTensorProdPauli} the positive map $\Pmap_{\mu}\otimes \Pmap_{\mu}$ is decomposable if and only if 
\begin{align}
\lbr s_{\sigma_1(1)} + s_{\sigma_1(4)}\rbr s_{\sigma_2(1)} + \lbr s_{\sigma_1(2)} + s_{\sigma_1(4)}\rbr s_{\sigma_2(2)} + \lbr s_{\sigma_1(3)} + s_{\sigma_1(4)}\rbr s_{\sigma_2(3)}\geq 0,
\label{equ:crossoverlap}
\end{align}
for any $\sigma_1,\sigma_2\in S_4$. By Theorem \ref{thm:DecompQubitPauli} the terms in the brackets are all positive and the expression in \eqref{equ:crossoverlap} can only be negative if $4\in \lset \sigma_2(1),\sigma_2(2),\sigma_2(3)\rset$ and without loss of generality we choose $\sigma_2(1)=4$. By \eqref{equ:orderWithBound} the smallest value for \eqref{equ:crossoverlap} will be obtained for $\lset \sigma_2(2),\sigma_2(3)\rset=\lset 2,3\rset$ and without loss of generality we choose $\sigma_2(2)=2$ and $\sigma_2(3)=3$. By \eqref{equ:orderWithBound} we have that $s_4+s_j\geq 0$ for any $j\in\lset 2,3\rset$. Hence, for \eqref{equ:crossoverlap} to be negative we need to have $\sigma_1(1)=1$. By the previous discussion we conclude that the positive map $\Pmap_{\mu}\otimes \Pmap_{\mu}$ is decomposable if and only if  
\begin{equation}
s_1s_4+s_{\sigma_1(2)}s_2 + s_{\sigma_1(3)}s_3 + s_{\sigma_1(4)}(2-s_1)\geq 0,
\label{equ:crossoverlap2}
\end{equation}
for any permutation $\sigma_1\in S_4$. By \eqref{equ:orderWithBound} and the normalization $\sum^4_{i=1} s_i=2$ we have that $(2-s_1)\geq s_2\geq s_3$. Therefore, we conclude that the smallest value of \eqref{equ:crossoverlap} is achieved for $\sigma_1(4)=4$, $\sigma_1(2)=3$ and $\sigma_1(3)=2$, where we used the elementary inequality $s^2_2+s^2_3\geq 2s_2s_3$. It remains to show that 
\[
\lbr s_1+s_4\rbr s_4 +\lbr s_3+s_4\rbr s_2+\lbr s_2+s_4\rbr s_3\geq 0
\]
for any $(s_1,s_2,s_3,s_4)^T\in\R^4$ arising as in \eqref{equ:spectrumByMu} from parameters $(x,y,z)^T\in\R^3$ where conditions \eqref{equ:ParamCond2} are satisfied. We compute 
\begin{align*}
\lbr s_1+s_4\rbr s_4 +\lbr s_3+s_4\rbr s_2+\lbr s_2+s_4\rbr s_3 &= \frac{1}{2}\lb 3+2(xy - x -y) +(-x^2-y^2+z^2)\rb \\
&\geq 1+xy - x -y = (1-x)(1-y) \geq 0,
\end{align*}
where we used \eqref{equ:ParamCond2} and that $\max(x,y)\leq 1$.

\end{proof}

Finally, we can prove our main result:

\begin{proof}[Proof of Theorem \ref{thm:Main1}]

It is clear that $2.$ implies $1.$. To show that $1.$ implies $2.$ assume that there exists a positive map $P:\M_2\ra\M_2$ such that $P\otimes P$ is positive but not decomposable. By Corollary \ref{cor:Main} this implies the existence of a positive Pauli diagonal map $\Pmap_\mu:\M_2\ra\M_2$ with $\ket{\mu}=(1,x,y,z)^T$ for some $(x,y,z)^T\in\R^3$ such that $\Pmap_\mu\otimes \Pmap_\mu$ is positive and not decomposable. However, by Theorem \ref{thm:MainForPauli} there are no such $(x,y,z)^T\in\R^3$ finishing the proof.
 
\end{proof}

Curiously, the equivalence of Theorem \ref{thm:Main1} is false when tensor products instead of tensor squares are considered or when the local dimension exceeds $2$. We give counterexamples in the next section.

\section{Non-decomposable positive maps from tensor products}
\label{Sec:NDPosFromTens}

In~\cite{muller2018decomposability} we found examples of a completely positive map $T:\M_d\ra\M_d$ and a completely copositive map $S:\M_d\ra\M_d$ such that their tensor product $T\otimes S$ is positive but not decomposable for any $d\geq 3$. This showed that non-decomposable positive maps can arise as tensor products of decomposable maps. However, our construction did not give any such example for $d=2$. In the next subsection we will find such examples. Moreover, we will use these examples to construct a decomposable map $P:\M_2\ra\M_4$ for which the tensor square $P\otimes P$ is positive but not decomposable. 

\subsection{Tensor products of qubit maps}
\label{subsec:NDTensProd}

For any $t\in\lbr 0,1\rbr$ the qubit depolarizing channel $T_t:\M_2\ra\M_2$ is defined as
\begin{equation}
T_t(X) = (1-t)\text{Tr}(X)\frac{\one_2}{2} + tX,
\label{equ:depol}
\end{equation}
and for any $a\in\lbr 0,1\rbr$ we define a positive map $\theta_a:\M_2\ra\M_2$ as
\begin{equation}
\theta_a(X) = (1-a)X + aX^T.
\end{equation}
It is well known that $T_t$ is entanglement-breaking for $t\leq 1/3$. Consequently, the tensor product $T_t\otimes \theta_a$ is positive for $t\leq 1/3$ and any $a\in\lbr 0,1\rbr$. In the following theorem we characterize all the pairs $(t,a)\in\lbr 0,1\rbr^2$ such that $T_t\otimes \theta_a$ is positive. 

\begin{thm}
The tensor product $T_t\otimes \theta_a:\M_4\ra\M_4$ is a positive map if and only if 
\[
t\leq \frac{1}{2a+1}.
\]
\label{thm:posProdParam}
\end{thm}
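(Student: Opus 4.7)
The plan is to establish both directions by a direct computation on pure input states, using the identity
\[
(T_t \otimes \theta_a)(\rho) = (1-t)\,\tfrac{\one_2}{2} \otimes \theta_a(\rho_B) + t(1-a)\rho + t a\, \rho^\Gamma,
\]
which follows from the decompositions $T_t(X) = (1-t)\tr(X)\tfrac{\one_2}{2} + t X$ and $(\ident_2 \otimes \theta_a)(\rho) = (1-a)\rho + a\rho^\Gamma$. By convexity of the positive cone, it suffices to verify non-negativity of the right-hand side for all pure $\rho = \proj{\psi}{\psi}$ with $\ket{\psi}\in\C^2\otimes\C^2$.

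For necessity I evaluate this identity on the maximally entangled state $\ket{\Phi^+} = \tfrac{1}{\sqrt{2}}(\ket{00} + \ket{11})$, whose reduction $\rho_B = \one_2/2$ is fixed by $\theta_a$. In the permuted basis $(\ket{00}, \ket{11}, \ket{01}, \ket{10})$, the image is block-diagonal with $2\times 2$ blocks
\[
A_1 = \tfrac{1}{4}\begin{pmatrix} 1+t & 2t(1-a) \\ 2t(1-a) & 1+t \end{pmatrix}, \qquad A_2 = \tfrac{1}{4}\begin{pmatrix} 1-t & 2ta \\ 2ta & 1-t \end{pmatrix}.
\]
The determinant of $A_2$ is $\tfrac{(1-t)^2 - 4t^2a^2}{16}$, which is non-negative iff $t \leq \tfrac{1}{1+2a}$, while the analogous condition on $A_1$ reduces to $t(1-2a)\leq 1$ and is strictly weaker on $[0,1]^2$. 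Hence any $t > \tfrac{1}{1+2a}$ forces the image of $\proj{\Phi^+}{\Phi^+}$ to have a negative eigenvalue, proving necessity.

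For sufficiency I assume $t \leq \tfrac{1}{1+2a}$ and take an arbitrary pure $\ket{\psi}$. By the unitary covariance $T_t \circ \text{Ad}_{U_A} = \text{Ad}_{U_A}\circ T_t$, one may reduce to the Schmidt-type form $\ket{\psi} = (\one_2 \otimes U)(c_0\ket{00} + c_1\ket{11})$ for some unitary $U$ on $B$ and coefficients $c_0,c_1\geq 0$ with $c_0^2+c_1^2=1$. A direct calculation then yields
\[
(T_t\otimes\theta_a)(\proj{\psi}{\psi}) = (1-a)(\one_2 \otimes U)\sigma_1(\one_2 \otimes U^\dagger) + a(\one_2 \otimes \bar{U})\sigma_2(\one_2 \otimes U^T),
\]
where $\sigma_1 = (T_t\otimes\ident_2)(\proj{\psi_s}{\psi_s})$ and $\sigma_2 = (T_t\otimes\ident_2)(\proj{\psi_s}{\psi_s}^\Gamma)$, with $\ket{\psi_s}=c_0\ket{00}+c_1\ket{11}$, are explicit block-diagonal $4\times 4$ matrices in the basis above. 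Conjugating by $\one_2 \otimes U^\dagger$, positivity becomes $(1-a)\sigma_1 + a(\one_2 \otimes \bar{M})\sigma_2(\one_2\otimes M) \geq 0$, where $M = U^T U$ is a symmetric unitary on $B$. The orthogonal covariance $\theta_a \circ \text{Ad}_O = \text{Ad}_O \circ \theta_a$ leaves $M$ invariant under $U \mapsto OU$, so a Takagi-type factorization further reduces $M$ to a diagonal canonical form in which the block structure is preserved and the two $2\times 2$ determinant conditions collapse to the same pair as in the Bell-state case, the binding one being $(1-t)/2 \geq at$. The main obstacle is the intermediate case of non-canonical $M$ in which the two blocks mix; a careful case analysis tracking the off-diagonal phases shows that no inequality stricter than $t \leq \tfrac{1}{1+2a}$ emerges, completing the proof.
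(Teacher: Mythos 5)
Your necessity argument is correct and coincides with the paper's: evaluating on the maximally entangled state and reading off the determinant of the $\lset\ket{01},\ket{10}\rset$ block gives exactly $t\leq 1/(2a+1)$. The decomposition $(T_t\otimes\theta_a)(\rho)=(1-t)\tfrac{\one_2}{2}\otimes\theta_a(\rho_B)+t(1-a)\rho+ta\rho^{\Gamma}$ and the reduction to pure inputs are also fine.

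The sufficiency direction, however, contains a genuine gap. The entire content of that direction is the claim that for arbitrary Schmidt coefficients $c_0,c_1$ and arbitrary symmetric unitary $M=U^TU$ "the two $2\times 2$ determinant conditions collapse to the same pair as in the Bell-state case'' and that "a careful case analysis \ldots shows that no inequality stricter than $t\leq 1/(1+2a)$ emerges.'' Neither assertion is proved, and the second sentence is precisely the theorem restated. Worse, the premise of the first assertion fails in general: once $c_0\neq c_1$ and $M$ carries nontrivial phases, the matrix $(1-a)\sigma_1+a(\one_2\otimes \bar{M})\sigma_2(\one_2\otimes M)$ is \emph{not} block-diagonal in the $\lset\ket{00},\ket{11}\rset\oplus\lset\ket{01},\ket{10}\rset$ splitting (the term $(1-t)\tfrac{\one_2}{2}\otimes\theta_a(\rho_B)$ and the $M$-conjugated copy of $\sigma_2$ both couple the blocks), so there are no "two $2\times 2$ determinant conditions'' to check; you face positivity of a full $4\times 4$ Hermitian matrix depending on $c_0,c_1$, two phases, $t$ and $a$. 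You need either to carry out that analysis explicitly, or to import a structural device that collapses it. The paper does the latter in two steps: first the semigroup identity $T_t\otimes\theta_a=(T_{(2a+1)t}\otimes\ident_2)\circ(T_{1/(2a+1)}\otimes\theta_a)$ reduces everything to the single boundary value $t=1/(2a+1)$; second, the output matrix is block positive, hence by \cite[Theorem 3]{johnston2018inverse} has at most one negative eigenvalue, so positivity is equivalent to non-negativity of a single determinant, which is then computed explicitly over the Bloch ball. Without one of these ingredients (or a completed case analysis), your sufficiency proof is a sketch, not a proof.
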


\begin{proof}
It can be checked easily, that 
\[
(T_t\otimes \theta_a)(\omega_2)\ngeq 0
\]
whenever $t>1/(2a+1)$ showing one direction of the statement. Note that
\[
T_t\otimes \theta_a = (T_{(2a+1)t}\otimes \ident_2)\circ (T_{1/(2a+1)}\otimes \theta_a)
\]
for any $t\leq 1/(2a+1)$ and any $a\in\lbr 0,1\rbr$. Since $T_{(2a+1)t}:\M_2\ra\M_2$ is completely positive, the statement of the theorem follows by showing that $T_{1/(2a+1)}\otimes \theta_a$ is positive for any $a\in\lbr 0,1\rbr$. 

Recall that any pure state $\ket{\psi}\in\C^2\otimes \C^2$ can be written as $\ket{\psi} = (X\otimes \one_2)\ket{\Omega_2}$ for some $X\in\M_2$. We have to show that 
\begin{equation}
(T_{1/(2a+1)}\otimes \theta_a)\circ \lb \text{Ad}_{X}\otimes \ident_2\rb\lb\omega_2\rb\geq 0
\label{equ:posCondMain}
\end{equation}
for any $a\in\lbr 0,1\rbr$ and any $X\in\M_2$. Applying the polar decomposition, we can write $X=UP$ for some unitary matrix $U\in\mathcal{U}_2$ and some positive matrix $P\in\M^+_2$. Using that $T_{1/(2a+1)}\circ \text{Ad}_U = \text{Ad}_U\circ T_{1/(2a+1)}$ and that $\text{Ad}_{U^\dagger}$ is completely positive it suffices to show \eqref{equ:posCondMain} for any positive $X\in\M^+_2$. Finally, we can normalize $X\in\M^+_2$ and by using the well-known parametrization (Bloch ball) of qubit states given by 
\[
B_1(\R^3)\ni x \mapsto \rho_x = \frac{\one_2}{2} + \sum^3_{i=1} x_i \sigma_i ,
\]  
we have to show \eqref{equ:posCondMain} for $X=\rho_x$ and any $x\in B_1(\R^3)$.

To show positivity of the matrix in \eqref{equ:posCondMain} we note first that it is block positive (i.e.~it is the Choi matrix of a positive map). By \cite[Theorem 3]{johnston2018inverse} such a matrix can have at most one negative eigenvalue. Therefore, we can use the determinant to determine when it is positive. It remains to show that the function $f:B_1(\R^3)\ra\R$ given by 
\[
f(x) = \text{Det}\lbr (T_{1/(2a+1)}\otimes \theta_a)\circ \lb \text{Ad}_{\rho_x}\otimes \ident_2\rb\lb\omega_2\rb\rbr 
\] 
only attains positive values. Using polar coordinates 
\[
x=\begin{pmatrix} r\text{sin}(\phi_1)\text{cos}(\phi_2) \\ r\text{sin}(\phi_1)\text{sin}(\phi_2) \\ r\text{cos}(\phi_1)\end{pmatrix}
\]
it is slightly tedious but straightforward\footnote{e.g.~using a computer algebra system} to compute 
\[
f(x) = \frac{a^4(1-a^2)r^2\text{cos}^2(\phi_1)\lbr 8(1-a^2)r^2(1+\text{cos}(2\phi_1))+16r^4-8r^2+1) \rbr}{(2a+1)^4}.
\]
Since $16r^4 - 8r^2+1\geq 0$ for any $r\in\lbr 0,1\rbr$ we find that $f(x)\geq 0$ for any $x\in B_1(\R^3)$. This finishes the proof.

\end{proof}

Since the map $T_t\otimes \theta_a:\M_4\ra\M_4$ is a Pauli diagonal map we can apply Corollary \ref{cor:DecompTensorProdPauli} to check when it is decomposable. We find the following. 

\begin{thm}
The tensor product $T_t\otimes \theta_a:\M_4\ra\M_4$ is positive and not decomposable if and only if 
\[
\frac{1}{3}<t\leq \frac{1}{2a+1} \quad\text{ and }\quad a>0.
\] 

\end{thm}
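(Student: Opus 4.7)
The plan is to recognize $T_t\otimes \theta_a$ as a ququart Pauli diagonal map and apply the spectral decomposability criterion of Theorem \ref{thm:SpectralDecompCritN2}. A direct computation with the Pauli basis shows that $T_t = \Pmap_{(1,t,t,t)}$ and $\theta_a = \Pmap_{(1,1,1,1-2a)}$ (the latter using $\sigma_4^T = -\sigma_4$), so the parameter matrix factorizes as the outer product $\mu = (1,t,t,t)^T(1,1,1,1-2a)$, and its spectral matrix is the rank-one product $S = \dm\mu\dm^T = s^{(T)}(s^{(\theta)})^T$ with $s^{(T)} = \tfrac{1}{2}(1+3t,\,1-t,\,1-t,\,1-t)^T$ and $s^{(\theta)} = (2-a,\,a,\,a,\,-a)^T$.

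Because $S$ has rank one, each inequality in Theorem \ref{thm:SpectralDecompCritN2} becomes a product (box type) or an inner product of permutations of $s^{(T)}$ and $s^{(\theta)}$ together with a tail correction (cross type). The box inequalities are trivially satisfied since all pair-sums of entries of $s^{(\theta)}$ are non-negative on $a\in[0,1]$, and the diagonal inequalities reduce by the rearrangement inequality to the single bound $1-t(2a+1)\geq 0$, matching the positivity range of Theorem \ref{thm:posProdParam}. The real work is to minimize the cross expressions $\sum_{i=1}^3 p_{\sigma_1(i)}q_{\sigma_2(i)} + p_{\sigma_1(4)}(2 - q_{\sigma_2(4)})$ and its $p\leftrightarrow q$ counterpart over $\sigma_1,\sigma_2\in S_4$. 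The degeneracies $s^{(T)}_2 = s^{(T)}_3 = s^{(T)}_4$ and $s^{(\theta)}_2 = s^{(\theta)}_3$ reduce the genuinely distinct choices of $p_{\sigma_1(4)}$ and $q_{\sigma_2(4)}$ to a handful; splitting by the value of $q_{\sigma_2(4)} \in \{2-a,a,-a\}$ and applying rearrangement on the remaining three pairs, all but one case yields an expression that factors into manifestly non-negative terms on $[0,1]^2$. The exceptional case, with $q_{\sigma_2(4)} = 2-a$ and $p_{\sigma_1(4)}$ in the multiplicity-three block of $s^{(T)}$, evaluates to exactly $a(1-3t)$, which is negative precisely when $t > 1/3$ and $a > 0$.

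For the converse, when $a = 0$ we have $\theta_0 = \ident_2$ and $T_t\otimes\ident_2$ is completely positive. When $t\leq 1/3$, the spectral vectors above show that $T_t$ is both completely positive and completely copositive, so
\[
T_t\otimes\theta_a = (1-a)(T_t\otimes\ident_2) + a(T_t\otimes\vartheta_2)
\]
expresses the product as a sum of a completely positive and a completely copositive map, hence decomposable. Combining with the positivity bound $t\leq 1/(2a+1)$ from Theorem \ref{thm:posProdParam} pins down the claimed region. The main obstacle is the cross-inequality case analysis; the twin degeneracies in $s^{(T)}$ and $s^{(\theta)}$ keep it tractable, but verifying that the remaining arrangements genuinely produce non-negative minima requires a careful, if routine, computation.
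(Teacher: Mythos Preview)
Your proof is correct and follows essentially the same approach as the paper: both identify $T_t=\Pmap_{(1,t,t,t)}$ and $\theta_a=\Pmap_{(1,1,1,1-2a)}$, compute the rank-one spectral matrix $S=s^{(T)}(s^{(\theta)})^T$, and use the cross inequality from Theorem~\ref{thm:SpectralDecompCritN2} evaluating to $a(1-3t)$ (the paper writes $2a(1-3t)$, but the sign behaviour is identical). The only difference is organizational: the paper exhibits just the single violated cross inequality (with $\sigma_1=\ident$, $\sigma_2=(14)$) to prove non-decomposability and handles the converse by citing that $T_t$ is entanglement breaking for $t\leq 1/3$, whereas you additionally minimize over all box, diagonal and cross inequalities and give the explicit decomposition $T_t\otimes\theta_a=(1-a)(T_t\otimes\ident_2)+a(T_t\otimes\vartheta_2)$ for the converse. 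Your extra minimization is not needed once you have the direct converse argument, but it is not wrong either.
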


\begin{proof}
It is clear that $T_t\otimes \theta_a$ is decomposable whenever $t\leq 1/3$ or $a=0$ since in these cases either $T_t$ is entanglement breaking or $\theta_a$ is completely positive (note that $T_t$ is always completely positive). If $t>1/(2a+1)$, then by Theorem \ref{thm:posProdParam} the map $T_t\otimes \theta_a$ is not positive. Finally, note that $T_t = \Pmap_{\mu}$ for $\ket{\mu} = (1,t,t,t)^T$, and $\theta_a = \Pmap_{\mu'}$ for $\ket{\mu'}=(1,1,1,1-2a)^T$. Using Theorem \ref{thm:CPCond} we can compute the spectral vectors 
\[
\ket{s(t)} := \text{spec}\lb C_{T_t}\rb = \frac{1}{2}\begin{pmatrix} 1+3t \\ 1-t \\ 1-t\\ 1-t\end{pmatrix} \quad\text{ and }\quad \ket{s'(a)}:= \text{spec}\lb C_{\theta_a}\rb = \begin{pmatrix} 2-a \\ a \\ a\\ -a\end{pmatrix} .
\] 
Finally, we compute
\[
s(t)_1s'(a)_4 + s(t)_2s'(a)_2 + s(t)_3s'(a)_3 + s(t)_4s'(a)_4 + s(t)_4s'(a)_2 + s(t)_4s'(a)_3 = 2a(1-3t) . 
\]
By the third inequality in Theorem \ref{thm:SpectralDecompCritN2} with permutations $\sigma_1=\ident$ and $\sigma_2 = (14)$ we conclude that the map $T_t\otimes \theta_a$ is not decomposable for any $t>1/3$ and any $a>0$. Together with Theorem \ref{thm:posProdParam} this finishes the proof.

\end{proof}

In Figure \ref{fig:NDRegion} we plotted the parameters $(a,t)$ where the map $T_t\otimes \theta_a$ is positive. There is a region of parameters where it is also not decomposable. In contrast to the tensor squares, tensor products of qubit maps can be positive and not decomposable. 

\begin{figure*}[t!]
        \centering
        \includegraphics[scale=0.7]{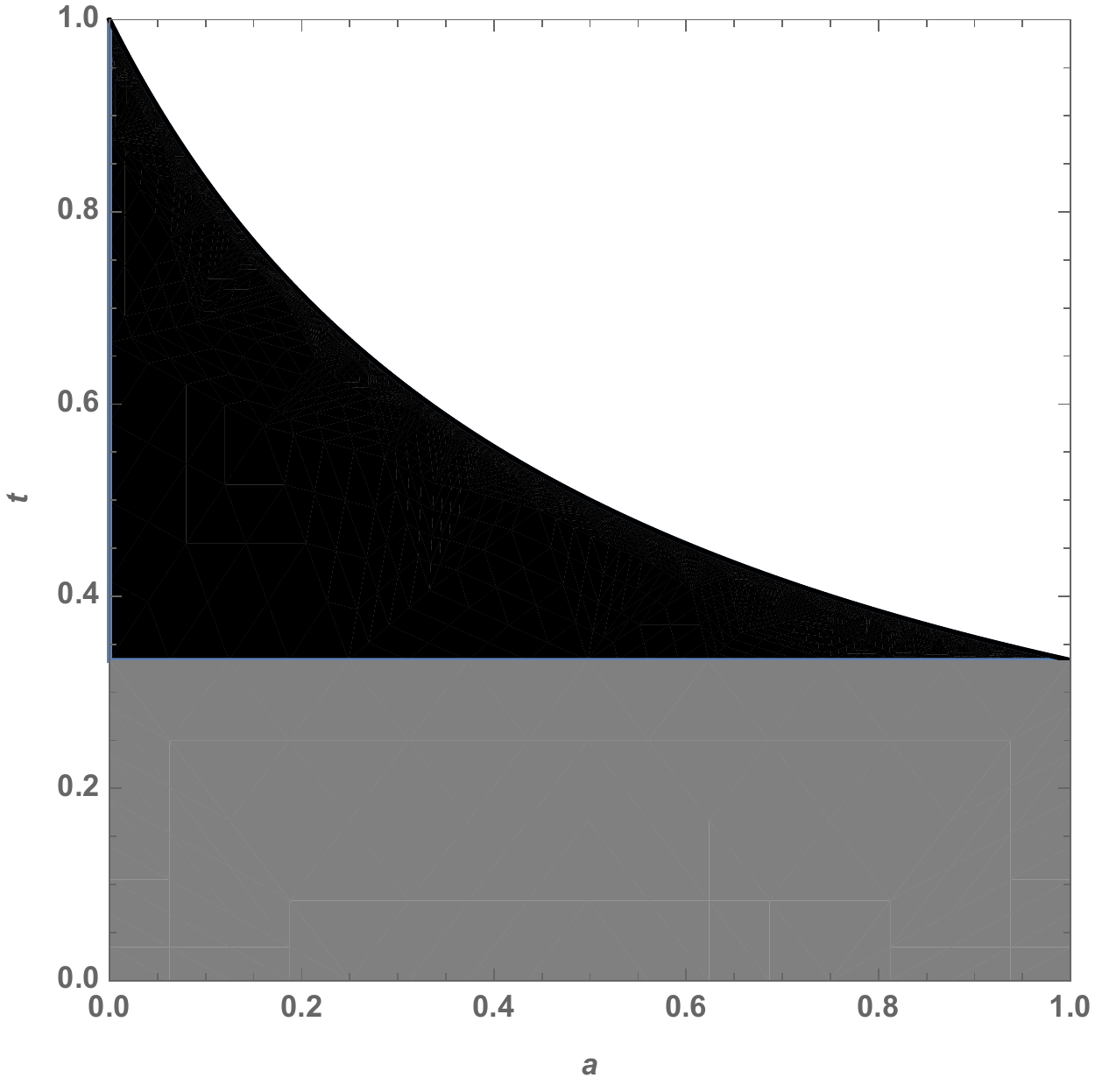}
        \caption{The parameters $(a,t)$ where $T_t\otimes \theta_a$ is positive but not decomposable are the interior of the black region together with the upper boundary given by the curve $(a,1/(2a+1))$ for $0<a<1$. The gray region contains the parameters where the map is decomposable.}
        \label{fig:NDRegion}
\end{figure*}

To conclude this section we will give another family of non-decomposable positive maps arising as tensor products of qubit maps. We will need this family in the next section to construct a positive map with a positive tensor square that is not decomposable. For $b\in \lbr 0,1\rbr$ consider the Pauli diagonal map $\Pi_{\lambda_b}:\M_2\ra\M_2$ with $\lambda_b=(1,b,0,b)^T$, i.e.~the linear map
\[
\Pi_{\lambda_b}\lb X\rb = \text{Tr}\lb X\rb\frac{\one_2}{2} + \frac{b}{2}\lb \text{Tr}\lb X\sigma_2\rb\sigma_2 +\text{Tr}\lb X\sigma_4\rb\sigma_4 \rb.
\] 
It can be checked that $\Pi_{\lambda_b}$ is positive for any $b\in\lbr 0,1\rbr$. With the depolarizing channel $T_t:\M_2\ra\M_2$ for $t\in\lbr 0,1\rbr$ as in \eqref{equ:depol} we have the following theorem:

\begin{thm}
For $t,b\in\lbr 0,1\rbr$ the tensor product $T_t\otimes \Pi_{\lambda_b}:\M_4\ra\M_4$ is positive if and only if $t\leq \frac{1}{2b}$.
\label{thm:posReg2}
\end{thm}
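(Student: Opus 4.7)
The plan is to prove both directions via Pauli-basis computations and the structure of Pauli diagonal maps.

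For the only-if direction, I test positivity against the single density matrix $\omega_2/2$ (the normalized maximally entangled state). Expanding $\omega_2 = \tfrac12\sum_{i=1}^4 \sigma_i\otimes \sigma_i^T$ and using $T_t(\sigma_1) = \sigma_1$, $T_t(\sigma_i) = t\sigma_i$ for $i\geq 2$, together with $\Pi_{\lambda_b}(\sigma_1) = \sigma_1$, $\Pi_{\lambda_b}(\sigma_2) = b\sigma_2$, $\Pi_{\lambda_b}(\sigma_3) = 0$, $\Pi_{\lambda_b}(\sigma_4) = b\sigma_4$ (and noting $\sigma_4^T = -\sigma_4$), one computes
\[
(T_t\otimes \Pi_{\lambda_b})\!\left(\tfrac{\omega_2}{2}\right) \;=\; \tfrac14\bigl(\one\otimes\one + bt\,\sigma_2\otimes\sigma_2 - bt\,\sigma_4\otimes\sigma_4\bigr).
\]
The three summands commute and are simultaneously diagonalized by the Bell basis; reading off the Bell eigenvalues gives $\tfrac{1+2bt}{4},\tfrac14,\tfrac{1-2bt}{4},\tfrac14$, so positivity forces $t\leq 1/(2b)$.

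For the if direction I split into two cases depending on whether $\Pi_{\lambda_b}$ is completely positive. By Theorem~\ref{thm:CPCond}, $\dm\lambda_b = \tfrac12(1+2b,1,1-2b,1)^T$ is entry-wise non-negative iff $b\leq 1/2$. In that regime the bound $t\leq 1/(2b)$ is vacuous (since $1/(2b)\geq 1$) and $T_t\otimes \Pi_{\lambda_b}$ is CP, hence positive, for every $t\in[0,1]$. If instead $b>1/2$ and $t\leq 1/(2b)$, then $1/(2b)\in[1/2,1]$ and the semigroup identity $T_r\circ T_s = T_{rs}$ yields
\[
T_t\otimes \Pi_{\lambda_b} \;=\; (T_{2bt}\otimes \ident_2)\circ (T_{1/(2b)}\otimes \Pi_{\lambda_b}).
\]
Since $2bt\leq 1$ the first factor is CP, so positivity of the product reduces to positivity of $T_{1/(2b)}\otimes \Pi_{\lambda_b}$.

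To establish this last step I follow the strategy of Theorem~\ref{thm:posProdParam}: every pure state on $\M_2\otimes \M_2$ has the form $(X\otimes\one_2)\omega_2(X^*\otimes\one_2)$ for some $X\in\M_2$, and using the polar decomposition $X = UP$ together with the unitary covariance $T_{1/(2b)}\circ \text{Ad}_U = \text{Ad}_U\circ T_{1/(2b)}$, the problem reduces to showing $\bigl((T_{1/(2b)}\circ \text{Ad}_{\rho_x})\otimes \Pi_{\lambda_b}\bigr)(\omega_2) \geq 0$ for every Bloch vector $x\in B_1(\R^3)$. Since $T_{1/(2b)}\circ \text{Ad}_{\rho_x}$ is CP while $\Pi_{\lambda_b}$ is positive, a short Kraus-operator computation shows the output is block positive; by \cite[Theorem~3]{johnston2018inverse} it then has at most one negative eigenvalue, and positivity is equivalent to non-negativity of its determinant. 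The main obstacle is this final determinant computation: expressing $x$ in spherical coordinates as in the proof of Theorem~\ref{thm:posProdParam} and simplifying (with computer-algebra assistance) one should recognize the determinant as a manifestly non-negative polynomial in the radial and angular variables, completing the proof.
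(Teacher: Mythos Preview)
Your overall strategy coincides with the paper's: the necessity test via $\omega_2$, the case split at $b=1/2$, the factorization $T_t\otimes\Pi_{\lambda_b}=(T_{2bt}\otimes\ident_2)\circ(T_{1/(2b)}\otimes\Pi_{\lambda_b})$, and the reduction to a determinant over the Bloch ball using \cite[Theorem~3]{johnston2018inverse} are all exactly what the paper does.

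The gap is in your last sentence. You expect that, as in Theorem~\ref{thm:posProdParam}, the determinant will simplify to a \emph{manifestly} non-negative polynomial. It does not. In spherical coordinates the paper finds
\[
f(r,\phi_1,\phi_2)=\frac{4b^2-1}{16384\,b^2}\,r^2\Bigl[c_1(\phi_1,\phi_2)\,r^4+\frac{c_2(\phi_1,\phi_2)}{16b^2}\,r^2+c_1(\phi_1,\phi_2)\Bigr],
\]
where $c_1\geq 0$ is easy but $c_2$ can be negative. The paper then needs an additional argument: on the boundary $r=1$ the Bloch vector corresponds to a pure state, so the output factorizes as $T_{1/(2b)}(\proj{\psi}{\psi})\otimes\Pi_{\lambda_b}(\overline{\proj{\psi}{\psi}})\geq 0$, giving $f(1,\phi_1,\phi_2)\geq 0$ and hence $2c_1+c_2/(16b^2)\geq 0$. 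This forces the discriminant of the quadratic $q(u)=c_1u^2+(c_2/16b^2)u+c_1$ to be non-positive (or $c_1=0$, which makes $c_2\geq 0$), so $q$ has no sign change on $[0,1]$ and $f\geq 0$ follows. Without this boundary-value trick the determinant computation does not close, so your proposal is incomplete at precisely the point you flagged as ``the main obstacle''.
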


\begin{proof}
It can be checked easily, that 
\[
(T_t\otimes \Pi_{\lambda_b})(\omega_2)\ngeq 0
\]
whenever $t>1/2b$ showing one direction of the statement. For the other direction we use a similar strategy as for the proof of Theorem \ref{thm:posProdParam}. First, we note that $\Pi_{\lambda_b}$ is completely positive for any $b\in \lbr 0,1/2\rbr$. Therefore, for $b\in \lbr 0,1/2\rbr$ and any $t\in \lbr 0,1\rbr$ the linear map $T_t\otimes \Pi_{\lambda_b}$ is positive. Assume now that $b\geq 1/2$ and that $t\leq 1/2b$. Since
\[
T_t\otimes \Pi_{\lambda_b} = \lb T_{2bt}\otimes \ident_2\rb\circ \lb T_{(1/2b)}\otimes \Pi_{\lambda_b} \rb
\] 
and $T_{2tb}$ is completely positive, it is sufficient to show that $T_{1/2b}\otimes \Pi_{\lambda_b}$ is positive. As in the proof of Theorem \ref{thm:posProdParam} this follows from positivity of the function $f:B_1(\R^3)\ra\R$ on the Bloch ball $B_1(\R^3)$ given by 
\[
f(x) = \text{Det}\lbr (T_{1/(2b)}\otimes \Pi_{\lambda_b})\circ \lb \text{Ad}_{\rho_x}\otimes \ident_2\rb\lb\omega_2\rb\rbr ,
\] 
where 
\[
\rho_x = \frac{1}{2}\lb \one_2 + \sum^3_{i=1} x_i \sigma_i \rb.
\]
Using polar coordinates 
\[
x=\begin{pmatrix} r\text{sin}(\phi_1)\text{cos}(\phi_2) \\ r\text{sin}(\phi_1)\text{sin}(\phi_2) \\ r\text{cos}(\phi_1)\end{pmatrix}
\]
it is again slightly tedious but straightforward\footnote{e.g.~using a computer algebra system.} to compute 
\[
f(r,\phi_1,\phi_2) = \frac{(4b^2-1)}{16384b^2}r^2\lbr c_1(\phi_1,\phi_2)r^4 + \frac{c_2(\phi_1,\phi_2)}{16b^2}r^2 + c_1(\phi_1,\phi_2)\rbr
\]
with 
{\scriptsize
\begin{align*}
c_1(\phi_1,\phi_2) &= 7-6b^2-(2b^2-1)\cos(2\phi_1)-2(2b^2-1)\cos(2\phi_2)\sin(\phi_1)^2,\\
c_2(\phi_1,\phi_2) &=-64+128b^2-233b^4+164b^6+b^2(3b^2(4b^2-1)\cos(4\phi_1)+16(12b^4-11b^2)\cos(2\phi_2)\sin(\phi_1)^2 +\cdots\\
&\cdots + 8b^2(4b^2-1)\cos(4\phi_2)\sin(\phi_1)^4 + 4\cos(2\phi_1)(20b^4-21b^2+4b^2(4b^2-1)\cos(2\phi_2)\sin(\phi_1)^2))
\end{align*}}%
It is easy to check that $c_1(\phi_1,\phi_2)\geq 0$ for any $b\in \lbr 1/2,1\rbr$, any $\phi_1\in \lbr 0,\pi\rbr$ and any $\phi_2\in\lbr 0,2\pi\rbr$. We will now argue that for fixed $\phi_1\in \lbr 0,\pi\rbr$ and $\phi_2\in\lbr 0,2\pi\rbr$ the polynomial $r\mapsto f(r,\phi_1,\phi_2)$ is positive for any $r\in\lbr 0,1\rbr$. Clearly, this is true when $c_2(\phi_1,\phi_2)\geq 0$, and we will assume $c_2(\phi_1,\phi_2)<0$ in the following. Since any $x\in \partial B_1(\R^3)$ corresponds to a pure state $\rho_x=\proj{\psi}{\psi}$ we have 
\[
(T_{1/(2b)}\otimes \Pi_{\lambda_b})\circ \lb \text{Ad}_{\rho_x}\otimes \ident_2\rb\lb\omega_2\rb = T_{1/(2b)}(\proj{\psi}{\psi})\otimes \Pi_{\lambda_b}(\overline{\proj{\psi}{\psi}})\geq 0 
\] 
by positivity of $T_{1/(2b)}$ and $\Pi_{\lambda_b}$. Therefore, we have $f(1,\phi_1,\phi_2)\geq 0$ and equivalently
\[
2c_1(\phi_1,\phi_2) + \frac{c_2(\phi_1,\phi_2)}{16b^2}\geq 0.
\]
for any $b\in\lbr 1/2,1\rbr$, any $\phi_1\in \lbr 0,\pi\rbr$ and any $\phi_2\in\lbr 0,2\pi\rbr$. Note that $c_1(\phi_1,\phi_2)=0$ implies that $c_2(\phi_1,\phi_2)\geq 0$, and in the following we can assume that $c_1(\phi_1,\phi_2)>0$. The previous inequality implies that the zeros 
\[
r_{1,2} = -\frac{c_2(\phi_1,\phi_2)}{32b^2c_1(\phi_1,\phi_2)} \pm \sqrt{\frac{c_2(\phi_1,\phi_2)^2}{1024b^4c_1(\phi_1,\phi_2)^2}-1}
\]
of the polynomial 
\[
q(r) = c_1(\phi_1,\phi_2)r^2 + \frac{c_2(\phi_1,\phi_2)}{16b^2}r + c_1(\phi_1,\phi_2)
\]
either coincide, or are not real. As $q(0)>0$ and $q(1)\geq 0$ we conclude that the polynomial $r\mapsto p(r,\phi_1,\phi_2) = \frac{(4b^2-1)}{16384b^2}r^2 q(r^2)$ does not attain any negative values since otherwise the polynomial $q$ would need to have two positive zeros. This finishes the proof.

\end{proof}

Again, we can determine the parameters where the tensor product $T_t\otimes \Pi_{\lambda_b}$ is positive but not decomposable. 

\begin{thm}
The tensor product $T_t\otimes \Pi_{\lambda_b}:\M_4\ra\M_4$ is positive and not decomposable if and only if 
\[
t\leq \frac{1}{2b} \quad\text{ and }\quad 3<2b+t+2bt.
\] 
\label{thm:NDParam2}
\end{thm}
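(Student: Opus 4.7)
The plan is to combine Theorem \ref{thm:posReg2}, which characterizes positivity as $t \leq 1/(2b)$, with the spectral decomposability criterion from Theorem \ref{thm:SpectralDecompCritN2} applied to the ququart Pauli diagonal map $T_t \otimes \Pi_{\lambda_b}$. The parameter matrix of this tensor product is the outer product of the parameter vectors of $T_t$ and $\Pi_{\lambda_b}$, so by Theorem \ref{thm:CPCond} its spectral matrix factorizes as the outer product $S = s(t)\, s'(b)^T$, where
\[
s(t) = \tfrac{1}{2}(1+3t,\, 1-t,\, 1-t,\, 1-t)^T,\qquad s'(b) = \tfrac{1}{2}(1+2b,\, 1,\, 1-2b,\, 1)^T.
\]

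Writing $\alpha_i = s(t)_{\sigma_1(i)}$ and $\beta_j = s'(b)_{\sigma_2(j)}$, this factorization reduces the two ``cross'' inequalities in Theorem \ref{thm:SpectralDecompCritN2} to
\[
(\alpha_1 + \alpha_4)\beta_1 + (\alpha_2 + \alpha_4)\beta_2 + (\alpha_3 + \alpha_4)\beta_3 \geq 0
\]
and
\[
\alpha_1(\beta_1 + \beta_4) + \alpha_2(\beta_2 + \beta_4) + \alpha_3(\beta_3 + \beta_4) \geq 0.
\]
The ``box'' and ``diagonal'' inequalities hold automatically once positivity is assumed (by Theorem \ref{thm:SpectralDecompCritN2}), and the second cross inequality is automatic as well: each $\alpha_i \geq 0$ because $T_t$ is completely positive, and each pair sum $\beta_i + \beta_j$ with $i \neq j$ is non-negative by Theorem \ref{thm:DecompQubitPauli} applied to the positive map $\Pi_{\lambda_b}$.

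Only the first cross inequality can fail, and that is where the real work lies. Its coefficients $\alpha_i + \alpha_4$ are non-negative, but $\beta_3 = (1-2b)/2$ is negative whenever $b > 1/2$. I plan to minimize the expression over the $24 \times 24$ permutation pairs. Because $s(t)$ has only two distinct values and $s'(b)$ has only three, the minimization reduces to a short case distinction: place the large value of $s(t)$ at some $k \in \{1,2,3\}$ so that the coefficient $\alpha_k + \alpha_4 = 1+t$ multiplies the negative entry $\beta_3$, and fill the two remaining positions (with coefficient $1-t$) with $\beta_2$ and $\beta_4$, both equal to $\tfrac{1}{2}$. The candidate minimum value is
\[
\frac{(1+t)(1-2b) + 2(1-t)}{2} = \frac{3 - 2b - t - 2bt}{2}.
\]

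Combined with the positivity condition of Theorem \ref{thm:posReg2}, this yields the claimed characterization: $T_t \otimes \Pi_{\lambda_b}$ is positive and not decomposable if and only if $t \leq 1/(2b)$ and $3 - 2b - t - 2bt < 0$. The main obstacle is the minimization case analysis, where I must verify that the two natural competitors -- placing the large value of $s(t)$ at position $4$ (yielding $(1+t)(3-2b)/2$, larger by $4t$) or placing $\beta_3$ alongside the two largest remaining entries of $s'(b)$ (yielding $(3 - t - 4bt)/2$, larger by $2b(1-t)$) -- always give strictly larger values. Once these two one-line comparisons are spelled out, only routine algebra remains.
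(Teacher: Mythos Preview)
Your proposal is correct and follows essentially the same route as the paper: reduce to Theorem \ref{thm:SpectralDecompCritN2} via the rank-one spectral matrix $S = s(t)\,s'(b)^T$, observe that only the first cross inequality can fail, and identify its minimum over permutations as $\tfrac{1}{2}(3-2b-t-2bt)$. The paper simply exhibits one permutation pair giving this value and then asserts that all inequalities hold when $3\geq 2b+t+2bt$, whereas you actually carry out the minimization that justifies this claim; your two comparison constants are off by a factor of $2$ (the differences are $2t$ and $b(1-t)$, not $4t$ and $2b(1-t)$), but the signs are unaffected.
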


\begin{proof}

By Theorem \ref{thm:posReg2} the tensor product $T_t\otimes \Pi_{\lambda_b}$ is not positive whenever $t>1/2b$. Note that $T_t = \Pmap_{\mu}$ for $\mu = (1,t,t,t)^T$. Using Theorem \ref{thm:CPCond} we can compute the spectral vectors
\[
s(t) := \text{spec}\lb C_{T_t}\rb = \frac{1}{2}\begin{pmatrix} 1+3t \\ 1-t \\ 1-t\\ 1-t\end{pmatrix} \quad\text{ and }\quad s'(b):= \text{spec}\lb C_{\lambda_b}\rb = \frac{1}{2}\begin{pmatrix} 1+2b \\ 1 \\ 1-2b\\ 1\end{pmatrix} .
\] 
After reordering we find that 
\[
s(t)_1s'(b)_3 + s(t)_2s'(b)_2 + s(t)_3s'(b)_4 + s(t)_4s'(b)_3 + s(t)_4s'(b)_2 + s(t)_4s'(b)_4 = 2(3-2b-t-2tb) . 
\]
By the third inequality in Theorem \ref{thm:SpectralDecompCritN2} with permutations $\sigma_1=\ident$ and $\sigma_2 = (134)$ we conclude that the map $T_t\otimes \lambda_b$ is not decomposable whenever $3<2b+t+2tb$. It is easy to check that the inequalities from Theorem \ref{thm:SpectralDecompCritN2} are all satisfied whenever $3\geq 2b+t+2tb$ and $t\leq 1/2b$.   

\end{proof}

In Figure \ref{fig:NDRegion2} we have plotted the parameters $(b,t)$ where the tensor product $T_t\otimes \Pi_{\lambda_b}$ is positive. Again, there is a region of parameters for which the tensor product is positive but not decomposable. 

\begin{figure*}[t!]
        \centering
        \includegraphics[scale=0.7]{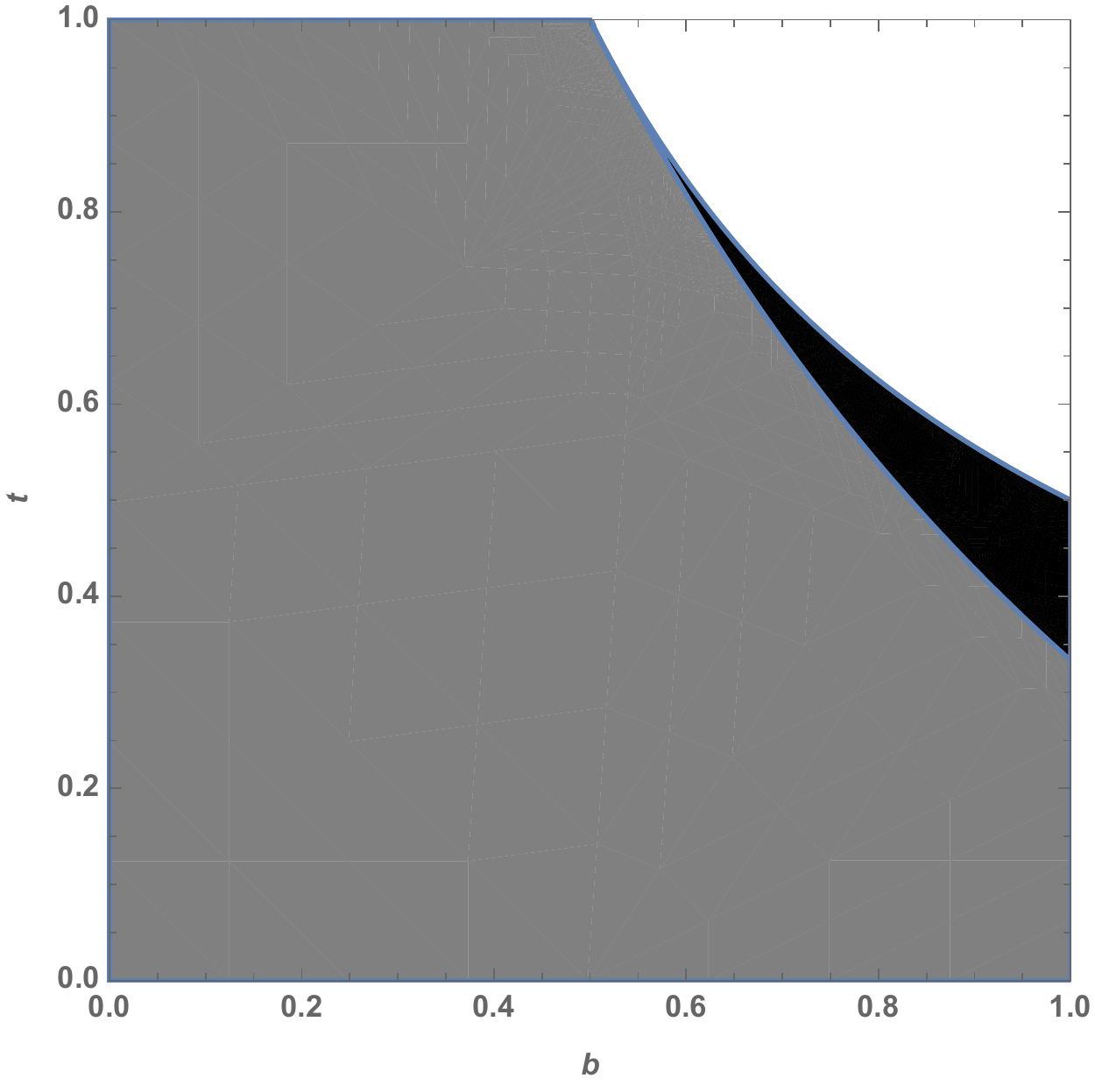}
        \caption{The parameters $(b,t)$ where $T_t\otimes \Pi_{\lambda_b}$ for $\lambda_b=(1,b,0,b)^T$ is positive but not decomposable are the interior of the black region together with the upper boundary given by the curve $(b,1/2b)$ for $1/2<b\leq 1$. The gray region contains the parameters where the map is decomposable.}
        \label{fig:NDRegion2}
\end{figure*}

\subsection{Positive tensor squares that are not decomposable}
\label{subsec:NDTensSquares}

In this section we will show that for any $d_1\geq 2$ and any $d_2\geq 4$ there exist decomposable maps $P:\M_{d_1}\ra\M_{d_2}$ such that the tensor square $P\otimes P$ is positive and not decomposable. We will need the following lemma:

\begin{lem}[Switch trick]
Let $P_1,P_2:\M_{d_1}\ra\M_{d_2}$ be positive maps such that $P_1\otimes P_1$, $P_2\otimes P_2$, and $P_1\otimes P_2$ are positive. If $P_1\otimes P_2$ is not decomposable, then for the map $P:\M_{d_1}\ra\M_{2d_2}$ given by 
\[
P(X) = \begin{pmatrix} P_1(X) & 0 \\ 0 & P_2(X)\end{pmatrix}
\]
the tensor square $P\otimes P$ is positive but not decomposable. 
\label{lem:switchy}
\end{lem}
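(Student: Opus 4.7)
The plan is to establish positivity and non-decomposability separately, in both cases exploiting the direct-sum form of $P$ via block isometries. Let $V_i\colon \C^{d_2}\ra\C^{2d_2}$ embed $\C^{d_2}$ as the $i$-th block ($i=1,2$), so that $V_i^*V_j=\delta_{ij}\one_{d_2}$ and $P(X) = V_1 P_1(X) V_1^* + V_2 P_2(X) V_2^*$.

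For positivity I would expand
\[
(P\otimes P)(Z) = \sum_{i,j\in\lset 1,2\rset} (V_i\otimes V_j)(P_i\otimes P_j)(Z)(V_i\otimes V_j)^*
\]
for any $Z\in\M_{d_1}\otimes\M_{d_1}$. Since conjugation by $V_i\otimes V_j$ preserves positivity, positivity of $P\otimes P$ reduces to positivity of each of the four maps $P_i\otimes P_j$. Three of these are given by hypothesis, while $P_2\otimes P_1 = \text{Ad}(\flip_{d_2})\circ (P_1\otimes P_2)\circ \text{Ad}(\flip_{d_1})$ inherits positivity from $P_1\otimes P_2$.

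For non-decomposability I would argue by contradiction: suppose $P\otimes P = S + \vartheta\circ T$ with $S,T:\M_{d_1}\otimes\M_{d_1}\ra\M_{2d_2}\otimes\M_{2d_2}$ completely positive. Since the decomposability property is independent of the basis defining the transpose, I may take $\vartheta = \vartheta_{2d_2}\otimes\vartheta_{2d_2}$ in the block-aware basis $\lset \ket{i}\otimes\ket{k}\colon i\in\lset 1,2\rset,\, k\in\lset 1,\ldots,d_2\rset\rset$. Applying the completely positive compression $\Psi := \text{Ad}(V_1^*)\otimes \text{Ad}(V_2^*)$ and using the two identities
\begin{align*}
\Psi\circ (P\otimes P) &= P_1\otimes P_2,\\
\Psi\circ (\vartheta_{2d_2}\otimes\vartheta_{2d_2}) &= (\vartheta_{d_2}\otimes\vartheta_{d_2})\circ\Psi
\end{align*}
then yields
\[
P_1\otimes P_2 = \Psi\circ S + (\vartheta_{d_2}\otimes\vartheta_{d_2})\circ\Psi\circ T,
\]
which decomposes $P_1\otimes P_2$ and contradicts the hypothesis. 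The first identity is immediate from $V_i^*V_j=\delta_{ij}\one$; the second is the one step requiring a touch of care, but in the chosen basis the $V_i$ consist of standard basis columns, so $V_i^* M^T V_i = (V_i^* M V_i)^T$ for every $M$ and the identity follows in one line. The remaining work is mechanical bookkeeping about compositions of completely positive maps.
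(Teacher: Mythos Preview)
Your proof is correct and follows essentially the same route as the paper: both expand $P\otimes P$ as a block direct sum of the $P_i\otimes P_j$ to get positivity, and both extract $P_1\otimes P_2$ from a putative decomposition of $P\otimes P$ by compressing onto the $(1,2)$-block with a completely positive map. Your version is slightly more explicit about two points the paper leaves implicit---that $P_2\otimes P_1$ is positive by flip-conjugation, and that the block compression commutes with the transpose in the standard basis---but the argument is the same.
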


\begin{proof}
With the computational basis $\lset \ket{1},\ket{2}\rset \in \C^2$ we can write
\[
P(X)= P_1(X)\otimes \proj{1}{1} + P_2(X)\otimes \proj{2}{2}.
\]
Then, we have 
\[
P\otimes P = \sum_{i,j} P_i\otimes\proj{i}{i}\otimes  P_j\otimes \proj{j}{j}. 
\]
Since the maps $P_1\otimes P_1$, $P_2\otimes P_2$ and $P_1\otimes P_2$ are all positive, we find that $P\otimes P$ is positive. Assume now for contradiction that $P\otimes P$ is decomposable. Then, there exist completely positive maps $T,S:\M_{d_1}\ra\M_{2d_2}$ such that 
\[
P\otimes P = T + \vartheta_{2d_2}\circ S.
\]
However, then the map 
\[
P_1\otimes P_2 = \lb\ident_{d_2}\otimes  \text{Ad}_{\bra{1}}\otimes\ident_{d_2}\otimes  \text{Ad}_{\bra{2}} \rb\circ \lb P\otimes P\rb
\]
would be decomposable as well, contradicting the assumption. 
\end{proof}

Finally, we can show the following theorem.

\begin{thm}
For any $d_1\geq 2$ and $d_2\geq 4$ there exists a decomposable map $P:\M_{d_1}\ra\M_{d_2}$ such that the tensor square $P\otimes P$ is positive and not decomposable. 
\end{thm}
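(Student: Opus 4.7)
The plan is to first prove the statement for $(d_1,d_2)=(2,4)$ via Lemma \ref{lem:switchy} and the examples constructed in Section \ref{subsec:NDTensProd}, and then to bootstrap to all $d_1\geq 2$ and $d_2\geq 4$ by an isometric embedding argument.

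For the base case, I would set $P_1 = T_t$ and $P_2 = \Pi_{\lambda_b}$ for parameters $t,b\in(0,1]$ to be chosen, and verify the four hypotheses of Lemma \ref{lem:switchy}. Since $T_t$ is completely positive for $t\in[0,1]$, the product $T_t\otimes T_t$ is automatically positive, and the resulting block-diagonal map $P = P_1\oplus P_2:\M_2\ra\M_4$ is decomposable as a direct sum of two decomposable qubit maps (St{\o}rmer's theorem applied to each block). The remaining constraints are: $\Pi_{\lambda_b}\otimes \Pi_{\lambda_b}$ positive, which by the Filippov--Magadov inequalities \eqref{equ:ParamCond} amounts to $2b^2\leq 1$; $T_t\otimes \Pi_{\lambda_b}$ positive, equivalent by Theorem \ref{thm:posReg2} to $t\leq 1/(2b)$; and $T_t\otimes \Pi_{\lambda_b}$ non-decomposable, equivalent by Theorem \ref{thm:NDParam2} to $2b+t+2bt>3$. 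The choice $t=b=1/\sqrt{2}$ meets all three (the first two at equality, and the last since $2b+t+2bt = 1+3/\sqrt{2}>3$), so Lemma \ref{lem:switchy} delivers a decomposable map $P:\M_2\ra\M_4$ whose tensor square is positive but not decomposable.

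To obtain the statement for general $d_1\geq 2$ and $d_2\geq 4$, I would fix isometries $V:\C^2\ra\C^{d_1}$ and $W:\C^4\ra \C^{d_2}$ and define $\tilde P:\M_{d_1}\ra \M_{d_2}$ by $\tilde P(Y) = W\,P(V^\dagger Y V)\,W^\dagger$. Decomposability of $\tilde P$ follows from decomposability of $P$ by pre- and post-composition with the completely positive maps $\text{Ad}_{V^\dagger}$ and $\text{Ad}_W$, both of which preserve the cones of completely positive and of completely copositive maps. Positivity of $\tilde P\otimes \tilde P$ follows similarly from positivity of $P\otimes P$. The identity
\[
P\otimes P \;=\; (\text{Ad}_{W^\dagger}\otimes \text{Ad}_{W^\dagger})\circ(\tilde P\otimes \tilde P)\circ (\text{Ad}_V\otimes \text{Ad}_V),
\]
valid because $V^\dagger V = \one_2$ and $W^\dagger W = \one_4$, shows that any decomposition of $\tilde P\otimes \tilde P$ would pull back to one of $P\otimes P$, contradicting the base case. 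Hence $\tilde P\otimes \tilde P$ is positive and non-decomposable, as required.

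The only real delicacy is the parameter selection in the base case: each additional requirement on $(t,b)$ --- in particular positivity of $\Pi_{\lambda_b}\otimes \Pi_{\lambda_b}$, which was not imposed in Section \ref{subsec:NDTensProd} --- shrinks the admissible region, and the binding inequalities intersect in a thin sliver of parameter space. Once one notices that the curve $t=1/(2b)$ for $b\in(1/2,1/\sqrt{2}\,]$ lies inside the admissible set, however, a witness is guaranteed and the argument concludes.
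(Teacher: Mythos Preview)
Your proposal is correct and follows essentially the same approach as the paper: build the base example $P:\M_2\ra\M_4$ via Lemma \ref{lem:switchy} applied to $P_1=T_t$ and $P_2=\Pi_{\lambda_b}$, then embed into higher dimensions. The paper uses $(t,b)=(3/4,2/3)$ while you use $(t,b)=(1/\sqrt{2},1/\sqrt{2})$; both choices satisfy all the required constraints, and you additionally spell out the isometric embedding step that the paper leaves as a one-line remark.
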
 
\begin{proof}
Consider the quantum channel $T_{3/4}:\M_2\ra\M_2$ as defined in \eqref{equ:depol} and the positive Pauli diagonal map $\Pi_{\lambda_{2/3}}:\M_2\ra\M_2$ for $\lambda_{2/3}=(1,2/3,0,2/3)^T$. Since $T_{3/4}$ is completely positive its tensor square $T_{3/4}\otimes T_{3/4}$ is positive. By \eqref{equ:ParamCond} the tensor square $\Pi_{\lambda_{2/3}}\otimes \Pi_{\lambda_{2/3}}$ is positive as well. Finally, by Theorem \ref{thm:NDParam2} the tensor product $T_{3/4}\otimes \Pi_{\lambda_{2/3}}$ is positive but not decomposable. We conclude that the map $P:\M_2\ra\M_4$ given by
\[
P(X) = \begin{pmatrix} T_{3/4}(X) & 0 \\ 0 & \Pi_{\lambda_{2/3}}(X)\end{pmatrix}
\]
is decomposable, and by Lemma \ref{lem:switchy} it has a positive tensor square $P\otimes P$ that is not decomposable. The statement of the theorem now follows from embedding this example in suitable higher dimensions.

\end{proof}

Another way to construct positive and non-decomposable tensor squares (or higher powers) can be found in the proof of~\cite[Theorem 1]{muller2016positivity}. Using unextendible product bases~\cite{divincenzo2003unextendible} this construction can be used to find for fixed $n\in\N$ a non-decomposable positive map $P:\M_{3}\ra\M_{3}$ such that $P^{\otimes n}$ is positive (and trivially non-decomposable). However, we have not been able to use this construction to find decomposable maps with non-decomposable but positive tensor powers. In particular, we do not know of an example of a decomposable map $P:\M_3\ra\M_3$ such that $P\otimes P$ is positive and non-decomposable. We expect that such an example exists.

\section{Conclusion and open questions}
\label{sec:Conclusion}
 
To characterize when positive Pauli diagonal maps are decomposable, we studied the polyhedral cone of Pauli diagonal maps that are both completely positive and completely copositive. Using the one-to-one correspondence between these maps and the spectra of their Choi matrices, we introduced the cone of Pauli PPT spectra $\mathcal{S}\lb \CP_N\cap\coCP_N\rb$ and analyzed its extremal rays. For $N=1$ and $N=2$ we found all extremal rays of this cone, and we used these to characterize the decomposable Pauli diagonal maps $\Pmap^{(N)}_\mu:\M^{\otimes N}_2\ra\M^{\otimes N}_{2}$ for $N=1$ and $N=2$. As an application of our results, we extended St\o rmer's theorem by showing that every positive tensor square $P\otimes P$ of linear maps $P:\M_2\ra\M_2$ is decomposable. Finally, we provided examples of linear maps $P,Q:\M_2\ra\M_2$ for which $P\otimes Q$ is positive but not decomposable, and an example of a decomposable map $P':\M_2\ra\M_4$ for which the tensor square $P'\otimes P'$ is positive but not decomposable. We finish with some open questions: 

\begin{itemize}
\item Is there a shorter (or more insightful) proof for Theorem \ref{thm:ExtremeRaysN2}?
\item What are the extremal rays of $\mathcal{S}\lb \CP_N\cap\coCP_N\rb$ for $N\geq 3$? We have some partial results in the case $N=3$ to be included in future work~\cite{fulvio20xx}, but even in this case the general structure of the extremal rays and their orbits under symmetry seems to be complicated. 
\item For every $n\in\N$ we can consider the symmetric orthogonal matrices 
\[
K_n = \frac{2}{n}E_n - \one_n.
\]
What are the ew-positive matrices $P\in\M_n(\R)$ such that $K_nPK_n$ is ew-positive as well? How many extremal rays does the corresponding polyhedral cone have? 
\item Is there a map $P:\M_2\ra\M_3$ such that the tensor square $P\otimes P$ is positive but not decomposable? If not, then what about higher powers? This would generalize Woronowicz's theorem~\cite{woronowicz1976positive}.  
\item Is there a positive map $P:\M_2\ra\M_2$ such that the tensor cube $P^{\otimes 3}$ is positive but not decomposable? If not, then what about higher powers?  
\end{itemize}

\section*{Acknowledgments}
We thank Guillaume Aubrun and Fulvio Gesmundo for interesting discussions and valuable comments that improved this article. Special thanks go to Linn Elki{\ae}r for diverting discussions about the content of Figure \ref{fig:starrySurf}. We acknowledge funding from the European Union’s Horizon 2020 research and innovation programme under the Marie Sk\l odowska-Curie Action TIPTOP (grant no. 843414).

\newpage

\appendix

\section{Proof of Theorem \ref{thm:ExtremeRaysN2}}
\label{app:Therest}

We will now present an analytic proof of Theorem \ref{thm:ExtremeRaysN2} characterizing the extremal rays of $\mathcal{S}\lb \CP_2\cap\coCP_2\rb$ (cf.~Definition \ref{defn:SPPPoly}). Recall that there are two statements to show: First, we will show that the following pairs generate extremal rays in $\mathcal{S}\lb \CP_2\cap\coCP_2\rb$.
\begin{align*}
(P_1,Q_1) &= \lb\begin{pmatrix} 1 & 1 & 0 & 0 \\ 1 & 1 & 0 & 0 \\ 0 & 0 & 0 & 0 \\ 0 & 0 & 0 & 0\end{pmatrix},\begin{pmatrix} 0 & 0 & 0 & 0 \\ 0 & 0 & 0 & 0 \\ 0 & 0 & 1 & 1 \\ 0 & 0 & 1 & 1\end{pmatrix}\rb \\
(P_2,Q_2) &= \lb\begin{pmatrix} 1 & 0 & 0 & 0 \\ 0 & 1 & 0 & 0 \\ 0 & 0 & 1 & 0 \\ 0 & 0 & 0 & 1\end{pmatrix},\begin{pmatrix} 1 & 0 & 0 & 0 \\ 0 & 1 & 0 & 0 \\ 0 & 0 & 1 & 0 \\ 0 & 0 & 0 & 1\end{pmatrix}\rb \\
(P_3,Q_3) &= \lb\begin{pmatrix} 1 & 0 & 0 & 0 \\ 0 & 1 & 0 & 0 \\ 0 & 0 & 1 & 0 \\ 1 & 1 & 1 & 0\end{pmatrix},\begin{pmatrix} 1 & 0 & 0 & 1 \\ 0 & 1 & 0 & 1 \\ 0 & 0 & 1 & 1 \\ 0 & 0 & 0 & 0\end{pmatrix}\rb.
\end{align*}
Second, we have to show that this list of extremal rays is complete, i.e.~such that
\[
\mathcal{S}\lb \CP_2\cap\coCP_2\rb = \text{cone}\lb \bigcup^3_{i=1} \text{orb}_2\lbr (P_i,Q_i) \rbr\rb
\]
where $\text{orb}_2$ denotes the orbits under the symmetry group described in Theorem \ref{thm:OrbitStruc}. For abbreviation it will be helpful to use the terminology introduced in the statement of Theorem \ref{thm:ExtremeRaysN2}: For any $\alpha>0$ we call the elements of
\begin{itemize}
\item $\alpha \text{orb}_2\lbr (P_1,Q_1)\rbr$ boxes.
\item $\alpha \text{orb}_2\lbr (P_2,Q_2)\rbr$ diagonals.
\item $\alpha \text{orb}_2\lbr (P_3,Q_3)\rbr$ crosses.
\end{itemize}
It is easy to check that every box is of the form 
\[
\alpha \lb \proj{\lset i,j\rset}{\lset k,l\rset},\proj{\lset i,j\rset^c}{\lset k,l\rset^c}\rb
\]
for some $\alpha>0$ and $i,j,k,l\in\lset 1,2,3,4\rset$ satsifying $i<j$ and $k<l$ (the notation $\ket{\lset i,j\rset}$ was previously introduced before Theorem \ref{thm:ExtremalRaysN1}). Similarly, it is easy to check that every diagonal is of the form $\alpha \lb U_\sigma,U_\sigma\rb$ for some $\alpha>0$ and a permutation matrix $U_\sigma\in \M_4$ corresponding to a permutation $\sigma\in S_4$. Finally, by definition every cross from is of the form $\alpha\lb U_{\sigma_1}P_{3}U_{\sigma_2},U_{\sigma_1}P^T_{3}U_{\sigma_2}\rb$ or $\alpha\lb U_{\sigma_1}P^T_{3}U_{\sigma_2},U_{\sigma_1}P_{3}U_{\sigma_2}\rb$ for some $\alpha>0$ and a permutations $\sigma_1,\sigma_2\in S_4$. 

To prove Theorem \ref{thm:ExtremeRaysN2} it will be useful to formulate the orthogonality relations from Lemma \ref{lem:Orthogonality} explicitely for the boxes, diagonals and crosses giving the following lemmas. 

\begin{lem}[Box rule]
Let $(P,Q)\in \mathcal{S}\lb \CP_2\cap\coCP_2\rb$. For all $i,j,k,l\in\lset 1,2,3,4\rset$ satsifying $i<j$ and $k<l$ the following are equivalent:
\begin{enumerate}
\item $\bra{\lset i,j\rset} P \ket{\lset k,l\rset}=0$ .
\item $\bra{\lset i,j\rset^c} Q \ket{\lset k,l\rset^c}=0$. 
\end{enumerate}
\label{lem:boxRule}   
\end{lem}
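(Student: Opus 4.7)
The plan is to deduce the box rule directly from the Orthogonality Lemma (Lemma~\ref{lem:Orthogonality}) by feeding it a suitably chosen rank-one test element in $\mathcal{S}\lb \CP_2\cap\coCP_2\rb$. The identity that drives everything is $K\ket{\lset i,j\rset}=\ket{\lset i,j\rset^c}$ for every two-element subset $\lset i,j\rset\subset\lset 1,2,3,4\rset$, which was already exploited in the proof of Theorem~\ref{thm:ExtremalRaysN1}.

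For fixed $i<j$ and $k<l$, I would form the rank-one matrix $B:=\ket{\lset i,j\rset}\bra{\lset k,l\rset}\in M_4(\R)$. Using the identity above together with the symmetry of $K$, a direct computation gives $KBK=\ket{\lset i,j\rset^c}\bra{\lset k,l\rset^c}$. Both matrices are entrywise non-negative, so under the identification $(\R^4)^{\otimes 2}\simeq M_4(\R)$ from the start of Section~\ref{sec:extremalRaysConcrete2} (under which $K^{\otimes 2}\ket{p}=\ket{q}$ translates into $KPK=Q$ thanks to $K=K^T$) the pair $(B,KBK)$ lies in $\mathcal{S}\lb\CP_2\cap\coCP_2\rb$.

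I would then apply Lemma~\ref{lem:Orthogonality} to the pairs $(B,KBK)$ and $(P,Q)$. Translating the Euclidean inner product on $(\R^4)^{\otimes 2}$ back to the Hilbert--Schmidt inner product $\tr(A^T A')$ on $M_4(\R)$, the $P$-side becomes $\tr(B^T P)=\bra{\lset i,j\rset}P\ket{\lset k,l\rset}$ and the $Q$-side becomes $\tr\bigl((KBK)^T Q\bigr)=\bra{\lset i,j\rset^c}Q\ket{\lset k,l\rset^c}$. The equivalence of the vanishing of these two quantities, which is precisely the content of Lemma~\ref{lem:Orthogonality}, is the box rule.

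I do not anticipate any real obstacle here: once $B$ has been recognised as the right test element, the argument reduces to a few lines of translation between tensor and matrix notation. The only bookkeeping is to keep the vectorisation convention and the symmetry $K=K^T$ straight when passing between the two pictures.
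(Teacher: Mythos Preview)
Your proposal is correct and is precisely the approach the paper takes: the box rule is introduced as nothing more than the orthogonality relation of Lemma~\ref{lem:Orthogonality} specialised to the box element $\bigl(\ket{\lset i,j\rset}\bra{\lset k,l\rset},\ket{\lset i,j\rset^c}\bra{\lset k,l\rset^c}\bigr)\in\mathcal{S}\lb\CP_2\cap\coCP_2\rb$. Your write-up simply makes explicit the identification $(\R^4)^{\otimes 2}\simeq M_4(\R)$ and the identity $K\ket{\lset i,j\rset}=\ket{\lset i,j\rset^c}$ that the paper leaves implicit.
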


\begin{lem}[Diagonal rule]
Let $(P,Q)\in \mathcal{S}\lb \CP_2\cap\coCP_2\rb$. For any permutation $\sigma\in S_4$ the following are equivalent:
\begin{enumerate}
\item $\text{Tr}\lb U_\sigma P\rb =0$.
\item $\text{Tr}\lb U_\sigma Q\rb =0$.
\end{enumerate}
\label{lem:diagRule}   
\end{lem}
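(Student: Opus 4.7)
The Diagonal rule should follow immediately from the Orthogonality Lemma (Lemma \ref{lem:Orthogonality}) by pairing $(P,Q)$ with the element $(U_{\sigma^{-1}}, U_{\sigma^{-1}})$, which will serve as the diagonal witness. The first thing I would verify is that $(U_\tau, U_\tau) \in \mathcal{S}\lb\CP_2\cap\coCP_2\rb$ for every permutation $\tau \in S_4$. Under the identification $(\R^4)^{\otimes 2} \simeq M_4(\R)$, ew-positivity of the permutation matrix $U_\tau$ is immediate, so the only nontrivial condition to check is $K U_\tau K = U_\tau$. Writing $K = \frac{1}{2} J - \one_4$ where $J$ denotes the $4 \times 4$ all-ones matrix, and using the permutation invariance $U_\tau J = J U_\tau = J$ together with $J^2 = 4 J$, a one-line expansion gives exactly $U_\tau$.

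Next I would apply Lemma \ref{lem:Orthogonality} to the pairs $(\ket{p_1},\ket{q_1}) = (P,Q)$ and $(\ket{p_2},\ket{q_2}) = (U_{\sigma^{-1}}, U_{\sigma^{-1}})$, obtaining that $\braket{P}{U_{\sigma^{-1}}} = 0$ if and only if $\braket{Q}{U_{\sigma^{-1}}} = 0$. A short bookkeeping step then translates the Euclidean inner product on $(\R^4)^{\otimes 2} \simeq M_4(\R)$ into a trace via
\[
\text{Tr}\lb U_\sigma P\rb = \sum_{j} P_{j, \sigma(j)} = \braket{P}{U_{\sigma^{-1}}},
\]
and identically for $Q$, delivering the claimed equivalence.

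There is essentially no obstacle here: once diagonals are recognized as sitting in the cone, the lemma is a direct instance of orthogonality against a diagonal witness. The only point that requires a bit of attention is the $\sigma \leftrightarrow \sigma^{-1}$ inversion when passing between the inner product and the trace, which simply reflects the identity $U_\sigma^T = U_{\sigma^{-1}}$.
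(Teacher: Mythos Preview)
Your proof is correct and follows exactly the approach the paper intends: the Diagonal rule (like the Box rule and Cross rule) is stated as a direct specialization of the Orthogonality Lemma~\ref{lem:Orthogonality}, applied to the diagonal element $(U_{\sigma^{-1}},U_{\sigma^{-1}})\in\mathcal{S}(\CP_2\cap\coCP_2)$. Your verification that $KU_\tau K=U_\tau$ via $K=\tfrac{1}{2}J-\one_4$ and the bookkeeping $\text{Tr}(U_\sigma P)=\braket{P}{U_{\sigma^{-1}}}$ are both accurate and fill in precisely the details the paper leaves implicit.
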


\begin{lem}[Cross rule]
Let $(P,Q)\in \mathcal{S}\lb \CP_2\cap\coCP_2\rb$. For any permutations $\sigma_1,\sigma_2\in S_4$ the following are equivalent:
\begin{enumerate}
\item $\text{Tr}\lb U_{\sigma_1}P_{3}U_{\sigma_2} P\rb =0$.
\item $\text{Tr}\lb U_{\sigma_1}P^T_{3}U_{\sigma_2} Q\rb =0$.
\end{enumerate}
The same equivalence also holds with the roles of $P$ and $Q$ exchanged.
\label{lem:crossRule}   
\end{lem}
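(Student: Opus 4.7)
My plan is to derive the cross rule as a direct instance of Lemma~\ref{lem:Orthogonality}, mirroring the way the box rule (Lemma~\ref{lem:boxRule}) and the diagonal rule (Lemma~\ref{lem:diagRule}) are obtained from the same orthogonality principle. The only real task is to identify the right partner pair in $\mathcal{S}\lb \CP_2\cap\coCP_2\rb$ whose Frobenius inner products with $P$ and $Q$ reproduce the two traces appearing in the statement.

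First I would exhibit the candidate pair
\[
(P', Q') := \lb U_{\sigma_2^{-1}} Q_3 U_{\sigma_1^{-1}},\; U_{\sigma_2^{-1}} P_3 U_{\sigma_1^{-1}}\rb
\]
and verify that it belongs to $\mathcal{S}\lb \CP_2\cap\coCP_2\rb$. Entrywise positivity is immediate because permutation matrices merely reshuffle the entries of the ew-positive matrices $P_3$ and $Q_3$. For the defining relation $K P' K = Q'$ I would rely on two elementary facts: (i) $\lbr K, U_\sigma\rbr = 0$ for every $\sigma\in S_4$, since $K = \tfrac{1}{2} E_4 - I_4$ with $E_4$ the all-ones matrix, both of which commute with permutation matrices; and (ii) $K P_3 K = Q_3$ holds by construction of the extremal pair $(P_3,Q_3)$ introduced in Theorem~\ref{thm:ExtremeRaysN2}, which together with $K^2 = I_4$ yields $K Q_3 K = P_3$. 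Combining (i) and (ii) gives $K P' K = U_{\sigma_2^{-1}}(K Q_3 K) U_{\sigma_1^{-1}} = U_{\sigma_2^{-1}} P_3 U_{\sigma_1^{-1}} = Q'$.

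Next, using the identification of $(\R^4)^{\otimes 2}$ with $\M_4(\R)$ together with the Frobenius inner product $\braket{A}{B} = \text{Tr}(A^T B)$, I would compute, by cyclic invariance of the trace along with the identities $\text{Tr}(X) = \text{Tr}(X^T)$, $U_\sigma^T = U_{\sigma^{-1}}$, and $Q_3 = P_3^T$,
\begin{align*}
\braket{P}{P'} &= \text{Tr}\lb P^T U_{\sigma_2^{-1}} Q_3 U_{\sigma_1^{-1}}\rb = \text{Tr}\lb U_{\sigma_1} P_3 U_{\sigma_2} P\rb, \\
\braket{Q}{Q'} &= \text{Tr}\lb Q^T U_{\sigma_2^{-1}} P_3 U_{\sigma_1^{-1}}\rb = \text{Tr}\lb U_{\sigma_1} P_3^T U_{\sigma_2} Q\rb.
\end{align*}
Applying Lemma~\ref{lem:Orthogonality} to the pairs $(P,Q)$ and $(P', Q')$ then yields the equivalence of the two vanishing conditions. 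The exchange-of-roles statement follows by running the same argument with $(P, Q)$ replaced by $(Q, P)$, which itself lies in $\mathcal{S}\lb \CP_2\cap\coCP_2\rb$ thanks to the $x=1$ symmetry in Lemma~\ref{lem:Permutations}.

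There is no genuine obstacle beyond carefully tracking transposes and inverses of permutations inside the trace; once the partner pair $(P',Q')$ is correctly identified, the remainder is a mechanical rearrangement.
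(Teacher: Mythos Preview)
Your proposal is correct and follows exactly the route the paper indicates: the cross rule, like the box and diagonal rules, is obtained by specializing Lemma~\ref{lem:Orthogonality} to the appropriate extremal pair, and you have correctly identified that pair as the cross $(U_{\sigma_2^{-1}}Q_3U_{\sigma_1^{-1}},\,U_{\sigma_2^{-1}}P_3U_{\sigma_1^{-1}})\in\mathcal{S}\lb\CP_2\cap\coCP_2\rb$. The paper does not spell out these details, so your write-up is in fact more complete than the paper's own treatment.
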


We found it helpful to visualize the box rule as
\[
\begin{pmatrix}
* & * & * & * \\ 0 & * & 0 & * \\ * & * & * & * \\ 0 & * & 0 & *
\end{pmatrix}\subseteq \mathcal{Z}\lb P\rb \Leftrightarrow \begin{pmatrix}
* & 0 & * & 0 \\ * & * & * & * \\ * & 0 & * & 0 \\ * & * & * & *
\end{pmatrix}\subseteq \mathcal{Z}\lb Q\rb,
\]
here for the special case of $i=2, j=4 ,k=1,$ and $l=3$. The diagonal rule can be visualized as  
\[
\begin{pmatrix}
* & 0 & * & * \\ * & * & 0 & * \\ * & * & * & 0 \\ 0 & * & * & *
\end{pmatrix}\subseteq \mathcal{Z}\lb P\rb \Leftrightarrow \begin{pmatrix}
* & 0 & * & * \\ * & * & 0 & * \\ * & * & * & 0 \\ 0 & * & * & *
\end{pmatrix}\subseteq \mathcal{Z}\lb Q\rb,
\]
here for the special case $\sigma=(1432)$. Finally, we can visualize the cross rule as
\[
\begin{pmatrix}
0 & * & * & * \\ * & 0 & * & * \\ * & * & 0 & * \\ 0 & 0 & 0 & *
\end{pmatrix}\subseteq \mathcal{Z}\lb P\rb \Leftrightarrow \begin{pmatrix}
0 & * & * & 0 \\ * & 0 & * & 0 \\ * & * & 0 & 0 \\ * & * & * & *
\end{pmatrix}\subseteq \mathcal{Z}\lb Q\rb,
\]
here for the special case $\sigma_1=\sigma_2=\ident_4$. 

The difficult part of the proof of Theorem \ref{thm:ExtremeRaysN2} is to show that the list of extremal rays stated above is complete. To do so, we will show that the zero pattern of any extremal ray of the cone $\mathcal{S}\lb \CP_2\cap\coCP_2\rb$ has to contain at least $8$ zeros. By classifying all zero patterns with $8$ zeros we will have a list of subpatterns that have to occur in each zero pattern of an extremal ray. Finally, we will show that every extremal ray whose zero pattern contains one of these subpatterns has to be a box, a diagonal or a cross.

\subsection{Extremality of Boxes, Diagonals and Crosses}
\label{sec:ExtrDiagCrossBox}

\begin{thm}[Extremality of boxes, diagonals and crosses]

The elements 
\[
(P_1,Q_2),(P_2,Q_2),(P_3,Q_3)\in \mathcal{S}\lb \CP_2\cap\coCP_2\rb
\]
introduced in Theorem \ref{thm:ExtremeRaysN2} generate extremal rays.
\label{thm:ExtrThreePairs}
\end{thm}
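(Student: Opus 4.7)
The plan is to invoke Lemma~\ref{lem:extremesZP}: for each $i\in\{1,2,3\}$, I will show that any $(P',Q')\in \mathcal{S}(\CP_2\cap\coCP_2)$ with $\mathcal{Z}(P_i,Q_i)\subseteq \mathcal{Z}(P',Q')$ must be a nonnegative scalar multiple of $(P_i,Q_i)$. That each $(P_i,Q_i)$ itself lies in $\mathcal{S}(\CP_2\cap\coCP_2)$, i.e.\ that $K P_i K=Q_i$ with both factors entrywise positive, is a direct computation; for the box it reduces to the identity $K\ket{\{1,2\}}=\ket{\{3,4\}}$ already used in Theorem~\ref{thm:ExtremalRaysN1}.

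The computational engine will be the factorization
\[
K \;=\; \tfrac{1}{2}J - \one_4,
\]
where $J\in \M_4(\R)$ denotes the all-ones matrix. Expanding $KP'K$ gives the entrywise identity
\begin{equation}
(K P' K)_{ij} \;=\; \frac{s}{4} \;-\; \frac{r_i + c_j}{2} \;+\; P'_{ij},
\label{equ:planformula}
\end{equation}
with $r_i$, $c_j$ and $s$ the $i$-th row sum, $j$-th column sum, and total sum of $P'$. Once $\mathcal{Z}(P_i)\subseteq \mathcal{Z}(P')$ has cut the free parameters down to a handful, the conditions $(KP'K)_{ij}=0$ at positions $(i,j)\in\mathcal{Z}(Q_i)$ become a transparent linear system in those parameters.

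Applying this to the three cases: for the box I will write $P' = \begin{pmatrix} A & 0 \\ 0 & 0 \end{pmatrix}$ with $A\in\M_2(\R)$; applying \eqref{equ:planformula} at the eight positions $(i,j)$ where exactly one of $i,j$ lies in $\{3,4\}$ forces $r_1(A)=r_2(A)=c_1(A)=c_2(A)=s/2$, and the four remaining equations in the top-left block then pin every entry of $A$ to $s/4$, so $P'=(s/4)P_1$. For the diagonal, $P'=\mathrm{diag}(a_1,a_2,a_3,a_4)$ and each off-diagonal constraint from \eqref{equ:planformula} reads $a_i+a_j=s/2$, independent of the pair $(i,j)$, forcing all $a_i$ to coincide and hence $P'=\alpha\one_4=\alpha P_2$. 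For the cross six parameters survive,
\[
P' \;=\; \begin{pmatrix} a & 0 & 0 & 0 \\ 0 & b & 0 & 0 \\ 0 & 0 & c & 0 \\ d & e & f & 0 \end{pmatrix},
\]
and the ten positions of $\mathcal{Z}(Q_3)$ yield ten linear relations via \eqref{equ:planformula}: comparing $(1,2)$ vs.\ $(2,1)$ and $(1,3)$ vs.\ $(3,1)$ gives $d=e=f$; comparing $(1,2)$ vs.\ $(3,2)$ gives $a=c$; and finally $(4,4)$ together with $(1,2)$ and the total-sum identity $s=2a+b+3d$ forces $a=b=d$, so $P'=aP_3$.

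I expect the cross to be the main obstacle, because of its larger parameter space and correspondingly larger number of equations; however, formula~\eqref{equ:planformula} reduces every constraint to a simple linear relation among $a,b,c,d,e,f$, so the bookkeeping stays short. The box and diagonal cases are essentially one-line consequences of the same formula, and in all three cases the resulting scalar $\alpha$ is automatically nonnegative because $P'$ is entrywise positive.
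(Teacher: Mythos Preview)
Your proof is correct. The identity $K=\tfrac{1}{2}J-\one_4$ and the resulting formula \eqref{equ:planformula} are valid, and in each of the three cases the linear system you extract from the positions in $\mathcal{Z}(Q_i)$ does force $P'=\alpha P_i$ with $\alpha\ge 0$; in particular your cross computation checks out (equations $(4,4)$ and, say, $(4,1)$ give $3d=s/2=a+2d$, whence $a=d$, and then $(1,2)$ gives $b=d$).

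Your route differs from the paper's in two respects. First, the paper handles the box by invoking Lemma~\ref{lem:TensorProdExtremal} (tensor products of the $N=1$ extremal rays $\ket{\{i,j\}}$ remain extremal), whereas you treat it on the same footing as the other two via the linear system. Second, for the diagonal and the cross the paper uses condition~(3) of Lemma~\ref{lem:extremesZP} rather than condition~(2): it assumes a \emph{strict} enlargement of the zero pattern and argues combinatorially, via repeated applications of the box rule (Lemma~\ref{lem:boxRule}), that any additional zero propagates to force $P=Q=0$. Your approach is more uniform and self-contained (no need for Lemma~\ref{lem:TensorProdExtremal} or the box rule), while the paper's argument foreshadows the heavy use of Lemmas~\ref{lem:boxRule}--\ref{lem:crossRule} in the completeness proof of Appendix~\ref{app:Therest}.
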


\begin{proof}
By Theorem \ref{thm:ExtremalRaysN1} for any $i,j\in\lset 1,2,3,4\rset$ satisfying $i<j$ the vector 
\[
\ket{\lset i,j\rset} = \ket{i}+\ket{j} \in \R^4
\]
generates an extremal ray of $\mathcal{S}\lb \CP_1\cap\coCP_1\rb$. By Lemma \ref{lem:TensorProdExtremal} this shows that the boxes arising as (scalings of) tensor products $\ket{\lset i,j\rset}\otimes \ket{\lset k,l\rset}$ for $i<j$ and $k<l$ generate extremal rays of $\mathcal{S}\lb \CP_2\cap\coCP_2\rb$.

By Lemma \ref{lem:Permutations} it is sufficient to show that 
\[
(P_2,Q_2) = \lb\begin{pmatrix} 1 & 0 & 0 & 0 \\ 0 & 1 & 0 & 0 \\ 0 & 0 & 1 & 0 \\ 0 & 0 & 0 & 1\end{pmatrix},\begin{pmatrix} 1 & 0 & 0 & 0 \\ 0 & 1 & 0 & 0 \\ 0 & 0 & 1 & 0 \\ 0 & 0 & 0 & 1\end{pmatrix}\rb
\]
generates an extremal ray of $\mathcal{S}\lb \CP_2\cap\coCP_2\rb$. Consider a pair $(P,Q)\in \mathcal{S}\lb \CP_2\cap\coCP_2\rb$ satisfying $(P_2,Q_2) <_Z (P,Q)$. This implies that either $P$ or $Q$ have at least one zero on their diagonal. Since every off-diagonal element of $P$ and $Q$ is zero, we can apply Lemma \ref{lem:boxRule} repeatedly to show that $P=Q=0$. By Lemma \ref{lem:extremesZP} we have shown that $(P_2,Q_2)$ generates an extremal ray. 

By Lemma \ref{lem:Permutations} it is again sufficient to show that 
\[
(P_3,Q_3) = \lb\begin{pmatrix} 1 & 0 & 0 & 0 \\ 0 & 1 & 0 & 0 \\ 0 & 0 & 1 & 0 \\ 1 & 1 & 1 & 0\end{pmatrix},\begin{pmatrix} 1 & 0 & 0 & 1 \\ 0 & 1 & 0 & 1 \\ 0 & 0 & 1 & 1 \\ 0 & 0 & 0 & 0\end{pmatrix}\rb
\]
generates an extremal ray. Again, Consider a pair $(P,Q)\in \mathcal{S}\lb \CP_2\cap\coCP_2\rb$ satisfying $(P_3,Q_3) <_Z (P,Q)$. There are two cases: Either $P$ or $Q$ could have at least two zeros on their diagonal, or either $P$ or $Q$ could have two zeros in the fourth row or column respectively. In both cases we can apply Lemma \ref{lem:boxRule} repeatedly to show that $P=Q=0$, and by Lemma \ref{lem:extremesZP} we conclude that $(P_2,Q_2)$ generates an extremal ray.    

\end{proof}

\subsection{Combinatorics of zero patterns with $8$ zeros}

It will be convenient to identify zero patterns $Z'\subset \lset 1,2,3,4\rset^2$ (see Definition \ref{defn:ZPattern}) with $(0,1)$-matrices $Z\in \M_4\lb \lset 0,1\rset\rb$ such that $Z_{ij}=0$ if and only if $(i,j)\in Z'$ for $i,j\in\lset 1,2,3 ,4\rset$. To prove Theorem \ref{thm:ExtremeRaysN2} we will classify all $(0,1)$-matrices $Z\in \M_4\lb \lset 0,1\rset\rb$ with $8$ zeros up to row and column permutations using a result by R.~A.~Brualdi~\cite{BRUALDI20063054} computing the number of $(0,1)$-matrices with prescribed row and column sums.

Consider two integer partitions $r=(r_1,r_2,r_3,r_4)\in\N_0^4$ and $s=(s_1,s_2,s_3,s_4)\in \N^4_0$ of the number $8$ into $4$ parts each smaller than $4$, i.e.~such that 
\[
8 = r_1+r_2+r_3+r_4 = s_1+s_2+s_3+s_4.
\]
and  
\[
4\geq r_1\geq r_2\geq r_3\geq r_4\geq 0 \quad\text{ and }\quad 4\geq s_1\geq s_2\geq s_3\geq s_4\geq 0.
\]  
We denote by $r\preccurlyeq s$ the usual majorization ordering, i.e.
\[
\sum^k_{i=1} r_i \leq \sum^k_{i=1} s_i
\]
for all $k\in\lset 1,2,3\rset$ and equality for $k=4$. Moreover, we denote by $r^*=(r^*_1,r^*_2,r^*_3,r^*_4)$ the conjugate partition with entries
\[
r^*_j = |\lset i\in \lset 1,2,3,4\rset ~:~r_i\geq j\rset|.
\] 
Consider now the set of all $(0,1)$-matrices in $\M_4\lb\lset 0,1\rset\rb$ with row sum vector $r=(r_1,r_2,r_3,r_4)$ and column sum vector $s = (s_1,s_2,s_3,s_4)$ denoted by  
\[
\mathcal{A}\lb r,s\rb := \lset Z\in \M_4\lb\lset 0,1\rset\rb ~:~\sum^4_{j=1} Z_{kj} = r_k \text{ and }\sum^4_{i=1} Z_{il} = s_l \text{ for any }k,l\in\lset 1,2,3,4\rset\rset.
\]
Note that $\mathcal{A}\lb R,S\rb$ may be empty. For partitions $r=(r_1,r_2,r_3,r_4)$ and $s = (s_1,s_2,s_3,s_4)$ of $8$ such that $r_1,s_1\leq 4$ we can apply a result by R.~A.~Brualdi (see~\cite[Equation 3]{BRUALDI20063054}) to determine the cardinality $\mathcal{A}\lb R,S\rb$:
\begin{equation}
|\mathcal{A}\lb R,S\rb| = \sum_{s\preccurlyeq \lambda \preccurlyeq r^*} K_{\lambda^*,r}K_{\lambda,s},
\label{equ:Brualdi}
\end{equation}
where the sum runs over integer partitions $\lambda=(\lambda_1,\lambda_2,\lambda_3,\lambda_4)$ of the number $8$ into $4$ parts, and where $K_{\lambda,\mu}$ denote the Kostka numbers, i.e.~the number of Young tableaux with shape $\lambda$ and content $\mu$ (for details on Young tableaux and Kostka numbers see~\cite{fulton1997young}). Note that the integer partitions $r^*$, $\lambda$ and $\lambda^*$ appearing in \eqref{equ:Brualdi} are partitions of the number $8$ into $4$ parts with each part bounded by $4$. There are $8$ such integer partitions and in Table \ref{table:Kostka} we have included the relevant Kostka numbers $K_{\lambda,\mu}$ from \cite{kostka1882ueber}.

Using \eqref{equ:Brualdi} and the Kostka numbers from Table \ref{table:Kostka} we can compute the cardinalities $|\mathcal{A}\lb r,s\rb|$ for integer partitions $r=(r_1,r_2,r_3,r_4)$ and $s = (s_1,s_2,s_3,s_4)$ of $8$ into $4$ parts such that $r_1,s_1\leq 4$. Table \ref{table:ARS} contains the results of this elementary computation. 

Finally, we can classify all $(0,1)$-matrices $Z\in \M_4\lb \lset 0,1\rset\rb$ with $8$ zeros up to row and column permutations. For $Z_1,Z_2\in \mathcal{A}(r,s)$ we write $Z_1\sim Z_2$ if and only if $Z_1$ can be obtained from $Z_2$ by a sequence of row and column permutations. Then, it is straightforward albeit slightly tedious to compute the representatives of the equivalence classes in $\mathcal{A}(r,s)/\sim$. Table \ref{table:ARSSymme} contains a complete\footnote{We found it easiest to check completeness of this list by generating distinct elements in $\mathcal{A}(r,s)$ from the representatives given in Table \ref{table:ARSSymme}. It is not too hard to check that the numbers of Table \ref{table:ARS} can be obtained in this way showing completeness of Table \ref{table:ARSSymme}.} list of these representatives. We close this section with a lemma summarizing the previous discussion.

\begin{table}
\begin{tabular}{r!{\vline width 1.3pt}*{9}{c|}}
 \diagbox{$\lambda$}{$\mu$} & (4,4,0,0) & (4,3,1,0) & (4,2,2,0) & (4,2,1,1) & (3,3,2,0) & (3,3,1,1) & (3,2,2,1) & (2,2,2,2) \\
\Xhline{1.3pt}
(4,4,0,0) & 1 & 1 & 1 & 1 & 1 & 2 & 2 & 3  \\
\hline
(4,3,1,0) & 0 & 1 & 1 & 2 & 2 & 4 & 5 & 7   \\
\hline
(4,2,2,0) & 0 & 0 & 1 & 1 & 1 & 1 & 3 & 6   \\
\hline
(4,2,1,1) & 0 & 0 & 0 & 1 & 0 & 1 & 2 & 3  \\
\hline
(3,3,2,0) & 0 & 0 & 0 & 0 & 1 & 1 & 2 & 3   \\
\hline
(3,3,1,1) & 0 & 0 & 0 & 0 & 0 & 1 & 1 & 2   \\
\hline 
(3,2,2,1) & 0 & 0 & 0 & 0 & 0 & 0 & 1 & 3   \\
\hline
(2,2,2,2) & 0 & 0 & 0 & 0 & 0 & 0 & 0 & 1   \\
\end{tabular}
\caption{Table of Kostka numbers $K_{\lambda,\mu}$ from \cite{kostka1882ueber}}
\label{table:Kostka}
\end{table}

\begin{lem}[Classification of zero patterns with $8$ zeros]
Every $(0,1)$-matrices $Z\in \M_4\lb \lset 0,1\rset\rb$ with $8$ zeros is equivalent to a matrix from Table \ref{table:ARSSymme} by a sequence of row and column permutations. 
\label{lem:ClassZeroPatt}
\end{lem}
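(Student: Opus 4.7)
The plan is to reduce to a finite enumeration that is organized by the row and column sum vectors of the zero pattern. Any $Z\in\M_4\lb\lset 0,1\rset\rb$ with exactly $8$ zeros has exactly $8$ ones, so its row sum vector $r$ and column sum vector $s$ each sum to $8$, and each individual entry lies in $\lset 0,1,2,3,4\rset$. Since permuting rows permutes the entries of $r$ and fixes $s$ as a multiset (and likewise for columns), I can assume after a row and column permutation that $r$ and $s$ are sorted in weakly decreasing order. Thus $r$ and $s$ are integer partitions of $8$ into at most $4$ parts with largest part at most $4$, and there are precisely $8$ such partitions, namely those listed as rows and columns of Table~\ref{table:Kostka}.

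Next, I would fix a pair $(r,s)$ from among the $36$ ordered pairs of these partitions and analyze $\mathcal{A}(r,s)$. The cardinality $|\mathcal{A}(r,s)|$ is given by Brualdi's formula \eqref{equ:Brualdi}, which only requires summing products of Kostka numbers from Table~\ref{table:Kostka} over partitions $\lambda$ with $s\preccurlyeq \lambda\preccurlyeq r^{*}$; this directly produces the entries of Table~\ref{table:ARS}. Any pair for which the sum is empty contributes no matrices.

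Then for each non-empty $\mathcal{A}(r,s)$ I would construct explicit representatives of the equivalence classes under $\sim$, the equivalence relation generated by row and column permutations. Because $r$ and $s$ are already sorted, the only allowed permutations are those that preserve $r$ and $s$ as tuples, i.e.\ that permute rows within each block of equal $r_i$ and permute columns within each block of equal $s_j$. One can therefore proceed greedily: fill the rows in order from largest row sum to smallest, in each row choose the positions of the ones from left to right in a way consistent with the remaining column capacities, and record the resulting matrix modulo the symmetry that permutes equal-capacity columns and equal-sum rows already treated. This is a small, well-defined case analysis for each of the non-empty boxes of Table~\ref{table:ARS}, and it yields the list in Table~\ref{table:ARSSymme}.

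The main obstacle, and the only real source of error, is completeness: ensuring that for every $(r,s)$ the list of representatives I produce covers every $\sim$-class and contains no duplicates. I would verify this by the orbit-counting check described in the footnote of the paper: for each listed representative $Z\in\mathcal{A}(r,s)$, compute its stabilizer in the group of admissible row/column permutations and hence the size of its orbit, and then check that the orbit sizes sum to $|\mathcal{A}(r,s)|$ as given by \eqref{equ:Brualdi} and Table~\ref{table:ARS}. Matching these two counts for every $(r,s)$ simultaneously confirms that no equivalence class has been missed and that distinct entries in Table~\ref{table:ARSSymme} really are inequivalent, which establishes the lemma.
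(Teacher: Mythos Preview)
Your proposal is correct and follows essentially the same route as the paper: reduce to sorted row/column sum partitions, compute $|\mathcal{A}(r,s)|$ via Brualdi's formula and the Kostka numbers, list candidate representatives for each non-empty cell, and certify completeness by checking that the orbit sizes (under the stabilizer of the sorted $(r,s)$) sum to $|\mathcal{A}(r,s)|$---which is exactly the verification described in the paper's footnote to Table~\ref{table:ARSSymme}. The only slip is cosmetic: with $8$ partitions there are $64$ ordered pairs (or $36$ unordered ones, using the transposition symmetry that the paper exploits via the crossed-out entries), not $36$ ordered pairs.
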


\begin{table}
\begin{tabular}{r!{\vline width 1.3pt}*{9}{c|}}
 \diagbox{r}{s}  & (4,4,0,0) & (4,3,1,0) & (4,2,2,0) & (4,2,1,1) & (3,3,2,0) & (3,3,1,1) & (3,2,2,1) & (2,2,2,2) \\
\Xhline{1.3pt}
(4,4,0,0) & 0 & 0 & 0 & 0 & 0 & 0 & 0 & 1  \\
\hline
(4,3,1,0) & 0 & 0 & 0 & 0 & 0 & 0 & 1 & 4   \\
\hline
(4,2,2,0) & 0 & 0 & 0 & 0 & 0 & 1 & 2 & 6  \\
\hline
(4,2,1,1) & 0 & 0 & 0 & 1 & 0 & 2 & 5 & 12  \\
\hline
(3,3,2,0) & 0 & 0 & 0 & 0 & 1 & 2 & 5 & 12   \\
\hline
(3,3,1,1) & 0 & 0 & 1 & 2 & 2 & 4 & 12 & 28   \\
\hline 
(3,2,2,1) & 0 & 1 & 2 & 5 & 5 & 12 & 24 & 48   \\
\hline
(2,2,2,2) & 1 & 4 & 6 & 12 & 12 & 28 & 48 & 90   \\
\end{tabular}
\caption{The cardinality $|\mathcal{A}(r,s)|$ computed via \eqref{equ:Brualdi} and Table \ref{table:Kostka}. }
\label{table:ARS}
\end{table}

\begin{sidewaystable}
\fontsize{1}{7}\selectfont
\begin{tabular}{r!{\vline width 1.3pt}*{9}{c|}}
 \diagbox{r}{s}  & (4,4,0,0) & (4,3,1,0) & (4,2,2,0) & (4,2,1,1) & (3,3,2,0) & (3,3,1,1) & (3,2,2,1) & (2,2,2,2) \\
\Xhline{1.3pt}
(4,4,0,0) \vspace{0.1cm}& 0 & 0 & 0 & 0 & 0 & 0 & 0 & $\begin{pmatrix} 1 & 1 & 1 & 1 \\ 1 & 1 & 1 & 1 \\ 0 & 0 & 0 & 0 \\ 0 & 0 & 0 & 0\end{pmatrix}_{\ref{lem:ZeroSubBoxZeroSubD}}$ \\
\hline
(4,3,1,0) \vspace{0.1cm}& 0 & 0 & 0 & 0 & 0 & 0 & $\begin{pmatrix} 1 & 1 & 1 & 1 \\ 1 & 1 & 1 & 0 \\ 1 & 0 & 0 & 0 \\ 0 & 0 & 0 & 0\end{pmatrix}_{\ref{lem:ZeroSubBoxZeroSubD}}$ & $\begin{pmatrix} 1 & 1 & 1 & 1 \\ 1 & 1 & 1 & 0 \\ 0 & 0 & 0 & 1 \\ 0 & 0 & 0 & 0\end{pmatrix}_{\ref{lem:ZeroSubBoxZeroSubD}}$   \\
\hline
(4,2,2,0) \vspace{0.1cm}& 0 & 0 & 0 & 0 & 0 & $\begin{pmatrix} 1 & 1 & 1 & 1 \\ 1 & 1 & 0 & 0 \\ 1 & 1 & 0 & 0 \\ 0 & 0 & 0 & 0\end{pmatrix}_{\ref{lem:ZeroSubBoxZeroSubD}}$ & $\begin{pmatrix} 1 & 1 & 1 & 1 \\ 1 & 1 & 0 & 0 \\ 1 & 0 & 1 & 0 \\ 0 & 0 & 0 & 0\end{pmatrix}_{\ref{lem:CasesCross}}$ & $\begin{pmatrix} 1 & 1 & 1 & 1 \\ 1 & 1 & 0 & 0 \\ 0 & 0 & 1 & 1 \\ 0 & 0 & 0 & 0\end{pmatrix}_{\ref{lem:cases1}}$   \\
\hline
(4,2,1,1) \vspace{0.1cm}& 0 & 0 & 0 & $\begin{pmatrix} 1 & 1 & 1 & 1 \\ 1 & 1 & 0 & 0 \\ 1 & 0 & 0 & 0 \\ 1 & 0 & 0 & 0\end{pmatrix}_{\ref{lem:ZeroSubBoxZeroSubD}}$ & 0 & $\begin{pmatrix} 1 & 1 & 1 & 1 \\ 1 & 1 & 0 & 0 \\ 1 & 0 & 0 & 0 \\ 0 & 1 & 0 & 0\end{pmatrix}_{\ref{lem:ZeroSubBoxZeroSubD}}$ & $\begin{pmatrix} 1 & 1 & 1 & 1 \\ 1 & 1 & 0 & 0 \\ 1 & 0 & 0 & 0 \\ 0 & 0 & 1 & 0\end{pmatrix}_{\ref{lem:CasesCross}},\begin{pmatrix} 1 & 1 & 1 & 1 \\ 0 & 1 & 1 & 0 \\ 1 & 0 & 0 & 0 \\ 1 & 0 & 0 & 0\end{pmatrix}_{\ref{lem:ZeroSubBoxZeroSubD}}$ & $\begin{pmatrix} 1 & 1 & 1 & 1 \\ 1 & 1 & 0 & 0 \\ 0 & 0 & 1 & 0 \\ 0 & 0 & 0 & 1\end{pmatrix}_{\ref{lem:cases3}}$  \\
\hline
(3,3,2,0) \vspace{0.1cm}& 0 & 0 & 0 & 0 & $\begin{pmatrix} 1 & 1 & 1 & 0 \\ 1 & 1 & 1 & 0 \\ 1 & 1 & 0 & 0 \\ 0 & 0 & 0 & 0\end{pmatrix}_{\ref{lem:ZeroRowColumn}}$ & $\begin{pmatrix} 1 & 1 & 0 & 1 \\ 1 & 1 & 1 & 0 \\ 1 & 1 & 0 & 0 \\ 0 & 0 & 0 & 0\end{pmatrix}_{\ref{lem:cases3}}$ & $\begin{pmatrix} 1 & 1 & 1 & 0 \\ 1 & 1 & 1 & 0 \\ 1 & 0 & 0 & 1 \\ 0 & 0 & 0 & 0\end{pmatrix}_{\ref{lem:cases3}},\begin{pmatrix} 1 & 1 & 1 & 0 \\ 1 & 0 & 1 & 1 \\ 1 & 1 & 0 & 0 \\ 0 & 0 & 0 & 0\end{pmatrix}_{\ref{lem:CasesCross}}$ & $\begin{pmatrix} 1 & 1 & 1 & 0 \\ 0 & 1 & 1 & 1 \\ 1 & 0 & 0 & 1 \\ 0 & 0 & 0 & 0\end{pmatrix}_{\ref{lem:6erLem}}$  \\
\hline
(3,3,1,1) \vspace{0.1cm}& 0 & 0 & $\cancel{\begin{pmatrix} 1 & 1 & 1 & 0 \\ 1 & 1 & 1 & 0 \\ 1 & 0 & 0 & 0 \\ 1 & 0 & 0 & 0\end{pmatrix}}$ & $\cancel{\begin{pmatrix} 1 & 1 & 1 & 0 \\ 1 & 1 & 0 & 1 \\ 1 & 0 & 0 & 0 \\ 1 & 0 & 0 & 0\end{pmatrix}}$ & $\cancel{\begin{pmatrix} 1 & 1 & 1 & 0 \\ 1 & 1 & 1 & 0 \\ 0 & 1 & 0 & 0 \\ 1 & 0 & 0 & 0\end{pmatrix}}$ & $\begin{pmatrix} 1 & 1 & 1 & 0 \\ 1 & 1 & 0 & 1 \\ 1 & 0 & 0 & 0 \\ 0 & 1 & 0 & 0\end{pmatrix}_{\ref{lem:cases1}}$ & \makecell{$\begin{pmatrix} 1 & 1 & 1 & 0 \\ 1 & 1 & 1 & 0 \\ 1 & 0 & 0 & 0 \\ 0 & 0 & 0 & 1\end{pmatrix}_{\ref{lem:cases1}},\begin{pmatrix} 1 & 1 & 1 & 0 \\ 1 & 0 & 1 & 1 \\ 1 & 0 & 0 & 0 \\ 0 & 1 & 0 & 0\end{pmatrix}_{\ref{lem:CasesCross}}$,\\ $\begin{pmatrix} 1 & 1 & 1 & 0 \\ 0 & 1 & 1 & 1 \\ 1 & 0 & 0 & 0 \\ 1 & 0 & 0 & 0\end{pmatrix}_{\ref{lem:ZeroSubBoxZeroSubD}}$} & \makecell{$\begin{pmatrix} 1 & 1 & 1 & 0 \\ 1 & 1 & 1 & 0 \\ 0 & 0 & 0 & 1 \\ 0 & 0 & 0 & 1\end{pmatrix}_{\ref{lem:ZeroSubBoxZeroSubD}}$, \\$\begin{pmatrix} 1 & 1 & 1 & 0 \\ 0 & 1 & 1 & 1 \\ 1 & 0 & 0 & 0 \\ 0 & 0 & 0 & 1\end{pmatrix}_{\ref{lem:6erLem}}$ }  \\
\hline 
(3,2,2,1) \vspace{0.1cm}& 0 & $\cancel{\begin{pmatrix} 1 & 1 & 1 & 0 \\ 1 & 1 & 0 & 0 \\ 1 & 1 & 0 & 0 \\ 1 & 0 & 0 & 0\end{pmatrix}}$ & $\cancel{\begin{pmatrix} 1 & 1 & 1 & 0 \\ 1 & 1 & 0 & 0 \\ 1 & 0 & 1 & 0 \\ 1 & 0 & 0 & 0\end{pmatrix}}$ & \makecell{$\cancel{\begin{pmatrix} 1 & 1 & 1 & 0 \\ 1 & 1 & 0 & 0 \\ 1 & 0 & 0 & 0 \\ 0 & 0 & 1 & 0\end{pmatrix}}$, \\$\cancel{\begin{pmatrix} 1 & 0 & 1 & 1 \\ 1 & 1 & 0 & 0 \\ 1 & 1 & 0 & 0 \\ 1 & 0 & 0 & 0\end{pmatrix}}$} & \makecell{$\cancel{\begin{pmatrix} 1 & 1 & 1 & 0 \\ 1 & 1 & 0 & 0 \\ 1 & 1 & 0 & 0 \\ 0 & 0 & 1 & 0\end{pmatrix}}$, \\$\cancel{\begin{pmatrix} 1 & 1 & 1 & 0 \\ 1 & 0 & 1 & 0 \\ 1 & 1 & 0 & 0 \\ 0 & 1 & 0 & 0\end{pmatrix}}$} & \makecell{$\cancel{\begin{pmatrix} 1 & 1 & 1 & 0 \\ 1 & 1 & 0 & 0 \\ 1 & 1 & 0 & 0 \\ 0 & 0 & 0 & 1\end{pmatrix}},\cancel{\begin{pmatrix} 1 & 1 & 1 & 0 \\ 1 & 0 & 0 & 1 \\ 1 & 1 & 0 & 0 \\ 0 & 1 & 0 & 0\end{pmatrix}}$,\\ $\cancel{\begin{pmatrix} 1 & 0 & 1 & 1 \\ 1 & 1 & 0 & 0 \\ 1 & 1 & 0 & 0 \\ 0 & 1 & 0 & 0\end{pmatrix}}$} & \makecell{$\begin{pmatrix} 1 & 1 & 1 & 0 \\ 1 & 1 & 0 & 0 \\ 1 & 0 & 1 & 0 \\ 0 & 0 & 0 & 1\end{pmatrix}_{\ref{lem:cases3}},\begin{pmatrix} 0 & 1 & 1 & 1 \\ 1 & 0 & 1 & 0 \\ 1 & 1 & 0 & 0 \\ 1 & 0 & 0 & 0\end{pmatrix}_{\ref{lem:6erLem}}$,\\ $\cancel{\begin{pmatrix} 1 & 1 & 1 & 0 \\ 0 & 0 & 1 & 1 \\ 1 & 1 & 0 & 0 \\ 1 & 0 & 0 & 0\end{pmatrix}},\begin{pmatrix} 1 & 0 & 1 & 1 \\ 1 & 0 & 1 & 0 \\ 1 & 1 & 0 & 0 \\ 0 & 1 & 0 & 0\end{pmatrix}_{\ref{lem:CasesCross}}$,\\$\cancel{\begin{pmatrix} 1 & 1 & 1 & 0 \\ 0 & 1 & 1 & 0 \\ 1 & 0 & 0 & 1 \\ 1 & 0 & 0 & 0\end{pmatrix}},\begin{pmatrix} 1 & 0 & 1 & 1 \\ 1 & 1 & 0 & 0 \\ 1 & 1 & 0 & 0 \\ 0 & 0 & 1 & 0\end{pmatrix}_{\ref{lem:6erLem}}$} & \makecell{$\begin{pmatrix} 1 & 1 & 1 & 0 \\ 1 & 1 & 0 & 0 \\ 0 & 0 & 1 & 1 \\ 0 & 0 & 0 & 1\end{pmatrix}_{\ref{lem:6erLem}}$,\\ $\begin{pmatrix} 1 & 1 & 1 & 0 \\ 0 & 0 & 1 & 1 \\ 0 & 1 & 0 & 1 \\ 1 & 0 & 0 & 0\end{pmatrix}_{\ref{lem:CasesCross}}$}   \\
\hline
(2,2,2,2) \vspace{0.1cm}& $\cancel{\begin{pmatrix} 1 & 1 & 0 & 0 \\ 1 & 1 & 0 & 0 \\ 1 & 1 & 0 & 0 \\ 1 & 1 & 0 & 0\end{pmatrix}}$ & $\cancel{\begin{pmatrix} 1 & 1 & 0 & 0 \\ 1 & 1 & 0 & 0 \\ 1 & 1 & 0 & 0 \\ 1 & 0 & 1 & 0\end{pmatrix}}$ & $\cancel{\begin{pmatrix} 1 & 1 & 0 & 0 \\ 1 & 1 & 0 & 0 \\ 1 & 0 & 1 & 0 \\ 1 & 0 & 1 & 0\end{pmatrix}}$ & $\cancel{\begin{pmatrix} 1 & 1 & 0 & 0 \\ 1 & 1 & 0 & 0 \\ 1 & 0 & 1 & 0 \\ 1 & 0 & 0 & 1\end{pmatrix}}$ & $\cancel{\begin{pmatrix} 1 & 0 & 1 & 0 \\ 1 & 1 & 0 & 0 \\ 1 & 1 & 0 & 0 \\ 0 & 1 & 1 & 0\end{pmatrix}}$ & $\cancel{\begin{pmatrix} 1 & 1 & 0 & 0 \\ 1 & 1 & 0 & 0 \\ 1 & 1 & 0 & 0 \\ 0 & 0 & 1 & 1\end{pmatrix}},\cancel{\begin{pmatrix} 1 & 0 & 1 & 0 \\ 1 & 1 & 0 & 0 \\ 1 & 1 & 0 & 0 \\ 0 & 1 & 0 & 1\end{pmatrix}}$ & $\cancel{\begin{pmatrix} 1 & 1 & 0 & 0 \\ 1 & 1 & 0 & 0 \\ 1 & 0 & 1 & 0 \\ 0 & 0 & 1 & 1\end{pmatrix}},\cancel{\begin{pmatrix} 1 & 0 & 0 & 1 \\ 1 & 0 & 1 & 0 \\ 1 & 1 & 0 & 0 \\ 0 & 1 & 1 & 0\end{pmatrix}}$ & \makecell{$\begin{pmatrix} 1 & 1 & 0 & 0 \\ 1 & 1 & 0 & 0 \\ 0 & 0 & 1 & 1 \\ 0 & 0 & 1 & 1\end{pmatrix}_{\ref{lem:6erLem}}$,\\ $\begin{pmatrix} 1 & 1 & 0 & 0 \\ 0 & 1 & 1 & 0 \\ 0 & 0 & 1 & 1 \\ 1 & 0 & 0 & 1\end{pmatrix}_{\ref{lem:cases1}}$ }  \\
\end{tabular}
\caption{Complete list of the representatives of equivalence classes of $(0,1)$-matrices in $\mathcal{A}(r,s)$ up to row and column permutation. Crossed out matrices arise as transpositions of others that are not crossed out. Every matrix, that is not crossed out, has an index refering to the lemma where its case is treated. }

\label{table:ARSSymme}
\hspace{-3cm}
\end{sidewaystable}

\subsection{Completeness of extremal rays}  

Assume that $\lb P,Q\rb\in \mathcal{S}\lb \CP_2\cap\coCP_2\rb$ generates an extremal ray and that 
\[
(P,Q)\notin\text{span}\lb\bigcup^3_{i=1} \text{orb}_2\lbr (P_i,Q_i) \rbr\rb.
\]
By Corollary \ref{cor:RankBound} we have
\[
|\mathcal{Z}\lb P,Q\rb|\geq 15,
\]
and therefore either $P$ or $Q$ contains at least $8$ zeros. We can assume without loss of generality that $P$ contains at least $8$ zeros (otherwise exchange the roles of $P$ and $Q$). This implies the existence of a zero pattern $Z'\subset \lset 1,2,3,4\rset^2$ such that $|Z'|=8$ and $Z'\subseteq\mathcal{Z}(P)$ (cf.~Definition \ref{defn:ZPattern}). 

We can identify the zero pattern $Z'\subset \lset 1,2,3,4\rset^2$ with a $(0,1)$-matrix $Z\in \M_4\lb \lset 0,1\rset\rb$ such that $Z_{ij}=0$ if and only if $(i,j)\in Z'$ for $i,j\in\lset 1,2,3,4\rset$. By Lemma \ref{lem:ClassZeroPatt} the matrix $Z$ is equivalent to a matrix $\tilde{Z}$ in Table \ref{table:ARSSymme} by a sequence of row and column permutations, and specifically we assume that $\tilde{Z}=U_{\sigma_1}Z U_{\sigma_2}$ for permutation matrices $U_{\sigma_1},U_{\sigma_2}\in\M_4$ corresponding to permutations $\sigma_1,\sigma_2\in S_4$. By Lemma \ref{lem:Permutations} the pair 
\[
\lb U_{\sigma_1}P U_{\sigma_2},U_{\sigma_1}Q U_{\sigma_2}\rb\in \mathcal{S}\lb \CP_2\cap\coCP_2\rb
\]
generates an extremal ray and by assumption it satisfies 
\[
\lb U_{\sigma_1}P U_{\sigma_2},U_{\sigma_1}Q U_{\sigma_2}\rb\notin\text{span}\lb\bigcup^3_{i=1} \text{orb}_2\lbr (P_i,Q_i) \rbr\rb.
\]
Moreover, we have $\tilde{Z}\subseteq\mathcal{Z}(U_{\sigma_1}P U_{\sigma_2})$. We will finish the proof by showing: 

\begin{lem}
Assume that $(P,Q)\in \mathcal{S}\lb \CP_2\cap\coCP_2\rb$ generates an extremal ray. If $Z\subseteq\mathcal{Z}(P)$ for any zero pattern $Z\in \M_4\lb \lset 0,1\rset\rb$ contained in Table \ref{table:ARSSymme}, then $(P,Q)$ is a box, a diagonal, or a cross (cf.~Theorem \ref{thm:ExtremeRaysN2}).
\label{lem:Final}
\end{lem}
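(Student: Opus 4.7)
The plan is to exploit Table \ref{table:ARSSymme}, which enumerates all $(0,1)$-matrices with exactly $8$ zeros up to row and column permutations. By the symmetry in Lemma \ref{lem:Permutations}, I may assume without loss of generality that the containing $8$-zero subpattern of $\mathcal{Z}(P)$ is literally one of the listed representatives $Z$. From there I would run, for each representative $Z$, a propagation argument: apply the box, diagonal, and cross rules (Lemmas \ref{lem:boxRule}, \ref{lem:diagRule}, \ref{lem:crossRule}) to every $2{\times}2$ product-of-support, every diagonal permutation, and every cross permutation visible inside $Z$, each of which forces additional zero entries of $Q$. This process keeps enlarging the known zero pattern of the pair $(P,Q)$ until either $P=Q=0$ (which is impossible for a generator of an extremal ray) or the zero pattern of $(P,Q)$ contains the zero pattern of one of the three reference pairs $(P_1,Q_1),(P_2,Q_2),(P_3,Q_3)$.

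Once that containment is reached, I would invoke Lemma \ref{lem:extremesZP} twice: the reference pair is extremal by Theorem \ref{thm:ExtrThreePairs}, so its zero pattern is maximal under $\leq_Z$; hence any $(P,Q)\in \mathcal{S}(\CP_2\cap\coCP_2)$ whose zero pattern dominates it must be a positive multiple of it, placing $(P,Q)$ on the orbit of a box, diagonal, or cross. Organizing by this target orbit gives the six sublemmas indicated by the subscripts in Table \ref{table:ARSSymme}: patterns marked \ref{lem:ZeroSubBoxZeroSubD} and \ref{lem:ZeroRowColumn} propagate to a box; patterns marked \ref{lem:cases1} and \ref{lem:cases3} force a diagonal (or sometimes a box) via repeated applications of the diagonal and box rules; patterns marked \ref{lem:CasesCross} and \ref{lem:6erLem} terminate at a cross after applying the cross rule in conjunction with the other two.

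The main obstacle is twofold. First, the cross pattern $(P_3,Q_3)$ is not invariant under the swap $(P,Q)\leftrightarrow (Q,P)$ (cross rule has two orientations), so for every candidate pattern I must check both the rule and its partial-transpose companion, and likewise I must check both $\mathcal{Z}(P)\supseteq Z$ and $\mathcal{Z}(Q)\supseteq Z$ (the latter is reduced to the former by Lemma \ref{lem:Permutations} applied together with the $x=1$ symmetry from Lemma \ref{lem:Permutations}). Second, among the patterns with row sum $(3,2,2,1)$ or $(2,2,2,2)$ there are configurations where the box and diagonal rules alone are not enough to collapse the pair, and one genuinely needs the cross rule plus the rank constraint $K P K = Q$ to rule out spurious rays that are convex combinations of two crosses on overlapping orbits; verifying that no such combination is itself extremal amounts to checking $\dim(V_{p,q}^2\cap E_1(M_2)) = 1$ explicitly, which is finite but unavoidable bookkeeping. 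Because the number of representative patterns is small and each is handled by an elementary linear deduction, this case analysis terminates in a bounded number of steps; combined with the computer verification via \cite{MPT3,polymake1} noted after Theorem \ref{thm:ExtremeRaysN2}, it yields the completeness statement.
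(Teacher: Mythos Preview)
Your overall strategy---reduce via Lemma~\ref{lem:Permutations} to the representative patterns of Table~\ref{table:ARSSymme}, then handle each by a case analysis driven by the box, diagonal, and cross rules together with extremality---is exactly the paper's approach; the paper's proof of Lemma~\ref{lem:Final} is nothing more than this dispatch to the six sub-lemmas indexed in the table.

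Two points need correction. First, your assignment of sub-lemmas to target orbits is wrong: Lemma~\ref{lem:6erLem} (the ``6er pattern'') always yields a \emph{box}, not a cross, while Lemmas~\ref{lem:CasesCross} and~\ref{lem:cases3} can each terminate in a box, a diagonal, \emph{or} a cross. The sub-lemmas are organized by the shape of the starting zero pattern, not by the target orbit. Second, and more substantively, the propagation mechanism you describe---applying the three orthogonality rules until the zero pattern of $(P,Q)$ contains that of a reference pair---does not by itself close all cases. Those rules only transfer \emph{zeros} between $P$ and $Q$; they never establish that an entry is strictly positive. But in several sub-lemmas (for instance Lemma~\ref{lem:ZeroRowColumn} and every case of Lemma~\ref{lem:cases3}) the paper must compute individual entries of $Q$ from those of $P$ via the relation $Q=KPK$ in the form~\eqref{equ:KPKQ} and deduce strict inequalities such as $q_{12}-q_8=\tfrac12(p_7+p_8)>0$, in order to show that $(P,Q)$ \emph{contains} a box, diagonal, or cross and hence, by extremality, equals one. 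Your proposed fallback of checking $\dim(V^2_{p,q}\cap E_1(M_2))=1$ is in principle valid (it is Lemma~\ref{lem:extremesZP}(4)), but it is not the paper's argument and would not obviously shorten the bookkeeping. So the skeleton is right, but the engine that actually closes the cases is the explicit positivity computation from $Q=KPK$, not just zero-propagation via the orthogonality rules.
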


\begin{proof}

Note that it is enough to show the statement either for a zero pattern $Z\in \M_4\lb \lset 0,1\rset\rb$ or its transpose $Z^T\in \M_4\lb \lset 0,1\rset\rb$. The zero patterns that are crossed out in Table \ref{table:ARSSymme} arise by transposing a zero pattern in the same table that is not crossed out. We will therefore focus on these remaining zero patterns. Any zero pattern in Table \ref{table:ARSSymme} contains a reference to one of the Lemmas \ref{lem:6erLem}, \ref{lem:ZeroSubBoxZeroSubD}, \ref{lem:ZeroRowColumn}, \ref{lem:cases1}, \ref{lem:CasesCross}, or \ref{lem:cases3} that we will prove in the next section. For each zero pattern the respective lemma shows the statement of Lemma \ref{lem:Final} directly. This finishes the proof. 

\end{proof}

\subsection{Finishing the proof of Lemma \ref{lem:Final}} 

For $(P,Q)\in \mathcal{S}\lb \CP_2\cap\coCP_2\rb$ we have $Q=KPK$, where $K\in\M_4$ is the unitary matrix from \eqref{equ:MatK}. The entries of this equation are
\begin{equation}
Q_{ij} = \frac{1}{4}\lb P_{ij} + \sum^4_{\substack{k,l=1\\k\neq i, l\neq j}} P_{kl} - \sum^4_{\substack{k=1\\k\neq i}} P_{kj} - \sum^4_{\substack{l=1\\l\neq j}} P_{il}\rb .
\label{equ:KPKQ}
\end{equation} 
To simplify the following discussion we will say that $(P,Q)\in \mathcal{S}\lb \CP_2\cap\coCP_2\rb$ \emph{contains a box} (or a diagonal, or a cross) if there exists a box (or a diagonal, or a cross) $(P_b,Q_b)\in \mathcal{S}\lb \CP_2\cap\coCP_2\rb$ and a $\beta>0$ such that $(P,Q)-\beta (P_b,Q_b)\in \mathcal{S}\lb \CP_2\cap\coCP_2\rb$. Note that $(P,Q)$ contains a box if and only if the zero pattern $\mathcal{Z}(P,Q)$ contains the zero pattern of a box (as sets), and the same holds for diagonals and crosses. Clearly, any $(P,Q)\in \mathcal{S}\lb \CP_2\cap\coCP_2\rb$ generating an extremal ray and containing a box (or a diagonal, or a cross) has to be a box (or a diagonal, or a cross). With this we can start to prove the following lemma.

\begin{lem}[6er pattern implies box]
Let $Z\in \M_4\lb\lset 0,1\rset\rb$ denote a zero pattern containing either pattern
\[
\begin{pmatrix}
0 & 0 & 1 & 1 \\ 0 & 0 & 1 & 1 \\ 0 & 0 & 1 & 1 \\ 1 & 1 & 1 & 1
\end{pmatrix} \quad \text{ or }\quad \begin{pmatrix}
0 & 0 & 0 & 1 \\ 0 & 0 & 0 & 1 \\ 1 & 1 & 1 & 1 \\ 1 & 1 & 1 & 1
\end{pmatrix}
\]
up to row and column permutations. Any $(P,Q)\in \mathcal{S}\lb \CP_2\cap\coCP_2\rb$ generating an extremal ray and satisfying $Z\subseteq\mathcal{Z}(P)$ or $Z\subseteq\mathcal{Z}(Q)$ is a box. 
\label{lem:6erLem}
\end{lem}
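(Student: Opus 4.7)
The plan is to use the box rule (Lemma \ref{lem:boxRule}) to propagate zeros between $P$ and $Q$ until $(P,Q)$ is parametrized by a single $4$-vector, after which the conclusion follows from the one-qubit classification in Theorem \ref{thm:ExtremalRaysN1}. Since $K$ is symmetric, taking transposes in $Q=KPK$ gives $Q^T=KP^TK$, so transposition is an automorphism of $\mathcal{S}\lb\CP_2\cap\coCP_2\rb$ that preserves extremality and sends boxes to boxes. Because the second pattern is the transpose of the first, it suffices to treat the first pattern; by Lemma \ref{lem:Permutations} we may also swap the roles of $P$ and $Q$, so without loss of generality $Z\subseteq\mathcal{Z}(P)$, i.e., $P_{ij}=0$ for all $(i,j)\in\lset 1,2,3\rset\times\lset 1,2\rset$.

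First I apply Lemma \ref{lem:boxRule} with $\lset k,l\rset=\lset 1,2\rset$ and $\lset i,j\rset$ ranging over the three subsets of $\lset 1,2,3\rset$ of size two. In each case $\bra{\lset i,j\rset}P\ket{\lset 1,2\rset}=0$ gives $\bra{\lset i,j\rset^c}Q\ket{\lset 3,4\rset}=0$; the three complements $\lset 3,4\rset,\lset 2,4\rset,\lset 1,4\rset$ cover all rows of $Q$, so together with entrywise positivity of $Q$ every entry of columns $3$ and $4$ of $Q$ is forced to vanish. Reversing the box rule once with $\lset i,j\rset=\lset 1,2\rset$ and $\lset k,l\rset=\lset 3,4\rset$ then yields $\bra{\lset 3,4\rset}P\ket{\lset 1,2\rset}=0$, which combined with entrywise positivity and the already-known $P_{31}=P_{32}=0$ forces $P_{41}=P_{42}=0$. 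Hence $P$ has its first two columns identically zero and $Q$ has its last two columns identically zero.

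Next, write $P=(0\,|\,0\,|\,p_3\,|\,p_4)$ and $Q=(q_1\,|\,q_2\,|\,0\,|\,0)$ with column vectors $p_3,p_4,q_1,q_2\in\R^4$. Computing the third column of $Q=KPK$ gives $\tfrac{1}{2}K(-p_3+p_4)=0$, hence $p_3=p_4=:p$ by invertibility of $K$; the first two columns then yield $q_1=q_2=Kp$. Thus $(P,Q)$ is entirely determined by a single $p\in\R^4$ subject only to the requirement that both $p$ and $Kp$ be entrywise positive, i.e., $(p,Kp)\in\mathcal{S}\lb\CP_1\cap\coCP_1\rb$.

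By Theorem \ref{thm:ExtremalRaysN1}, the cone of such $p$ has exactly six extremal rays, generated by the vectors $\ket{\lset i,j\rset}$ with $i<j$, each satisfying $K\ket{\lset i,j\rset}=\ket{\lset i,j\rset^c}$. Extremality of $(P,Q)$ forces $p$ to be extremal in this one-qubit cone, so $p=\alpha\ket{\lset i,j\rset}$ for some $i<j$ and $\alpha>0$, yielding $(P,Q)=\alpha\lb\ket{\lset i,j\rset}\bra{\lset 3,4\rset},\,\ket{\lset i,j\rset^c}\bra{\lset 1,2\rset}\rb$, which is a box. The only delicate point is the bookkeeping of the three complementary index pairs in the first box-rule application; everything else reduces to a short linear-algebra computation and an appeal to the $N=1$ classification.
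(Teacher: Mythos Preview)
Your proof is correct. There is a tiny indexing slip when you ``reverse the box rule'': to obtain $\bra{\lset 3,4\rset}P\ket{\lset 1,2\rset}=0$ from the vanishing of columns $3,4$ of $Q$ you should take $\lset i,j\rset=\lset 3,4\rset$, $\lset k,l\rset=\lset 1,2\rset$ (so that the $Q$-side reads $\bra{\lset 1,2\rset}Q\ket{\lset 3,4\rset}=0$), not the indices you wrote; but the intended step is clear and valid.

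The first half of your argument---three applications of the box rule on $P$, one back from $Q$ to force the full first two columns of $P$ to vanish---is exactly what the paper does. The difference is in the endgame. After showing that $P=p\,\bra{\lset 3,4\rset}$ and $Q=(Kp)\,\bra{\lset 1,2\rset}$, the paper argues by contradiction: assuming $(P,Q)$ is not a box, it lists four $4$-element sets each of which must contain a zero (else $(P,Q)$ would contain a box), deduces that $P$ or $Q$ has two zero rows, and concludes directly. You instead observe that the map $p\mapsto (p\,\bra{\lset 3,4\rset},\,Kp\,\bra{\lset 1,2\rset})$ is a linear injection of $\mathcal{S}\lb\CP_1\cap\coCP_1\rb$ into $\mathcal{S}\lb\CP_2\cap\coCP_2\rb$ whose image is exactly the set of pairs with the given zero columns, so extremality of $(P,Q)$ forces extremality of $(p,Kp)$ in the $N=1$ cone, and then Theorem~\ref{thm:ExtremalRaysN1} finishes. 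Your route is a bit more conceptual and avoids the case check; the paper's route is self-contained and does not need to cite the $N=1$ classification. Both are short and equally rigorous.
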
 

\begin{proof}
We can assume without loss of generality that $P_{ij}=0$ for any $i\in \lset 1,2,3\rset$ and any $j\in\lset 1,2\rset$. Applying Lemma \ref{lem:boxRule} three times for $P$ shows that $Q_{ij}=0$ for any $i\in\lset 1,2,3,4\rset$ and any $j\in\lset 3,4\rset$. Applying Lemma \ref{lem:boxRule} again for $Q$ shows that $P_{ij}=0$ for any $i\in \lset 1,2,3,4\rset$ and any $j\in\lset 1,2\rset$. Now let 
\[
P=\begin{pmatrix} 0 & 0 & p_1 & p_2 \\ 0 & 0 & p_3 & p_4 \\ 0 & 0 & p_5 & p_6 \\ 0 & 0 & p_7 & p_8\end{pmatrix} \quad\text{ and }\quad Q=\begin{pmatrix} q_1 & q_2 & 0 & 0  \\ q_3 & q_4 & 0 & 0 \\ q_5 & q_6 & 0 & 0 \\ q_7 & q_8 & 0 & 0\end{pmatrix}.
\]
Since $Q=KPK$ we find by \eqref{equ:KPKQ} that 
\[
q_1=\frac{1}{4}(p_3+p_4+p_5+p_6+p_7+p_8-p_1-p_2) = q_2.
\]
Similarly, we find that $q_3=q_4$, $q_5=q_6$, $q_7=q_8$, and using that $P=KQK$ we find that $p_1=p_2$, $p_3=p_4$, $p_5=p_6$ and $p_7=p_8$ by the same reasoning. In summary, we have
\[
P=\begin{pmatrix} 0 & 0 & p_1 & p_1 \\ 0 & 0 & p_3 & p_3 \\ 0 & 0 & p_5 & p_5 \\ 0 & 0 & p_7 & p_7\end{pmatrix} \quad\text{ and }\quad Q=\begin{pmatrix} q_1 & q_1 & 0 & 0  \\ q_3 & q_3 & 0 & 0 \\ q_5 & q_5 & 0 & 0 \\ q_7 & q_7 & 0 & 0\end{pmatrix}.
\]
By assumption $(P,Q)$ generates an extremal ray, and we can assume for contradiction that $(P,Q)$ is not a box. Then, we must have $0\in\lset p_1,p_3,q_5,q_7\rset$, $0\in \lset q_1,q_3,p_5,p_7\rset$, $0\in \lset p_3,p_7,q_1,q_5\rset$ and $0\in\lset p_1,p_5,q_3,q_7\rset$ since otherwise $(P,Q)$ would contain a box. It follows that either $P$ or $Q$ must contain two distinct zero rows. Since $(P,Q)\neq (0,0)$, we find that $(P,Q)$ must be a box. 
\end{proof}

\begin{lem}
Let $Z\in \M_4\lb\lset 0,1\rset\rb$ denote a zero pattern containing the pattern 
\[
\begin{pmatrix}
0 & 1 & 1 & 1 \\ 1 & 0 & 1 & 1 \\ 1 & 1 & 0 & 0 \\ 1 & 1 & 0 & 0
\end{pmatrix}
\]
up to row and column permutations. Any $(P,Q)\in \mathcal{S}\lb \CP_2\cap\coCP_2\rb$ generating an extremal ray and satisfying $Z\subseteq\mathcal{Z}(P)$ or $Z\subseteq\mathcal{Z}(Q)$ is a box or a diagonal. 
\label{lem:ZeroSubBoxZeroSubD}
\end{lem}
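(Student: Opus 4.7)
The plan is to propagate the assumed zeros of $P$ into $Q$ via the box and diagonal rules (Lemmas \ref{lem:boxRule} and \ref{lem:diagRule}), use the resulting rigidity of the equation $Q = KPK$ to force $(P,Q)$ into a two-parameter family of block-symmetric pairs, and finally write each such pair explicitly as a non-negative combination of two specific boxes and four specific diagonals.

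Assume without loss of generality that $\mathcal{Z}(P)$ contains the positions $(1,1), (2,2), (3,3), (3,4), (4,3), (4,4)$. The $2\times 2$ zero block at rows and columns $\lset 3,4\rset$ of $P$ activates the box rule with $\lset i,j\rset = \lset k,l\rset = \lset 3,4\rset$, forcing $Q_{ab}=0$ for all $a,b\in\lset 1,2\rset$. The diagonal rule applied to the identity permutation then gives $Q_{33}+Q_{44}=0$, and applied to the transposition $(3\,4)$ gives $Q_{34}+Q_{43}=0$, so the four remaining entries of $Q$ in the diagonal $2\times 2$ blocks vanish by entrywise positivity.

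The identity $Q_{ij} = P_{ij} + \tfrac{1}{4}(S - 2r_i - 2c_j)$, obtained from $K = \tfrac{1}{2}J - \one_4$ with $J$ the all-ones matrix and $r_i, c_j, S$ the row sums, column sums and total of $P$, combined with the eight vanishing entries of $Q$ and entrywise positivity of $P$, pins down $r_1 = r_2 = c_3 = c_4 =: r$ and $r_3 = r_4 = c_1 = c_2 =: s$ and forces $P_{12}=P_{21}=0$. A short derivation from these row and column sum identities then yields the block-symmetric form
\[
P = \begin{pmatrix} 0 & 0 & \alpha & \beta \\ 0 & 0 & \beta & \alpha \\ \gamma & \delta & 0 & 0 \\ \delta & \gamma & 0 & 0 \end{pmatrix}, \qquad \alpha+\beta = r,\ \gamma+\delta = s,
\]
while $Q$ emerges in the same shape with $(\alpha,\beta)$ shifted by $(s-r)/2$ and $(\gamma,\delta)$ shifted by $(r-s)/2$; entrywise positivity of $Q$ translates into $\alpha,\beta \ge \max(r-s,0)/2$ and $\gamma,\delta \ge \max(s-r,0)/2$.

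Finally, I will exhibit $(P,Q)$ as a non-negative combination of the two boxes indexed by the ordered pairs $(\lset 1,2\rset, \lset 3,4\rset)$ and $(\lset 3,4\rset, \lset 1,2\rset)$ together with the four diagonals corresponding to the derangements $(13)(24)$, $(14)(23)$, $(1324)$, $(1423)$ in $S_4$. Matching the six real parameters of the family against these six generators reduces to a $4\times 6$ linear system; absorbing the imbalance $|r-s|/2$ into the appropriate box reduces to the case $r=s$, in which a one-parameter family of non-negative solutions in the four diagonal coefficients is immediately visible from the constraint $\alpha+\beta=\gamma+\delta$. Each of these six generators is itself extremal by Theorem \ref{thm:ExtrThreePairs}, so extremality of $(P,Q)$ forces all but one coefficient to vanish, and $(P,Q)$ is proportional to a box or to a diagonal. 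The main technical step is the rigidity analysis extracting the block-symmetric shape of $P$ from the eight vanishing $Q$-entries; once that shape is in hand, the final decomposition is an elementary solvability check.
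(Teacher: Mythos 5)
Your proof is correct, and after the common opening moves it takes a genuinely different route from the paper's. Both arguments begin the same way: the box rule on the $\lset 3,4\rset\times\lset 3,4\rset$ zero block and the diagonal rule on the two permutations supported in the lower-right corner kill the two diagonal $2\times 2$ blocks of $Q$, and then the two diagonal blocks of $P$ (the paper uses the box rule again for $P_{12}=P_{21}=0$; you get it from the entrywise form of $Q=KPK$). From there the paper keeps all eight off-block entries of $P$ and of $Q$ as free parameters, assumes $(P,Q)$ is not a diagonal, extracts four ``at least one zero'' conditions from the four derangement diagonals, and splits into cases: two zeros in the same matrix is dispatched by Lemma \ref{lem:6erLem}, while one zero in each matrix is refuted by a direct computation forcing $P=Q=0$. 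You instead squeeze more out of the linear constraint up front: the eight vanishing entries of $Q$, via $Q_{ij}=P_{ij}+\tfrac14(S-2r_i-2c_j)$, force $P_{12}=P_{21}=0$, the row/column-sum identities $r_1=r_2=c_3=c_4$ and $r_3=r_4=c_1=c_2$, and hence the symmetric two-parameter shape of each off-diagonal block — I checked this computation and it is right. The conic decomposition into the two boxes $(\lset 1,2\rset,\lset 3,4\rset)$, $(\lset 3,4\rset,\lset 1,2\rset)$ and the four derangement diagonals then exists because, after absorbing $t=|r-s|/2$ into the appropriate box, the system $d_1+d_4=\alpha$, $d_2+d_3=\beta$, $d_1+d_3=\gamma-t$, $d_2+d_4=\delta-t$ is a $2\times 2$ transportation problem with consistent non-negative margins, which always has a non-negative solution; extremality finishes it. Your route avoids both the case analysis and the dependence on Lemma \ref{lem:6erLem}, at the cost of some row/column-sum bookkeeping, and it proves slightly more: every (not necessarily extremal) pair in $\mathcal{S}\lb\CP_2\cap\coCP_2\rb$ with this zero pattern lies in the cone generated by boxes and diagonals.
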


\begin{proof}
Consider $(P,Q)\in \mathcal{S}\lb \CP_2\cap\coCP_2\rb$ generating an extremal ray and without loss of generality we assume $Z\subseteq\mathcal{Z}(P)$. Applying Lemma \ref{lem:boxRule} once and Lemma \ref{lem:diagRule} twice (for the two zero diagonals crossing in the lower right $2\times 2$ corner) shows that $Q_{ij}=0$ for any $i,j\in \lset 1,2\rset$ and for any $i,j\in \lset 3,4\rset$. By applying Lemma \ref{lem:boxRule} again for $Q$ we find that 
\[
P=\begin{pmatrix} 0 & 0 & p_1 & p_2 \\ 0 & 0 & p_3 & p_4 \\ p_5 & p_6 & 0 & 0 \\ p_7 & p_8 & 0 & 0\end{pmatrix} \quad\text{ and }\quad Q=\begin{pmatrix} 0 & 0 & q_1 & q_2  \\ 0 & 0 & q_3 & q_4 \\ q_5 & q_6 & 0 & 0  \\ q_7 & q_8 & 0 & 0 \end{pmatrix}
\]
with non-negative entries. Assume that $(P,Q)$ is not a diagonal. By extremality of $(P,Q)$ it cannot contain any diagonal, and therefore we find the following inclusions:
\begin{align*}
0&\in \lset p_2,p_3,p_6,p_7,q_2,q_3,q_6,q_7\rset \\
0&\in \lset p_1,p_4,p_6,p_7,q_1,q_4,q_6,q_7\rset \\
0&\in \lset p_1,p_4,p_5,p_8,q_1,q_4,q_5,q_8\rset \\
0&\in \lset p_2,p_3,p_5,p_8,q_2,q_3,q_5,q_8\rset
\end{align*}
It is easy to see that these conditions require two zeros among $p_1,p_2,p_3,p_4,q_1,q_2,q_3,q_4$ or among $p_5,p_6,p_7,p_8,q_5,q_6,q_7,q_8$ and it is not sufficient to have both among $p_1,p_4,q_1,q_4$ or both among $p_2,p_3,q_2,q_3$ or both among $p_5,p_8,q_5,q_8$ or both among $p_6,p_7,q_6,q_7$. By Lemma \ref{lem:6erLem} we are done in the case where these two zeros lie in either $P$ or $Q$ together. Therefore, we assume that one of the two zeros lies in $P$ and the other in $Q$. 

After suitable row and column permutations we can assume that $p_1=q_3=0$. Since $Q=KPK$ we find by \eqref{equ:KPKQ} that 
\begin{align*}
p_3-p_4+p_2+p_5+p_6+p_7+p_8 &= 0 \\
-p_3-p_4+p_2+p_5-p_6+p_7-p_8 &= 0 \\
-p_3-p_4+p_2-p_5+p_6-p_7+p_8 &= 0.
\end{align*} 
Substracting the second and third equations from the first, and using that all variables are non-negative real numbers, shows that $p_3=p_5=p_6=p_7=p_8=0$. Finally, applying Lemma \ref{lem:boxRule} multiple times shows that $P=Q=0$ contradicting the assumption that $(P,Q)$ generates an extremal ray. This finishes the proof.  

\end{proof}

\begin{lem}[Zero row and column]
Let $Z\in \M_4\lb\lset 0,1\rset\rb$ denote a zero pattern containing the pattern 
\[
\begin{pmatrix}
0 & 1 & 1 & 1 \\ 0 & 1 & 1 & 1 \\ 0 & 0 & 1 & 1 \\ 0 & 0 & 0 & 0
\end{pmatrix}
\]
up to row and column permutations. Any $(P,Q)\in \mathcal{S}\lb \CP_2\cap\coCP_2\rb$ generating an extremal ray and satisfying $Z\subseteq\mathcal{Z}(P)$ or $Z\subseteq\mathcal{Z}(Q)$ is a box. 
\label{lem:ZeroRowColumn}
\end{lem}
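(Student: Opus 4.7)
The proof follows the pattern of Lemmas \ref{lem:6erLem} and \ref{lem:ZeroSubBoxZeroSubD}: propagate zeros using the box rule (Lemma \ref{lem:boxRule}) together with the relation $Q = KPK$, and conclude by extremality (Lemma \ref{lem:extremesZP}) that $(P,Q)$ must be a scalar multiple of a box. Without loss of generality assume $Z \subseteq \mathcal{Z}(P)$; otherwise exchange the roles of $P$ and $Q$. The zero pattern forces column $1$ and row $4$ of $P$ to vanish together with $P_{3,2} = 0$, so $P_{3,1}+P_{3,2}+P_{4,1}+P_{4,2} = 0$, and applying the box rule with $\{i,j\}=\{3,4\}$ and $\{k,l\}=\{1,2\}$ yields $Q_{1,3}=Q_{1,4}=Q_{2,3}=Q_{2,4}=0$. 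This identifies the natural candidate box $(P_b, Q_b) = (\ket{\{1,2\}}\bra{\{3,4\}}, \ket{\{3,4\}}\bra{\{1,2\}})$, whose zero patterns contain both the eight known zeros of $P$ and the four derived zeros of $Q$.

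Expanding $Q = KPK$ through the formula $Q_{ij} = T/4 - r'_i/2 - c_j/2 + P_{ij}$, where $T$, $r'_i$, and $c_j$ denote the total sum, $i$th row sum, and $j$th column sum of $P$, the four constraints $Q_{1,j} = Q_{2,j} = 0$ for $j \in \{3,4\}$ simplify to the three identities $P_{1,2} = P_{2,2}$, $P_{3,3} = P_{3,4}$, and $P_{1,3}+P_{2,4} = P_{1,4}+P_{2,3}$. A direct calculation under these identities shows that $Q_{3,2} = \min\{Q_{3,1}, Q_{3,2}, Q_{4,1}, Q_{4,2}\}$, so the largest $\beta \geq 0$ for which $(P,Q) - \beta(P_b,Q_b) \in \mathcal{S}(\CP_2\cap\coCP_2)$ equals $\min(P_{1,3}, P_{1,4}, P_{2,3}, P_{2,4}, Q_{3,2})$. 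Provided this minimum is strictly positive, the decomposition $(P,Q) = \beta(P_b,Q_b) + [(P,Q)-\beta(P_b,Q_b)]$ together with extremality forces $(P,Q)$ to be a scalar multiple of $(P_b, Q_b)$, proving the lemma in this regime.

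The remaining work is a case analysis when one of these five quantities vanishes. The $\mathbb{Z}_2 \times \mathbb{Z}_2$ symmetry of $Z$ (swapping rows $1$ and $2$, and columns $3$ and $4$) renders the subcases $P_{1,3}=0$, $P_{1,4}=0$, $P_{2,3}=0$, $P_{2,4}=0$ equivalent, so only two representative situations need be analyzed: $P_{1,3}=0$ and $Q_{3,2}=0$. In the first, applying the box rule once more to $P_{1,1}+P_{1,3}+P_{4,1}+P_{4,3}=0$ yields the new zeros $Q_{2,2}=Q_{3,2}=Q_{3,4}=0$, and combined with the earlier identities these force $(P,Q)$ to be a scalar multiple of the compatible box $(\ket{\{2,3\}}\bra{\{3,4\}}, \ket{\{1,4\}}\bra{\{1,2\}})$. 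The subcase $Q_{3,2}=0$ is handled by an analogous cascade, yielding one of the other boxes compatible with $Z$. The main obstacle is this bookkeeping: in each subcase one must verify that the chain of box-rule applications terminates in a genuine box rather than in the trivial pair $(P,Q)=(0,0)$, which would contradict the assumption that $(P,Q)$ generates an extremal ray.
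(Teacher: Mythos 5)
Your opening is correct and matches the paper's: the box rule applied to $\bra{\lset 3,4\rset}P\ket{\lset 1,2\rset}=0$ does give $Q_{13}=Q_{14}=Q_{23}=Q_{24}=0$, the three identities $P_{12}=P_{22}$, $P_{33}=P_{34}$, $P_{13}+P_{24}=P_{14}+P_{23}$ are exactly what the entrywise form of $Q=KPK$ yields, and $Q_{32}=\min\lset Q_{31},Q_{32},Q_{41},Q_{42}\rset$ is a correct consequence (indeed $Q_{31}-Q_{32}=P_{12}$, $Q_{42}-Q_{32}=P_{33}$, $Q_{41}-Q_{32}=P_{12}+P_{22}+P_{33}+P_{34}$). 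So the generic case $\min(P_{13},P_{14},P_{23},P_{24},Q_{32})>0$ is settled by your subtraction argument, and the symmetry reduction of the four subcases $P_{13}=0,\dots,P_{24}=0$ to one is legitimate. This is a slightly more quantitative packaging of the paper's opening, which instead assumes $(P,Q)$ is not a box, uses Lemma~\ref{lem:6erLem} to force $P_{12},P_{22},P_{33},P_{34}>0$, and concludes $0\in\lset P_{13},P_{14},P_{23},P_{24},Q_{32}\rset$.

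The gap is that the two remaining subcases, which is where essentially all of the paper's proof lives, are asserted rather than proved. In the subcase $P_{13}=0$, the new zeros $Q_{22}=Q_{32}=Q_{34}=0$ do \emph{not} by themselves force the box $\lb\proj{\lset 2,3\rset}{\lset 3,4\rset},\proj{\lset 1,4\rset}{\lset 1,2\rset}\rb$: the box $\lb\proj{\lset 1,2\rset}{\lset 2,4\rset},\proj{\lset 3,4\rset}{\lset 1,3\rset}\rb$ is equally compatible with every zero derived so far, so one must still prove strict positivity of eight specific entries before extremality can be applied (the paper does this via Lemma~\ref{lem:6erLem} and the diagonal rule, and it is in fact the latter box that it exhibits). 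More seriously, the subcase $Q_{32}=0$ is not an ``analogous cascade'': with all eight free entries of $P$ strictly positive, no further zero of $Q$ propagates through the box or diagonal rules (no complete sub-box or permutation diagonal of known zeros exists), and the paper instead has to argue from extremality that each of the pairs $\lset Q_{11},Q_{12}\rset$, $\lset Q_{21},Q_{22}\rset$, $\lset Q_{33},Q_{43}\rset$, $\lset Q_{34},Q_{44}\rset$ contains a zero, and then split on whether two of these zeros share a row or column (Lemma~\ref{lem:6erLem}) or form a permutation diagonal (Lemma~\ref{lem:diagRule}, contradicting $P_{12},P_{22},P_{33},P_{34}>0$). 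Finally, your stated ``main obstacle'' is misdiagnosed: terminating at $(P,Q)=(0,0)$ would merely make a subcase vacuous; the actual danger is that the zero propagation stalls without forcing a box, which is precisely what happens in both subcases unless the extremality-based arguments above are supplied.
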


\begin{proof}
Consider $(P,Q)\in \mathcal{S}\lb \CP_2\cap\coCP_2\rb$ generating an extremal ray and without loss of generality we assume $Z\subseteq\mathcal{Z}(P)$. Using Lemma \ref{lem:boxRule} we find that 
\[
P=\begin{pmatrix} 0 & p_1 & p_2 & p_3 \\ 0 & p_4 & p_5 & p_6 \\ 0 & 0 & p_7 & p_8 \\ 0 & 0 & 0 & 0\end{pmatrix} \quad\text{ and }\quad Q=\begin{pmatrix} q_1 & q_2 & 0 & 0  \\ q_3 & q_4 & 0 & 0 \\ q_5 & q_6 & q_7 & q_8  \\ q_9 & q_{10} & q_{11} & q_{12} \end{pmatrix}
\] 
with $p_i\geq 0$ and $q_j\geq 0$ for any $i\in\lset 1,\ldots ,8 \rset$ and any $j\in\lset 1,\ldots ,12 \rset$. Assume for contradiction that $(P,Q)$ is not a box. Then, by Lemma \ref{lem:6erLem} we can infer that $p_1,p_4,p_7,p_8>0$. Since $Q=KPK$ we can use \eqref{equ:KPKQ} to find that
\[
q_9 = \frac{1}{4}\sum^{8}_{i=1} p_i >0
\]
and 
\begin{align*}
q_5 &= \frac{1}{4}(-p_7-p_8 + p_1 + p_4 + p_2 + p_3 + p_5 + p_6) \\
q_6 &= \frac{1}{4}(-p_7-p_8 - p_1 - p_4 + p_2 + p_3 + p_5 + p_6) . \\
\end{align*}
Taking the difference of the last two equations shows that 
\[
q_5-q_6 = \frac{1}{2}(p_1+p_4)>0.
\]
Similarly, we find that 
\[
q_{10}-q_6 = \frac{1}{2}(p_7+p_8)>0. 
\]
The previous equations imply in particular that $q_5,q_{10}>0$. Since $(P,Q)$ generates an extremal ray and is not a box, we must have $0\in\lset p_2,p_3,p_5,p_6,q_6\rset$. 

Assume first that $0\in\lset p_2,p_3,p_5,p_6\rset$. Without loss of generality we can assume further that $p_5=0$ (otherwise we can permute the first two rows and last two columns). By Lemma \ref{lem:boxRule} and Lemma \ref{lem:diagRule} this implies that $q_2=q_6=q_8=q_1=q_{12}=0$. By Lemma \ref{lem:6erLem} we find that $p_6,p_2,q_3,q_4,q_7,q_{11}>0$. Finally, $p_3>0$ since otherwise $0=q_9>0$ by Lemma \ref{lem:diagRule}. The previous argument shows that all remaining non-zero entries in $P$ and $Q$ must be non-zero. However, $(P,Q)$ still contains a box $p_1,p_3,p_4,p_6$ and $q_5,q_7,q_9,q_{11}$ without any zeros. Therefore $(P,Q)$ cannot generate an extremal ray contradicting the assumption. 

Finally, we can assume that $q_6=0$. By the previous paragraph we know that every entry of $P$ must be non-zero. Since $(P,Q)$ generates an extremal ray we conclude that $0\in \lset q_1,q_2\rset$, $0\in \lset q_8,q_{12}\rset$, $0\in \lset q_7,q_{11}\rset$, and $0\in \lset q_3,q_{4}\rset$. By Lemma \ref{lem:6erLem} we are done when two of these zeros are in the same row or the same column. If none of these zeros are in the same row or column, then they must lie on the same diagonal. Then we can conclude by Lemma \ref{lem:diagRule} that $0\in\lset p_1,p_4,p_7,p_8\rset$ leading to a contradiction.

\end{proof}

\begin{lem}
Let $Z\in \M_4\lb\lset 0,1\rset\rb$ denote a zero pattern containing either one of the following patterns up to permutations of rows and columns: 
\begin{enumerate}
\item \[
\begin{pmatrix}
1 & 1 & 0 & 0 \\ 0 & 1 & 1 & 0 \\ 0 & 0 & 1 & 1 \\ 1 & 0 & 0 & 1
\end{pmatrix}
\]
\item \[
\begin{pmatrix}
1 & 1 & 1 & 0 \\ 1 & 1 & 0 & 1 \\ 1 & 0 & 0 & 0 \\ 0 & 1 & 0 & 0
\end{pmatrix}
\]
\item \[
\begin{pmatrix}
1 & 1 & 1 & 1 \\ 1 & 1 & 0 & 0 \\ 0 & 0 & 1 & 1 \\ 0 & 0 & 0 & 0
\end{pmatrix}
\]
\item \[
\begin{pmatrix}
1 & 1 & 1 & 0 \\ 1 & 1 & 1 & 0 \\ 1 & 0 & 0 & 0 \\ 0 & 0 & 0 & 1
\end{pmatrix}
\]
\end{enumerate}
Any $(P,Q)\in \mathcal{S}\lb \CP_2\cap\coCP_2\rb$ generating an extremal ray and satisfying $Z\subseteq\mathcal{Z}(P)$ or $Z\subseteq\mathcal{Z}(Q)$ is a box or a diagonal. 
\label{lem:cases1}
\end{lem}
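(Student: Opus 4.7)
The strategy for each of the four sub-cases mirrors the strategy of Lemma \ref{lem:6erLem}, Lemma \ref{lem:ZeroSubBoxZeroSubD} and Lemma \ref{lem:ZeroRowColumn}: assume WLOG that $Z\subseteq \mathcal{Z}(P)$, apply the box/diagonal/cross rules (Lemmas \ref{lem:boxRule}, \ref{lem:diagRule}, \ref{lem:crossRule}) to propagate zeros from $P$ into $Q$ and back into $P$, parametrize the remaining free entries by non-negative reals, and then exploit the equation $Q=KPK$ via the explicit entry formula \eqref{equ:KPKQ} together with extremality of $(P,Q)$. Extremality is used in the contrapositive: if $(P,Q)$ were to contain a box or a diagonal without being one, it could be decomposed, so we may assume that for every box or diagonal whose support is contained in the support of $(P,Q)$, at least one of its entries is in fact zero. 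This produces a finite list of ``forbidden full supports'', which together with the linear constraints from $Q=KPK$ is rigid enough to finish each case.

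The first step in every case is to read off the initial zero propagation. In Case 3, the bottom row of $P$ is entirely zero so the box rule with $\{i,j\}=\{1,2\},\{1,3\},\{2,3\}$ immediately forces three zeros in every row of $Q$, and symmetric application of the box rule to $Q$ then yields a $(P,Q)$ supported on the $3\times 3$ upper block with the same $2\times 2$-block diagonal structure as in the proof of Lemma \ref{lem:6erLem}; a direct application of that lemma (after a row/column permutation exchanging blocks) concludes the argument and forces a box. In Case 4 the repeated zero column in the first two rows together with the isolated zeros on the bottom row gives, via the box rule, enough zeros in $Q$ to again reduce to the $6$-pattern of Lemma \ref{lem:6erLem}, yielding a box.

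Cases 1 and 2 are the substantial ones. In Case 2 the zero pattern consists of a $2\times 2$ zero block in the lower-right corner together with two isolated zeros straddling it. The box rule applied to the $2\times 2$ block, combined with the diagonal rule applied to the two zero diagonals that can be completed through the isolated zeros, produces a $2\times 2$ zero block in the upper-left corner of $Q$; then re-applying the box rule to $Q$ gives $P$ exactly the block-antidiagonal form seen in Lemma \ref{lem:ZeroSubBoxZeroSubD}, and one finishes verbatim as there by writing out $Q=KPK$ entry-by-entry and using extremality to force enough variables to vanish. In Case 1, the pattern is the union of two permutation supports, so the diagonal rule immediately imposes two linear constraints on the non-zero entries of $Q$, while the box rule applied to the four $2\times 2$ sub-blocks that contain exactly one zero in $P$ forces further zeros in $Q$; parametrising the remainder and again invoking \eqref{equ:KPKQ} together with non-negativity reduces the situation either to a box (via Lemma \ref{lem:6erLem}), to a diagonal, or to $P=Q=0$.

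The main obstacle is the bookkeeping in Case 1: the symmetric cyclic pattern does not immediately collapse either $P$ or $Q$ to a block form, so one has to carry along roughly eight free non-negative parameters in $P$ and solve the resulting linear system $Q=KPK$ by hand, pruning by extremality at each branching. The expectation, by analogy with Lemma \ref{lem:ZeroSubBoxZeroSubD}, is that whenever $(P,Q)$ is neither a box nor a diagonal, the extremality constraints combined with the sign conditions on the entries of $KPK$ admit no non-trivial solution; in particular, as in the final paragraph of the proof of Lemma \ref{lem:ZeroSubBoxZeroSubD}, it should suffice to exhibit two extra ``incompatible'' zeros (one in $P$ and one in $Q$) and show that the three equations coming from \eqref{equ:KPKQ} for three off-pattern entries of $Q$ force a sum of non-negative terms to vanish, killing every free parameter and contradicting $(P,Q)\neq (0,0)$.
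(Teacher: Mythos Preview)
Your overall strategy---propagate zeros via the box/diagonal/cross rules, parametrize, then use \eqref{equ:KPKQ} and extremality---is exactly what the paper does. However, you have misjudged which cases are easy and which are hard, and in Case~1 your proposed mechanism does not work.

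For Case~1 you plan to apply the box rule to ``the four $2\times 2$ sub-blocks that contain exactly one zero in $P$'', but the box rule requires a full $2\times 2$ block of zeros, and the cyclic pattern in Case~1 contains \emph{no} $2\times 2$ zero block at all. The correct observation is that the eight zeros of this pattern form precisely the union of two permutation supports (namely $(13)(24)$ and $(1432)$), so the diagonal rule applied twice forces $Q$ to have the same eight zeros. At that point both $P$ and $Q$ live on the same eight-entry support; any further zero in either creates the pattern of Lemma~\ref{lem:ZeroSubBoxZeroSubD}, and if no entry vanishes then $(P,Q)$ contains a diagonal (e.g.\ the main one) and extremality finishes. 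Case~1 is thus the \emph{easiest} case, requiring no computation with \eqref{equ:KPKQ} at all.

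Conversely, you dismiss Case~4 in one sentence (``reduce to the $6$-pattern of Lemma~\ref{lem:6erLem}''), but this is where the real work lies. After one application of the box rule (to the $2\times2$ zero block in rows $\{3,4\}$, columns $\{2,3\}$) you only get four zeros in $Q$, leaving eight free $p_i$'s and twelve free $q_j$'s. The reductions to Lemmas~\ref{lem:6erLem} and~\ref{lem:ZeroSubBoxZeroSubD} cover only the subcases where some $p_i$ vanishes; the generic branch with all $p_i>0$ genuinely requires writing out several entries of $Q=KPK$ via \eqref{equ:KPKQ}, deducing relations like $p_7=p_8$ and $q_6=q_{10}$, $q_7=q_{11}$, and then showing that the resulting $(P,Q)$ must contain a box. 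Your plan for Cases~2 and~3 is essentially right; just swap the effort you allocated to Cases~1 and~4.
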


\begin{proof} Assume that $(P,Q)\in \mathcal{S}\lb \CP_2\cap\coCP_2\rb$ is generating an extremal ray and satisfying $Z\subseteq\mathcal{Z}(P)$. In the following we will go through the cases of the lemma one-by-one every time assuming that $Z$ contains the corresponding zero pattern. 
\begin{enumerate}
\item Using Lemma \ref{lem:diagRule} twice we find that 
\[
P=\begin{pmatrix} p_1 & p_2 & 0 & 0 \\ 0 & p_3 & p_4 & 0 \\ 0 & 0 & p_5 & p_6 \\ p_7 & 0 & 0 & p_8\end{pmatrix} \quad\text{ and }\quad Q=\begin{pmatrix} q_1 & q_2 & 0 & 0  \\ 0 & q_3 & q_4 & 0 \\ 0 & 0 & q_5 & q_6  \\ q_7 & 0 & 0 & q_8 \end{pmatrix}
\]
where $p_i,q_j\geq 0$ for any $i,j\in\lset 1,\ldots , 8\rset$. It is easy to check that $p_i=0$ for any $i\in\lset 1,\ldots , 8\rset$ would create a zero pattern as in Lemma \ref{lem:ZeroSubBoxZeroSubD} and therefore $(P,Q)$ is either a box or a diagonal (in fact it has to be a diagonal in this case). The same argument works if $q_i=0$ for some $i\in\lset 1,\ldots , 8\rset$. Assuming that $(P,Q)$ is neither a diagonal or a box therefore implies that $p_i\neq 0$ and $q_j\neq 0$ for any $i,j\in\lset 1,\ldots, 8\rset$. However, then $(P,Q)$ would contain a diagonal (e.g.~the main diagonal) without any zeros, and would not generate an extremal ray.  
\item Using Lemma \ref{lem:diagRule} and Lemma \ref{lem:boxRule} we find that 
\[
P=\begin{pmatrix} p_1 & p_2 & p_3 & 0 \\ p_4 & p_5 & 0 & p_6 \\ p_7 & 0 & 0 & 0 \\ 0 & p_8 & 0 & 0\end{pmatrix} \quad\text{ and }\quad Q=\begin{pmatrix} 0 & 0 & q_1 & 0  \\ 0 & 0 & 0 & q_2 \\ q_3 & 0 & q_4 & q_5  \\ 0 & q_6 & q_7 & q_8 \end{pmatrix}
\] 
with $p_i\geq 0$ and $q_j\geq 0$ for any $i,j\in\lset 1,\ldots , 8\rset$. Note that by Lemma \ref{lem:6erLem} we are done if $0\in\lset p_3,p_6,p_7,p_8,q_1,q_2,q_3,q_6\rset$. But if this is not the case, then $(P,Q)$ contains a diagonal. Since $(P,Q)$ generates an extremal ray, it has to be equal to this diagonal.    
\item Using Lemma \ref{lem:boxRule} twice we find that  
\[
P=\begin{pmatrix} p_1 & p_2 & p_3 & p_4 \\ p_5 & p_6 & 0 & 0 \\ 0 & 0 & p_7 & p_8 \\ 0 & 0 & 0 & 0\end{pmatrix} \quad\text{ and }\quad Q=\begin{pmatrix} 0 & 0 & 0 & 0  \\ q_1 & q_2 & 0 & 0 \\ 0 & 0 & q_3 & q_4  \\ q_5 & q_6 & q_7 & q_8 \end{pmatrix}
\]
with $p_i\geq 0$ and $q_j\geq 0$ for any $i,j\in\lset 1,\ldots , 8\rset$. Note that by Lemma \ref{lem:6erLem} we are done if $0\in\lset p_7,p_8,q_1,q_2\rset$. We are also done by Lemma \ref{lem:ZeroSubBoxZeroSubD} if $0\in \lset p_3,p_4,q_5,q_6\rset$. If neither of the previous apply, we conclude by extremality that $(P,Q)$ is a box. 
\item Using Lemma \ref{lem:boxRule} we find that 
\[
P=\begin{pmatrix} p_1 & p_2 & p_3 & 0 \\ p_4 & p_5 & p_6 & 0 \\ p_7 & 0 & 0 & 0 \\ 0 & 0 & 0 & p_8\end{pmatrix} \quad\text{ and }\quad Q=\begin{pmatrix} 0 & q_1 & q_2 & 0  \\ 0 & q_3 & q_4 & 0 \\ q_5 & q_6 & q_7 & q_8  \\ q_9 & q_{10} & q_{11} & q_{12} \end{pmatrix}
\] 
with $p_i\geq 0$ for any $i\in\lset 1,\ldots , 8\rset$ and $q_j\geq 0$ for any $j\in\lset 1,\ldots , 12\rset$. Note that by Lemma \ref{lem:6erLem} we are done if $0\in \lset p_7,p_8\rset$ and by Lemma \ref{lem:ZeroSubBoxZeroSubD} we are done if $0\in\lset p_1,p_4\rset$. 

Consider first the case where $0\in\lset p_2,p_3,p_5,p_6\rset$. By permuting rows and columns we can assume that $p_2=0$. Then, we can use Lemma \ref{lem:boxRule} again to conclude that $q_4=q_9=q_{11}=0$ and Lemma \ref{lem:diagRule} to conclude that $q_1=q_7=0$. Now we would be done by Lemma \ref{lem:6erLem} if $0\in\lset p_3,p_5,q_2,q_3,q_5\rset$. By extremality either $(P,Q)$ is a diagonal, or $q_{12}=0$. But the latter would imply that $p_4=0$ and we would be done by Lemma \ref{lem:ZeroSubBoxZeroSubD}. 

For contradiction, we assume that $(P,Q)$ is neither a box nor a diagonal. Then, by the previous paragraph we can assume that $p_i>0$ for any $i\in\lset 1,\ldots , 8\rset$. Since $Q=KPK$ we can use \eqref{equ:KPKQ} to conclude
\[
q_{12} = \frac{1}{4}\sum^8_{i=1} p_i >0
\] 
and 
\begin{align*}
p_8 + p_1 + p_2 + p_3 &= p_7 + p_4 + p_5 + p_6 \\
p_8 + p_4 + p_5 + p_6 &= p_7 + p_1 + p_2 + p_3 .
\end{align*}
From the previous equations we conclude that $p_8 = p_7$. Therefore, we find that 
\begin{align*}
q_6 &= \frac{1}{4}(p_3 + p_6 - p_2 - p_5 + p_1 + p_4) = q_{10} \\
q_7 &= \frac{1}{4}(- p_3 - p_6 + p_2 + p_5 + p_1 + p_4) = q_{11}, 
\end{align*}
which also implies that $q_6+q_7 = \frac{1}{2}(p_1+p_4)>0$. We also find that 
\begin{align*}
q_8 &= \frac{1}{4}(-p_7-p_8+ p_1 + p_2 + p_3 + p_4 + p_5 + p_6) \\
q_9 &= \frac{1}{4}(-p_7-p_8 - p_1 + p_2 + p_3 - p_4 + p_5 + p_6).
\end{align*}
Taking the difference of the last two equations implies that 
\[
q_8 - q_9 = \frac{1}{2}(p_1+p_4)>0,
\]
which implies $q_8>0$. Summarizing the previous discussion we have
\[
Q=\begin{pmatrix} 0 & q_1 & q_2 & 0  \\ 0 & q_3 & q_4 & 0 \\ q_5 & q_6 & q_7 & q_8  \\ q_9 & q_{6} & q_{7} & q_{12} \end{pmatrix}
\] 
where $q_8,q_{12}>0$, and $q_6>0$ or $q_7>0$. Therefore, $(P,Q)$ contains a box contradicting extremality. 

\end{enumerate}
\end{proof}

\begin{lem}
Let $Z\in \M_4\lb\lset 0,1\rset\rb$ denote a zero pattern containing either one of the following patterns up to permutations of rows and columns: 
\begin{enumerate}
\item \[
\begin{pmatrix}
1 & 1 & 1 & 0 \\ 0 & 0 & 1 & 1 \\ 0 & 1 & 0 & 1 \\ 1 & 0 & 0 & 0
\end{pmatrix} \sim \begin{pmatrix}
1 & 1 & 1 & 0 \\ 0 & 1 & 0 & 1 \\ 0 & 0 & 1 & 1 \\ 1 & 0 & 0 & 0
\end{pmatrix}
\]
\item \[
\begin{pmatrix}
1 & 0 & 1 & 1 \\ 1 & 0 & 1 & 0 \\ 1 & 1 & 0 & 0 \\ 0 & 1 & 0 & 0
\end{pmatrix} \sim \begin{pmatrix}
1 & 1 & 0 & 1 \\ 1 & 1 & 0 & 0 \\ 1 & 0 & 1 & 0 \\ 0 & 0 & 1 & 0
\end{pmatrix}
\]
\item \[
\begin{pmatrix}
1 & 1 & 1 & 0 \\ 1 & 0 & 1 & 1 \\ 1 & 0 & 0 & 0 \\ 0 & 1 & 0 & 0
\end{pmatrix} \sim \begin{pmatrix}
1 & 1 & 1 & 0 \\ 1 & 1 & 0 & 1 \\ 0 & 0 & 1 & 0 \\ 1 & 0 & 0 & 0
\end{pmatrix}
\]
\item \[
\begin{pmatrix}
1 & 1 & 1 & 0 \\ 1 & 0 & 1 & 1 \\ 1 & 1 & 0 & 0 \\ 0 & 0 & 0 & 0
\end{pmatrix} \sim \begin{pmatrix}
1 & 1 & 1 & 0 \\ 1 & 1 & 0 & 1 \\ 1 & 0 & 1 & 0 \\ 0 & 0 & 0 & 0
\end{pmatrix}
\]
\item \[
\begin{pmatrix}
1 & 1 & 1 & 1 \\ 1 & 1 & 0 & 0 \\ 1 & 0 & 0 & 0 \\ 0 & 0 & 1 & 0
\end{pmatrix}
\]
\item \[
\begin{pmatrix}
1 & 1 & 1 & 1 \\ 1 & 1 & 0 & 0 \\ 1 & 0 & 1 & 0 \\ 0 & 0 & 0 & 0
\end{pmatrix}
\]
\end{enumerate}
Any $(P,Q)\in \mathcal{S}\lb \CP_2\cap\coCP_2\rb$ generating an extremal ray and satisfying $Z\subseteq\mathcal{Z}(P)$ or $Z\subseteq\mathcal{Z}(Q)$ is a cross, a diagonal or a box. 
\label{lem:CasesCross}
\end{lem}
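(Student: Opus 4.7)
The plan is to handle the six pattern cases of Lemma \ref{lem:CasesCross} by a common three-step strategy. In each case, assume $(P,Q) \in \mathcal{S}\lb \CP_2\cap\coCP_2\rb$ generates an extremal ray and (without loss of generality) that the given zero pattern $Z$ satisfies $Z \subseteq \mathcal{Z}(P)$. First, inspect $Z$ to identify embedded box, diagonal, and (especially) cross patterns; then invoke the orthogonality rules -- Lemma \ref{lem:boxRule}, Lemma \ref{lem:diagRule}, Lemma \ref{lem:crossRule} -- to propagate zeros from $P$ into $Q$. Because each pattern in the lemma contains at least one cross-shaped sub-pattern of three mutually-$\leq_Z$-compatible diagonal zeros plus a ``tail,'' the cross rule will force an entire row or column of $Q$ to vanish in addition to whatever the box and diagonal rules give.

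Second, after this propagation the joint zero pattern $\mathcal{Z}(P,Q)$ becomes very constrained, and the pair $(P,Q)$ will either already contain a box, diagonal, or cross as a sub-pattern, or will be forced into one of those shapes by extremality. When the propagated pattern already matches one of the cases of Lemma \ref{lem:6erLem}, \ref{lem:ZeroSubBoxZeroSubD}, \ref{lem:ZeroRowColumn}, or \ref{lem:cases1}, we invoke that lemma directly. Otherwise, only a handful of entries in $P$ and $Q$ remain undetermined, and I would use the pointwise identity $Q = KPK$ via formula \eqref{equ:KPKQ} to obtain linear relations between the remaining parameters. Combined with entry-wise positivity, these relations typically force either (i) some remaining entry to vanish, which creates an extra zero that triggers one of the earlier lemmas, or (ii) the $1$-dimensional span required by Lemma \ref{lem:extremesZP}, showing that $(P,Q)$ is exactly the cross whose zero pattern matches $Z$.

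The main obstacle is bookkeeping: the six patterns split (via the displayed equivalences $\sim$) into genuinely different orbits, and within each case the derived constraints on $Q$ must be cross-checked against all three extremal types to avoid missing a ``rogue'' configuration. The most delicate cases look to be (1) and (3), where the initial pattern has only three diagonal-type zeros and the cross rule gives less immediate collapse; there, I expect to need the explicit evaluation of several entries of $Q = KPK$ and a short positivity argument of the kind used at the end of the proof of Lemma \ref{lem:cases1}(4). Cases (5) and (6) should be the easiest since the zero row of $P$ already forces, via repeated application of Lemma \ref{lem:boxRule}, a corresponding zero column of $Q$, reducing the configuration to a shape where extremality and the cross rule pin it down to a cross directly.
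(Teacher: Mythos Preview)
Your plan matches the paper's approach: in each case apply the box, diagonal, and cross rules to propagate zeros from $P$ to $Q$, then use extremality together with Lemmas~\ref{lem:6erLem}, \ref{lem:ZeroSubBoxZeroSubD}, and \ref{lem:ZeroRowColumn} to pin $(P,Q)$ down. Two small corrections: the paper never actually invokes the explicit identity~\eqref{equ:KPKQ} in this lemma---once the cross rule has fired, every case closes by a short extremality argument that either exhibits a contained box/diagonal/cross or produces one extra zero triggering an earlier lemma---so your anticipated ``delicate'' computations in cases~(1) and~(3) are not needed; and case~(5) does \emph{not} have a zero row of $P$ (its last row is $(0,0,1,0)$), so it is handled, like case~(6), by one application of the cross rule and two of the box rule rather than by repeated box rules alone.
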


\begin{proof} Assume that $(P,Q)\in \mathcal{S}\lb \CP_2\cap\coCP_2\rb$ is generating an extremal ray and satisfying $Z\subseteq\mathcal{Z}(P)$. In the following we will go through the cases of the lemma one-by-one every time assuming that $Z$ contains the corresponding zero pattern. 
\begin{enumerate}
\item Using Lemma \ref{lem:crossRule} and Lemma \ref{lem:diagRule} we find that 
\[
P=\begin{pmatrix} p_1 & p_2 & p_3 & 0 \\ 0 & p_4 & 0 & p_5 \\ 0 & 0 & p_6 & p_7 \\ p_8 & 0 & 0 & 0\end{pmatrix} \quad\text{ and }\quad Q=\begin{pmatrix} 0 & q_1 & q_2 & 0  \\ 0 & q_3 & 0 & q_4 \\ 0 & 0 & q_5 & q_6  \\ q_7 & 0 & 0 & q_8 \end{pmatrix}
\]
where $p_i,q_j\geq 0$ for any $i,j\in\lset 1,\ldots , 8\rset$. By Lemma \ref{lem:ZeroSubBoxZeroSubD} we would be done if 
\[
0\in\lset p_2,p_3,p_4,p_6,p_8,q_3,q_4,q_5,q_6,q_7\rset .
\]
By extremality either $(P,Q)$ is a cross, or $0\in \lset p_1,q_8\rset$. Note that $p_1=0$ if and only if $q_8=0$ by Lemma \ref{lem:diagRule}. Assuming that $p_1=q_8=0$, we can use Lemma \ref{lem:ZeroSubBoxZeroSubD} to conclude that either $(P,Q)$ is a box or a diagonal (here it would be a diagonal actually), or $0\notin \lset p_5,p_7, q_1,q_2\rset$. Assuming that $(P,Q)$ is neither a box nor a diagonal, then all remaining entries of $P$ and $Q$ would have to be non-zero. Since, $(P,Q)$ still contains a diagonal (since $p_3,p_4,p_6,p_8>0$ and $q_3,q_2,q_6,q_7>0$) this would contradict extremality. 
\item Using Lemma \ref{lem:crossRule} and Lemma \ref{lem:boxRule} we find that 
\[
P=\begin{pmatrix} p_1 & p_2 & 0 & p_3 \\ p_4 & p_5 & 0 & 0 \\ p_6 & 0 & p_7 & 0 \\ 0 & 0 & p_8 & 0\end{pmatrix} \quad\text{ and }\quad Q=\begin{pmatrix} 0 & 0 & 0 & q_1  \\ 0 & q_2 & 0 & 0 \\ q_3 & 0 & q_4 & q_5  \\ 0 & q_6 & q_7 & q_8 \end{pmatrix}
\]
where $p_i,q_j\geq 0$ for any $i,j\in\lset 1,\ldots , 8\rset$. By Lemma \ref{lem:ZeroSubBoxZeroSubD} we would be done if 
\[
0\in\lset p_1,p_3,p_4,p_5,p_6,q_1,q_4,q_5,q_6\rset .
\]
By Lemma \ref{lem:6erLem} we would also be done if $q_2=0$. By extremality either $(P,Q)$ is a cross, or $0\in \lset p_7,q_6\rset$. Note that $p_7=0$ if and only if $q_6=0$ by Lemma \ref{lem:boxRule}. Assuming that $p_7=q_6=0$, we can use Lemma \ref{lem:ZeroSubBoxZeroSubD} to conclude that either $(P,Q)$ is a box or a diagonal (here it would be a box actually), or $0\notin \lset p_2,q_5\rset$. If the latter were the case, then $(P,Q)$ would contain a box since $p_1,p_2,p_4,p_5>0$ and $q_4,q_5,q_7,q_8>0$. By extremality $(P,Q)$ is a box in this case. 
\item Using Lemma \ref{lem:crossRule} and Lemma \ref{lem:boxRule} we find that 
\[
P=\begin{pmatrix} p_1 & p_2 & p_3 & 0 \\ p_4 & p_5 & 0 & p_6 \\ 0 & 0 & p_7 & 0 \\ p_8 & 0 & 0 & 0\end{pmatrix} \quad\text{ and }\quad Q=\begin{pmatrix} 0 & q_1 & 0 & 0  \\ 0 & q_2 & 0 & q_3 \\ 0 & 0 & q_4 & q_5  \\ q_6 & 0 & q_7 & q_8 \end{pmatrix}
\]
where $p_i,q_j\geq 0$ for any $i,j\in\lset 1,\ldots , 8\rset$. By Lemma \ref{lem:ZeroSubBoxZeroSubD} we would be done if 
\[
0\in\lset p_1,p_2,p_5,p_8,q_3,q_4,q_5,q_6,q_8\rset .
\]
By Lemma \ref{lem:6erLem} we would also be done if $p_7=0$. By extremality either $(P,Q)$ is a cross, or $0\in \lset p_3,q_2\rset$. Note that $p_3=0$ if and only if $q_2=0$ by Lemma \ref{lem:boxRule}. Assuming that $p_3=q_2=0$, we can use Lemma \ref{lem:ZeroSubBoxZeroSubD} to conclude that either $(P,Q)$ is a box or a diagonal (here it would be a box actually), or $0\notin \lset p_4,q_7\rset$. If the latter were the case, then $(P,Q)$ would contain a box since $p_1,p_2,p_4,p_5>0$ and $q_4,q_5,q_7,q_8>0$. By extremality $(P,Q)$ is a box in this case. 
\item Using Lemma \ref{lem:crossRule}, Lemma \ref{lem:boxRule}, and Lemma \ref{lem:diagRule} we find that 
\[
P=\begin{pmatrix} p_1 & p_2 & p_3 & 0 \\ p_4 & p_5 & 0 & p_6 \\ p_7 & 0 & p_8 & 0 \\ 0 & 0 & 0 & 0\end{pmatrix} \quad\text{ and }\quad Q=\begin{pmatrix} 0 & q_1 & 0 & 0  \\ 0 & q_2 & 0 & q_3 \\ 0 & 0 & q_4 & q_5  \\ 0 & q_6 & q_7 & q_8 \end{pmatrix}
\]
where $p_i,q_j\geq 0$ for any $i,j\in\lset 1,\ldots , 8\rset$. By Lemma \ref{lem:ZeroSubBoxZeroSubD} we would be done if $0\in\lset p_1,p_4,q_6,q_8\rset$, and by Lemma \ref{lem:6erLem} we would be done if $0\in\lset p_2,p_7,p_8,q_4,q_7\rset$. Moreover, by Lemma \ref{lem:ZeroRowColumn} we would be done if $0\in\lset p_4,q_1\rset$. Assuming that all these variables are non-zero, $(P,Q)$ then contains a cross since $p_1,p_2,p_4,p_6,p_7,p_8>0$ and $q_1,q_3,q_4,q_6,q_7,q_8>0$. By extremality $(P,Q)$ has to be a cross in this case.
\item Using Lemma \ref{lem:crossRule} and Lemma \ref{lem:boxRule} twice we find that 
\[
P=\begin{pmatrix} p_1 & p_2 & p_3 & p_4 \\ p_5 & p_6 & 0 & 0 \\ p_7 & 0 & 0 & 0 \\ 0 & 0 & p_8 & 0\end{pmatrix} \quad\text{ and }\quad Q=\begin{pmatrix} 0 & 0 & 0 & q_1  \\ 0 & q_2 & 0 & q_3 \\ q_4 & 0 & q_5 & q_6  \\ 0 & 0 & q_7 & q_8 \end{pmatrix}
\]
where $p_i,q_j\geq 0$ for any $i,j\in\lset 1,\ldots , 8\rset$. By Lemma \ref{lem:ZeroSubBoxZeroSubD} we would be done if $0\in \lset p_1,p_2,q_6,q_8\rset$ and by Lemma \ref{lem:6erLem} we would be done if $0\in\lset p_6,p_7,p_8,q_2,q_4,q_7\rset$. By extremality either $(P,Q)$ is a cross, or $0\in \lset p_3,q_3\rset$. Note that $p_3=0$ if and only if $q_3=0$ by Lemma \ref{lem:diagRule}. Assuming that $p_3=q_3=0$, we can use Lemma \ref{lem:ZeroSubBoxZeroSubD} to conclude that either $(P,Q)$ is a box or a diagonal or $0\notin \lset p_5,q_5\rset$. If the latter were the case, then $(P,Q)$ would contain a box since $p_1,p_2,p_5,p_6>0$ and $q_5,q_6,q_7,q_8>0$. By extremality $(P,Q)$ would then be a box. 
\item Using Lemma \ref{lem:crossRule} and Lemma \ref{lem:boxRule} twice we find that 
\[
P=\begin{pmatrix} p_1 & p_2 & p_3 & p_4 \\ p_5 & p_6 & 0 & 0 \\ p_7 & 0 & p_8 & 0 \\ 0 & 0 & 0 & 0\end{pmatrix} \quad\text{ and }\quad Q=\begin{pmatrix} 0 & 0 & 0 & q_1  \\ 0 & q_2 & 0 & q_3 \\ 0 & 0 & q_4 & q_5  \\ 0 & q_6 & q_7 & q_8 \end{pmatrix}
\]
where $p_i,q_j\geq 0$ for any $i,j\in\lset 1,\ldots , 8\rset$. By Lemma \ref{lem:ZeroSubBoxZeroSubD} we would be done if $0\in\lset p_1,q_8\rset$, and by Lemma \ref{lem:6erLem} we would be done if $0\in\lset p_5,p_6,p_7,p_8,q_2,q_4,q_6,q_7\rset$. Moreover, by Lemma \ref{lem:ZeroRowColumn} we would be done if $0\in\lset p_4,q_1\rset$. Assuming that all these variables are non-zero $(P,Q)$ contains a cross since $p_1,p_4,p_5,p_6,p_7,p_8>0$ and $q_1,q_2,q_4,q_6,q_7,q_8>0$. By extremality $(P,Q)$ has to be a cross in this case.
\end{enumerate}
\end{proof}

\begin{lem}
Let $Z\in \M_4\lb\lset 0,1\rset\rb$ denote a zero pattern containing either one of the following patterns up to permutations of rows and columns: 
\begin{enumerate}
\item \[
\begin{pmatrix}
1 & 1 & 1 & 1 \\ 1 & 1 & 0 & 0 \\ 0 & 0 & 1 & 0 \\ 0 & 0 & 0 & 1
\end{pmatrix}
\]
\item \[
\begin{pmatrix}
1 & 1 & 1 & 0 \\ 1 & 1 & 1 & 0 \\ 1 & 0 & 0 & 1 \\ 0 & 0 & 0 & 0
\end{pmatrix}
\]
\item \[
\begin{pmatrix}
1 & 1 & 0 & 1 \\ 1 & 1 & 1 & 0 \\ 1 & 1 & 0 & 0 \\ 0 & 0 & 0 & 0
\end{pmatrix}
\]
\item \[
\begin{pmatrix}
1 & 1 & 1 & 0 \\ 1 & 1 & 0 & 0 \\ 1 & 0 & 1 & 0 \\ 0 & 0 & 0 & 1
\end{pmatrix}
\]
\end{enumerate}
Any $(P,Q)\in \mathcal{S}\lb \CP_2\cap\coCP_2\rb$ generating an extremal ray and satisfying $Z\subseteq\mathcal{Z}(P)$ or $Z\subseteq\mathcal{Z}(Q)$ is a cross, a diagonal or a box. 
\label{lem:cases3}
\end{lem}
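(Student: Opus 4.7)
My plan is to follow the template established by the previous case-analysis lemmas (Lemmas \ref{lem:cases1} and \ref{lem:CasesCross}) and treat each of the four zero patterns one by one. Fix an extremal $(P,Q)\in \mathcal{S}\lb \CP_2\cap\coCP_2\rb$ with $Z\subseteq \mathcal{Z}(P)$ (the case $Z\subseteq \mathcal{Z}(Q)$ is symmetric by the involution $(P,Q)\mapsto (Q,P)$). The first step in every case is to \emph{enlarge} the known zero set of $P$ and $Q$ using the box rule (Lemma \ref{lem:boxRule}), the diagonal rule (Lemma \ref{lem:diagRule}) and the cross rule (Lemma \ref{lem:crossRule}): each zero in $P$ that completes a box, diagonal, or cross pattern forces the corresponding partner entry in $Q$ to vanish. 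This yields a pair $(P,Q)$ described by a short list of non-negative parameters, exactly as in Lemma \ref{lem:CasesCross}.

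The second step is to recognize, for each concrete parameterization, that whenever a parameter becomes zero one of the already-treated lemmas applies. Specifically, I expect most subcases to be dispatched by Lemma \ref{lem:ZeroSubBoxZeroSubD} (forcing the conclusion ``box or diagonal'') or Lemma \ref{lem:6erLem} (forcing ``box''), with occasional appeals to Lemma \ref{lem:ZeroRowColumn} and the already-proved parts of Lemma \ref{lem:CasesCross} (forcing ``cross''). Concretely, for case $(1)$, combining the box rule and the diagonal rule leaves $P$ with $8$ free entries and $Q$ with $8$ free entries, arranged so that any additional zero either creates a $6$-er pattern (Lemma \ref{lem:6erLem}) or a sub-box-plus-sub-diagonal pattern (Lemma \ref{lem:ZeroSubBoxZeroSubD}). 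Cases $(2)$--$(4)$ are analogous: each zero pattern already fixes enough structure that the remaining degrees of freedom form a low-dimensional configuration where vanishing of any parameter triggers a reduction, while non-vanishing of all parameters produces a visible box or cross inside $(P,Q)$, forcing $(P,Q)$ itself to be of that type by extremality.

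The only genuinely non-trivial step is to verify, in each of the four cases, that the ``generic'' subcase (where no auxiliary parameter vanishes) does indeed exhibit an explicit box, diagonal, or cross subpattern. For cases $(2)$, $(3)$ and $(4)$ this is straightforward once $Q$ is written out: the zero pattern of case $(2)$, after applying the rules, leaves $Q$ with three zeros in a diagonal configuration so that $(P,Q)$ visibly contains a cross; similarly for $(3)$ and $(4)$. For case $(1)$, a short calculation with the relation $Q=KPK$ via equation \eqref{equ:KPKQ} may be needed to verify that generic parameters force $(P,Q)$ to be a diagonal.

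The main obstacle, as in the proof of Lemma \ref{lem:CasesCross}, is purely combinatorial: one must carefully enumerate which parameters can vanish and ensure that every vanishing case is covered by an already-proved lemma, without overlooking subtle subcases in which $K^{\otimes 2}$ relates otherwise-independent entries. I expect no genuinely new technique to be required beyond those already developed in Sections \ref{sec:SpectraPPTPauli} and the preceding case analyses; the task is to complete the bookkeeping so that every entry in Table \ref{table:ARSSymme} has been accounted for.
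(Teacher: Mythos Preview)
Your overall strategy is exactly the paper's: treat each pattern separately, use the box/diagonal/cross rules to parametrize $P$ and $Q$, dispatch each vanishing-parameter subcase to one of the earlier lemmas, and in the generic subcase use \eqref{equ:KPKQ} plus extremality to exhibit a box/diagonal/cross inside $(P,Q)$. So the plan is on target.

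That said, several of your concrete expectations are off, and the execution is considerably heavier than you suggest. First, after applying the rules, $Q$ generically retains \emph{twelve} free entries (not eight) in all four cases, so the bookkeeping is larger than implied. Second, your predictions of what the generic subcase yields are wrong in at least two places: in case~$(2)$ the paper does \emph{not} find a visible cross in $Q$; instead one shows via \eqref{equ:KPKQ} that $q_8,q_{12}>0$ and $q_{10}+q_{11}>0$, forces $q_6=q_7=0$, and then argues the result is a cross or a contradiction through a further four-way split. In case~$(4)$ the generic analysis is also more delicate: the equations from $Q=KPK$ force $P$ (and hence $Q$) to be \emph{symmetric}, and only then does an extremality argument pin down a diagonal or box. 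Third, the ``a short calculation \ldots may be needed'' in case~$(1)$ understates matters: one needs \eqref{equ:KPKQ} to show $q_{12},q_7>0$, then extremality forces a four-way split on which pair among $q_1,q_2,q_3,q_4$ vanishes, and each branch is closed by a further box/diagonal argument reaching back to contradict $p_2>0$ or $p_6>0$.

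In short: the architecture is right, but your sketch would not survive as written. The paper's proof relies on nontrivial sign/positivity deductions from \eqref{equ:KPKQ} in \emph{every} case (not just case~$(1)$), and the ``which extremal type appears in the generic branch'' is not what you guessed. You should expect each case to require roughly the same depth of argument as the individual cases of Lemma~\ref{lem:CasesCross}, with repeated use of the identity $Q=KPK$ to rule out configurations.
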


\begin{proof} Assume that $(P,Q)\in \mathcal{S}\lb \CP_2\cap\coCP_2\rb$ is generating an extremal ray and satisfying $Z\subseteq\mathcal{Z}(P)$. In the following we will go through the cases of the lemma one-by-one every time assuming that $Z$ contains the corresponding zero pattern. 
\begin{enumerate}
\item Using Lemma \ref{lem:boxRule} we find that 
\[
P=\begin{pmatrix} p_1 & p_2 & p_3 & p_4 \\ p_5 & p_6 & 0 & 0 \\ 0 & 0 & p_7 & 0 \\ 0 & 0 & 0 & p_8\end{pmatrix} \quad\text{ and }\quad Q=\begin{pmatrix} q_1 & q_2 & 0 & 0  \\ q_3 & q_4 & 0 & 0 \\ q_5 & q_6 & q_{7} & q_8  \\ q_9 & q_{10} & q_{11} & q_{12} \end{pmatrix}
\]
where $p_i\geq 0$ for any $i\in\lset 1,\ldots , 8\rset$ and $q_j\geq 0$ for any $j\in\lset 1,\ldots , 12\rset$. By Lemma \ref{lem:ZeroSubBoxZeroSubD} we would be done if $0\in \lset p_3,p_4\rset$ and by Lemma \ref{lem:6erLem} we would be done if $0\in\lset p_7,p_8\rset$. Note that if $0\in\lset p_5,p_6\rset$, then the remaining zero pattern would contain the pattern
\[
\begin{pmatrix}
* & * & * & * \\ * & 0 & 0 & 0 \\ * & 0 & * & 0 \\ 0 & 0 & 0 & *
\end{pmatrix}
\]
up to a column permutation. Then, by case 5 of Lemma \ref{lem:CasesCross} we conclude that $(P,Q)$ would be a cross, a diagonal, or a box. Similarly, if $0\in\lset p_1,p_2\rset$, then the remaining zero pattern would contain the pattern
\[
\begin{pmatrix}
0 & * & * & * \\ * & * & 0 & 0 \\ * & 0 & * & 0 \\ 0 & 0 & 0 & *
\end{pmatrix}
\]
up to a column permutation. Then, by case 1 of Lemma \ref{lem:CasesCross} we conclude that $(P,Q)$ would be a cross, a diagonal, or a box. 

Finally, assume that $p_i>0$ for any $i\in\lset 1,\ldots , 8\rset$. Then, since $Q=KPK$ we can use \eqref{equ:KPKQ} to find that
\begin{align*}
q_8 &= \frac{1}{4}(-p_7-p_8-p_4+p_3+p_1+p_2+p_5+p_6) \\
q_{12} &= \frac{1}{4}(p_7 + p_8 - p_4+p_3+p_1+p_2+p_5+p_6) \\
q_7 &= \frac{1}{4}(p_7+p_8+p_4-p_3+p_1+p_2+p_5+p_6) \\
q_{11} &= \frac{1}{4}(-p_7 - p_8 + p_4-p_3+p_1+p_2+p_5+p_6) .
\end{align*}
From this we can conclude that 
\begin{align*}
q_{12}-q_8 = q_{7}-q_{11} =  \frac{p_8+p_7}{2}>0 
\end{align*}
showing that $q_{12}>0$ and $q_7>0$. By extremality, either $(P,Q)$ is a diagonal or we have $q_1=q_2=0$ or $q_3=q_4=0$ or $q_1=q_3=0$ or $q_2=q_4=0$. In the two latter cases we are done by Lemma \ref{lem:6erLem}. The two first cases can be argued in the same way, and we will only write out the first one. Assuming that $q_1=q_2=0$ we can argue using extremality that either $(P,Q)$ is a cross, or $0\in \lset q_3,q_4,q_5,q_9\rset$. If $0\in \lset q_3,q_4\rset$ we would be done by Lemma \ref{lem:6erLem}. Assuming that $0\in \lset q_5,q_9\rset$ we can again argue by extremality that either $(P,Q)$ is a box, or $0\in \lset q_8,q_{11}\rset$. In the latter case we conclude that we have at least one of the following: 
\begin{enumerate}
\item $q_5=q_8=0$
\item $q_9=q_{11}=0$
\item $q_5=q_{11}=0$
\item $q_9=q_8=0$
\end{enumerate}
For $a.)$ and $b.)$ Lemma \ref{lem:boxRule} implies that $p_6=0$, and in the cases $c.)$ and $d.)$ Lemma \ref{lem:diagRule} shows that $p_2=0$. Therefore, all of the cases lead to a contradiction. 
\item Using Lemma \ref{lem:boxRule} we find that 
\[
P=\begin{pmatrix} p_1 & p_2 & p_3 & 0 \\ p_4 & p_5 & p_6 & 0 \\ p_7 & 0 & 0 & p_8 \\ 0 & 0 & 0 & 0\end{pmatrix} \quad\text{ and }\quad Q=\begin{pmatrix} 0 & q_1 & q_2 & 0  \\ 0 & q_3 & q_4 & 0 \\ q_5 & q_6 & q_{7} & q_8  \\ q_9 & q_{10} & q_{11} & q_{12} \end{pmatrix}
\]
where $p_i\geq 0$ for any $i\in\lset 1,\ldots , 8\rset$ and $q_j\geq 0$ for any $j\in\lset 1,\ldots , 12\rset$. By Lemma \ref{lem:ZeroSubBoxZeroSubD} we would be done if $0\in \lset p_1,p_4\rset$ and by Lemma \ref{lem:6erLem} we would be done if $0\in\lset p_7,p_8\rset$. Note that if $0\in\lset p_2,p_3,p_5,p_6\rset$, we can permute rows and columns to restrict to the case where $p_6=0$. Then, after permuting the third and fourth column the remaining zero pattern would contain the pattern
\[
\begin{pmatrix}
* & * & * & * \\ * & * & 0 & 0 \\ * & 0 & * & 0 \\ 0 & 0 & 0 & 0
\end{pmatrix}.
\]
By case 6 of Lemma \ref{lem:CasesCross} we conclude that $(P,Q)$ would be a cross, a diagonal, or a box. 

Finally, assume that $p_i>0$ for any $i\in\lset 1,\ldots , 8\rset$. Since $Q=KPK$ we can use \eqref{equ:KPKQ} to find that
\begin{align*}
q_8 &= \frac{1}{4}(p_8-p_7+\sum^6_{i=1} p_i )\\
q_{12} &= \frac{1}{4}(-p_8+p_7+\sum^6_{i=1} p_i) \\
q_9 &= \frac{1}{4}(p_8-p_7-p_1-p_4+p_2+p_3+p_5+p_6) \\
q_5 &= \frac{1}{4}(-p_8+p_7-p_1-p_4+p_2+p_3+p_5+p_6).
\end{align*}
Note that $q_{8}=0$ would imply that $q_9<0$ contradicting the assumption that $(P,Q)\in \mathcal{S}\lb \CP_2\cap\coCP_2\rb$, and similarly $q_{12}=0$ would imply that $q_5<0$ leading to the same contradiction. Therefore, we have $q_8>0$ and $q_{12}>0$. Since $(P,Q)$ generates an extremal ray, we either have that $(P,Q)$ is a box, or we have $0\in\lset q_5,q_9\rset$ and $0\in\lset q_6,q_{10}\rset$ and $0\in\lset q_7,q_{11}\rset$. Assuming that $(P,Q)$ is not a box, we would be done by Lemma \ref{lem:ZeroSubBoxZeroSubD} if $0\in\lset q_6,q_{11}\rset$ or $0\in\lset q_{10},q_7\rset$. Thus, we either have $q_6=q_7=0$ or $q_{10}=q_{11}=0$. 

Since $Q=KPK$ we can use \eqref{equ:KPKQ} to find that
\begin{align*}
q_{10} = \frac{1}{4}(-p_2-p_5 + p_3 + p_6 + p_1 + p_4 + p_7 + p_8)\\
q_{11} = \frac{1}{4}(-p_3-p_6+p_2+p_5+p_1+p_4+p_7+p_8).
\end{align*} 
This implies that $q_{10}+q_{11}>0$, and therefore we cannot have $q_{10}=q_{11}=0$. By the above discussion we conclude that $q_6=q_7=0$ and we also have $0\in\lset q_5,q_9\rset$ and $q_{10},q_{11}>0$. By extremality, either $(P,Q)$ is a cross, or we have $0\in\lset q_1,q_4\rset$ and $0\in\lset q_3,q_2\rset$. Assuming that $(P,Q)$ is not a cross, we conclude that one of the following cases holds true:
\begin{enumerate}
\item $q_1=q_2=0$
\item $q_3=q_{4}=0$
\item $q_1=q_{3}=0$
\item $q_2=q_4=0$
\end{enumerate}
In cases $c.)$ and $d.)$ we are done by Lemma \ref{lem:6erLem}. In cases $a.)$ and $b.)$ we can apply Lemma \ref{lem:boxRule} to conclude that $p_4=0$ or $p_1=0$. But this contradicts the assumption that $p_4,p_1>0$. 

\item Using Lemma \ref{lem:boxRule} we find that 
\[
P=\begin{pmatrix} p_1 & p_2 & 0 & p_3 \\ p_4 & p_5 & p_6 & 0 \\ p_7 & p_8 & 0 & 0 \\ 0 & 0 & 0 & 0\end{pmatrix} \quad\text{ and }\quad Q=\begin{pmatrix} 0 & 0 & q_1 & q_2  \\ 0 & 0 & q_3 & q_4 \\ q_5 & q_6 & q_{7} & q_8  \\ q_9 & q_{10} & q_{11} & q_{12} \end{pmatrix}
\]
where $p_i\geq 0$ for any $i\in\lset 1,\ldots , 8\rset$ and $q_j\geq 0$ for any $j\in\lset 1,\ldots , 12\rset$. By Lemma \ref{lem:6erLem} we would be done if $0\in\lset p_3,p_6,p_7,p_8\rset$. Note that if $0\in\lset p_1,p_2,p_4,p_5\rset$, we can permute rows and columns to restrict to the case where $p_2=0$. Then, after permuting the second and third column the remaining zero pattern would contain the pattern
\[
\begin{pmatrix}
* & * & * & * \\ * & * & 0 & 0 \\ * & 0 & * & 0 \\ 0 & 0 & 0 & 0
\end{pmatrix}.
\]
By case 6 of Lemma \ref{lem:CasesCross} we conclude that $(P,Q)$ would be a cross, a diagonal, or a box.  

Finally, assume that $p_i>0$ for any $i\in\lset 1,\ldots , 8\rset$. Since $Q=KPK$ we can use \eqref{equ:KPKQ} to find
\begin{align*}
q_{12} &= \frac{1}{4}(-p_3+ p_7+p_8+ p_1+p_2+p_4+p_5+p_6)\\
q_{8} &= \frac{1}{4}(-p_3-p_7-p_8+p_1+p_2+p_4+p_5+p_6) \\
q_{11} &= \frac{1}{4}(-p_6+p_7+p_8+p_1+p_2+p_3+p_4+p_5) \\
q_{7} &= \frac{1}{4}(-p_6-p_7-p_8+p_1+p_2+p_3+p_4+p_5).
\end{align*}    
From this we conclude that $q_{12}>0$ since otherwise $q_8<0$, and similarly $q_{11}>0$ since otherwise $q_7<0$. We also find that 
\begin{align*}
q_3 &= \frac{1}{4}(p_6 +p_3 - p_4-p_5+p_1+p_2+p_7+p_8) \\
q_4 &= \frac{1}{4}(-p_6-p_3-p_4-p_5 + p_1 + p_2 + p_7 + p_8) \\
q_2 &= \frac{1}{4}(p_3 -p_1-p_2+p_6 + p_4+p_5+p_7+p_8 )\\
q_1 &= \frac{1}{4}(-p_3-p_1-p_2-p_6 + p_4 + p_5 + p_7 + p_8 ).  
\end{align*}
From this we conclude that $q_3>0$ since otherwise $q_4<0$, and $q_2>0$ since otherwise $q_1<0$. By extremality, either $(P,Q)$ is a box, or we have $0\in\lset q_1,q_2,q_{11},q_{12}\rset$ and $0\in\lset q_3,q_4,q_{11},q_{12}\rset$ and $0\in\lset q_7,q_8,q_{11},q_{12}\rset$. As $q_{11},q_{12},q_2,q_3>0$ we find that $q_1 = q_4 = 0$ and $0\in\lset q_7,q_8\rset$. Again, by extremality we have that $(P,Q)$ is a cross or $0\in\lset q_6,q_{10}\rset$. By Lemma \ref{lem:boxRule} the case $q_6=0$ implies that either $p_4=0$ (if $q_7=0$) or that $p_1=0$ (if $q_8=0$) both contradicting our assumptions. Therefore, we must have $q_{10}=0$. Finally, by Lemma \ref{lem:6erLem} we are done if $q_9=0$ and by Lemma \ref{lem:diagRule} we find that $p_7=0$ in the case where $q_5=0$ leading to a contradiction. Therefore, we have $q_5,q_9,q_3,q_2,q_{11},q_{12}>0$ and $p_2,p_5,p_8,p_6,p_3,p_7>0$. But this means that $(P,Q)$ contains a cross, and by extremality $(P,Q)$ must be equal to that cross. 
\item Using Lemma \ref{lem:diagRule} we find that 
\[
P=\begin{pmatrix} p_1 & p_2 & p_3 & 0 \\ p_4 & p_5 & 0 & 0 \\ p_6 & 0 & p_7 & 0 \\ 0 & 0 & 0 & p_8\end{pmatrix} \quad\text{ and }\quad Q=\begin{pmatrix} q_1 & q_2 & q_3 & 0  \\ q_4 & q_5 & 0 & q_6 \\ q_7 & 0 & q_{8} & q_9  \\ 0 & q_{10} & q_{11} & q_{12} \end{pmatrix}
\]
where $p_i\geq 0$ for any $i\in\lset 1,\ldots , 8\rset$ and $q_j\geq 0$ for any $j\in\lset 1,\ldots , 12\rset$. By Lemma \ref{lem:ZeroSubBoxZeroSubD} we would be done if $0\in \lset p_2,p_3,p_4,p_5\rset$ and by Lemma \ref{lem:ZeroRowColumn} we would be done if $p_8=0$. Note that if $0\in\lset p_5,p_7\rset$, we can permute rows and columns to restrict to the case where $p_7=0$. Then, after permuting the third and fourth columns the remaining zero pattern would contain the pattern
\[
\begin{pmatrix}
* & * & * & * \\ * & * & 0 & 0 \\ * & 0 & 0 & 0 \\ 0 & 0 & * & 0
\end{pmatrix}.
\]
By case 5 of Lemma \ref{lem:CasesCross} we conclude that $(P,Q)$ would be a cross, a diagonal, or a box. Similarly, if $p_1=0$, then after permuting the first and fourth and the second and third columns the remaining zero pattern would contain the pattern
\[
\begin{pmatrix}
* & * & * & 0 \\ 0 & 0 & * & * \\ 0 & * & 0 & * \\ * & 0 & 0 & 0
\end{pmatrix}.
\]  
By case 1 of Lemma \ref{lem:CasesCross} we conclude that $(P,Q)$ would be a cross, a diagonal, or a box.

Finally, we assume that $p_i>0$ for any $i\in\lset 1,\ldots , 8\rset$. Since $Q=KPK$ we can use \eqref{equ:KPKQ} to find that
\[
q_{12} = \frac{1}{4}\sum^8_{i=1} p_i>0.
\]
Furthermore, we find that 
\begin{align*}
p_8+p_1+p_2+p_3 &=p_4+p_5+p_7+p_6 \\
p_8+p_1+p_2+p_6 &= p_4+p_5+p_7+p_3 \\
p_8+p_1+p_4+p_6 &=p_2+p_5+p_7+p_3 \\
p_8+p_1+p_4+p_3 &= p_2+p_5+p_7+p_6.
\end{align*}
From this we conclude first that $p_3=p_6$ and then that $p_2=p_4$. Therefore, $P$ is a symmetric matrix and since $Q=KPK$ it follows that $Q$ is symmetric as well. Moreover, we also find 
\begin{align*}
q_5 = \frac{1}{4}(-p_2-p_4+p_1+p_3+p_6+p_7+p_8)= \frac{1}{4}(p_1+p_7+p_8)>0.
\end{align*}
By extremality, we conclude that $(P,Q)$ is either equal to a diagonal, or $0\in\lset q_3,q_5,q_7,q_{12}\rset$. Assuming that $(P,Q)$ is not a diagonal, we conclude (using symmetry) that both $q_3=q_7=0$. Then, we would be done by Lemma \ref{lem:ZeroSubBoxZeroSubD} if $0\in\lset q_6,q_{10}\rset$. But if $q_6,q_{10}>0$, then $(P,Q)$ contains a box by the previous discussion. As $(P,Q)$ is extremal it must then be equal to a box.  
\end{enumerate}

\end{proof}

\bibliographystyle{IEEEtran}
\bibliography{mybibliography.bib}

\end{document}